\algnewcommand\algorithmicinput{\textbf{INPUT:}}
\algnewcommand\INPUT{\item[\algorithmicinput]}
\algnewcommand\algorithmicoutput{\textbf{OUTPUT:}}
\algnewcommand\OUTPUT{\item[\algorithmicoutput]}
\newtheorem{lemma}{Lemma}
\newtheorem{proposition}{Proposition}
\newtheorem{corollary}{Corollary}
\newtheorem{definition}{Definition}
\newtheorem{remark}{Remark}
\newtheorem{theorem}{Theorem}
\newtheorem{assumption}{Assumption}
\newcommand{\ou}{\"{o}}
\newcommand{\op}{\text{op}}
\newcommand{\s}{\text{span}}
\newcommand{\row}{\text{Row}}
\newcommand{\col}{\text{Col}}
\newcommand{\lb}{ \langle}
\newcommand{\rb}{ \rangle}
\newcommand{\mk}{\mathbb K}
\newcommand{\h}{\mathcal H}
\newcommand{\K}{\mathcal K}
\newcommand{\bu}{\mathbf u}
\newcommand{\lt}{ {\mathcal L ^2}}
\DeclareMathOperator*{\argmin}{arg\,min}
 \newcommand{\blind}{1}
\begin{document}

 \setlength{\abovedisplayskip}{4pt}
\setlength{\belowdisplayskip}{4pt}
\setlength{\abovedisplayshortskip}{2pt}
\setlength{\belowdisplayshortskip}{2pt}

\if1\blind
{
	\title{Functional Autoregressive Processes in Reproducing Kernel Hilbert Spaces}
 	\author{
 		Daren Wang\\
	University of Chicago
	 		\and
 		Zifeng Zhao\\
 		University of Notre Dame
 		\and
 	    Rebecca Willett\\
 	University of Chicago
 		\and
 	Chun Yip Yau\\
 	Chinese University of Hong Kong
 	}
	\date{}	
	\maketitle
} \fi

\if0\blind
{
	\title{Functional Autoregressive Processes in Reproducing Kernel Hilbert Spaces}
	\author{}
	\date{}
	\maketitle
} \fi

\maketitle
\begin{abstract}
	We study the estimation and prediction of functional autoregressive~(FAR) processes, a statistical tool for modeling functional time series data. Due to the infinite-dimensional nature of FAR processes, the existing literature addresses its inference via dimension reduction and theoretical results therein require the (unrealistic) assumption of fully observed functional time series. We propose an alternative inference framework  based on Reproducing Kernel Hilbert Spaces~(RKHS). Specifically, a nuclear norm regularization method  is proposed for estimating the transition operators of the FAR process directly from discrete samples of the functional time series. We derive a representer theorem for the FAR process, which enables infinite-dimensional inference without dimension reduction. Sharp theoretical guarantees are established under the (more realistic) assumption that  we only have finite discrete samples of the FAR process. Extensive numerical experiments and a real data application of energy consumption prediction are further conducted to illustrate the promising performance of the proposed approach compared to the state-of-the-art methods in the literature.
\end{abstract}

\textit{Keywords}: Reproducing Kernel Hilbert Space; Functional time series;   Representer theorem; Nuclear norm regularization.

\clearpage

\section{Introduction}\label{sec:intro}
Functional Data Analysis (FDA) has emerged as an important area of modern statistics as it provides effective tools for analyzing complex data. 
As described in the excellent monographs by \cite{Ramsay2005}, \cite{Ferraty2006} and \cite{Horvath2012}, FDA considers the analysis and theory of data that can be viewed in the form of functions, offering a natural and parsimonious solution to a variety of problems that are difficult to cast into classical statistical frameworks designed for scalar and vector valued data. 

An important type of functional data is functional time series \citep{Hoermann2010, aue2015prediction}, where the functional observations are collected in a sequential manner. A typical scheme is a continuous-time record that can be partitioned into natural consecutive time intervals, such as hours, days or years, where similar behavior is expected across intervals. Common examples of functional time series include the intraday return curves or volatility curves of a stock market index and the daily or annual patterns of meteorological and environmental data such as temperature or precipitation.

Formally speaking, a functional time series takes the form $\{X_t(s), s\in[a,b]\}_{ t\in \mathbb{Z}}$, where each observation $X_t(s)$ is a (random) function defined for $s$ taking values in some compact interval $[a,b]$. By rescaling if needed, throughout the paper, without loss of generality, we assume $[a,b]=[0,1].$ \textcolor{black}{Due to the intrinsic high dimensionality of functional observations, classical univariate and multivariate time series methods, such as (vector) autoregressive models, may fail to track the dynamics of functional time series and thus are unable to provide accurate prediction. See more discussions in, for example, \cite{bosq2000linear}, \cite{Hyndman2007} and \cite{Shang2013}. Thus, a key task for functional time series analysis is the design and estimation of a reliable statistical model tailored for the functional nature of the data, which serves as the foundation for understanding the behavior of the data and providing accurate prediction.}

The most widely-used functional time series model proven to work well in practice is the functional autoregressive~(FAR) process in $\lt$~\citep{bosq2000linear}, where $\lt$ denotes the Hilbert space of square-integrable functions on $[0,1]$ equipped with the inner product $ \lb f,g \rb_{\lt}  = \int_0^1 f (r)g(r) dr$. Generally speaking, a functional time series $\{X_t\}$ follows an FAR process~(in $\lt$) of order $D$ if
\begin{align}\label{eq:farL2_classic}
X_t(\cdot)=\mu(\cdot) + \sum_{d=1}^D \Psi_d(X_{t-d})(\cdot) + \epsilon_{t}(\cdot),
\end{align}
where $\mu(\cdot)\in \lt$ is a deterministic function, $\epsilon_{t}(\cdot)\in \lt$ are \textit{i.i.d.}\ zero-mean noise functions, and $\{\Psi_d\}_{d=1}^D$ are bounded linear operators mapping $\lt\to \lt$.  Conditions for the existence of a stationary and causal solution of \eqref{eq:farL2_classic} in $\lt$ and other theoretical properties of FAR in $\lt$ are studied extensively in \cite{bosq2000linear}. Besides FAR, other types of models and prediction approaches, such as nonlinear kernel-distance based methods, for functional times series are considered in \cite{Bosq1998}, \cite{Besse2000}, \cite{Antoniadis2006}, \cite{Kokoszka2017} and \cite{BuenoLarraz2019}, among others. In this paper, we focus on the FAR process.

Due to the infinite-dimensional nature of the functional space $\lt$, existing literature addresses the inference of FAR mainly via dimension reduction, where a projection onto a finite basis is conducted to facilitate the estimation of $\{\Psi_d\}_{d=1}^D$ and the prediction of future realizations. Most literature reduces the dimension via functional principal component analysis~(FPCA), where the finite basis is chosen as the leading $p$ functional principal components~(FPC) of an estimated covariance operator of the underlying FAR process~(e.g.\ the sample covariance operator based on the observations $\{X_t\}_{t=1}^T$). See \cite{Besse1996}, \cite{bosq2000linear}, \cite{Besse2000}, \cite{Hyndman2009}, \cite{Didericksen2012} and \cite{aue2015prediction} for influential works on FPCA-based approaches. Notable methods based on other basis such as wavelets or predictive factors include \cite{Antoniadis2003} and \cite{Kargin2008}. 

Theoretical justification for dimension reduction based methods can be found in, for example, \cite{bosq2000linear} and \cite{Kargin2008}, where results such as the consistency of $\{\widehat{\Psi}_d\}_{d=1}^D$ are provided. For these results to hold, a typical condition is that the number of basis  elements $p$ grows with the sample size $T$ at a rate that implicitly depends on the intricate interrelation of eigenvalues and spectral gaps of the true covariance operator of the underlying FAR process, making the convergence rate derived therein rather opaque and case-specific. As a result, there seems to be no clear guidance on the selection of $p$ in practice, with most literature using a heuristic threshold (e.g. 80\%) on the cumulative variance of FPCs, see for example \cite{Didericksen2012}. An exception is \cite{aue2015prediction}, where a novel functional prediction error criterion is developed for the selection of $p.$

A notable limitation of the current FAR literature is that existing estimation methods and theoretical results require fully observed functional time series $\{X_t(s), s\in[0,1]\}_{t=1}^T$. However, this is an unrealistic assumption as, in reality, the FAR process is measured discretely and observations instead take the form $\{X_t(s_i), 1\leq i\leq n\}_{t=1}^T$, where $\{s_i\}_{i=1}^n$ denotes $n$ discrete grid points in $[0,1]$. In practice, the aforementioned methods typically rely on an extra smoothing step to convert discrete measurements $\{X_t(s_i), 1\leq i\leq n\}_{t=1}^T$ into (estimated) fully functional data $\{\widetilde{X}_t(s), s\in[0,1]\}_{t=1}^T$, and the statistical analysis is performed on $\{\widetilde{X}_t(s), s\in[0,1]\}_{t=1}^T$. Intuitively, the smoothing step may have substantial impact on the inference of FAR~(as is illustrated via simulation studies in Section \ref{subsec:simu_far1}). However, to establish theoretical guarantees, existing literature commonly ignores the smoothing error and assumes the analysis is conducted on the true functional time series $\{X_t(s), s\in[0,1]\}_{t=1}^T$, possibly due to technical difficulties. One ramification is that the derived convergence rate therein typically only involves $T$ but not $n$.

In this paper, we study the inference of FAR processes through the lens of Reproducing Kernel Hilbert Spaces~(RKHS, \cite{Wahba1990}) and propose new estimation and prediction procedures for FAR without dimension reduction. Specifically, we consider a refined FAR process in RKHS (see detailed definition in Section \ref{subsec:FARinRKHS}),
\begin{align}\label{eq:farL2}
X_t(\cdot)=\mu(\cdot) + \sum_{d=1}^D \int_0^1 A_d(\cdot, s) X_{t-d}(s)ds + \epsilon_{t}(\cdot),
\end{align}
where the bounded linear operators $\{\Psi_d\}_{d=1}^D$ take the explicit form of integral operators with bivariate kernels $\{A_d(r,s):[0,1]\times[0,1] \to \mathbb R\}_{d=1}^D$. Note that statistically speaking, \eqref{eq:farL2} is essentially equivalent to \eqref{eq:farL2_classic}, as consistent estimation of $\{\Psi_d\}_{d=1}^D$ requires them to be Hilbert-Schmidt operators~\citep[e.g.][]{bosq2000linear,Kargin2008}, which indeed implies $\{\Psi_d\}_{d=1}^D$ can be written as integral operators with square-integrable kernels~\citep{Heil2018}.

By viewing $\{A_d(r,s)\}_{d=1}^D$ as compact linear operators in RKHS, we first derive its reproducing property, which facilitates its consistent estimation directly based on discrete measurements without smoothing. We then propose a nuclear norm regularization method for the estimation of $\{A_d(r,s)\}_{d=1}^D$ and further derive the representer theorem, which enables the (infinite-dimensional) inference without dimension reduction. For efficient implementation, we reformulate the regularization of functional operators into the well-studied trace norm minimization in the machine learning literature, which can be readily solved via the accelerated gradient method~\citep{Ji2009}. The consistency and explicit convergence rate (incorporating both $T$ and $n$) of the proposed procedure are provided. To our best knowledge, this is the first rigorous theoretical guarantee for estimation and prediction of FAR processes based on discrete observations of functional time series.


The rest of the paper is organized as follows. Section \ref{sec:RKHS} gives a brief review of RKHS and defines the FAR process in RKHS. Section \ref{sec:FARD} proposes the penalized nuclear norm estimator for FAR and studies its theoretical properties. The promising performance of the proposed method over existing procedures is demonstrated via extensive numerical experiments in Section \ref{sec:simulation} and a real data application of energy consumption prediction in Section \ref{sec:realdata}. Section \ref{sec:conclusion} concludes with a discussion. Some notations used throughout the paper are defined as follows. Denote $\| f\|_\lt ^2 =\lb f,f\rb_\lt $ and $\|f\|_{\infty}:=\sup_{s \in [0,1]} |f(s)|$. For a matrix $W$, denote $\|W\|_F$ as its Frobenius norm and $\|W\|_*$ as its trace norm. We omit $[0,1]$ in the integral whenever the domain of functions is clear.

\section{Functional Autoregressive Processes in RKHS}\label{sec:RKHS}

\subsection{RKHS and  compact linear operators}\label{subsec:RKHS_overview}
In this subsection, we briefly review the Reproducing Kernel Hilbert Spaces (RKHS) and introduce the class of compact linear operator, which is later used for defining the FAR process in RKHS.

Let $\mathbb K :[0,1]\times [0,1]  \to \mathbb R^+$ be a reproducing kernel and $\mathcal H \subset \lt$ be the corresponding reproducing kernel Hilbert space. Denote $\lb \cdot, \cdot \rb_\h$ as the inner product for $\h$ and define $\|f\|^2_\h=\lb f,f\rb_\h$ as the RKHS norm. The eigen-expansion of $\mathbb K$ has the form
\begin{align} \label{eq:eigen expansion of kernel}
\mathbb K (r,  s) =  \sum_{k=1}^\infty   \mu_k \phi _k (r) \phi_k(s),
\end{align}
where $\{\phi_k\}_{k=1}^\infty$  is an orthonormal basis of $\lt $ such that $ \| \phi_k\|_\lt ^2 =1$ and $\| \phi_k\|_\mathcal H ^2 = {1}/{\mu_k} $. 
Thus for $f=\sum_{k=1}^\infty a_k \phi_k $ and  $g=\sum_{k=1}^\infty b_k \phi_k$, we have $  \lb f ,g\rb_\mathcal H  = \sum_{k=1}^\infty  {a_kb_k}/{\mu_k }$. In particular, $f(r) = \lb f, \mathbb K (\cdot, r) \rb_\mathcal H$ for $f\in\h $, which is known as the reproducing property of RKHS. In \Cref{assume:rkhs}, we impose some mild regularity conditions on the $\h$ that we study in this paper.

\begin{assumption}\label{assume:rkhs}
\
\\
{\bf a.} There exists an absolute constant $C_\h$ such that for any $f, g\in \h$, it holds that
\begin{align} \label{eq:regularity of norm in  h}
\|f g\|_{\h }\le C_\h \|f\|_\h \|g\|_\h. 
\end{align}   
{\bf b.} There exists a constant $C_\mathbb K $ such that  $\sup_{0\le r\le  1} \mathbb K(r,r) \le C_\mathbb K$.
\end{assumption}

\Cref{assume:rkhs}{\bf a} is a mild regularity condition on $\h$ and is mainly made for technical simplicity in the proof. A wide class of RKHS satisfies \Cref{assume:rkhs}{\bf a}. For a concrete example, consider the commonly used Sobolev space $\mathcal H = W^{\alpha,2}$ on $[0,1]$ where $W^{\alpha ,2}  : = \{ f :  \| f\|_{ W^{\alpha ,2} } ^2   =   \|f \|_\lt ^2 +  \sum_{k =1} ^\alpha \|f ^{(k) } \|_{\lt } ^2 < \infty  \}.$ From the definition of $W^{\alpha,2}$, it   is straightforward to show that there exists a constant $C_\alpha$ such that $\|fg\|_{W^{\alpha, 2}}\le C_\alpha \|f\|_{W^{\alpha, 2}}  \|g\|_{W^{\alpha, 2}}.$ For illustration, letting $\alpha=1$, we have $ \| (fg )'    \|_\lt  ^2  \le 2    \int  ( f '   (s) g (s)  )^2 + ( f(s) g' (s) )^2     ds 	\le 2  \|f'\|_{ \lt } ^2 \|g\|_{\infty }^2    +2  \|f\|_{\infty}^2  \| g'\|_{ \lt } ^2  \le 4 \|f \|_{  W^{1,2} } ^2 \|g\|_{  W^{1,2} }  ^2$, therefore it suffices to take $C_\h =\sqrt 5 $ in \eqref{eq:regularity of norm in  h} for $W^{1,2}$. We refer to \cite{brezis2011functional} for a comprehensive introduction to Sobolev spaces. Throughout the paper, we assume $C_\h =1$ for notational simplicity as the theoretical analysis holds for any constant $C_\h$.

\Cref{assume:rkhs}{\bf b} is a widely-used assumption on the kernel function $\mathbb K$ and is satisfied by most commonly used kernels. Note that \Cref{assume:rkhs}{\bf b} implies that for any $s\in [0,1]$ and any $f\in \h$,
$$f(s ) = \lb f  , \mathbb K_s(\cdot)\rb_\h \le \|f\|_\h \|\mathbb K_s(\cdot)\|_\h  \le \|f\|_\h  \sqrt {C_\mathbb K }  . $$
As a result, $\|f\|_\lt \le \|f\|_{\infty }\le \sqrt {C_\mathbb K }   \|f\|_\h $.
Note that for any positive constant $\beta$,  the two kernel functions $ \mathbb K $ and $\beta\mathbb K$ generate the same function space. Thus with rescaling if necessary, we assume without loss of generality that $C_\mathbb K=1$ throughout the paper.  

We now introduce the compact linear operator mapping $\h \to \h$, which is used to regulate the transition kernels $\{A_d(r,s):[0,1]\times [0,1]\to \mathbb{R}\}_{d=1}^D$ of the FAR process in $\h$~(see \eqref{eq:farL2} in Section \ref{sec:intro}). Denote $A(r,s)$ as a function from $[0,1]  \times [0,1] \to \mathbb R$   such that $A(\cdot, s)\in \h$ for any $s\in[0,1]$ and $A(r,\cdot)\in \h$ for any $r\in[0,1]$. Thus, $A(r,s)$ induces a linear operator on $\h$ via
\begin{align}\label{eq:bivariate_kernel}
 A[v] (r) : = \langle A(r, \cdot),v(\cdot) \rangle_\h, \ r\in[0,1], \text{ for any } v\in \h.
\end{align}
If we further have $A[v] \in \h $ for all $v\in \h$, the bivariate function $A(r,s)$ can be viewed as a linear operator $A$ mapping $\h\to \h$ in the light of \eqref{eq:bivariate_kernel}. To utilize the smoothness of the RKHS, we focus on the class of linear operators $A : \mathcal H \to  \mathcal H$ that are compact, as defined in the following definition. 
\begin{definition} \label{definition:compact}
	A linear operator $A:\mathcal H \to \mathcal H$  is said to be  compact   if the image of any bounded set in $\mathcal H$ is (relatively) compact.
\end{definition}
The space of compact operators on  a Hilbert space  is the closure of the space of finite rank operators. It is well known that in the classical functional analysis~(see e.g. \cite{brezis2011functional}), the compact operators share many desirable properties with matrices  such as the existence of the singular value decomposition, which facilitates its theoretical analysis.

Denote $\mathcal C$ as the space of compact linear operators on $\h$ and denote $\Phi_ k = \sqrt {\mu_k } \phi_k$. For a compact linear operator $A\in \mathcal{C}$, define $a_{ij} : = A[\Phi_i, \Phi_j] : =  \lb  A[   \Phi_j] ,   \Phi_i\rb_\h$. We have $A[f,g] : = \lb A [g],f \rb_\h  =  \sum_{i,j=1}^\infty a_{ij}    \lb \Phi_i, f\rb_\mathcal H  \cdot  \lb   \Phi_j,g\rb_\mathcal H$, which implies the useful decomposition 
\begin{align}\label{eq:expression of A}
A(r,s) =   A [\mathbb K (\cdot, r) ,\mathbb K (\cdot, s) ] = \sum_{i,j=1}^\infty a_{ij}  
\Phi_i(r) \Phi_j(s),
\end{align}
which is essentially the reproducing property of $A$ and is used to derive the Representer theorem later in \Cref{lemma:representer}. Note that by \eqref{eq:expression of A}, any compact linear operator $A \in \mathcal{C}$ can be viewed as a bivariate function $A(r,s)$ such that $A[f](r)=\langle A(r, \cdot),f(\cdot) \rangle_\h, r\in[0,1],$ for any $f\in \h$.

We now define norms of the operator $A:\h \to \h$ that are later used to regulate its smoothness. Denote $\|A\|_{ \h, *} $ as the nuclear norm of $A$ such that $\| A\|_{\h, *}  = \sum_{i=1}^ \infty  \sqrt {\lambda_i},$ where $\lambda_i$ is the $i$-th eigenvalue of $A^\top A$ and  $A^\top $ is the adjoint operator of $A$ such that $\lb A[u],v\rb_\h=\lb u, A^\top [v]\rb_\h$. In addition, define $\text{rank}(A) =  \sum_{i=1}^\infty \mathbbm 1_{ \{ \lambda_i \not = 0 \} }$. 
An operator $A$ is said to be a bounded operator if its operator norm  $\|A\|_{\h , \op}  $ is finite, where 
\begin{align}\label{eq:bounded bivariate operator}
\| A\|_{\h , \op} := \sup_{\|u\|_\h\leq 1,\|v\|_\h\leq 1} \lb A[u],v\rb_\h =  \sup_{ \sum_{i=1}^\infty u_i^2  \le 1 ,\sum_{j=1}^\infty v_j^2   \le 1 } \sum_{i,j=1}^\infty   a_{ij}u_iv_j.
\end{align}

  
\subsection{FAR processes in an RKHS}\label{subsec:FARinRKHS}
In this subsection, based on the compact linear operators discussed in the previous section, we define the FAR process in an RKHS and further study its probabilistic properties.

\begin{definition} \label{definition:AR model}
	For an RKHS $\mathcal{H}$, a functional time series $\{X_t\}_{t=1}^T \subset \mathcal  H$ is said to follow a functional autoregressive process of order $D$ in $\mathcal{H}$, hereafter FAR($D$), if
	\begin{align}\label{eq:AR model}
	X_{t} (r)  = \sum_{d=1}^D   \int A_d ^*(r,s) X_{t-d} (s) ds     +\epsilon_{t} (r), \text{ for } r\in [0,1],
	\end{align} 
	where $\{ \epsilon_t \}_{t=1}^T \subset \mathcal H $ is a collection of i.i.d. functional noise and the transition operators $\{A _d ^* \}_{d=1}^D \subset \mathcal C$ are compact linear operators on $\h$.
\end{definition}
\noindent Note that compared to \eqref{eq:farL2}, for ease of presentation, \Cref{definition:AR model} does not include the deterministic function $\mu(\cdot)$, as $\mu(\cdot)$ can be easily removed by centering $X_t$ via $X_t-E(X_t)$ for stationary $\{X_t\}_{t=1}^T$.

Definition \ref{definition:AR model} requires that the functional time series $\{ X _t\}_{t=1}^T$ resides in $\h$. This is an intuitive and necessary condition which allows us to estimate the transition operators  $\{A_d^*\}_{d=1}^D $ from discrete measurements of $\{X_t\}_{t=1}^T$. As discussed in the introduction, most existing FAR literature assumes that $\{X_t\}_{t=1}^T $ are $\lt $ functions with no additional regularity assumptions. We remark that in this latter setting, it is theoretically impossible to  recover the transition operators  $\{A_d^*\}_{d=1}^D $ from discrete measurements  of $ \{X_t\}_{t=1}^T $. In fact, in this case we cannot even consistently estimate one function $X_t$ without extra regularity assumptions, as suggested by  existing information theoretical lower bounds discussed in \cite{mendelson2002geometric} and  \cite{raskutti2012minimax}.

In \Cref{assume:regularity of FARD}, we introduce regularity conditions on the transition operators $\{A_d^*\}_{d=1}^D$ and the i.i.d. functional noise $\{\epsilon_{t}\}_{t=1}^T $ of the FAR($D$) process.
\begin{assumption} \label{assume:regularity of FARD}
	~\\
	{\bf a.} The transition operators $\{A_d^*\}_{d=1}^D \subset \mathcal C$ is a collection of compact linear operators with finite nuclear norm such that $\max_{1\leq d\leq D}\| A^*_d\|_{\h, * } < C$ for some constant $C$.
	\\
	{\bf b.} The functional noise is zero-mean with $E\epsilon _t(s)=0$ for all $s\in [0,1]$. In addition, there exist positive constants $C_\epsilon$ and $\kappa_\epsilon$ such that 
	\begin{align}
	&P( \| \epsilon_t\| _\h \le C_\epsilon ) =1, \label{eq:variance of the functional noise1}\\ 
	&E\left( \int    v(s)   \epsilon_t  (s)ds  \right) ^2   \ge \kappa_\epsilon  \|v\|_{\lt}^2  \ \text{ for all } v\in \h. \label{eq:variance of the functional noise2}
	\end{align}
\end{assumption}
 
\Cref{assume:regularity of FARD}{\bf a} essentially  requires that the transition operator $A^*_d(r,s)$ is a smooth function on  $[0,1]^2$ and implies that for any $s\in [0,1]$, both $A_d^*(\cdot, s)$  and $A_d^*(s,\cdot)$ are functions in $\h$~(see \Cref{lemma:bound of A 1}). This ensures that the reproducing property holds for both arguments of $A_d^*(r,s)$ and therefore allows us to estimate $A_d^*$ from discrete measurements of $\{X_t\}_{t=1}^T$. \Cref{assume:regularity of FARD}{\bf a} is similar to the commonly used assumption in functional principle component analysis~(FPCA) literature that the covariance operator (and therefore the transition operators) of the FAR process can be well approximated by a finite number of eigenfunctions.



\Cref{assume:regularity of FARD}{\bf b} is a commonly used condition in functional analysis literature. Since $E\epsilon _t(s)=0$, condition \eqref{eq:variance of the functional noise2} simply implies that the covariance operator $ \Sigma_\epsilon (s,r):=E (\epsilon_t(s)\epsilon_t (r ) )$ of the noise function is positive definite. Condition \eqref{eq:variance of the functional noise1} can be relaxed to a sub-Gaussian condition where $P( \| \epsilon_t\| _\h > \tau) \le \exp(-c \tau^2).$ In this case, all of our theoretical results still hold and the convergence rate will only be slower by a log factor of $T$.
 
\Cref{assume:X} imposes regularity conditions directly on the FAR($D$) process $\{X_t\}_{t=1}^T$.
\begin{assumption}\label{assume:X}
	The functional time series $\{X_t\}_{t=1}^T$ is stationary and there exist positive constants $C_X$ and $\kappa_X$ such that $P( \| X_t \|_\h \le C_X ) = 1$ and
	\begin{align}\label{eq:FAR restricted eigenvalue condition}
	E \left ( \int  \sum_{d=1}^D    v_d(s) X_{t-d}  (s) ds\right)^2  \ge \kappa_X \sum_{d=1}^D\|v_d\|_{\lt}^2  \ \text{ for all }   \{v_d \}_{d=1}^D\subset \h.
	\end{align}
\end{assumption}
\noindent \Cref{assume:X} is a high-level assumption made for explicitness. Note that when $D=1$, condition \eqref{eq:FAR restricted eigenvalue condition} reduces to $   \iint      v (s) \Sigma_X     (r,s  ) v (r )ds dr   \ge \kappa_X  \|v \|_{\lt}^2 \ \text{ for all } v \in \h, $ where $\Sigma_X (r,s ): =  E(X_t(r)X_t(s) )$. Therefore condition \eqref{eq:FAR restricted eigenvalue condition} can be thought of as the restricted eigenvalue condition for FAR($D$), which is a frequently  used condition in the high-dimensional time series literature, see for example \cite{basu2015regularized}.

\Cref{example:farD} shows that \Cref{assume:X} holds for a large family of FAR processes in RKHS.
\begin{proposition}  \label{example:farD} 
Given \Cref{assume:rkhs} and \Cref{assume:regularity of FARD}{\bf b}, for the FAR($D$) process in \Cref{definition:AR model}, if the transition operators $\{ A_d^* \}_{d=1}^D$ satisfy
\begin{align} \label{eq:example 2 condition}
\sup _{|z| \le 1 ,z \in \mathbb C   } \left \|  \sum_{d=1}^D z^d  A_d^*   \right   \|_{\h, \op}  =\gamma_A < 1,
\end{align} 
then there exists a unique stationary solution $\{X_t\}_{t=-\infty}^\infty$ to \eqref{eq:AR model} and there exists $C_X$ depending only on  $C_\epsilon$   and $\gamma_A$ such that  $P( \| X_t \|_\h \le C_X ) = 1$. In addition, if 
$\max_{1\leq d\leq D}\text{rank}(A_d^*)<\infty$,  then  \eqref{eq:FAR restricted eigenvalue condition} holds with $\kappa_X$ depending only on $\kappa_\epsilon$ and $\gamma_A$.
\end{proposition}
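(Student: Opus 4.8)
The plan is to establish the three claims in order, all resting on an explicit moving-average representation of $\{X_t\}$ obtained by inverting the operator-valued polynomial attached to \eqref{eq:AR model}. \emph{Existence, uniqueness and boundedness.} I would introduce the symbol $B(z):=\sum_{d=1}^D z^d A_d^*$ for $z\in\mathbb C$ and study the characteristic operator $I-B(z)$. By \eqref{eq:example 2 condition}, $\|B(z)\|_{\h,\op}\le\gamma_A<1$ for every $|z|\le 1$, so the Neumann series shows $I-B(z)$ is boundedly invertible with $\|(I-B(z))^{-1}\|_{\h,\op}\le(1-\gamma_A)^{-1}$ on the closed unit disk. Since $z\mapsto B(z)$ is an operator polynomial (hence entire) and each $\|A_d^*\|_{\h,\op}\le\gamma_A$ (extract the $d$-th Fourier coefficient of $B(e^{i\theta})$), a continuity argument extends invertibility to a slightly larger disk $|z|\le 1+\delta$ with $\sup_{|z|\le 1+\delta}\|(I-B(z))^{-1}\|_{\h,\op}=:M<\infty$, where $\delta$ depends only on $\gamma_A$ and $D$. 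Writing $(I-B(z))^{-1}=\sum_{j\ge 0}\Psi_j z^j$ and using the Cauchy estimate on $|z|=1+\delta$ gives the geometric decay $\|\Psi_j\|_{\h,\op}\le M(1+\delta)^{-j}$, equivalently $\Psi_0=I$ and $\Psi_j=\sum_{d=1}^{\min(j,D)}A_d^*\Psi_{j-d}$. I then set $X_t:=\sum_{j\ge 0}\Psi_j[\epsilon_{t-j}]$, which converges absolutely in $\h$ almost surely since $\sum_j\|\Psi_j\|_{\h,\op}\|\epsilon_{t-j}\|_\h\le M C_\epsilon\sum_j(1+\delta)^{-j}<\infty$ by \eqref{eq:variance of the functional noise1}; the same bound yields $\|X_t\|_\h\le C_X$ a.s.\ with $C_X$ depending only on $C_\epsilon$ and $\gamma_A$. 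Matching coefficients of $\epsilon_{t-m}$ after substituting the series into \eqref{eq:AR model} verifies it is a solution, stationarity is inherited from the i.i.d.\ structure of $\{\epsilon_t\}$, and uniqueness follows by passing to the companion first-order form on $\h^D$, iterating $k$ steps, and letting $k\to\infty$, the remainder vanishing in $L^2$ because the companion operator has spectral radius at most $(1+\delta)^{-1}<1$.

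\emph{Restricted eigenvalue bound.} Here I would exploit the triangular dependence of $X_{t-1},\dots,X_{t-D}$ on the innovations. Split each lag as $X_{t-d}=\sum_{j=0}^{D-d}\Psi_j[\epsilon_{t-d-j}]+P_d$, where the first term depends only on the recent block $\epsilon_{t-1},\dots,\epsilon_{t-D}$ and $P_d$ depends only on $\{\epsilon_{t-m}:m>D\}$. Writing $L(v):=\int\sum_d v_d(s)X_{t-d}(s)\,ds$ and using $E[X_{t-d}]=0$, the two innovation groups are independent and centered, so $E[L(v)^2]=E[L_{\mathrm{rec}}^2]+E[L_{\mathrm{past}}^2]\ge E[L_{\mathrm{rec}}^2]$, where $L_{\mathrm{rec}}=\sum_{m=1}^D\lb w_m,\epsilon_{t-m}\rb_\lt$ and $w_m=\sum_{d=1}^{m}(\Psi_{m-d})^\dagger[v_d]$ with $(\cdot)^\dagger$ the $\lt$-adjoint. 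Because the distinct $\epsilon_{t-m}$ are independent and centered, the cross terms drop and \eqref{eq:variance of the functional noise2} gives $E[L_{\mathrm{rec}}^2]=\sum_{m=1}^D E[\lb w_m,\epsilon_{t-m}\rb_\lt^2]\ge\kappa_\epsilon\sum_{m=1}^D\|w_m\|_\lt^2$. Finally the map $(v_1,\dots,v_D)\mapsto(w_1,\dots,w_D)$ is block lower triangular with identity diagonal, i.e.\ $I+N$ with $N$ nilpotent, hence boundedly invertible, so $\sum_m\|w_m\|_\lt^2\ge\|(I+N)^{-1}\|^{-2}\sum_d\|v_d\|_\lt^2$, which is exactly \eqref{eq:FAR restricted eigenvalue condition} with $\kappa_X=\kappa_\epsilon\|(I+N)^{-1}\|^{-2}$.

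\emph{Main obstacle.} The delicate point is that \eqref{eq:AR model} and \eqref{eq:FAR restricted eigenvalue condition} are phrased through the $\lt$ pairing, whereas $\gamma_A$, the nuclear norm, and the whole operator calculus above live in $\h$; reconciling the two topologies is precisely where $\max_d\rank(A_d^*)<\infty$ enters. Finite rank guarantees that each $\Psi_j$ (a finite sum of products of the $A_d^*$) is again finite rank with range and co-range in $\h$, so the adjoints $(\Psi_{m-d})^\dagger$ are well defined, map into $\h$ (so that \eqref{eq:variance of the functional noise2} applies to $w_m$), and have finite $\lt$-operator norms; this is what makes $N$ bounded and $\|(I+N)^{-1}\|$ finite, hence $\kappa_X>0$. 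Controlling $\|(I+N)^{-1}\|$ in terms of $\gamma_A$ and $D$ alone, rather than the individual transition operators, is the step I expect to require the most care, and is presumably where the quantitative decay bounds $\|\Psi_j\|_{\h,\op}\le M(1+\delta)^{-j}$ from the existence step are reused.
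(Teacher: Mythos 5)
Your first part (existence, uniqueness, a.s.\ boundedness) is essentially the paper's argument: the paper simply verifies the spectral condition and cites Bosq's Theorems 5.1--5.2 (its \Cref{theorem:bosq}), whereas you re-derive the moving-average inversion by hand via Neumann series, analytic continuation to a slightly larger disk, and Cauchy estimates; that re-derivation is sound. Your second part is genuinely different. The paper proves \eqref{eq:FAR restricted eigenvalue condition} by using $\max_d \rank(A_d^*)<\infty$ to build a finite-dimensional subspace $S\supset \s\{\col(A_d^*)\cup\row(A_d^*)\}_{d=1}^D\cup\{v_d\}_{d=1}^D$, projecting the FAR onto an orthonormal basis of $S$ to obtain an $N$-dimensional VAR($D$), transferring the spectral condition to the coefficient matrices, and then invoking known VAR restricted-eigenvalue bounds via the companion VAR(1) form (its \Cref{lemma:subgaussian restricted eigenvalue}, of Basu--Michailidis type), which yields a constant depending only on $\kappa_\epsilon$ and $\gamma_A$. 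You instead condition on the $D$ most recent innovations, discard the independent past by positivity, and invert the unipotent triangular map $v\mapsto w$; this is self-contained (no external VAR machinery) and structurally correct, but it buys a weaker conclusion: your $\kappa_X$ (and likewise your $C_X$) inherits dependence on $D$ and on the MA bounds $M,\delta$, whereas the proposition asserts dependence only on $(\kappa_\epsilon,\gamma_A)$ and $(C_\epsilon,\gamma_A)$ respectively; the paper's projection route avoids this.

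The one step I would call a genuine gap is your handling of the two inequivalent operator structures in play. The norm $\|\cdot\|_{\h,\op}$ in \eqref{eq:example 2 condition} is the norm of the kernel acting through the $\h$ inner product, $A[v](r)=\lb A(r,\cdot),v\rb_\h$, while the recursion \eqref{eq:AR model} applies $A_d^*$ through the $\lt$ pairing, $f\mapsto\int A_d^*(\cdot,s)f(s)\,ds$. These are different operators with different compositions, so the assertion ``$\|B(z)\|_{\h,\op}\le\gamma_A<1$, hence $I-B(z)$ is boundedly invertible by Neumann series'' does not, as written, invert the recursion that $X_t$ actually satisfies: submultiplicativity of $\|\cdot\|_{\h,\op}$ holds for $\h$-compositions, not for the $\lt$-compositions from which your $\Psi_j$ must be built. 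The same conflation reappears when you bound the $\lt$-operator norms of the blocks $(\Psi_{m-d})^\dagger$ of $N$ by the $\h$-op decay bounds. The paper closes exactly this gap with \Cref{lemma:norms in stationary process} and \Cref{lemma:integral operator well defined} together with \Cref{remark:mu1}: the integral operator with kernel $B$ has $\h\to\h$ (and $\lt\to\lt$) operator norm at most $\mu_1\|B\|_{\h,\op}\le C_{\mathbb K}\|B\|_{\h,\op}=\|B\|_{\h,\op}$. Once this bridging inequality is inserted and all subsequent estimates are run in the genuine operator norm of the integral operators, both halves of your argument go through; note also that applying \eqref{eq:variance of the functional noise2} to $w_m$ requires $w_m\in\h$, which your finite-rank observation does legitimately supply.
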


The stationarity result in \Cref{example:farD} is similar to that in Theorem 5.1 of \cite{bosq2000linear}, which gives the stationarity condition of an FAR process in $\lt.$ For $D=1$, condition \eqref{eq:example 2 condition} in \Cref{example:farD} reduces to $\|A_1^*\|_{\h, \op}<1$, which is intuitive and resembles the stationarity condition for the classical AR(1) process~(\cite{Brockwell1991}). For a general $D$, condition \eqref{eq:example 2 condition} resembles the stability condition of the VAR($D$) process~(\cite{Luetkepohl2005}).

\section{Estimation Methodology and Main Results}\label{sec:FARD}
In this section, we propose a penalized nuclear norm estimator for the transition operators of the FAR process in RKHS~(\Cref{definition:AR model}) and further study its consistency.

\Cref{subsec:noise free FARD} proposes the RKHS-based penalized estimation procedure for the transition operators $\{A_d^*\}_{d=1}^D$ with discrete realizations of $\{X_t\}_{t=1}^T$. \Cref{subsec:theory} establishes the consistency and the sharp convergence rate for the proposed estimator. \Cref{subsec:optimization} formulates the penalized estimation as a trace norm minimization problem and discusses its numerical implementation.

\subsection{Penalized estimation and Representer theorem} \label{subsec:noise free FARD}
As discussed before, in almost all real applications, instead of fully observed functional time series $\{X_t(s), s\in[0,1]\}_{t=1}^T$, the available data are typically discrete measurements $\{X_t(s_i)\}_{1\le t\le T,1\le i\le n}$, where $\{s_i\}_{i=1}^n$ denotes the collection of sampling points. Following the standard RKHS literature, we assume $ \{s_i\}_{i=1}^n$ to be a collection of random designs uniformly sampled from the domain $[0,1]$. Given $\{X_t(s_i)\}_{1\le t\le T,1\le i\le n}$, our interest is the $D$ unknown transition operators $\{A_d^*\}_{d=1}^D$, as the estimation of $\{A_d^*\}_{d=1}^D$ facilitates important inference tasks such as prediction.

We remark that for mathematical brevity, in this paper we only consider the case that $ \{s_i\}_{i=1}^n$ are uniformly sampled from the domain $[0,1]$. As a common feature in the RKHS literature (see e.g., \cite{koltchinskii2010sparsity}, \cite{raskutti2012minimax} and reference therein), all the results presented in the paper continue to hold under the more general setting where the random designs $ \{s_i\}_{i=1}^n$ are i.i.d sampled from a common continuous distribution on $[0,1]$ with density $p$ such that 
$\inf_{r\in [0,1]} p(r)>0$, if we redefine  $\lt $ with the inner product $ \lb f,g\rb_\lt := \int_0^1  f(s)g(s) p(s)ds$ and adjust the definition of $\h$ accordingly. 


Utilizing the finiteness of the nuclear norm of $\{A^*_d\}_{d=1}^D$ imposed in \Cref{assume:regularity of FARD}, we construct its estimator $\{\widehat A_d\}_{d=1}^D$ via a constrained nuclear norm optimization such that
\begin{align}\label{eq:approx A 2}
 \{ \widehat A_d\}_{d=1}^D   = \argmin_{\{ A_d \}_{d=1}^D \in  \mathcal C_{\bm\tau}}   \frac{1}{Tn} \sum_{t=D+1}^T    \sum_{i=1}^ n   \left  (X_{t} (s_i) -  \sum_{d=1}^D  \frac{1}{n }\sum_{j=1} ^n A_d  (s_i, s_j)  X_{t-d } (s_j) \right  )^2    
\end{align}   
where $\bm\tau=(\tau_1,\cdots,\tau_D)$ is the tuning parameter and $\mathcal C_{\bm\tau} :=\{(A_1,\cdots,A_D): A_d\in \mathcal C \text{ and } \| A_d\|_{\h, * }\le \tau_d,~ d=1,\cdots, D\}$ is the constraint space. We name $\{\widehat{A}_d\}_{d=1}^D$ in \eqref{eq:approx A 2} the penalized/constrained nuclear norm estimator for transition operators of FAR. (We use the term constrained and penalized nuclear norm estimator exchangeably due to the equivalence between constrained and penalized optimization. See \Cref{subsec:optimization} for more detail.)

To motivate the formulation of \eqref{eq:approx A 2}, consider the (ideal yet infeasible) scenario in which $\{ X_t\}_{t=1}^T$ are fully observed in the entire domain $[0,1]$, thus we can solve 
\begin{align*}
 \{ \widetilde A_d\}_{d=1}^D  = \argmin_{\{ A_d \}_{d=1}^D \in  \mathcal  C_{\bm\tau}  } \frac{1}{T }   \sum_{t=D+1}^T \int  \left  (X_{t} (r) - \sum_{d=1}^D \int   A_d (r,s)X_{t-d} (s) ds \right  )^2 dr. 
\end{align*}
 However, since only discrete measurements $ \{X_t(s_i)\}_{1\le t\le T,1\le i\le n}$ are observed, we instead solve \eqref{eq:approx A 2} where we use the integral approximation  
$ \int  A_d (s_i, r ) X_{t-d}  (r)dr \approx   \frac{1}{n }\sum_{j=1} ^n A_d(s_i, s_j)  X_{t-d}  (s_j) $.


Observe that \eqref{eq:approx A 2} is an optimization problem in an infinite dimensional Hilbert space due to the nature of functional time series. As discussed in Section \ref{sec:intro}, existing literature~\citep[e.g.][]{bosq2000linear,Didericksen2012,aue2015prediction} uses dimension reduction to bypass such difficulty. Instead, we derive the Representer theorem for the penalized estimator in RKHS, which reduces the infinite dimensional optimization problem in \eqref{eq:approx A 2} to finite dimension without dimension reduction.
\begin{proposition}[Representer theorem] \label{lemma:representer}
There exists a minimizer $\{ \widehat A_d \}_{d=1}^D$ of the constrained nuclear norm optimization \eqref{eq:approx A 2} such that for any $(r, s)\in [0,1] \times [0,1]$, 
\begin{align}
\label{eq:representer}
\widehat A_d (r, s) = \sum_{1\le i,j \le n}  \widehat  a_{d,ij }\mk (r, s_i)\mk (s, s_j), \text{ for } d=1,2,\cdots, D. 
\end{align}
\end{proposition}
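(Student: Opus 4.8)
The plan is to run the standard representer-theorem argument adapted to the bilinear/operator setting: show that the objective in \eqref{eq:approx A 2} depends on each $A_d$ only through the finitely many evaluations $\{A_d(s_i,s_j)\}_{1\le i,j\le n}$, and then ``compress'' an arbitrary feasible point onto the data subspace without changing those evaluations or increasing the nuclear norm. Let $V:=\s\{\mk(\cdot,s_i)\}_{i=1}^n\subset\h$ and let $P:\h\to\h$ be the orthogonal projection onto $V$ with respect to $\lb\cdot,\cdot\rb_\h$, so that $\|P\|_{\h,\op}=1$ and $P=P^\top$. By the bilinear-form identity preceding \eqref{eq:expression of A}, namely $A_d(r,s)=\lb A_d[\mk(\cdot,s)],\mk(\cdot,r)\rb_\h$, each relevant evaluation is the inner product $A_d(s_i,s_j)=\lb A_d[\mk(\cdot,s_j)],\mk(\cdot,s_i)\rb_\h$, and the objective in \eqref{eq:approx A 2} is a (continuous, quadratic) function of these numbers alone.

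First I would show that the two-sided compression $PA_dP$ leaves every such evaluation unchanged. Since $\mk(\cdot,s_i),\mk(\cdot,s_j)\in V$ we have $P[\mk(\cdot,s_i)]=\mk(\cdot,s_i)$ and $P[\mk(\cdot,s_j)]=\mk(\cdot,s_j)$, whence, using $P=P^\top$,
\begin{align*}
(PA_dP)(s_i,s_j)
&=\lb PA_dP[\mk(\cdot,s_j)],\mk(\cdot,s_i)\rb_\h
=\lb A_dP[\mk(\cdot,s_j)],P[\mk(\cdot,s_i)]\rb_\h \\
&=\lb A_d[\mk(\cdot,s_j)],\mk(\cdot,s_i)\rb_\h
=A_d(s_i,s_j).
\end{align*}
Hence $\{PA_dP\}_{d=1}^D$ attains exactly the same objective value as $\{A_d\}_{d=1}^D$. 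Next I would verify that the compression preserves feasibility: $PA_dP$ has range in the finite-dimensional $V$, so it is finite rank, hence compact, i.e.\ $PA_dP\in\mathcal C$; and by the ideal property of the nuclear norm, $\|PA_dP\|_{\h,*}\le\|P\|_{\h,\op}\|A_d\|_{\h,*}\|P\|_{\h,\op}=\|A_d\|_{\h,*}\le\tau_d$. Thus $\{PA_dP\}_{d=1}^D\in\mathcal C_{\bm\tau}$.

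It then remains to identify the compressed operators with the representer form. Via \eqref{eq:bivariate_kernel}, the bivariate kernel $\mk(r,s_i)\mk(s,s_j)$ corresponds to the rank-one operator $v\mapsto\lb v,\mk(\cdot,s_j)\rb_\h\,\mk(\cdot,s_i)=v(s_j)\mk(\cdot,s_i)$. Because $\{\mk(\cdot,s_i)\}_{i=1}^n$ spans $V$, these $n^2$ rank-one operators span the full space of operators that map into $V$ and annihilate $V^\perp$, which is precisely $\{B:\ B=PBP\}$. Since $PA_dP=P(PA_dP)P$ belongs to this space, there exist coefficients $\{\widehat a_{d,ij}\}_{1\le i,j\le n}$ with $(PA_dP)(r,s)=\sum_{1\le i,j\le n}\widehat a_{d,ij}\mk(r,s_i)\mk(s,s_j)$, i.e.\ the form \eqref{eq:representer}. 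Finally, writing $m^*$ for the optimal value of \eqref{eq:approx A 2}, the compression map shows $m^*$ equals the infimum of the same objective over $\mathcal S:=\{(B_1,\dots,B_D):\ B_d=PB_dP,\ \|B_d\|_{\h,*}\le\tau_d\}$, a compact subset of the finite-dimensional space $\mathcal L(V)^D$; a continuous objective on a compact set attains its minimum, and since $\mathcal S\subset\mathcal C_{\bm\tau}$ this minimizer is a global minimizer of \eqref{eq:approx A 2} of the claimed form.

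The main obstacle I anticipate is the feasibility step: one must make the operator/bivariate-kernel correspondence rigorous in the infinite-dimensional RKHS (so that $PA_dP$ genuinely corresponds to a bivariate function and is covered by the rank-one spanning argument) and justify the ideal inequality $\|PA_dP\|_{\h,*}\le\|A_d\|_{\h,*}$ for the nuclear norm $\|\cdot\|_{\h,*}$ as defined here, rather than importing it as a matrix fact. Once the compression is set up with the correct self-adjoint projection and the norm non-increase is established, the remaining steps are routine bookkeeping.
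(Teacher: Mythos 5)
Your proposal is correct, and its core is the same as the paper's: your two-sided compression $P A_d P$ is exactly the paper's restriction $A_d\vert_{S\times S}[f,g]:=A_d[\mathcal P_S f,\mathcal P_S g]$ with $S=\s\{\mk(\cdot,s_i)\}_{i=1}^n$, your evaluation-invariance step matches the paper's Step 1, and your rank-one spanning argument is the content of \Cref{lemma:constrained operator}. The two differences are worth noting. First, where you invoke the ideal property of the nuclear norm, $\|PA_dP\|_{\h,*}\le\|P\|_{\h,\op}\|A_d\|_{\h,*}\|P\|_{\h,\op}$, the paper proves precisely this compression inequality from scratch (\Cref{lemma:inequality of norm}, via the min–max eigenvalue comparisons in \Cref{coro:min-max 1} and \Cref{coro:min-max 2}); this is exactly the justification you flagged as the ``main obstacle,'' so your instinct about where the real work lies was accurate, and either route closes that gap. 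Second, your final compactness argument — that the optimal value equals the infimum over the closed bounded set $\{(B_1,\dots,B_D):B_d=PB_dP,\ \|B_d\|_{\h,*}\le\tau_d\}$ in the finite-dimensional space $\mathcal L(V)^D$, where a continuous objective attains its minimum — is a genuine addition: the paper's proof begins ``let $\{\widehat B_d\}_{d=1}^D$ be a solution,'' implicitly assuming a minimizer exists, whereas the statement asserts existence. Your version therefore proves slightly more than the paper's written argument does, at the cost of having to make the ideal-norm inequality rigorous in the operator (rather than matrix) setting.
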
 

We note that while \Cref{lemma:representer} implies that a minimizer of \eqref{eq:approx A 2} lives in the space spanned by the reproducing kernel $\{\mk ( s_i,\cdot )\}_{i=1}^n$, it does not rule out the possibility that there is a different solution $  \{ \widehat A_d'\}_{d=1}^D $ that lives in a different subspace of higher dimensions. However, this does not affect the later theoretical analysis of consistency, which holds for any minimizer of \eqref{eq:approx A 2}. We remark that uniqueness of optimum is not a necessary condition for consistency in the RKHS literature.  See for instance, \cite{raskutti2012minimax} and \cite{koltchinskii2010sparsity}.

Given the estimated transition operators $\{\widehat A_d\}_{d=1}^D$, the one-step ahead prediction of $X_{T+1} $ can be readily calculated as
\begin{align}\label{eq:prediction}
\widehat X_{T+1}(r)=\sum_{d=1}^D\frac{1}{n}\sum_{j=1}^n\widehat{A}_d(r,s_j)X_{T+1-d}(s_j) \text{ for } r \in [0,1].
\end{align}

 
\subsection{Consistency}\label{subsec:theory}
In this section, we investigate the theoretical properties of the penalized nuclear norm estimator $\{\widehat A_d\}_{d=1}^D$ and establish its consistency.

Given Assumptions \ref{assume:rkhs}-\ref{assume:X}, \Cref{theorem:far rate} establishes the consistency result of $\{\widehat A_d\}_{d=1}^D$ and further provides the explicit convergence rate. We first introduce some notations before stating the theorem. Denote $\|f\|_n^2=\frac{1}{n}\sum_{i=1}^n f(s_i)^2$. We define
\begin{align}\label{eq:gamma 1}
&\gamma_n' : =  \inf\left\{ \gamma :  \left |  \int  f(s) ds  - \frac{1}{n} \sum_{i=1}^ n  f(s_i)   \right |    \le  \gamma   \|f\|_\lt    +    \gamma ^2   
\text{ for all } f \text{ such that }  \| f\|_{\h }\le 1     \right\},
\\ \label{eq:gamma 2}
&\gamma_n'' : = \inf\left \{ \gamma :   \|f \|_\lt ^2  \le  2 \|f \|_n^2 +      \gamma ^2   \text{ and } 
 \|f \|_n^2  \le  2 \|f \|_\lt ^2  +      \gamma ^2   \text{ for all } f \text{ such that } \| f\|_\h \le  1   \right\},
 \\ \label{eq:gamma}
 & \gamma_n = \max  \{ \gamma_n', \gamma_n''\}.
\end{align}
Intuitively speaking, $\gamma_n$ (uniformly) quantifies how well we know a function $f\in \h$~(in our case $f=X_t$) given its measurements on $n$ sample points: if the number of measurements $n$ increases, we have more knowledge of $X_t$ and $\gamma_n$ decreases. We note that $\gamma_n^2$ is the optimal mean squared error bound of estimating a single function in $\mathcal H$ given $n$ discrete measurements. See \cite{mendelson2002geometric},
\cite{koltchinskii2010sparsity} and \cite{raskutti2012minimax} for more details. For $\mathcal H  =  W^{ \alpha,2 }$, \Cref{corollary:explicit rate} of the Appendix establishes that $\gamma_n  = O_p(  n^{-\alpha /(2\alpha+1) } ) $, where $W^{ \alpha,2 }$ denotes the commonly used Sobolev space on $[0,1]$~(see Section \ref{subsec:RKHS_overview}).

In addition, we define 
\begin{align} 
  \nonumber 
&\delta '_T  : =       \inf \Bigg \{ \delta:      \left| 
\frac{1}{T  }\sum_{t=1}^T  
\left(   \sum_{d=1}^D    \int v_d(  r) X_{t-d} (r) dr   
 \right)  ^2    -
 E \left(   \sum_{d=1}^D    \int v_d(  r) X_{t-d} (r) dr   
 \right)  ^2  
\right|  \le     \delta   \sqrt { \sum_{d=1}^D \|v_d \|_\lt^2 }   
\\
&  \quad \quad     
\text{for all }  \{ v_d\}_{d=1}^D \text{ such that } \sup_{1\le d \le D} \|v_d\|_\h \le 1  \Bigg \},  \label{eq:delta 2}
\\
&\delta_T ''  :=  
\sup_{   1\le d \le D,  r,s \in [0,1] }  \left | \frac{1}{T} \sum_{t=1}^T    X_{t-d}(r ) \epsilon_{t} (s)    \right  |, \label{eq:delta 1} 
\\
&\delta_ T = \max \{ \delta_T', \delta_T'' \}. \label{eq:delta} 
\end{align} 
Intuitively speaking, $ \delta_T$ characterizes the convergence rate of $\{ \widetilde A_d\}_{d=1}^D $ when the functional time series $\{X_t\}_{t=1}^T$ is fully observed.  \Cref{corollary:delta explicit rate} of the Appendix shows that, for $\h=W^{\alpha,2}$, there exists constants $c'_w, C'_w$ such that  
\begin{align*}  
   P \left(  \delta_T '  \ge    C _w'   T ^{  \frac{-\alpha}{2\alpha +1 } }   \right ) \le 2T^2\exp \left (-c_w'   T ^{\frac{1}{2\alpha + 1}}     \right  )   \quad 
 \text{and}    \quad    
   P \left(  \delta_T'' \ge 3 C_XC_\epsilon   \sqrt { \frac{\log(T )}{ T}}    \right ) \le   T^{-3} . 
\end{align*}

We now state the main theoretical result of the paper, which quantifies the convergence rate of the penalized estimators  $\{\widehat A_d \}_{d=1}^D$ through the $\lt$ norm. The $\lt$ norm of any bivariate function  $A(r,s)$  is defined as $\| A\|_\lt ^2 : = \iint A^2(r,s) dr ds.$
\begin{theorem} \label{theorem:far rate}
Suppose Assumptions \ref{assume:rkhs}-\ref{assume:X} hold. Let $\{ \widehat A_d\}_{d=1}^D$ be the solution of \eqref{eq:approx A 2}. If $\kappa_X\ge 64 D\gamma_n^2$ and the tuning parameter $\bm\tau$ satisfies $ \|A_d^*\|_{\h,*}\leq \tau_d <C_A$, $d=1,\cdots, D$ for some constant $C_A$, then we have
\begin{align}\label{eq:convergent rate in far}
\sum_{d=1}^D  \|\widehat A_d -A^*_d  \|_\lt ^2 \le  C_1    \left (  \gamma_n^2 + \delta_T^2   \right)  ,
\end{align}
for some constant $C_1$ independent of $n$ and $T$.
\end{theorem}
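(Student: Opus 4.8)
The plan is to run the classical \emph{basic-inequality-plus-restricted-eigenvalue} argument, adapted to the double quadrature inherent in \eqref{eq:approx A 2}. Since the tuning parameters obey $\|A_d^*\|_{\h,*}\le\tau_d$, the truth $\{A_d^*\}_{d=1}^D$ lies in the constraint set $\mathcal C_{\bm\tau}$ and is therefore feasible; optimality of $\{\widehat A_d\}_{d=1}^D$ then gives the basic inequality that the empirical objective at $\widehat A$ does not exceed its value at $A^*$. Writing $\Delta_d:=\widehat A_d-A_d^*$ and substituting the model $X_t(s_i)=\sum_{d}\int A_d^*(s_i,s)X_{t-d}(s)\,ds+\epsilon_t(s_i)$ into both sides, I would expand the squares and cancel the common term to reduce the basic inequality to the schematic form
\begin{align*}
\frac{1}{Tn}\sum_{t,i} G_t(s_i)^2 \;\le\; \frac{2}{Tn}\sum_{t,i} \Big(\epsilon_t(s_i)+E_{t}(s_i)\Big)\,G_t(s_i),
\end{align*}
where $G_t(s_i):=\sum_d \frac1n\sum_j \Delta_d(s_i,s_j)X_{t-d}(s_j)$ is the prediction difference induced by $\Delta$ and $E_t(s_i):=\sum_d\big[\int A_d^*(s_i,s)X_{t-d}(s)\,ds-\frac1n\sum_jA_d^*(s_i,s_j)X_{t-d}(s_j)\big]$ is the quadrature error of the truth. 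The only use of the constraint set is to record $\|\Delta_d\|_{\h,*}\le 2\tau_d\le 2C_A$ via the triangle inequality.

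The heart of the proof, and the step I expect to be hardest, is the curvature (restricted-eigenvalue) lower bound showing that the left side above is at least a constant multiple of $\kappa_X\sum_d\|\Delta_d\|_\lt^2$ up to second-order error. This requires peeling off three successive approximations. First, replace the inner Riemann sum over $j$ in $G_t(s_i)$ by the integral $\int \Delta_d(s_i,\cdot)X_{t-d}$, bounding the error through $\gamma_n'$ in \eqref{eq:gamma 1}, where \Cref{assume:rkhs}{\bf a} guarantees the integrand $\Delta_d(s_i,\cdot)X_{t-d}(\cdot)$ stays in $\h$ with controlled norm. Second, replace the outer average $\frac1n\sum_i$ by $\int dr$ via $\gamma_n''$ in \eqref{eq:gamma 2}. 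Third, replace the time average by its expectation using $\delta_T'$ in \eqref{eq:delta 2}, after which \Cref{assume:X}, applied pointwise in $r$ with $v_d=\Delta_d(r,\cdot)$ and then integrated over $r$, delivers $\kappa_X\sum_d\|\Delta_d\|_\lt^2$. The condition $\kappa_X\ge 64D\gamma_n^2$ is exactly what allows the accumulated quadrature and sampling errors, each carrying a factor of the (bounded) RKHS norm of $\Delta_d$, to be absorbed into a fixed fraction of the curvature. The delicacy is that both arguments of $\Delta_d$ are discretized on the same grid $\{s_i\}$, so the pointwise-in-$r$ norm control from \Cref{lemma:bound of A 1} together with $\|\Delta_d\|_{\h,*}\le 2C_A$ must be threaded carefully through each approximation to verify the remainders are genuinely lower order.

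For the right side I would bound the two cross terms separately. For the noise term, approximate the double grid sum by the $\lt$ inner product $\langle\Delta_d,N_d\rangle_\lt$ with $N_d(r,s):=\frac1T\sum_t\epsilon_t(r)X_{t-d}(s)$, the sum-to-integral error again being controlled by $\gamma_n$ and $\|\Delta_d\|_{\h,*}\le 2C_A$; Cauchy--Schwarz together with $\|N_d\|_\lt\le \sup_{r,s}|N_d(r,s)|=\delta_T''$ (from \eqref{eq:delta 1}, the domain being $[0,1]^2$) then yields a bound of order $\delta_T''\sqrt{\sum_d\|\Delta_d\|_\lt^2}$, linear in the target quantity. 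For the $A^*$-quadrature cross term, a Cauchy--Schwarz against $G_t$ bounds it by $\gamma_n$ times the square root of the left-side quadratic form.

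Finally I would combine the pieces into a scalar quadratic inequality of the schematic form $\kappa_X S^2 \lesssim \delta_T\, S + \gamma_n\sqrt{L}$, where $S^2=\sum_d\|\Delta_d\|_\lt^2$ and $L$ is the empirical quadratic form that the curvature step lower bounds by $\gtrsim \kappa_X S^2$ (the curvature remainder being of order $\gamma_n^2$ and $\delta_T' S$, hence of the same type). Solving this inequality in $S$, and using $\delta_T\ge\delta_T''$ and $\delta_T\ge\delta_T'$, gives $S\lesssim \gamma_n+\delta_T$, hence $\sum_d\|\widehat A_d-A_d^*\|_\lt^2=S^2\le C_1(\gamma_n^2+\delta_T^2)$ for a constant $C_1$ depending only on the problem constants $\kappa_X, D, C_A, C_X, C_\epsilon$ and independent of $n$ and $T$, as claimed.
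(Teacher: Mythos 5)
Your proposal is correct and follows essentially the same route as the paper's proof: the basic inequality from feasibility of $\{A_d^*\}_{d=1}^D$, quadrature and sampling errors controlled by $\gamma_n$ and absorbed into the curvature via $\kappa_X\ge 64D\gamma_n^2$, concentration via $\delta_T'$ and $\delta_T''$, the restricted eigenvalue condition supplying the lower bound $\kappa_X\sum_d\|\Delta_d\|_{\lt}^2$, and a final scalar quadratic inequality solved for $\sum_d\|\Delta_d\|_{\lt}^2$. The minor deviations — converting the outer grid average to an integral early in the curvature step rather than at the very end, and bounding the noise cross term through the $\lt$ inner product $\langle \Delta_d, N_d\rangle_{\lt}$ with Cauchy--Schwarz instead of staying on the grid and using Young's inequality — are immaterial technical variants of the same argument.
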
 

The bound in \Cref{theorem:far rate} has two components: $\gamma_n^2$ quantifies how well we can estimate a single function in $\mathcal H$ based on $n$ discrete measurements and $\delta_T^2$ is the rate of estimating the transition operators given fully observed $\{X_t\}_{t=1}^T$. Note that unlike existing FAR literature, the consistency result in \Cref{theorem:far rate} does not require the (unrealistic) assumption of fully observed functional time series. To our best knowledge, this is the first result in the FAR literature providing theoretical guarantees for the estimation of transition operators based on discrete measurements.

An immediate result of \Cref{theorem:far rate} is the explicit convergence rate of the penalized estimator $\{\widehat A_d\}_{d=1}^D$ for FAR($D$) in the Sobolev space $W^{\alpha,2}$, which is given in the following \Cref{corollary:farD}.
\begin{corollary}\label{corollary:farD}
Suppose the conditions in \Cref{theorem:far rate} hold. For $\h=W^{\alpha,2}$, with probability at least $1-1/n^4 -1/T^3 -2T^2\exp \left (-c_w'   T ^{\frac{1}{2\alpha + 1}}     \right  )$, it holds that
	\begin{align*}
	\sum_{d=1}^D  \|\widehat A_d -A^*_d  \|_\lt^2 \le  C_1'    \left (n^{\frac{-2\alpha}{2\alpha+ 1}} + T^{\frac{-2\alpha}{2\alpha+ 1}}   \right),
	\end{align*}
for some constants $c_w', C_1'$ independent of $n$ and $T$.
\end{corollary}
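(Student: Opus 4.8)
The plan is to treat \Cref{corollary:farD} as a direct specialization of \Cref{theorem:far rate} to $\h = W^{\alpha,2}$, obtained by substituting the explicit stochastic rates for $\gamma_n$ and $\delta_T$ recorded in the appendix. First I would invoke \Cref{corollary:explicit rate}, which for the Sobolev space yields a constant $C_\gamma$ with $\gamma_n \le C_\gamma n^{-\alpha/(2\alpha+1)}$ on an event of probability at least $1 - 1/n^4$, together with the two tail bounds of \Cref{corollary:delta explicit rate} controlling $\delta_T'$ and $\delta_T''$. Defining the good event $\mathcal E$ as the intersection of these three events, a union bound gives $P(\mathcal E) \ge 1 - 1/n^4 - 1/T^3 - 2T^2 \exp(-c_w' T^{1/(2\alpha+1)})$, which is exactly the probability appearing in the statement.

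Then, working on $\mathcal E$, I would convert the rates into a bound on $\gamma_n^2 + \delta_T^2$. Squaring gives $\gamma_n^2 \le C_\gamma^2 n^{-2\alpha/(2\alpha+1)}$ and, recalling that $\delta_T = \max\{\delta_T', \delta_T''\}$, $\delta_T^2 \le \max\{(C_w')^2 T^{-2\alpha/(2\alpha+1)},\, 9 C_X^2 C_\epsilon^2 \log(T)/T\}$. The one point needing a moment's care is that the $\delta_T''$ contribution is harmless: since $2\alpha/(2\alpha+1) < 1$ for every $\alpha$, the parametric-type rate $\log(T)/T$ decays strictly faster than $T^{-2\alpha/(2\alpha+1)}$, so for $T$ large it is absorbed and $\delta_T^2 \le C_\delta T^{-2\alpha/(2\alpha+1)}$ for a suitable $C_\delta$. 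Feeding these into the conclusion \eqref{eq:convergent rate in far} of \Cref{theorem:far rate} yields $\sum_{d=1}^D \|\widehat A_d - A_d^*\|_\lt^2 \le C_1\bigl(C_\gamma^2 n^{-2\alpha/(2\alpha+1)} + C_\delta T^{-2\alpha/(2\alpha+1)}\bigr)$, and setting $C_1' = C_1 \max\{C_\gamma^2, C_\delta\}$ gives the claimed rate.

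The only genuinely substantive (rather than bookkeeping) point, and where I expect the main care to be needed, is checking that the hypotheses of \Cref{theorem:far rate} are in force on $\mathcal E$, in particular the restricted-eigenvalue-type requirement $\kappa_X \ge 64 D\gamma_n^2$. Since $\kappa_X$ is a fixed positive constant while $\gamma_n^2 \le C_\gamma^2 n^{-2\alpha/(2\alpha+1)} \to 0$ on $\mathcal E$, this condition holds automatically once $n$ is large enough, so the corollary is asymptotic in $n$ in the usual sense; I would state this threshold explicitly. The remaining steps, namely the union bound and the elementary comparison of the polynomial and $\log(T)/T$ rates, are routine, so beyond verifying the $\kappa_X$ condition and the domination of $\log(T)/T$ there is no real obstacle.
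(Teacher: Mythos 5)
Your proposal is correct and follows essentially the same route as the paper's own proof: invoke \Cref{corollary:explicit rate} for $\gamma_n$ and \Cref{corollary:delta explicit rate} for $\delta_T'$ and $\delta_T''$, take a union bound, and plug the resulting rates into \eqref{eq:convergent rate in far} of \Cref{theorem:far rate}. The only difference is that you spell out the bookkeeping the paper compresses into ``the result immediately follows'' --- the union bound, the absorption of the $\log(T)/T$ term into $T^{-2\alpha/(2\alpha+1)}$, and the remark on $\kappa_X \ge 64D\gamma_n^2$ (which, note, is already assumed as a hypothesis of the corollary via ``the conditions in \Cref{theorem:far rate} hold,'' so no largeness-of-$n$ threshold is strictly needed).
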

As discussed in Section \ref{sec:intro}, the consistency result of the dimension reduction based estimation methods~\citep[e.g.][]{bosq2000linear,Kargin2008} typically require the number of basis $p$ grows with the sample size $T$ at a rate that implicitly depends on intricate interrelation of eigenvalues and spectral gaps of the true covariance operator of the functional time series $\{X_t\}$, making the derived convergence rate rather opaque and case-specific. In contrast, our RKHS-based estimation method does not require dimension reduction, making the convergence rate in \Cref{corollary:farD} explicit as there is no dimension reduction incurred errors.

We further provide a simple argument to show that the error bound given in \Cref{theorem:far rate} is intuitive.  Note that estimating  $A_d^*$ is harder than estimating a single function in $\mathcal H$.  Thus  the consistency rate is lower bounded by
 $ \gamma_n^2,$
since as mentioned before, $\gamma_n^2$ is the well known optimal rate of estimating a single function from its $n$ discrete realizations in RKHS. We also note that FAR is an extension and generalization  to the Function to Function Regression (FFR) model. When the functions are fully observed,  the optimal rate  of excess risk in the    FFR setting is $\delta_T^2$ (see e.g. \cite{sun2018optimal}).
Based on the above discussion,   the error bound  we  established in \Cref{theorem:far rate} is sharp.

The estimation error bound in \Cref{theorem:far rate} naturally implies an error bound on the one-step ahead prediction given in \eqref{eq:prediction}. \Cref{prop:prediction rate} quantifies the prediction risk of $X_{T+1}$ given $\{X_t\}_{t=1}^T$.
\begin{proposition} \label{prop:prediction rate}
Let $\h=W^{\alpha,2}$ and $ \widehat X_{T+1}(r)   $ be defined as in \eqref{eq:prediction}. Define the oracle one-step ahead prediction of $X_{T+1}(r)$ as $E(X_{T+1}(r) \vert \{X_t\}_{t=1}^T)=\int\sum_{d=1}^D A^*_{d}(r,s)X_{T+1-d}(s)ds$. With probability at least $1-1/n^4 -1/T^3 -2T^2\exp \left (-c_w'   T ^{\frac{1}{2\alpha + 1}}     \right  )$, it holds that
\begin{align*}
&\| E(X_{T+1 } \vert \{X_t\}_{t=1}^T  ) -  \widehat X_{T+1}  \|_{\lt }^2  \le C_1'' \left (n^{\frac{-2\alpha}{2\alpha+ 1}} + T^{\frac{-2\alpha}{2\alpha+ 1}}   \right ),\\
&\frac{1}{n}\sum_{j=1}^n \left(E(X_{T+1}(s_j) \vert \{X_t\}_{t=1}^T) - \widehat X_{T+1}(s_j)\right)^2\le C_1'' \left (n^{\frac{-2\alpha}{2\alpha+ 1}} + T^{\frac{-2\alpha}{2\alpha+ 1}}   \right ),
\end{align*}
for some constants $c_w', C_1''$ independent of $n$ and $T$.
\end{proposition}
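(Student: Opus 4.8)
The plan is to reduce both prediction bounds to the estimation-error bound already established in \Cref{corollary:farD} together with the single-function rate $\gamma_n^2$. Writing $\Delta_d := \widehat A_d - A_d^*$, the pointwise prediction error decomposes, for each $r$, as
\begin{align*}
E(X_{T+1}(r)\mid\{X_t\}_{t=1}^T) - \widehat X_{T+1}(r) = \sum_{d=1}^D \Big( R_d(r) + S_d(r)\Big),
\end{align*}
where $R_d(r) := \int A_d^*(r,s)X_{T+1-d}(s)\,ds - \frac{1}{n}\sum_{j=1}^n A_d^*(r,s_j)X_{T+1-d}(s_j)$ is a quadrature (discretization) error and $S_d(r) := \frac{1}{n}\sum_{j=1}^n \Delta_d(r,s_j)X_{T+1-d}(s_j)$ is the contribution of the estimation error. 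I would bound the $\lt$-norm (and, separately, the empirical norm) of each piece.

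For the discretization term, fix $r$ and view $s\mapsto A_d^*(r,s)X_{T+1-d}(s)$ as a single function of $s$. By the product regularity \Cref{assume:rkhs}{\bf a} (with $C_\h=1$) this function lies in $\h$ with $\h$-norm at most $\|A_d^*(r,\cdot)\|_\h\,\|X_{T+1-d}\|_\h$, which is uniformly bounded in $r$ by \Cref{assume:regularity of FARD}{\bf a}, \Cref{assume:X}, and \Cref{lemma:bound of A 1}. Applying the definition of $\gamma_n'$ in \eqref{eq:gamma 1} after normalizing by the $\h$-norm yields $|R_d(r)|\le \gamma_n'\,\|A_d^*(r,\cdot)X_{T+1-d}\|_\lt + C(\gamma_n')^2$ uniformly in $r$. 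Squaring and integrating over $r$ (respectively averaging over $r=s_k$) then bounds $\int R_d(r)^2\,dr$ by $O(\gamma_n^2)$, since $\int \|A_d^*(r,\cdot)X_{T+1-d}\|_\lt^2\,dr\le \|X_{T+1-d}\|_\infty^2\|A_d^*\|_\lt^2$ is bounded.

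For the estimation term, Cauchy--Schwarz gives $|S_d(r)|\le \|\Delta_d(r,\cdot)\|_n\,\|X_{T+1-d}\|_n$. The factor $\|X_{T+1-d}\|_n$ is uniformly bounded through the transfer inequality \eqref{eq:gamma 2} and $\|X_{T+1-d}\|_\h\le C_X$. For the other factor I would invoke \eqref{eq:gamma 2} again to obtain $\|\Delta_d(r,\cdot)\|_n^2\le 2\|\Delta_d(r,\cdot)\|_\lt^2 + (\gamma_n'')^2\|\Delta_d(r,\cdot)\|_\h^2$, where $\|\Delta_d(r,\cdot)\|_\h$ is controlled uniformly in $r$ from the nuclear-norm constraint $\|\widehat A_d\|_{\h,*}\le\tau_d<C_A$ (via the representer expansion \eqref{eq:representer} and the eigen-decomposition \eqref{eq:expression of A}) together with the bound on $A_d^*$. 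Integrating over $r$ gives $\int S_d(r)^2\,dr \le C(\|\Delta_d\|_\lt^2 + (\gamma_n'')^2)$; summing over $d$ and inserting $\sum_d\|\Delta_d\|_\lt^2\le C_1'(n^{-2\alpha/(2\alpha+1)}+T^{-2\alpha/(2\alpha+1)})$ from \Cref{corollary:farD}, together with $\gamma_n=O(n^{-\alpha/(2\alpha+1)})$, delivers the first (continuous) bound.

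The empirical bound follows the same decomposition with the outer integral $\int(\cdot)\,dr$ replaced by the average $\frac1n\sum_k(\cdot)$ at $r=s_k$. The only new ingredient is to transfer the resulting double empirical norm $\frac{1}{n^2}\sum_{k,j}\Delta_d(s_k,s_j)^2$ back to $\|\Delta_d\|_\lt^2$, which I would do by applying \eqref{eq:gamma 2} in the second argument and then \eqref{eq:gamma 1} to the function $r\mapsto\|\Delta_d(r,\cdot)\|_\lt^2$, each step costing an extra $O(\gamma_n^2)$ term. I expect the main obstacle to be precisely this two-dimensional transfer together with the uniform-in-$r$ control of $\|\widehat A_d(r,\cdot)\|_\h$: one must show that the nuclear-norm constraint on the operator $\widehat A_d$ yields a pointwise bound on the RKHS norm of its slice $\widehat A_d(r,\cdot)$, and that the intermediate function $r\mapsto\|\Delta_d(r,\cdot)\|_\lt^2$ itself lies in $\h$ with bounded norm, which is where \eqref{eq:expression of A}, the representer structure \eqref{eq:representer}, the product property \Cref{assume:rkhs}{\bf a}, and \Cref{lemma:bound of A 1} must be combined carefully.
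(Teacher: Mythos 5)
Your proposal is correct, but it splits the error differently from the paper, and the difference matters for how much machinery you need. The paper's proof introduces the intermediate predictor $\widetilde X_{T+1}(r)=\sum_{d=1}^D\int\widehat A_d(r,s)X_{T+1-d}(s)\,ds$, i.e.\ it keeps the \emph{estimated} operators but integrates exactly. Then (i) $\|E(X_{T+1}\vert\{X_t\})-\widetilde X_{T+1}\|_\lt\le D\sum_d\|\widehat A_d-A_d^*\|_\lt\,\|X_{T+1-d}\|_\lt$ follows from Cauchy--Schwarz in $\lt$ plus \Cref{corollary:farD}, and (ii) $\|\widetilde X_{T+1}-\widehat X_{T+1}\|_\infty\le 2C_AC_X\gamma_n$ is the quadrature error of $s\mapsto\widehat A_d(r,s)X_{T+1-d}(s)$, bounded uniformly in $r$ via \eqref{eq:gamma 1}, \Cref{assume:rkhs}\textbf{a} and \Cref{lemma:bound of A 1}. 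You instead put the quadrature error on the true operators $A_d^*$ (your $R_d$) and leave the estimation error as a discrete sum (your $S_d$). This is valid, but it is exactly what forces your extra transfer steps: because $S_d$ only sees $\Delta_d$ at the sample points, you must convert $\|\Delta_d(r,\cdot)\|_n$ back to $\|\Delta_d(r,\cdot)\|_\lt$ with \eqref{eq:gamma 2}, using $\sup_r\|\Delta_d(r,\cdot)\|_\h\le\|\widehat A_d\|_{\h,*}+\|A_d^*\|_{\h,*}\le 2C_A$ --- note that the ``obstacle'' you flag here is already settled by \Cref{lemma:bound of A 1}, so there is no genuine gap --- whereas the paper's split needs no such conversion for the continuous bound. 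Your two-dimensional transfer for the empirical bound also goes through, with one caveat worth making explicit: applying \eqref{eq:gamma 1} to $g(r)=\|\Delta_d(r,\cdot)\|_\lt^2$ produces a cross term $\gamma_n\|g\|_\lt\lesssim\gamma_n\|\Delta_d\|_\lt$, which is of order $\gamma_n^2+\|\Delta_d\|_\lt^2$ only after an AM--GM step, and one must also check $\|g\|_\h\le 4C_A^2$ (the paper does this computation in Step 3 of the proof of \Cref{theorem:far rate}); so your claim that each transfer ``costs $O(\gamma_n^2)$'' is right but not free. In sum, the paper's decomposition buys a shorter argument --- the $\lt$ bound is a one-line Cauchy--Schwarz plus a uniform-in-$r$ quadrature bound --- while yours keeps every quantity anchored at the observed design points at the price of the empirical-to-$L^2$ transfers; both yield the stated rate on the same event, with the same probability.
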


\subsection{Optimization via Accelerated Gradient Method} \label{subsec:optimization}
In this section, we discuss the numerical implementation of the proposed RKHS-based penalized estimator by reformulating the constrained optimization in \eqref{eq:approx A 2} into a standard trace norm minimization problem, which is well-studied in the machine learning literature~(\cite{Ji2009}).

We first introduce some notations. Denote the estimator $A_d (r, s) = \sum_{1\le i,j \le n} a_{d,ij }\mk (r, s_i)\mk (s, s_j)$, where $a_{d,ij}$s are the coefficients to be estimated. Define the coefficient matrix $R_{d}\in\mathbb{R}^{n\times n}$ with $R_{d,ij}=a_{d,ij}$. Define the kernel vector $k_i=(\mk(s_1,s_i),\mk(s_2,s_i),\cdots,\mk(s_n,s_i))^\top$ and the kernel matrix $K=[k_1,k_2,\cdots,k_n]$. Note that the kernel matrix $K$ is symmetric such that $K=K^\top$. Denote the observation of the functional time series at time $t$ as $X_t=(X_t(s_1),X_t(s_2),\cdots, X_t(s_n))^\top$. Define the observation matrix $X=[X_T,X_{T-1},\cdots, X_{D+1}]$ and the lagged observation matrix $X^{(d)} =[X_{T-d},X_{T-d-1},\cdots, X_{D+1-d}]$ for $d=1,\cdots, D.$ 

Using the well-known equivalence between constrained and penalized optimization (see \cite{Hastie2009}), we can reformulate \eqref{eq:approx A 2} into a penalized nuclear norm optimization such that
\begin{align*}
\{ \widehat A_d\}_{d=1}^D   = \argmin \sum_{t=D+1}^T    \sum_{i=1}^ n   \left  (X_{t} (s_i) -  \sum_{d=1}^D  \frac{1}{n }\sum_{j=1} ^n A_d  (s_i, s_j)  X_{t-d } (s_j) \right  )^2   + \sum_{d=1}^D \lambda_d \|A_d\|_{\h,*},
\end{align*}
where $(\lambda_1,\cdots,\lambda_D)$ is the tuning parameter. With simple linear algebra, we can rewrite the penalized optimization as
\begin{align}\label{eq:approximate A in AR_v1}
&\min_{R_1,\cdots,R_D}\sum_{t=D+1}^T \left(X_t-\frac{1}{n}\sum_{d=1}^DK^\top R_d K X_{t-d} \right)^\top  \left(X_t-\frac{1}{n}\sum_{d=1}^DK^\top R_d K X_{t-d} \right) + \sum_{d=1}^D\lambda_d \|A_d\|_{\h,*}\nonumber\\
=&\min_{R_1,\cdots,R_D} \left\|X-\frac{1}{n}\sum_{d=1}^DK R_d K X^{(d)} \right\|_F^2 + \sum_{d=1}^D\lambda_d \|A_d\|_{\h,*}.
\end{align}


We now write the nuclear norm $\|A_d\|_{\h,*}$ as a function of $R_d$. By the Representer theorem, $A_d(r,s)=\sum_{i,j}a_{d,ij}\mk(r,s_i)\mk(s,s_j)$, thus the adjoint operator $A_d^\top(r,s)=A_d(s,r)$. Define $k(s)=(\mk(s,s_1), \mk(s,s_2), \cdots, \mk(s,s_n))^\top$, we have $A_d(r,s)=k(r)^\top R_d k(s)$ and $\lb k(s), k(s)^\top\rb_\h=K$. Define $u(s)=k(s)^\top b$, where $b=(b_1,b_2,\cdots,b_n)^\top$. To calculate $\|A_d\|_{\h,*}$, note that
\begin{align*}
A_d^\top A_d[u](s)&=\lb A_d^\top (s,r), A_d[u](r) \rb_\h=\lb A_d(r,s), \lb A_d(r,s), u(s)\rb_\h \rb_\h\\
&=\lb k(r)^\top R_d k(s), \lb k(r)^\top R_d k(s), k(s)^\top b \rb_\h  \rb_\h=k(s)^\top R_d^\top KR_dKb.
\end{align*}
In other words, the eigenvalues of the operator $A_d^\top A$ correspond to the eigenvalues of the matrix $R_d^\top KR_dK$. Thus, \eqref{eq:approximate A in AR_v1} can be further written as
\begin{align}\label{eq:trace_norm1}
&\min_{R_1,\cdots,R_D} \left\|X-\frac{1}{n}\sum_{d=1}^DK R_d K X^{(d)} \right\|_F^2 + \sum_{d=1}^D\lambda_d \cdot \text{trace}((R_d^\top KR_dK)^{\frac{1}{2}}) \nonumber\\
=&\min_{R_1,\cdots,R_D} \left\|X-\frac{1}{n}\sum_{d=1}^DK R_d K X^{(d)} \right\|_F^2 + \sum_{d=1}^D\lambda_d \|K^{\frac{1}{2}}R_dK^{\frac{1}{2}}\|_{*} \nonumber\\
=&\min_{W_1,\cdots,W_D} \left\|X-\sum_{d=1}^D\K_d W_d Z_d \right\|_F^2 + \sum_{d=1}^D \|W_d\|_{*}
\end{align}
where $W_d=\lambda_dK^{\frac{1}{2}}R_dK^{\frac{1}{2}}$, $\K_d=\frac{1}{\lambda_d}K^{\frac{1}{2}}$, $Z_d=\frac{1}{n}K^{\frac{1}{2}}X^{(d)}$ and the first equality comes from the fact that $R_d^\top K R_dK$ and $K^{1/2}R_d^\top KR_dK^{1/2}$ share the same eigenvalues for $d=1,\cdots, D$.

Define $\K=[\K_1, \cdots, \K_D]$, $Z=\begin{bmatrix}
Z_1\\ \vdots \\ Z_D
\end{bmatrix}$ and $W=\begin{bmatrix}
W_1 &  & \\
& \ddots & \\
&  & W_D
\end{bmatrix}$, the optimization in \eqref{eq:trace_norm1} can be further written as
\begin{align}\label{eq:trace_norm2}
\argmin_{W} \left\|X-\K W Z \right\|_F^2 +  \|W\|_*,
\end{align}
where $g(W)=\left\|X-\K W Z \right\|_F^2$ is a convex function of the block diagonal matrix $W$ and $\|W\|_*$ is its trace norm. Note that \eqref{eq:trace_norm2} is a convex function of $W_1,\cdots, W_d$ with a unique global minimizer.

Thus, we formulate the constrained nuclear norm optimization in \eqref{eq:approx A 2} into a standard trace norm minimization problem in the machine learning literature~(e.g. see \cite{Bach2008}, \cite{Candes2009}). In particular, given tuning parameters $\{\lambda_d\}_{d=1}^D$, $\eqref{eq:trace_norm2}$ can be readily solved by the Accelerated Gradient Method~(AGM) in \cite{Ji2009}. Due to the block diagonal structure of $W$, AGM can be performed in a component-wise fashion where the gradient update of the optimization is carried out for each $W_1,\cdots, W_D$ separately. The implementation details of the AGM algorithm can be found in Section \ref{sec:AGM algorithm} of the Appendix.

Given $(\widehat{W}_1, \cdots, \widehat{W}_D)$, the estimated transition operators $\{\widehat A_d\}_{d=1}^D$ can be recovered by
\begin{align*}
\widehat{A}_d(r,s)=k(r)^\top \widehat{R}_d k(s)=\frac{1}{\lambda_d}k(r)^\top K^{-\frac{1}{2}}\widehat{W}_d K^{-\frac{1}{2}} k(s), \text{ for } d = 1,\cdots, D.
\end{align*}
Plugging into \eqref{eq:prediction}, the one-step ahead prediction of $X_{T+1}$ is then
\begin{align*}
\widehat X_{T+1}(r)=\frac{1}{n}\sum_{d=1}^D\frac{1}{\lambda_d}k(r)^\top K^{-\frac{1}{2}}\widehat{W}_d K^{\frac{1}{2}}X_{T+1-d}, \text{ for } r \in [0,1].
\end{align*}

\section{Simulation Studies}\label{sec:simulation}

In this section, we conduct simulation studies to investigate the estimation and prediction performance of the proposed penalized nuclear norm estimator and compare it with the standard transition operator estimation approach in \cite{bosq2000linear} and the state-of-the art functional time series prediction method in \cite{aue2015prediction}.


\subsection{Basic simulation setting}
\textbf{Data generating process}: We first define an FAR($D$) process, borrowed from the simulation setting in \cite{aue2015prediction}, that is used in the simulation study. For $d=1,2,\cdots, D$, we assume the $d$th transition operator $A_d(r,s)$ is of rank $q_d$ and is generated by $q_{d}$ basis functions $\{u_i(s)\}_{i=1}^{q_d}$ such that

\centerline{$A_d(r,s)=\sum_{i,j=1}^{q_d} \lambda_{d,ij} u_i(r)u_j(s),$}
\noindent where $\{u_i(s)\}_{i=1}^{q_d}$ consists of orthonormal basis of $\lt[0,1]$ that will be specified later. Define matrix $\Lambda_d$ such that $\Lambda_{d,ij}=\lambda_{d,ij}$ and define $\bu_{q_d}(s)=(u_1(s),u_2(s),\cdots,u_{q_d}(s))^\top$. We have $A_d(r,s)=\bu_{q_d}(r)^\top \Lambda_d \bu_{q_d}(s).$ We further set the noise function $\epsilon_t$ to be of finite rank $q_\epsilon$ such that $\epsilon_t(s)=\sum_{i=1}^{q_\epsilon} z_{ti}u_i(s)$, where $z_{ti}\overset{i.i.d.}{\sim} U(-a_i, a_i)$ or $z_{ti}\overset{i.i.d.}{\sim} N(0, \sigma_i^2)$. 

Without loss of generality, we set $q_{1}=q_{2}=\cdots=q_D=q_\epsilon=q$ for simplicity. Thus, the FAR($D$) process $\{X_t(s)\}_{t=1}^T$ resides in a finite dimensional subspace spanned by the orthonormal basis $\{u_i(s)\}_{i=1}^{q}$. Denote $X_t(r)=\sum_{i=1}^qx_{ti}u_i(r)$ where $x_{ti}=\int X_t(r)u_i(r)dr$, and denote $x_t=(x_{t1},\cdots,x_{tq})^\top$ and $z_t=(z_{t1},\cdots,z_{tq})^\top$. We have
\begin{align*}
X_t(r)=&\sum_{d=1}^D\int A_d(r,s)X_{t-d}(s)ds + \epsilon_{t}(r)=\sum_{d=1}^D\int \bu_{q}(r)^\top \Lambda_d \bu_{q}(s) X_{t-d}(s)ds + z_{t}^\top \bu_q(r)\\
=&\sum_{d=1}^D\int \bu_q(r)^\top \Lambda_d \bu_q(s) \bu_q(s)^\top x_{t-d} ds + z_{t}^\top \bu_q(r)= \bu_q(r)^\top\left( \sum_{d=1}^D\Lambda_d  x_{t-d} + z_{t}\right).
\end{align*}
This leads to $x_t= \sum_{d=1}^D \Lambda_d x_{t-d} + z_t.$ Thus, the FAR($D$) process can be exactly simulated via a VAR($D$) process. Following the simulation setting in \cite{Yuan2010} and \cite{sun2018optimal}, we set $u_i(s)=1$ if $i=1$ and $u_i(s)=\sqrt{2}\cos((i-1)\pi s)$ for $i=2,\cdots, q$.

Given the transition operators $A_1,\cdots, A_D$~(i.e. $\Lambda_1,\cdots, \Lambda_D$) and the distribution of noise $z_t$, the true FAR($D$) process $\{X_t(s), s\in[0,1]\}_{t=1}^T$ can be simulated and discrete measurements of the functional time series are taken at the sampling points $\{s_i\}_{i=1}^n$. For simplicity, we set $\{s_i\}_{i=1}^n$ to be the $n$ equal-spaced points in $[0,1]$, which resembles the typical sampling scheme of functional time series in real data applications. Simulation based on uniformly distributed $\{s_i\}_{i=1}^n$ gives consistent conclusions.

\textbf{Evaluation criteria}: We evaluate the performance of a method via (a).\ estimation error of $\widehat A_1,\widehat A_2,\cdots, \widehat A_D$ and (b).\ prediction error of the estimated FAR($D$) model.

Specifically, given sample size $(n,T)$, we simulate the observed functional time series $\{X_t(s_i), i=1,\cdots, n\}_{t=1}^{T+0.2T}$, which we then partition into training data $\{X_t(s_i), i=1,\cdots, n\}_{t=1}^T$ for estimation of $A_1,\cdots, A_D$ and test data $\{X_t(s_i), i=1,\cdots, n\}_{t=T+1}^{T+0.2T}$ for evaluation of prediction performance. Denote $\{\widehat{X}_t(s_i),i=1,\cdots,n\}_{t=T+1}^{T+0.2T}$ as the one-step ahead prediction given by the estimated FAR($D$) model. We define
\begin{align}
&\text{MISE}(\widehat{A}_d, A_d)=\int_{[0,1]}\int_{[0,1]}(A_d(r,s)-\widehat{A}_d(r,s))^2drds \bigg/ \int_{[0,1]}\int_{[0,1]}A_d(r,s)^2drds,\label{MISE}\\
&\text{PE}=\frac{1}{0.2nT}\sum_{t=T+1}^{T+0.2T}\sum_{i=1}^{n}(X_t(s_i)-\widehat{X}_t(s_i))^2,\label{PE}
\end{align}
where MISE~(mean integrated squared error) measures the estimation error and PE measures the prediction error. For reference purposes, we also calculate the oracle prediction error and the constant mean prediction error such that
\begin{align*}
&\text{Oracle PE}=\frac{1}{0.2nT}\sum_{t=T+1}^{T+0.2T}\sum_{i=1}^{n}(X_t(s_i)-\widetilde{X}_t(s_i))^2,\quad \text{Mean Zero PE}=\frac{1}{0.2nT}\sum_{t=T+1}^{T+0.2T}\sum_{i=1}^{n}(X_t(s_i)-0)^2,
\end{align*}
where $\widetilde{X}_t(s_i)=\sum_{d=1}^D\int A_d(s_i,r)X_{t-d}(r)dr$ is the (infeasible) oracle predictor for $X_t(s_i)$ and 0 is the constant mean prediction since $E(X_t(s))=0$ for $s\in[0,1]$. Evaluation based on other types of error measures for estimation and prediction error~(besides MISE and PE) gives consistent conclusions and thus is omitted.

\subsection{Estimation methods and implementation details}\label{subsec:estimation prediction}
For comparison, we implement two functional PCA~(FPCA) based estimation approach for FAR: (a).\ the standard estimator in \cite{bosq2000linear} and (b).\ the vector autoregressive based approach in \cite{aue2015prediction}. Both estimators make use of the FPCA conducted on the sample covariance operator $\widetilde{C}(s,r)=\frac{1}{T}\sum_{t=1}^TX_t(s)X_t(r)$ such that $\widetilde{C}(s,r)=\sum_{i=1}^\infty \hat{\lambda}_i \hat{f}_i(s)\hat{f}_i(r)$, where $(\hat{\lambda}_i,\hat{f}_i)$ is the eigenvalue-eigenfunction pair.

\textbf{Standard estimator in \cite{bosq2000linear}}~[Bosq]: The estimator in \cite{bosq2000linear} is designed for estimating the transition operator $A_1(s,r)$ of FAR(1) based on the Yule-Walker equation for FAR(1) such that $D(s,r)=E(X_t(s)X_{t-1}(r))=E(\int A_1(s,s')X_{t-1}(s')ds'X_{t-1}(r))=\int A_1(s,s')C(s',r)ds'$, where $C(s,r)=E(X_t(s)X_t(r))$ is the covariance operator and $D(s,r)=E(X_t(s)X_{t-1}(r))$ is the auto-covariance operator. 

The Yule-Walker equation is inverted via FPCA-based dimension reduction, where all quantities in the Yule-Walker equation are projected on the subspace spanned by the $p$ orthonormal eigenfunctions $\hat{f}(s)=(\hat{f}_1(s),\cdots, \hat{f}_p(s))^\top$ corresponding to the $p$ largest eigenvalues $(\hat{\lambda}_1,\cdots,\hat{\lambda}_p)$ of the sample covariance operator. Specifically, $C(s,r)$ is approximated by $\hat{C}(s,r)=\sum_{i=1}^p\hat{\lambda}_i\hat{f}_i(s)\hat{f}_i(r)$ and $D(s,r)$ is approximated by 
$$\hat{D}(s,r)=\frac{1}{T-1}\sum_{t=2}^T\sum_{i=1}^p\lb X_t, \hat{f}_i  \rb_{\lt} \hat{f}_i(s) \sum_{j=1}^p \lb X_{t-1},\hat{f}_j \rb_{\lt} \hat{f}_j(r).$$ The estimator of $A_1$ takes the form $\hat{A}_1(s,r)=\sum_{i=1}^{p}\sum_{ j=1}^p a_{ij}\hat{f}_i(s)\hat{f}_j(r)$. Denote $\hat{\Lambda}=\text{diag}(\hat{\lambda}_1,\cdots,\hat{\lambda}_p)$, $\hat{d}_t=(\lb X_{t},\hat{f}_1 \rb_{\lt},\cdots, \lb X_{t},\hat{f}_p \rb_{\lt})$ and let $R$ denote the coefficient matrix such that $R_{ij}=a_{ij}$. The Yule-Walker equation implies that $\frac{1}{T-1}\sum_{t=2}^T \hat{d}_t\hat{d}_{t-1}^\top=R\hat{\Lambda}$, and thus $R=\frac{1}{T-1}\sum_{t=2}^T \hat{d}_t\hat{d}_{t-1}^\top \hat{\Lambda}^{-1}$, which provides an estimator of the transition operator $A_1$. The number of functional principal components $p$ used in the projection is typically set as the smallest number of eigenvalues such that the explained variability of the sample covariance operator is over a high threshold $\tau$, say $\tau=$80\%. In the following, we refer to this estimator by Bosq.

Using the fact that an FAR($D$) process can be formulated into an FAR(1) process, the above argument naturally provides an estimator for the transition operators $A_1,\cdots, A_D$ of FAR($D$) with $D>1.$ We refer to \cite{bosq2000linear} for more details.

\textbf{Functional PCA-VAR estimator in \cite{aue2015prediction}}~[ANH]: The basic idea of \cite{aue2015prediction} is a canny combination of FPCA-based dimension reduction and the classical vector autoregressive~(VAR) model, designed for prediction of FAR processes. Specifically, the infinite dimensional functional time series $\{X_t\}_{t=1}^T$ is first projected to the $p$ eigenfunctions $\hat{f}(s)=(\hat{f}_1(s),\cdots, \hat{f}_p(s))^\top$ of the sample covariance operator. After projection, $X_t$ is represented by a $p$-dimensional functional principal score $x_t=(x_{t1},\cdots,x_{tp})^\top$ with $x_{ti}=\int X_t(s)\hat{f}_i(s)ds.$  A VAR($D$) model is then fitted on the $p$-dimensional time series $\{x_t\}_{t=1}^T$ such that $x_t={B}_1x_{t-1}+\cdots+{B}_Dx_{t-D} + \epsilon_t$. Denote the estimated coefficient matrices as $\hat{B}_1,\cdots, \hat{B}_D\in \mathbb{R}^{p\times p}$, the one-step ahead prediction of $X_{t}(s)$ is then $\widehat X_{t}(s)=\hat{f}(s)^\top \hat{x}_{t}=\hat{f}(s)^\top \sum_{d=1}^D \hat{B}_dx_{t-d}$.

Note that this implies $\hat{X}_t(s)=\hat{f}(s)^\top \hat{x}_t=\hat{f}(s)^\top \sum_{d=1}^D \hat{B}_dx_{t-d}=\hat{f}(s)^\top \sum_{d=1}^D \hat{B}_d\int \hat{f}(r)X_{t-d}(r)dr=\sum_{d=1}^D \int \hat{f}(s)^\top\hat{B}_d \hat{f}(r) X_{t-d}(r) dr$. Thus, the FPCA-based prediction algorithm in \cite{aue2015prediction} induces an estimator for the transition operators $\{A_d\}_{d=1}^D$ such that
\begin{align*}
\widehat{A}_d(s,r)=\hat{f}(s)^\top \hat{B}_d\hat{f}(r), \text{ for } d=1,\cdots, D.
\end{align*}
The fFPE criterion in \cite{aue2015prediction} is used to select the number of functional principal components $p$ for a given autoregressive order $D$. In the following, we refer to this estimator by ANH.

\textbf{Implementation of FPCA-based estimators (Bosq and ANH)}: For the implementation of Bosq and ANH, the functional time series is required to be fully observed over the entire interval $[0,1]$. However, under the current simulation setting, only discrete measurements $\{X_t(s_i),i=1,\cdots,n\}_{t=1}^T$ are available. Following \cite{aue2015prediction}, for each $t$, the function $X_t(s), s\in [0,1]$ is estimated using 10 cubic B-spline basis functions based on the discrete measurements $(X_t(s_1),\cdots, X_t(s_n))$. We also use 20 cubic B-spline basis functions for more flexibility (see more details later).

\textbf{Implementation of penalized nuclear norm estimator (RKHS)}: For the implementation of the proposed RKHS-based penalized nuclear norm estimator, we use the rescaled Bernoulli polynomial as the reproducing kernel $\mk$, such that

\centerline{$\mk(x,y)=1+k_1(x)k_1(y)+k_2(x)k_2(y)-k_4(x-y),$}
\noindent where $k_1(x)=x-0.5$, $k_2(x)=\frac{1}{2}(k_1^2(x)-\frac{1}{12})$ and $k_4(x)=\frac{1}{24}(k_1^4(x)-\frac{k_1^2(x)}{2}+\frac{7}{240})$ for $x\in[0,1]$, and $k_4(x-y)=k_4(|x-y|)$ for $x,y\in [0,1]$. Such $\mk$ is the reproducing kernel for $W^{2,2}$. See Chapter 2.3.3 of \cite{Gu2013} for more detail.

The accelerated gradient algorithm in \cite{Ji2009} is used to solve the trace norm minimization as discussed in Section \ref{subsec:optimization}, where the algorithm stops when the relative decrease of function value in \eqref{eq:trace_norm2} is less than $10^{-8}$. A standard 5-fold cross validation is used to select the tuning parameter $(\lambda_1,\cdots,\lambda_D)$. Based on $\{\widehat A_d\}_{d=1}^D$, the one-step ahead prediction of $X_{t}(s_i)$ for $t=T+1,\cdots,T+0.2T$ in the test data can be calculated via
$\widehat X_{t}(s_i)=\sum_{d=1}^D\frac{1}{n}\sum_{j=1}^n\widehat{A}_d(s_i,s_j)X_{t-d}(s_j)$ for $i=1,\cdots, n$ as in \eqref{eq:prediction}.

\subsection{Simulation result for FAR(1)}\label{subsec:simu_far1}
We first start with the simple case of FAR(1), where there is only one transition operator $A(r,s)=A_1(r,s)$. The simulation setting involves the transition matrix $\Lambda=\Lambda_1\in\mathbb{R}^{q\times q}$~(signal) and the noise range $a_{1:q}=(a_1,a_2,\cdots, a_q)$ or the noise variance $\sigma_{1:q}^2=(\sigma_1^2,\sigma_2^2,\cdots,\sigma_q^2)$ for $\{z_{ti}\}_{i=1}^q$~(driving noise). Denote $\sigma(\Lambda)$ as the leading singular value for a matrix $\Lambda$. We consider three different signal-noise settings:
\vspace{-0.2cm}
\begin{itemize}  \setlength\itemsep{-0.2em}
	\item Scenario A (Diag $\Lambda$): $\Lambda=\text{diag}(\kappa,\cdots, \kappa)$ and $z_{ti}\overset{i.i.d.}{\sim} U(-a, a)$ with $a=0.1$ for $i=1,\cdots, q.$
	\item Scenario B (Random $\Lambda$): A random matrix $\Lambda^*$ is first generated via $\Lambda^*_{ij}\overset{i.i.d.}{\sim} N(0,1)$ and we set $\Lambda=\kappa\cdot\Lambda^*/\sigma(\Lambda^*)$, and $z_{ti}\overset{i.i.d.}{\sim} U(-a, a)$ with $a=0.1$ for $i=1,\cdots,q.$
	\item Scenario C (ANH setting): (a) A random matrix $\Lambda^*$ is first generated via $\Lambda^*_{ij}\overset{ind.}{\sim} N(0,\sigma_i\sigma_j)$ and we set $\Lambda=\kappa\cdot\Lambda^*/\sigma(\Lambda^*)$, and $z_{ti}\overset{ind.}{\sim} N(0, \sigma_i^2)$ with $\sigma_{1:q}=(1:q)^{-1}$. (b) Same setting except $\sigma_{1:q}=1.2^{-(1:q)}$.
\end{itemize}
Scenario C is borrowed from \cite{aue2015prediction}. \textcolor{black}{Within each scenario, the intrinsic dimension of FAR(1) is controlled by the dimenion $q$ of the transition matrix $\Lambda$ and the signal strength is controlled by the spectral norm $\kappa$ of $\Lambda$, where a higher $q$ implies a more complex FAR process and a larger $\kappa$ gives a stronger signal.}

\textbf{Signal strength for Scenarios A-C}: \textcolor{black}{Given the same $(q,\kappa)$}, we further discuss the signal strength of the three scenarios from the viewpoint of VAR processes. The main difference between Scenarios A, B and Scenario C is that for Scenarios A and B, the variance of the noise stays at a constant level $a$ across $\{z_{ti}\}_{i=1}^q$, while for Scenario C, the variance of the noise $\sigma_{1:q}$ decays with $q$ and the decay rate is faster in Scenario C(a) than in Scenario C(b). Note that unlike the classical regression setting, the noise $\{z_{ti}\}_{i=1}^q$ of an autoregressive process is not noise in the traditional sense but rather the driving force of the process. Indeed, variation in $\{z_{ti}\}_{i=1}^q$ helps reveal more information about the transition matrix $\Lambda$ and leads to stronger signals. Thus, compared to Scenarios A and B, Scenario C has weaker signals with Scenario C(a) having the lowest signal strength, and intuitively Scenario C can be more well approximated by a lower-dimensional process. Between Scenarios A and B, note that the transition matrix in Scenario A has overall larger and non-decaying singular values, making Scenario A the strongest signal scenario and most difficult to be approximated by a low-dimensional process. 

To summarize, in terms of signal strength, we have Scenario A $>$ B $>$ C(b) $>$ C(a) given the same intrinsic dimension $q$ and spectral norm $\kappa$. This indeed has implications on the numerical results (see more details later). For more discussion of the signal-to-noise ratio for VAR processes, we refer to \cite{Luetkepohl2005}. Additionally, we remark that the numerical performance is insensitive to the distribution of $z_{ti}$ (uniform or normal distribution).


For Bosq and ANH, the function $X_t(s)$ is first estimated using 10 cubic B-spline basis functions. For ANH, we use 20 cubic B-splines when $q=21$ for more flexibility. The performance of Bosq worsens when using 20 cubic B-splines, thus we always use 10 cubic B-splines for Bosq. With the FAR order fixed at $D=1$, the threshold $\tau$ is set at 80\% to select the number of FPCs $p$ for Bosq and the fFPE criterion is used to select the number of FPCs $p$ for ANH. For RKHS, we use 5-fold cross validation to select the tuning parameter $\lambda_1$.

\textbf{Numerical result for FAR(1)}: For Scenarios A and B, we consider three sample sizes: $(1) q=6, n=20, T=100, (2) q=12, n=20, T=400, (3) q=21, n=40, T=400$. For Scenario C, we consider $q=21, n=40, T=400$. As for the signal level, we vary the spectral norm of $\Lambda$ by $\kappa=0.2,0.5,0.8.$ For each simulation setting, i.e. different combination of Scenario A-C and $(q, n, T, \kappa)$, we conduct 100 experiments. Note that the transition matrix $\Lambda$ is randomly generated for each experiment under Scenario B and C.

We summarize the numerical performance of Bosq, ANH and RKHS in Table \ref{tab:FAR1}, where we report the mean MISE (MISE$_{avg}$) and mean PE (PE$_{avg}$) across the 100 experiments~(the conclusion based on median MISE and median PE is consistent and thus omitted). For each experiment, we also calculate the percentage improvement of prediction by RKHS over ANH via Ratio= (PE(ANH) / PE(RKHS)$-1)\times 100\%$. A positive ratio indicates improvement by RKHS. We report the mean ratio (denoted by R$_{avg}$) across the 100 experiments. In addition, we report the percentage of experiments~(denoted by R$_w$) where RKHS achieves a lower PE than ANH. Note that we compare ANH and RKHS as Bosq in general gives the least favorable performance. We further give the boxplot of PE in Figure \ref{fig:FAR1_k5_PE} under signal strength $\kappa=0.5$. The boxplots of PE under $\kappa=0.2, 0.8$ can be found in the Appendix.


Overall, RKHS gives the smallest estimation error~(measured by MISE defined in \eqref{MISE}) and prediction error~(measured by PE defined in \eqref{PE}) while ANH offers the second best performance. 
\textcolor{black}{In general, within each scenario, the improvement of RKHS over comparison methods increases with a higher dimension $q$ and a stronger signal $\kappa$, while for the same $(q,\kappa)$, RKHS yields the most improvement in Scenario A, followed by Scenarios B, C(b), and C(a).} We provide some intuition  as follows. When the intrinsic dimension of the FAR process is low and the signal is weak, the FPCA-based dimension reduction~(which is a hard thresholding method) does not induce much bias and achieves a good bias-variance trade-off. However, when the signal is strong and the intrinsic dimension of the process is high, information lost in the dimension reduction is non-negligible, and thus the proposed RKHS-based regularization outperforms FPCA-based methods as it   corresponds to a soft thresholding method. Note that the improvement in MISE may not lead to the same scale of improvement in PE, an observation also seen in \cite{Didericksen2012}.

Compared to Scenario A, the estimation performance~(MISE) of all methods deteriorate under Scenario B and C, due to the more complex nature of the transition operator and the decaying signal strength. Note that MISE of ANH is noticeably large under $q=21$ for Scenario B. One possible reason is the numerical instability caused by estimation of a large VAR model~(A VAR(20) model, i.e.\ $p=20$ FPC, is selected by ANH 33 out of 100 times.\footnote{Under $q=6$, ANH occasionally results in extremely large estimation and prediction error due to numerical instability of the VAR estimation if more than 6 FPC are selected. We exclude those cases from the numerical result.}). Under Scenario C, Bosq gives the smallest MISE while RKHS still gives the smallest PE, however, MISE is not very meaningful as all methods are unable to recover the transition operators accurately under Scenario C.

As mentioned above, the reported result by ANH under $q=21$ is based on 20 cubic B-splines for more flexibility. For illustration, Figure \ref{fig:FAR1_k5_PE}(c)(f) additionally plots the PE of ANH based on 10 cubic B-splines, which is noticeably worse than the one based on 20 cubic B-splines. This indicates that the smoothing step can substantially affect the performance of methods that rely on fully observed functional time series, though the smoothing error is typically ignored in theoretical results.

\begin{table}[H]
	\centering\small
	\begin{tabular}{llrrrr|rrrr}
		\hline
		\hline & & \multicolumn{4}{c|}{ Scenario A:  $q= 6 , n= 20 , T= 100 $} & \multicolumn{4}{c}{ Scenario B:  $q= 6 , n= 20 , T= 100 $} \\
		& Method & MISE$_{avg}$ & PE$_{avg}$ & R$_{avg}$(\%) & R$_{w}$(\%) & MISE$_{avg}$ &PE$_{avg}$ & R$_{avg}$(\%) & R$_{w}$(\%)\\
		\hline
		$\kappa=0.2$ & RKHS &  \bf 0.894 & \bf 2.182 & 0.20 & 52 & \bf 1.119 & \bf 2.130 & 0.79 & 64 \\ 
		& ANH & 0.960 & 2.185 &  &  & 1.617 & 2.146 &  &  \\
		& Bosq & 1.210 & 2.204 &  &  & 3.543 & 2.223 &  &  \\ 
		$\kappa=0.5$ & RKHS & \bf 0.222 & \bf 2.260 & 3.60 & 76 & \bf 0.639 & \bf 2.243 & 0.74 & 57 \\ 
		& ANH & 0.323 & 2.343 &  &  & 0.742 & 2.256 &  &  \\ 
		& Bosq & 0.308 & 2.322 &  &  & 0.739 & 2.263 &  &  \\ 
		$\kappa=0.8$ & RKHS & 0.078 & \bf 2.340 & 0.42 & 59 & \bf 0.241 & \bf 2.233 & 1.27 & 73 \\ 
		& ANH & \bf 0.065 & 2.351 &  &  & 0.270 & 2.261 &  &  \\
		& Bosq & 0.341 & 3.461 &  &  & 0.394 & 2.320 &  &  \\
		\hline
		\hline & & \multicolumn{4}{c|}{ Scenario A:  $q= 12 , n= 20 , T= 400 $} & \multicolumn{4}{c}{ Scenario B:  $q= 12 , n= 20 , T= 400 $} \\
		& Method & MISE$_{avg}$ & PE$_{avg}$ & R$_{avg}$(\%) & R$_{w}$(\%) & MISE$_{avg}$ &PE$_{avg}$ & R$_{avg}$(\%) & R$_{w}$(\%)\\
		\hline
		$\kappa=0.2$ & RKHS & \bf 0.563 & \bf 4.288 & 0.40 & 59 & \bf 1.017 & \bf 4.241 & 0.10 & 54 \\ 
		& ANH & 0.716 & 4.305 &  &  & 1.081 & 4.245 &  &  \\ 
		& Bosq & 0.668 & 4.298 &  &  & 1.525 & 4.270 &  &  \\
		$\kappa=0.5$ & RKHS & \bf 0.097 & \bf 4.290 & 6.90 & 100 &  \bf 0.299 & \bf 4.296 & 2.42 & 98 \\ 
		& ANH & 0.419 & 4.586 &  &  & 1.078 & 4.400 &  &  \\ 
		& Bosq & 0.450 & 4.829 &  &  & 0.775 & 4.459 &  &  \\ 
		$\kappa=0.8$ & RKHS & \bf 0.044 & \bf4.323 & 38.53 & 100 & \bf 0.122 & \bf 4.297 & 7.94 & 100 \\ 
		& ANH & 0.341 & 5.987 &  &  & 0.840 & 4.639 &  &  \\ 
		& Bosq & 0.464 & 7.762 &  &  & 0.679 & 4.865 &  &  \\ 
		\hline
		\hline & & \multicolumn{4}{c|}{ Scenario A:  $q= 21 , n= 40 , T= 400 $} & \multicolumn{4}{c}{ Scenario B:  $q= 21 , n= 40 , T= 400 $} \\
		& Method & MISE$_{avg}$ & PE$_{avg}$ & R$_{avg}$(\%) & R$_{w}$(\%) & MISE$_{avg}$ &PE$_{avg}$ & R$_{avg}$(\%) & R$_{w}$(\%)\\
		\hline
		$\kappa=0.2$ & RKHS & \bf 0.614 & \bf 7.339 & 0.75 & 78 & \bf 1.016 & \bf 7.256 & 0.04 & 50 \\ 
		& ANH & 0.829 & 7.394 &  &  & 1.062 & 7.259 &  &  \\ 
		& Bosq & 0.806 & 7.380 &  &  & 1.528 & 7.293 &  &  \\ 
		$\kappa=0.5$ & RKHS & \bf 0.092 & \bf 7.350 & 6.57 & 100 & \bf 0.344 & \bf 7.359 & 3.34 & 100 \\ 
		& ANH & 0.304 & 7.833 &  &  & 2.914 & 7.605 &  &  \\ 
		& Bosq & 0.665 & 8.691 &  &  & 0.960 & 7.694 &  &  \\ 
		$\kappa=0.8$ & RKHS & \bf 0.032 & \bf 7.424 & 23.94 & 100 & \bf 0.129 & \bf 7.365 & 5.39 & 100 \\ 
		& ANH & 0.271 & 9.200 &  &  & 3.985 & 7.762 &  &  \\ 
		& Bosq & 0.674 & 15.406 &  &  & 0.899 & 8.594 &  &  \\ 
		\hline
		\hline & & \multicolumn{4}{c|}{ Scenario C(a):  $q= 21 , n= 40 , T= 400 $} & \multicolumn{4}{c}{ Scenario C(b):  $q= 21 , n= 40 , T= 400 $} \\
		& Method & MISE$_{avg}$ & PE$_{avg}$ & R$_{avg}$(\%) & R$_{w}$(\%) & MISE$_{avg}$ &PE$_{avg}$ & R$_{avg}$(\%) & R$_{w}$(\%)\\
		\hline
		$\kappa=0.2$ & RKHS & 1.193 & 1.657 & 0.08 & 55 & \bf 1.042 & \bf 2.349 & 0.04 & 60 \\ 
		& ANH & 1.696 & 1.658 &  &  & 1.068 & 2.350 &  &  \\ 
		& Bosq & \bf 1.061 & \bf 1.655 &  &  & 1.182 & 2.356 &  &  \\ 
		$\kappa=0.5$ & RKHS & 1.669 & \bf 1.696 & 0.42 & 56 & 1.736 & \bf 2.433 & 0.07 & 64 \\ 
		& ANH & 1.788 & 1.703 &  &  & 1.827 & 2.435 &  &  \\ 
		& Bosq & \bf 0.996 & 1.746 &  &  & \bf 0.985 & 2.480 &  &  \\ 
		$\kappa=0.8$ & RKHS & 1.880 & 1.759 & -0.54 & 38 & 1.618 & \bf 2.481 & 0.58 & 63 \\ 
		& ANH & 1.740 & \bf 1.749 &  &  & 1.308 & 2.495 &  &  \\ 
		& Bosq & \bf 0.975 & 1.929 &  &  & \bf 0.945 & 2.745 &  &  \\ 
		\hline\hline
	\end{tabular}
	\caption{Numerical performance of various methods for FAR(1) processes. Methods considered are RKHS~(this paper), ANH~\citep{aue2015prediction}, and Bosq~\citep{bosq2000linear}. Bold font indicates the best performance, where the proposed RKHS method is generally the best performer in Scenarios A and B. (PE$_{avg}$ is multiplied by 100 in scale for Scenarios A and B, but not for Scenario C.)}
	\label{tab:FAR1}
\end{table}

\begin{figure}[H]
	\begin{subfigure}{0.32\textwidth}
		\includegraphics[angle=270, width=1.25\textwidth]{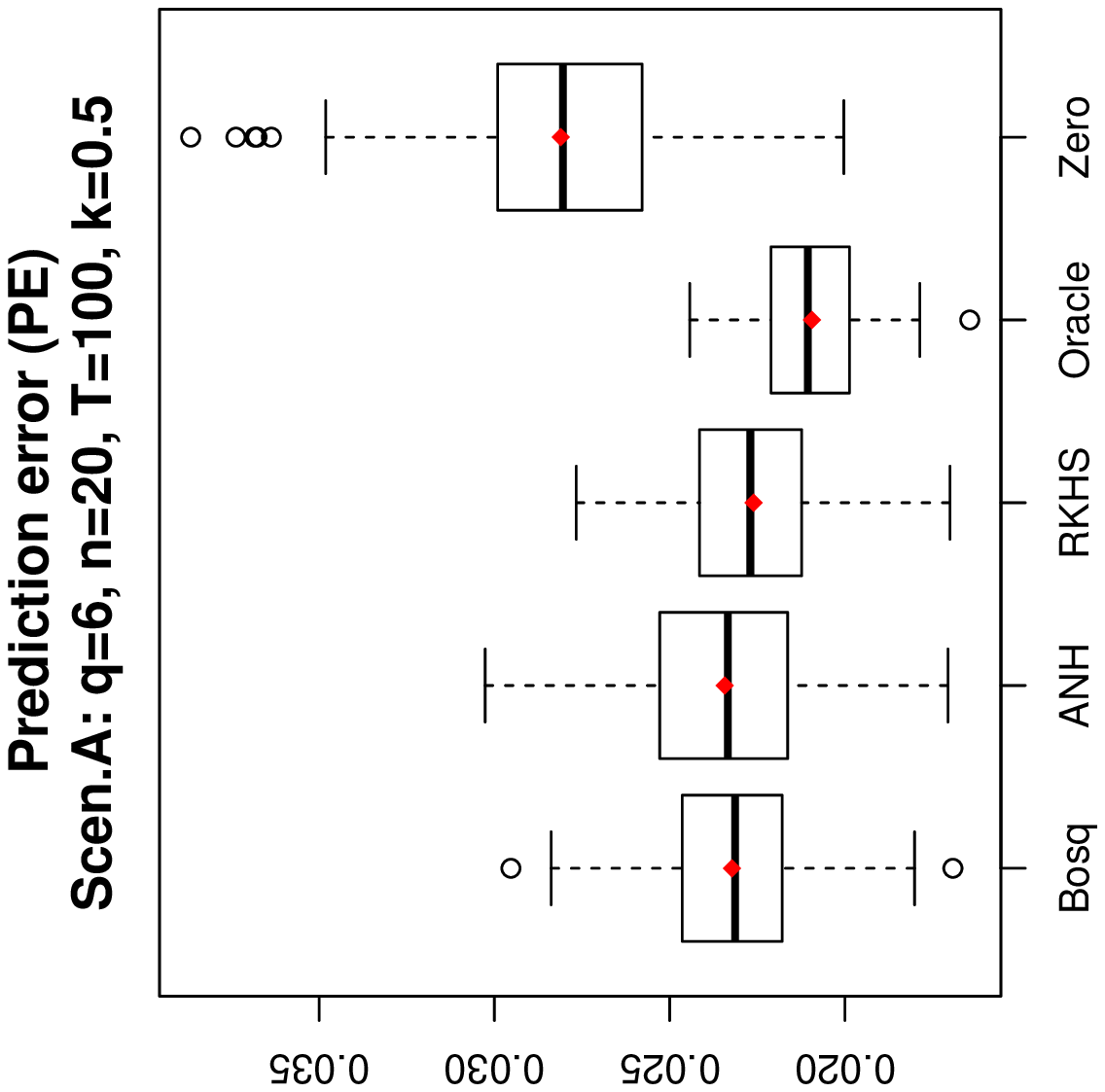}
		\vspace{-1cm}
		\caption{}
	\end{subfigure}
	~
	\begin{subfigure}{0.32\textwidth}
		\includegraphics[angle=270, width=1.25\textwidth]{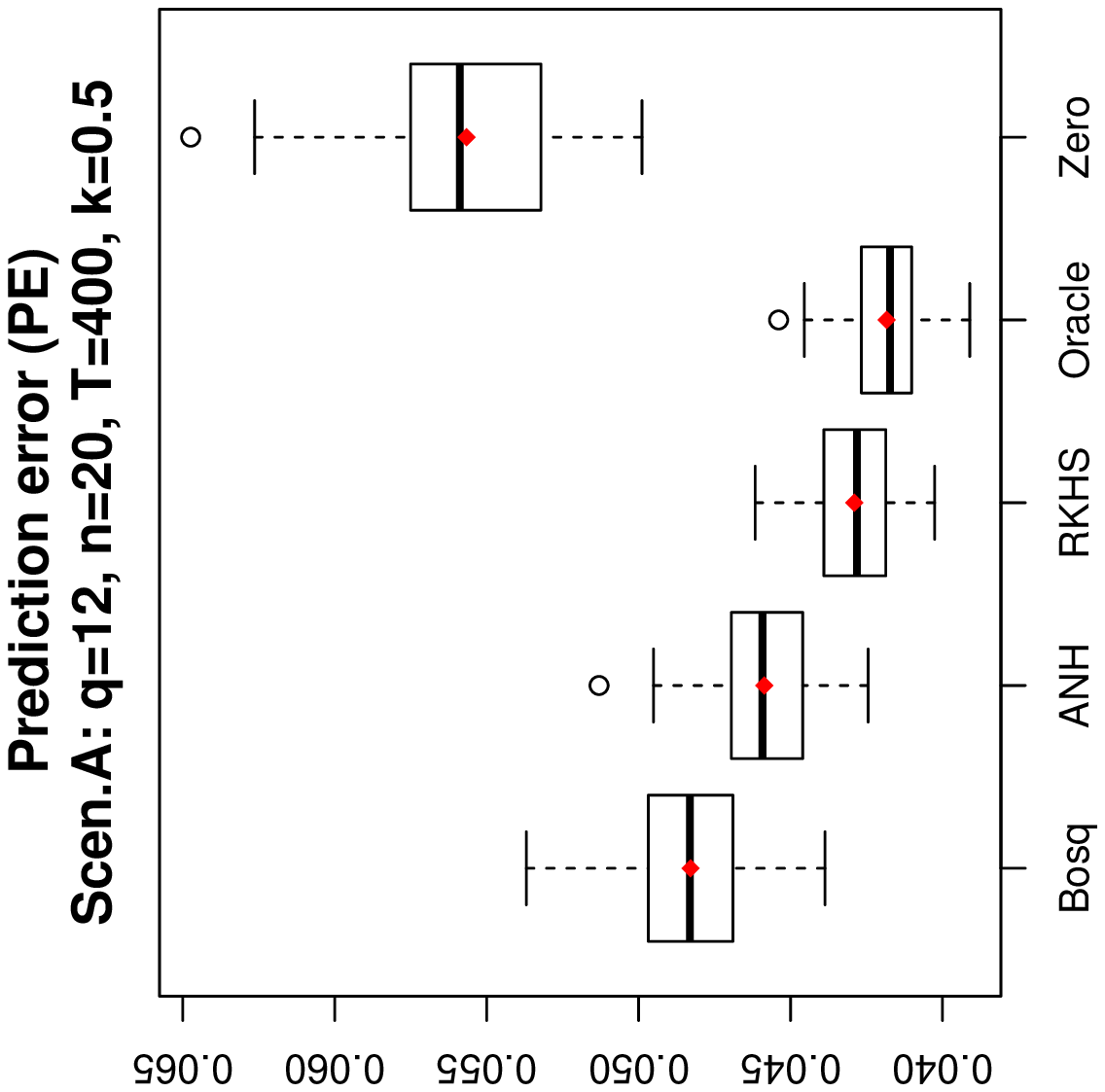}
		\vspace{-1cm}
		\caption{}
	\end{subfigure}
	~
	\begin{subfigure}{0.32\textwidth}
		\includegraphics[angle=270, width=1.25\textwidth]{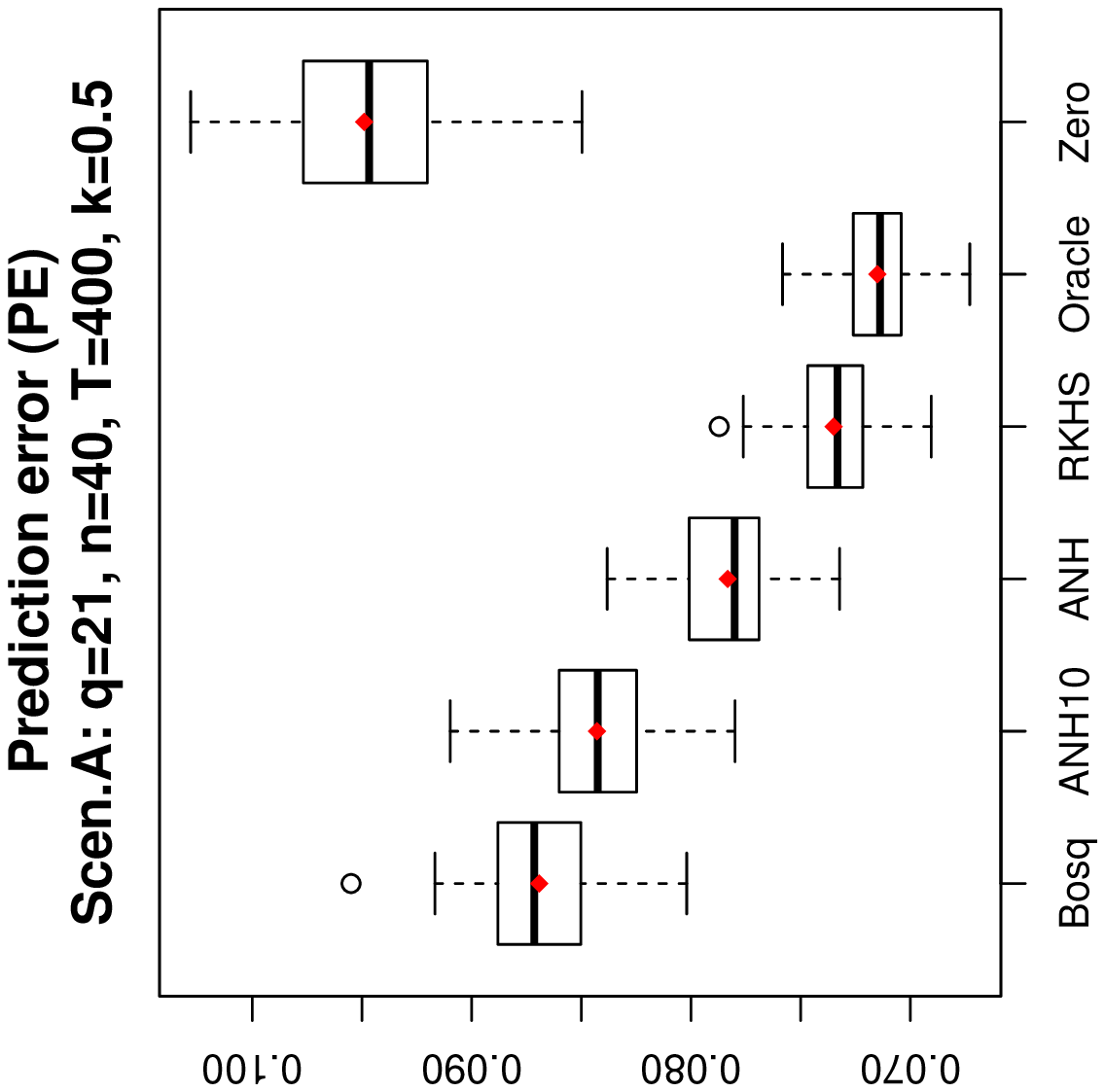}
		\vspace{-1cm}
		\caption{}
	\end{subfigure}
    ~
	\begin{subfigure}{0.32\textwidth}
		\includegraphics[angle=270, width=1.25\textwidth]{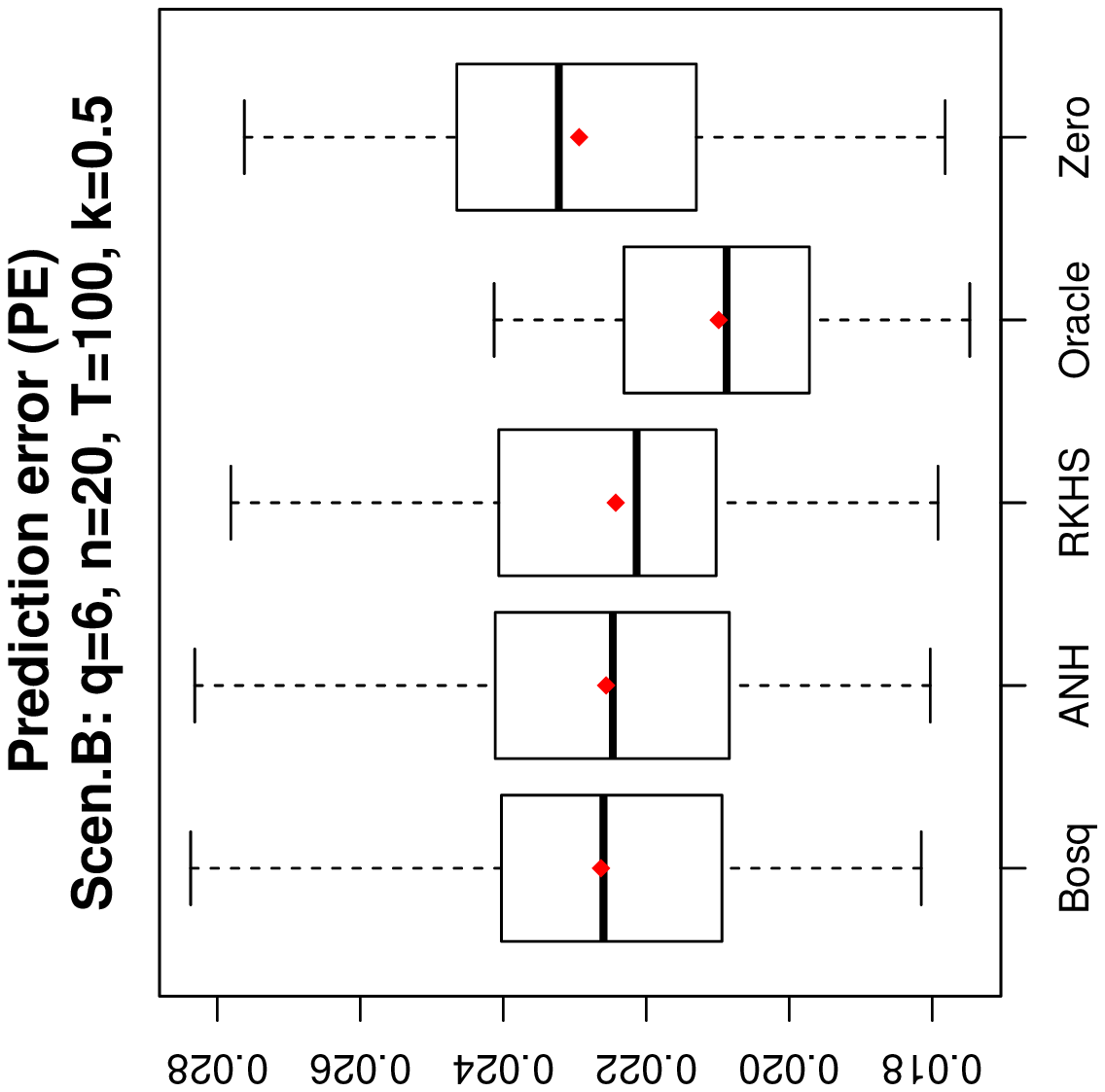}
		\vspace{-1cm}
		\caption{}
	\end{subfigure}
	~
	\begin{subfigure}{0.32\textwidth}
		\includegraphics[angle=270, width=1.25\textwidth]{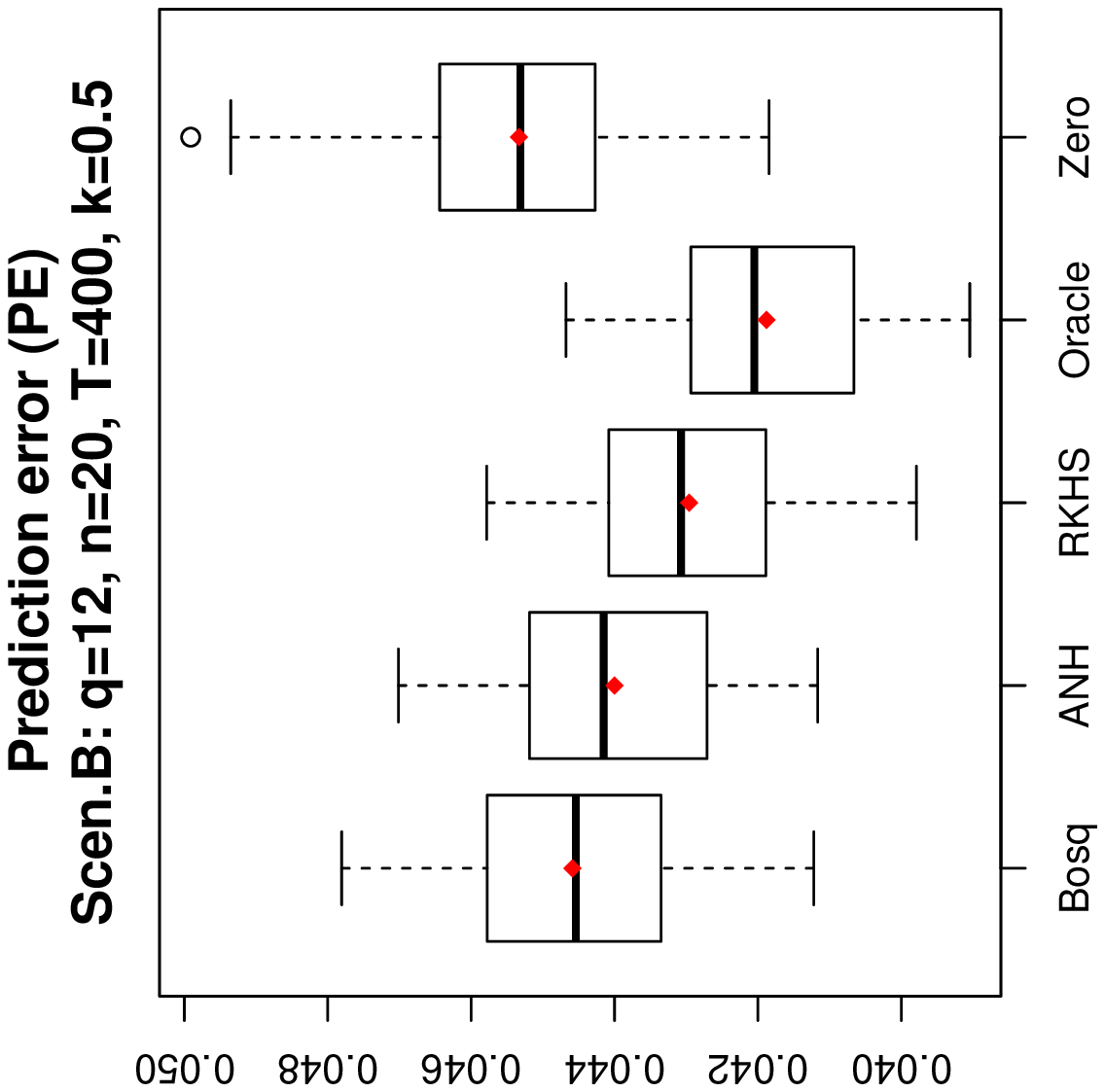}
		\vspace{-1cm}
		\caption{}
	\end{subfigure}
	~
	\begin{subfigure}{0.32\textwidth}
		\includegraphics[angle=270, width=1.25\textwidth]{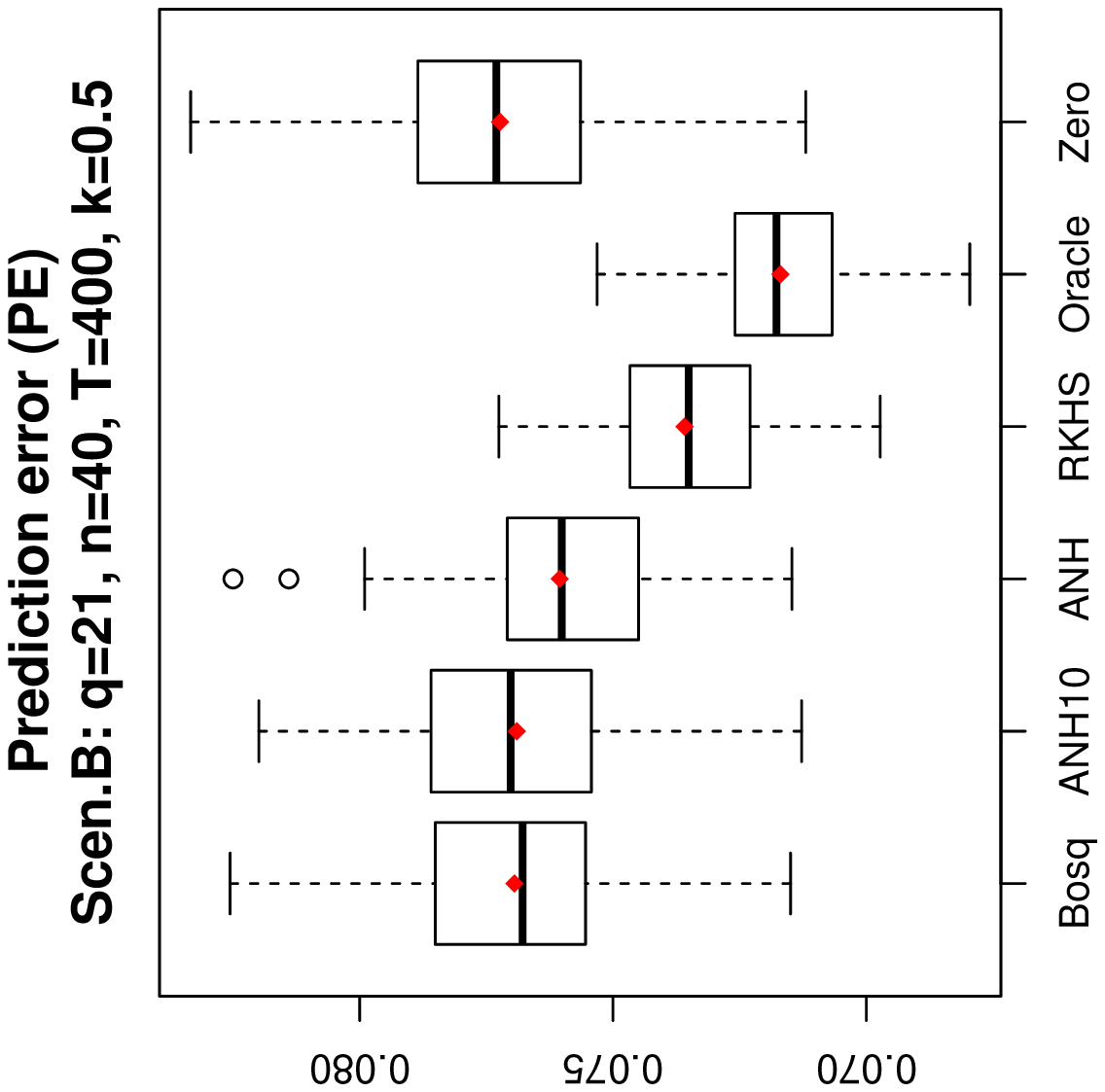}
		\vspace{-1cm}
		\caption{}
	\end{subfigure}
	~
	\begin{subfigure}{0.32\textwidth}
		\includegraphics[angle=270, width=1.25\textwidth]{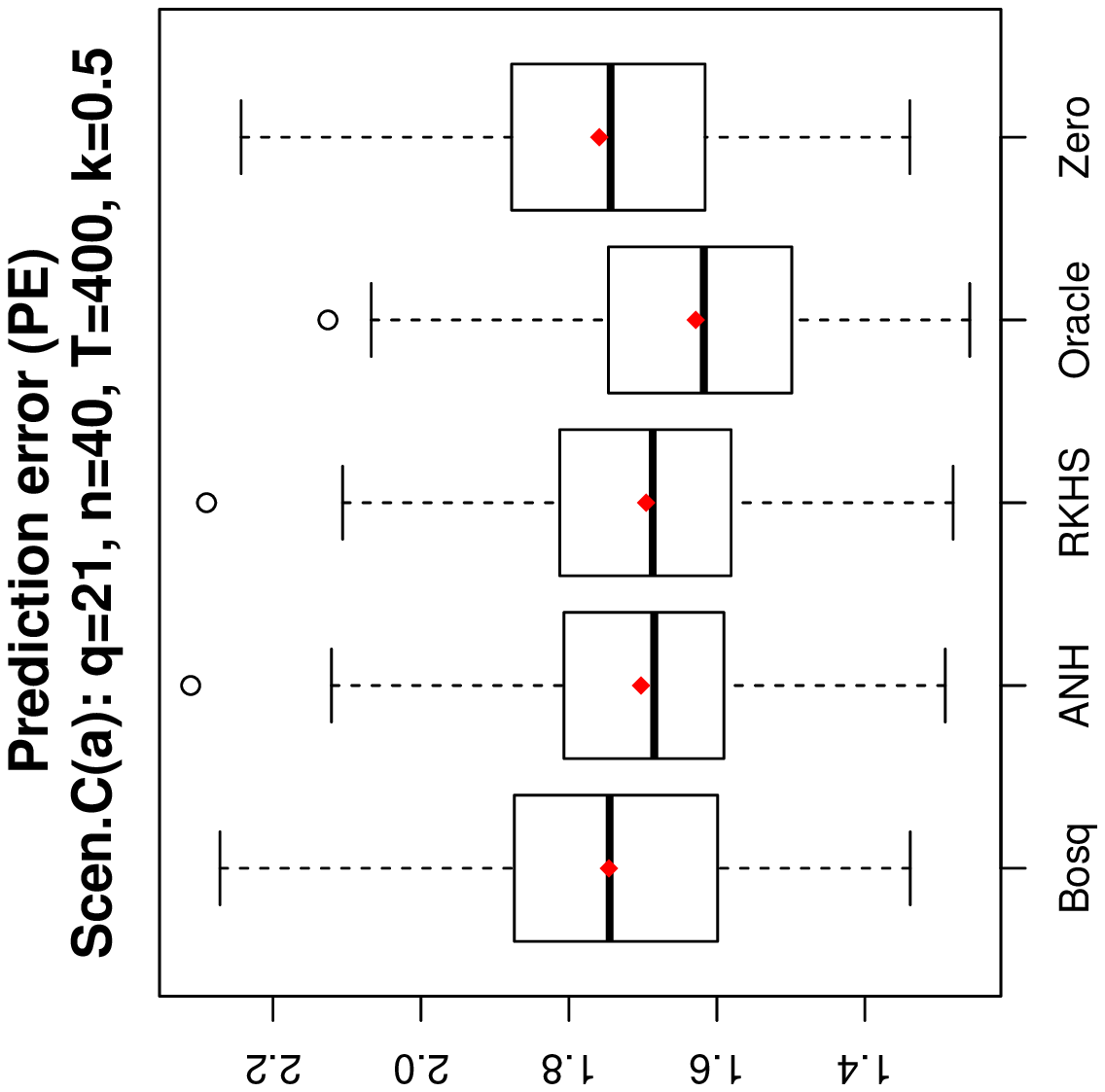}
		\vspace{-1cm}
		\caption{}
	\end{subfigure}
		~
	\begin{subfigure}{0.32\textwidth}
		\includegraphics[angle=270, width=1.25\textwidth]{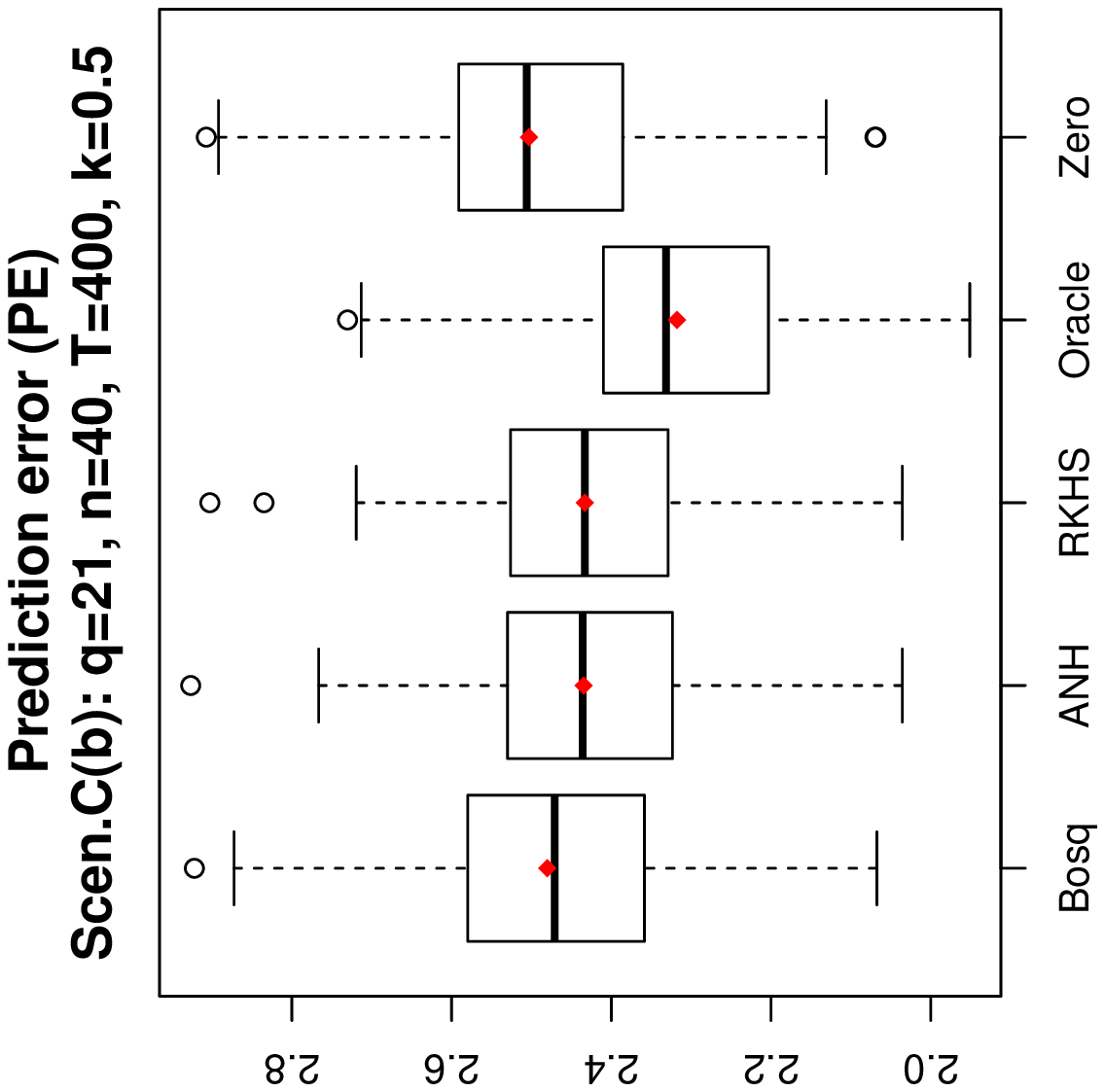}
		\vspace{-1cm}
		\caption{}
	\end{subfigure}
	\caption{Boxplot of prediction error~(PE) for FAR(1) across 100 experiments with signal strength $\kappa=0.5.$ In (c) and (f), ANH10 stands for ANH based on 10 cubic B-splines under $q=21$. Red points denote average PE across 100 experiments for each method.}
	\label{fig:FAR1_k5_PE}
\end{figure}

\subsection{Simulation result for FAR with autoregressive order selection}
This section investigates the performance of RKHS under an unknown FAR order $D$. Specifically, we consider an FAR(2) process under three signal-noise settings adapted from Section \ref{subsec:simu_far1}:
\vspace{-0.2cm}
\begin{itemize}\setlength\itemsep{-0.2em}
	\item Scenario A2 (Diag $\Lambda$): For $d=1,2$, $\Lambda_d=\text{diag}(\kappa_d,\cdots, \kappa_d)$ and $z_{ti}\overset{i.i.d.}{\sim} U(-a, a)$ with $a=0.1$ for $i=1,\cdots, q$.
	\item Scenario B2 (Random $\Lambda$): For $d=1,2$, a random matrix $\Lambda_d^*$ is first generated via $\Lambda^*_{d,ij}\overset{i.i.d.}{\sim} N(0,1)$ and we set $\Lambda_d=\kappa_d\cdot\Lambda_d^*/\sigma(\Lambda_d^*)$, and $z_{ti}\overset{i.i.d.}{\sim} U(-a, a)$ with $a=0.1$, for $i=1,\cdots, q$.
	\item Scenario C2 (ANH setting): (a) For $d=1,2,$ a random matrix $\Lambda_d^*$ is first generated via $\Lambda^*_{d,ij}\overset{ind.}{\sim} N(0,\sigma_i\sigma_j)$ and we set $\Lambda_d=\kappa_d\cdot\Lambda_d^*/\sigma(\Lambda_d^*)$, and $z_{ti}\overset{ind.}{\sim} N(0, \sigma_i^2)$, where $\sigma_{1:q}=(1:q)^{-1}$. (b) Same setting except $\sigma_{1:q}=1.2^{-(1:q)}$.
\end{itemize}

The implementation of ANH and RKHS is the same as that for FAR(1). The only difference is that we do not impose the autoregressive order $D=2$. Instead, we assume $D_{max}=2$ and let the algorithms select the FAR order. For ANH, the fFPE criterion is used to select the number of FPC $p$ and the FAR order $D$. For RKHS, we use 5-fold cross validation to select both $D$ and the tuning parameters $(\lambda_1,\lambda_2)$. Note that both fFPE and CV are prediction-based order selection criteria and thus may not always favor the correct autoregressive order. We do not include Bosq in the comparison as its performance is typically inferior to ANH and RKHS (as shown in Section \ref{subsec:simu_far1}) and additional procedure is needed to determine the FAR order for Bosq, see for example \cite{Kokoszka2013}.

\textbf{Numerical result for autoregressive order selection}: For Scenarios A2 and B2, we consider $(1) q=6, n=20, T=100, (2) q=12, n=20, T=400, (3) q=21, n=40, T=400$. For Scenario C2, we consider $q=21, n=40, T=400$. As for the signal level, we consider $(\kappa_1,\kappa_2)=(0.5,0.3)$ and $(\kappa_1,\kappa_2)=(0,0.5)$ similar as in \cite{aue2015prediction}. For each simulation setting, we conduct 100 experiments.

We summarize the numerical performance of RKHS and ANH in Table \ref{tab:FAR_orderselection}. Note that MISE is not well-defined when an incorrect autoregressive order is selected, thus we only report PE. For each experiment, PE is calculated based on the selected FAR model, which may or may not be FAR(2). Same as the analysis for FAR(1), we report the mean ratio (R$_{avg}$) of prediction improvement by RKHS and the percentage of experiments~(R$_w$) where RKHS achieves a lower PE than ANH. In addition, we report the percentage of experiments ($D_T$) where the algorithm selects the correct autoregressive order $D$. We further give the boxplot of PE in Figure \ref{fig:FAR_orderselection_PE_k1_5_k2_3} under signal strength $(\kappa_1,\kappa_2)=(0.5,0.3)$. The boxplot of PE under $(\kappa_1,\kappa_2)=(0,0.5)$ can be found in the Appendix.

RKHS continues to offer better performance for estimation and prediction under autoregressive order selection. Note that for $(\kappa_1,\kappa_2)=(0,0.5)$, both RKHS and ANH can identify the correct FAR order accurately, as ignoring $\kappa_2$ can result in large prediction error. However, this is not the case for $(\kappa_1,\kappa_2)=(0.5,0.3)$, where due to possible bias-variance trade-off, FAR(1) may be the favored model for prediction. 

\begin{table}[H]
	\centering
	\centerline{\small
	\begin{tabular}{llrrrr|rrrr}
		\hline
		\hline & & \multicolumn{4}{c|}{ Scenario A2:  $q= 6 , n= 20 , T= 100 $} & \multicolumn{4}{c}{ Scenario B2:  $q= 6 , n= 20 , T= 100 $} \\
		& Method  & PE$_{avg}$ & D$_{T}$(\%) & R$_{avg}$(\%) & R$_{w}$(\%) &PE$_{avg}$ & D$_{T}$(\%) & R$_{avg}$(\%) & R$_{w}$(\%)\\
		\hline
		$\kappa_1,\kappa_2=0.5,0.3$ & RKHS & \bf 2.557 & 49 & 5.28 & 70 & \bf 2.315 & 6 & 2.24 & 71 \\ 
		& ANH & 2.696 & 93 &  &  & 2.364 & 36 &  &  \\ 
		$\kappa_1,\kappa_2=0,0.5$ & RKHS & \bf 2.244 & 100 & 9.45 & 91 & \bf 2.248 & 99 & 3.96 & 72 \\ 
		& ANH & 2.456 & 100 &  &  & 2.334 & 88 &  &  \\ 
		\hline
		\hline & & \multicolumn{4}{c|}{ Scenario A2:  $q= 12 , n= 20 , T= 400 $} & \multicolumn{4}{c}{ Scenario B2:  $q= 12 , n= 20 , T= 400 $} \\
		& Method  & PE$_{avg}$ & D$_{T}$(\%) & R$_{avg}$(\%) & R$_{w}$(\%) &PE$_{avg}$ & D$_{T}$(\%) & R$_{avg}$(\%) & R$_{w}$(\%)\\
		\hline
		$\kappa_1,\kappa_2=0.5,0.3$ & RKHS & \bf 4.418 & 100 & 26.26 & 100 & \bf 4.389 & 61 & 3.01 & 99 \\ 
		& ANH & 5.577 & 100 &  &  & 4.521 & 52 &  &  \\ 
		$\kappa_1,\kappa_2=0,0.5$ & RKHS & \bf 4.296 & 100 & 9.56 & 100 & \bf 4.290 & 100 & 4.53 & 100 \\ 
		& ANH & 4.707 & 100 &  &  & 4.484 & 100 &  &  \\ 
		\hline
		\hline & & \multicolumn{4}{c|}{ Scenario A2:  $q= 21 , n= 40 , T= 400 $} & \multicolumn{4}{c}{ Scenario B2:  $q= 21 , n= 40 , T= 400 $} \\
		& Method  & PE$_{avg}$ & D$_{T}$(\%) & R$_{avg}$(\%) & R$_{w}$(\%) &PE$_{avg}$ & D$_{T}$(\%) & R$_{avg}$(\%) & R$_{w}$(\%)\\
		\hline
		$\kappa_1,\kappa_2=0.5,0.3$ & RKHS & \bf 7.569 & 100 & 22.78 & 100 & \bf 7.558 & 42 & 3.69 & 100 \\ 
		& ANH & 9.292 & 100 &  &  & 7.836 & 3 &  &  \\ 
		$\kappa_1,\kappa_2=0,0.5$ & RKHS & \bf 7.347 & 100 & 12.22 & 100 & \bf 7.374 & 100 & 5.11 & 100 \\ 
		& ANH & 8.245 & 100 &  &  & 7.750 & 99 &  &  \\ 
		\hline
		\hline & & \multicolumn{4}{c|}{ Scenario C2(a):  $q= 21 , n= 40 , T= 400 $} & \multicolumn{4}{c}{ Scenario C2(b):  $q= 21 , n= 40 , T= 400 $} \\
		& Method  & PE$_{avg}$ & D$_{T}$(\%) & R$_{avg}$(\%) & R$_{w}$(\%) &PE$_{avg}$ & D$_{T}$(\%) & R$_{avg}$(\%) & R$_{w}$(\%)\\
		\hline
		$\kappa_1,\kappa_2=0.5,0.3$ & RKHS & \bf 1.727 & 65 & 0.24 & 51 & 2.494 & 51 & -0.18 & 43 \\ 
		& ANH & 1.731 & 9 &  &  & \bf 2.489 & 1 &  &  \\ 
		$\kappa_1,\kappa_2=0,0.5$ & RKHS & \bf 1.692 & 99 & 1.19 & 71 & \bf 2.438 & 95 & 1.87 & 81 \\ 
		& ANH & 1.712 & 96 &  &  & 2.483 & 88 &  &  \\ 
		\hline\hline
\end{tabular}}
	\caption{Numerical performance of various methods for FAR(2) processes with autoregressive order selection. Methods considered are RKHS~(this paper) and ANH~\citep{aue2015prediction}. Bold font indicates the best performance, where the proposed RKHS method is generally the best performer in Scenarios A2 and B2. (PE$_{avg}$ is multiplied by 100 in scale for Scenarios A2 and B2, but not for Scenario C2.)} 
\label{tab:FAR_orderselection}
\end{table}

\begin{figure}[H]
	\begin{subfigure}{0.32\textwidth}
		\includegraphics[angle=270, width=1.25\textwidth]{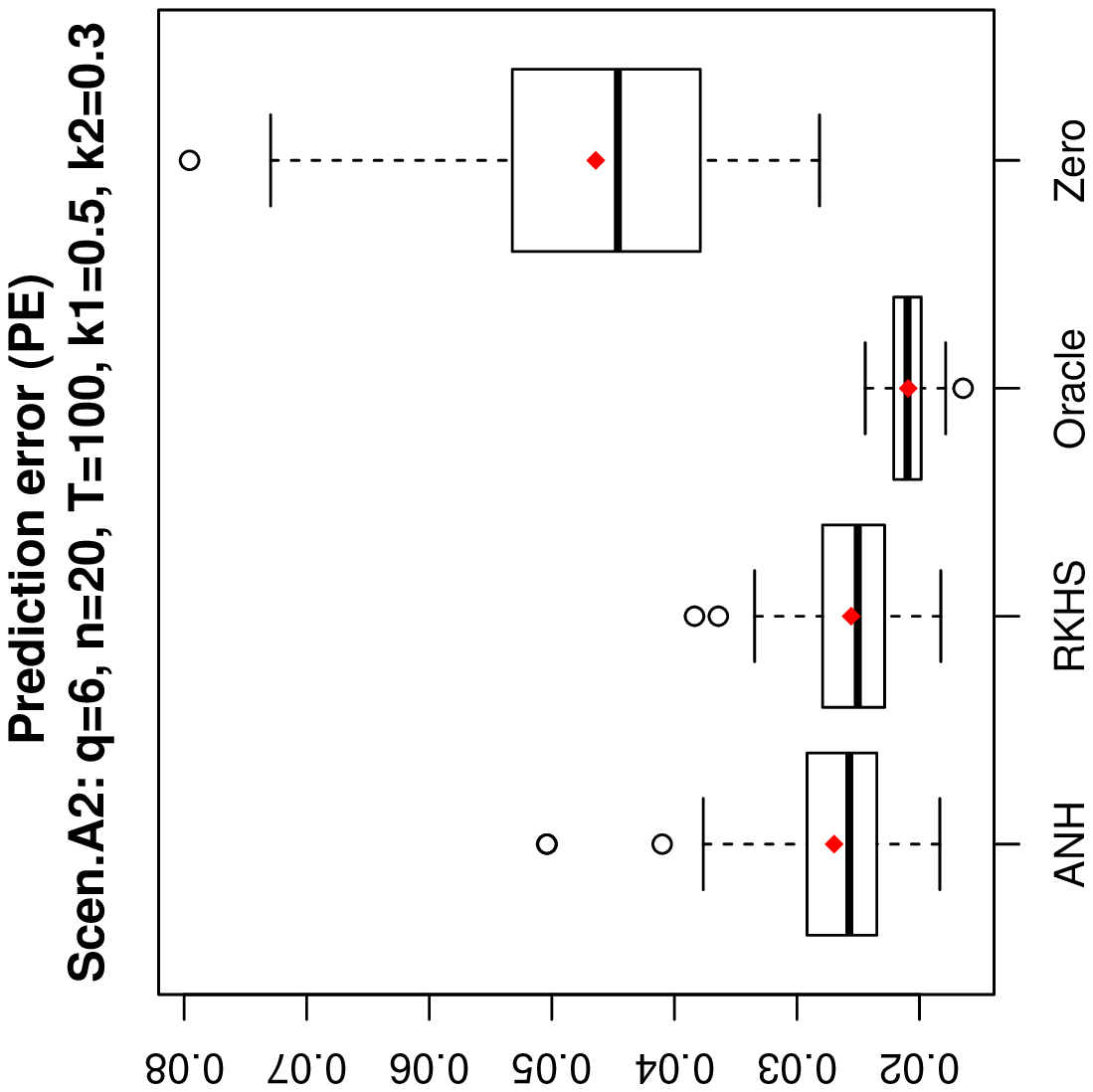}
		\vspace{-1cm}\caption{}
	\end{subfigure}
	~
	\begin{subfigure}{0.32\textwidth}
		\includegraphics[angle=270, width=1.25\textwidth]{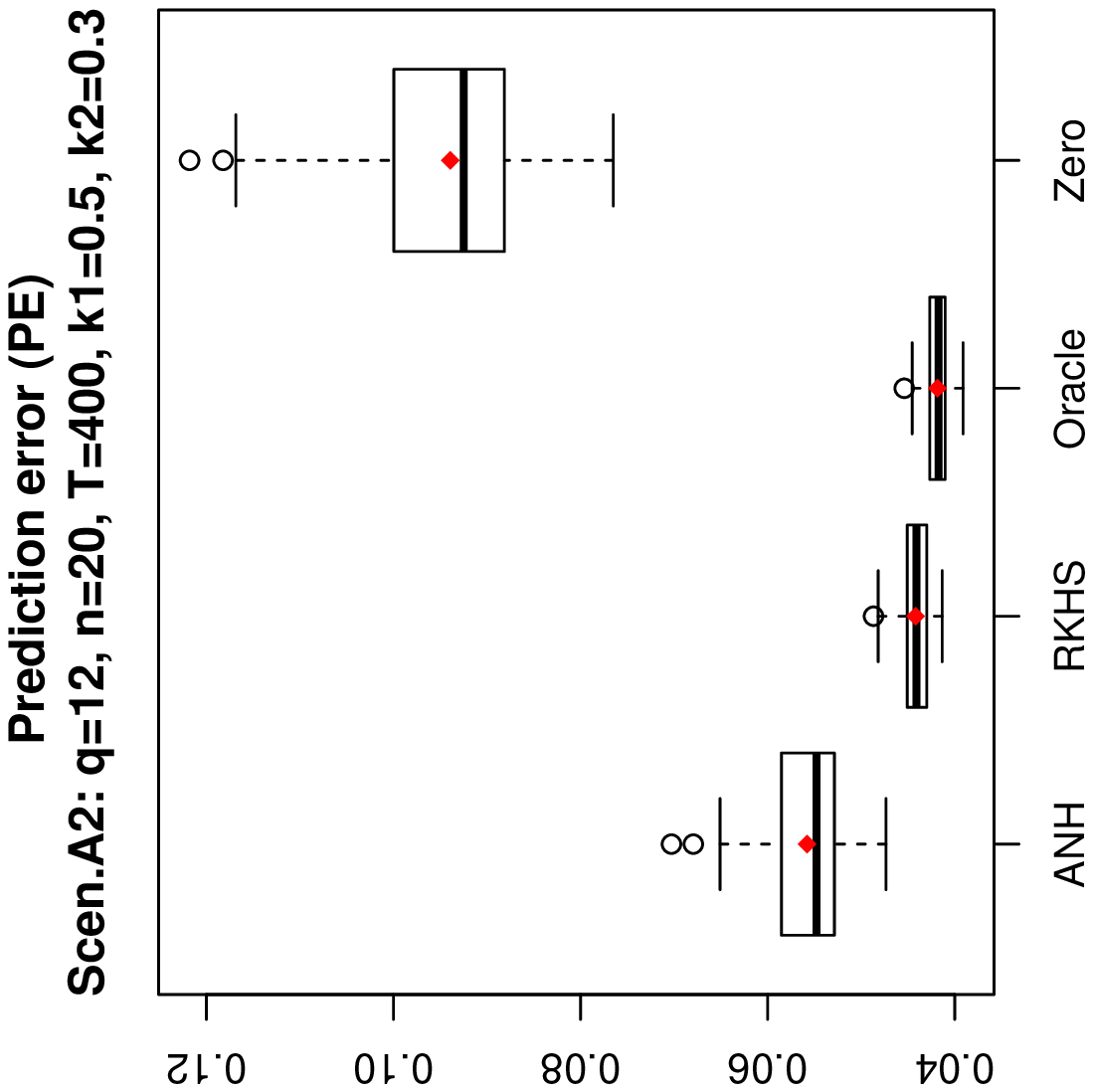}
		\vspace{-1cm}\caption{}
	\end{subfigure}
	~
	\begin{subfigure}{0.32\textwidth}
		\includegraphics[angle=270, width=1.25\textwidth]{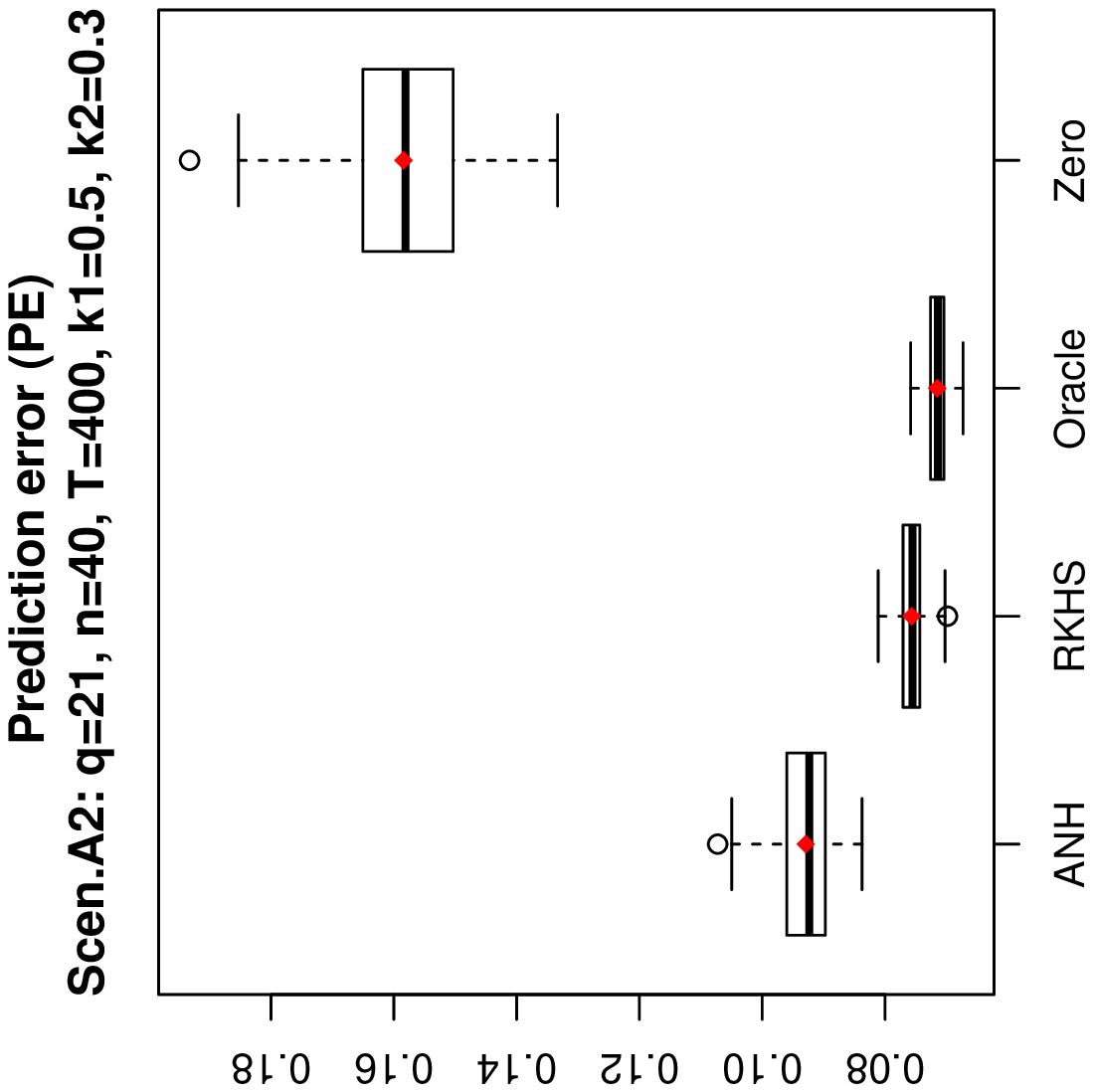}
		\vspace{-1cm}\caption{}
	\end{subfigure}
	~
	\begin{subfigure}{0.32\textwidth}
		\includegraphics[angle=270, width=1.25\textwidth]{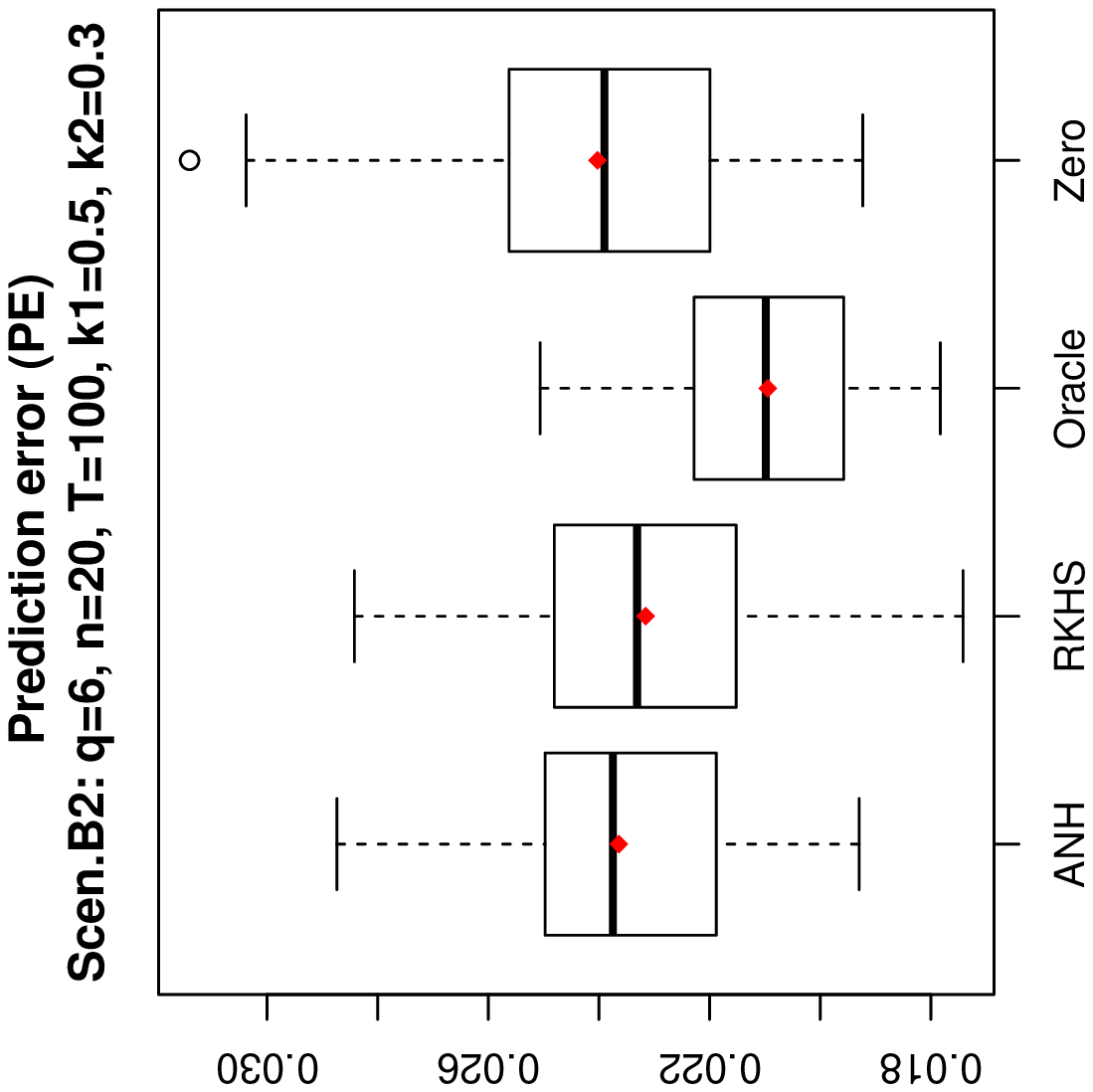}
		\vspace{-1cm}\caption{}
	\end{subfigure}
	~
	\begin{subfigure}{0.32\textwidth}
		\includegraphics[angle=270, width=1.25\textwidth]{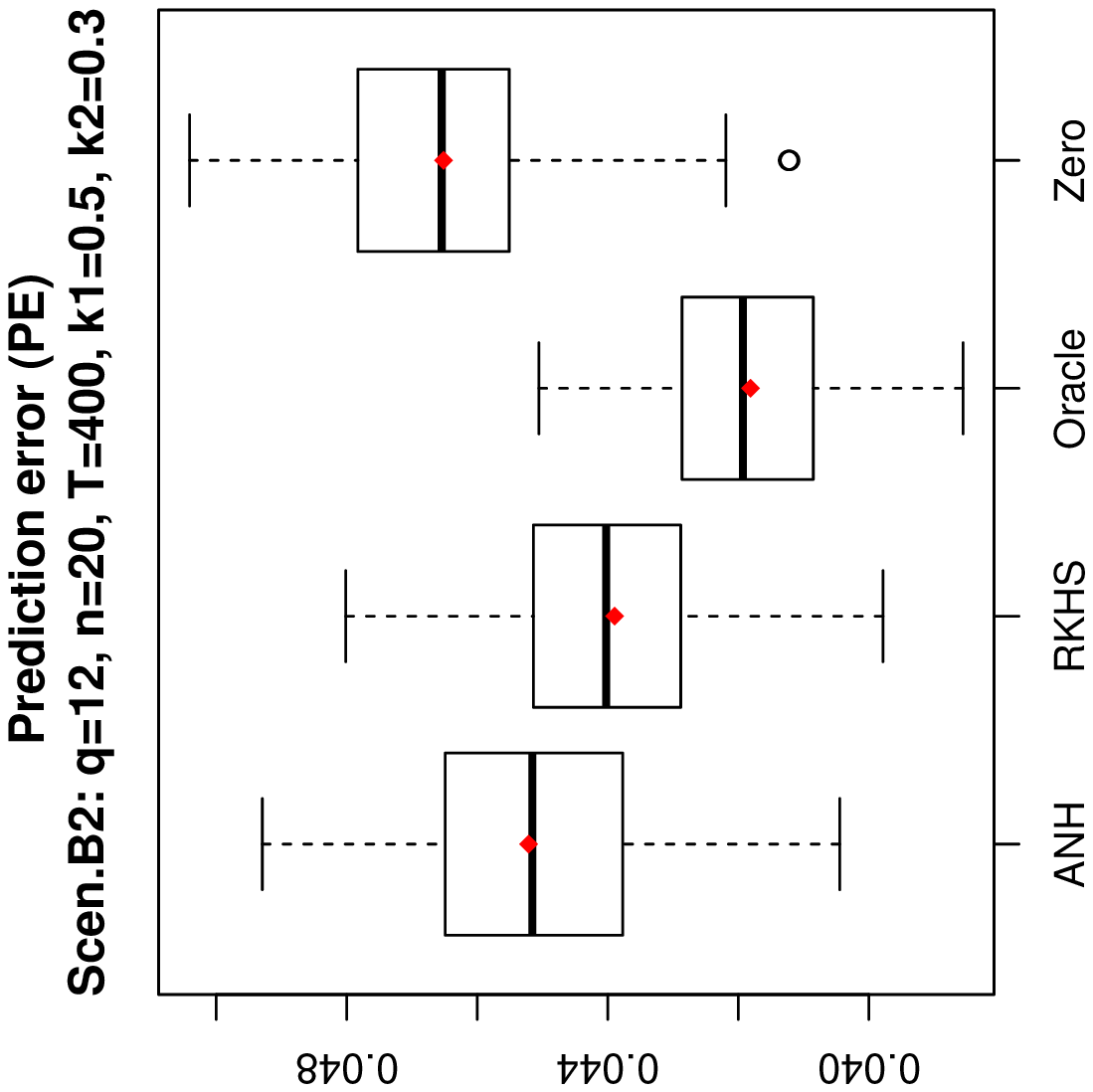}
		\vspace{-1cm}\caption{}
	\end{subfigure}
	~
	\begin{subfigure}{0.32\textwidth}
		\includegraphics[angle=270, width=1.25\textwidth]{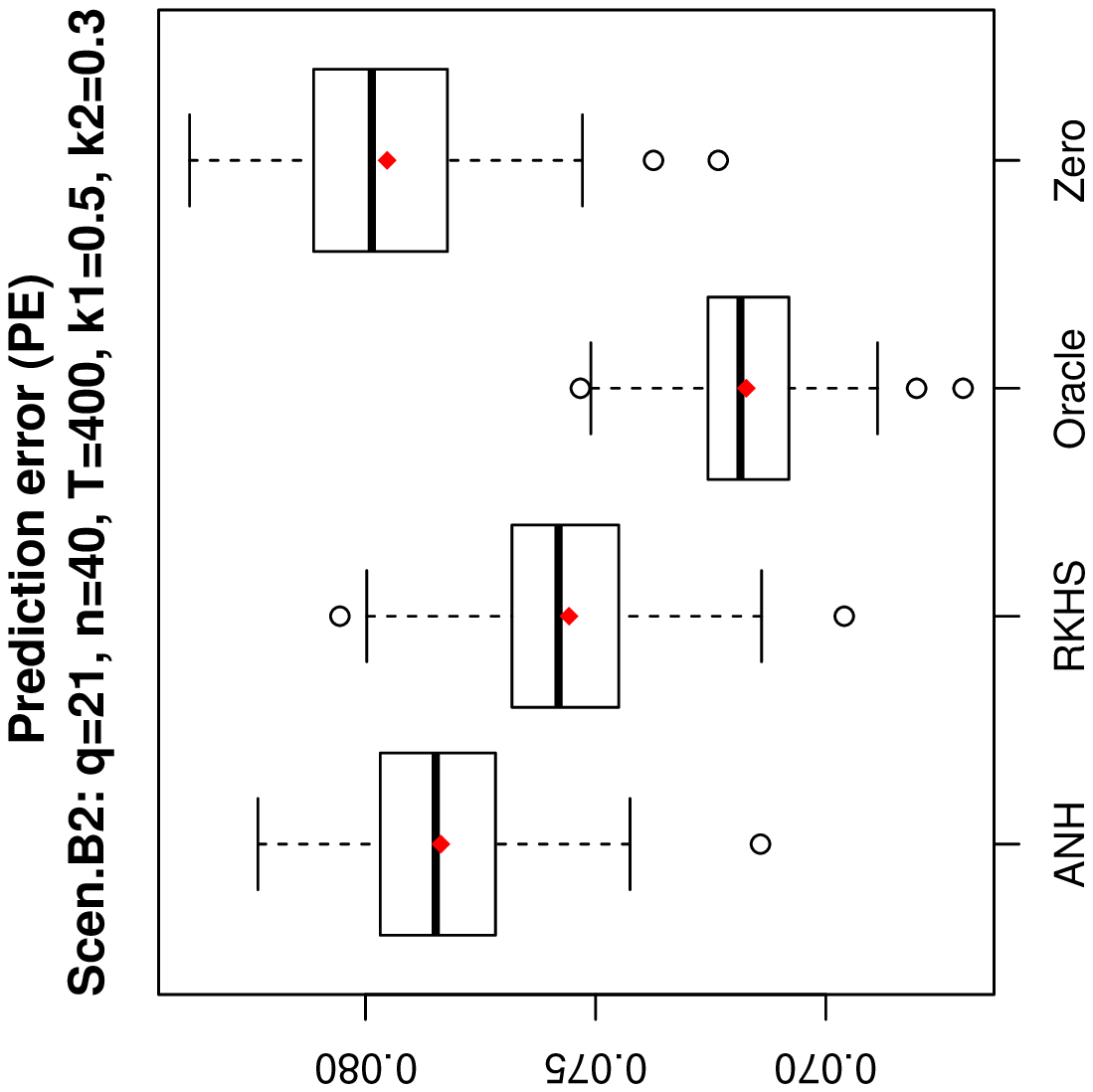}
		\vspace{-1cm}\caption{}
	\end{subfigure}
	~
	\begin{subfigure}{0.32\textwidth}
		\includegraphics[angle=270, width=1.25\textwidth]{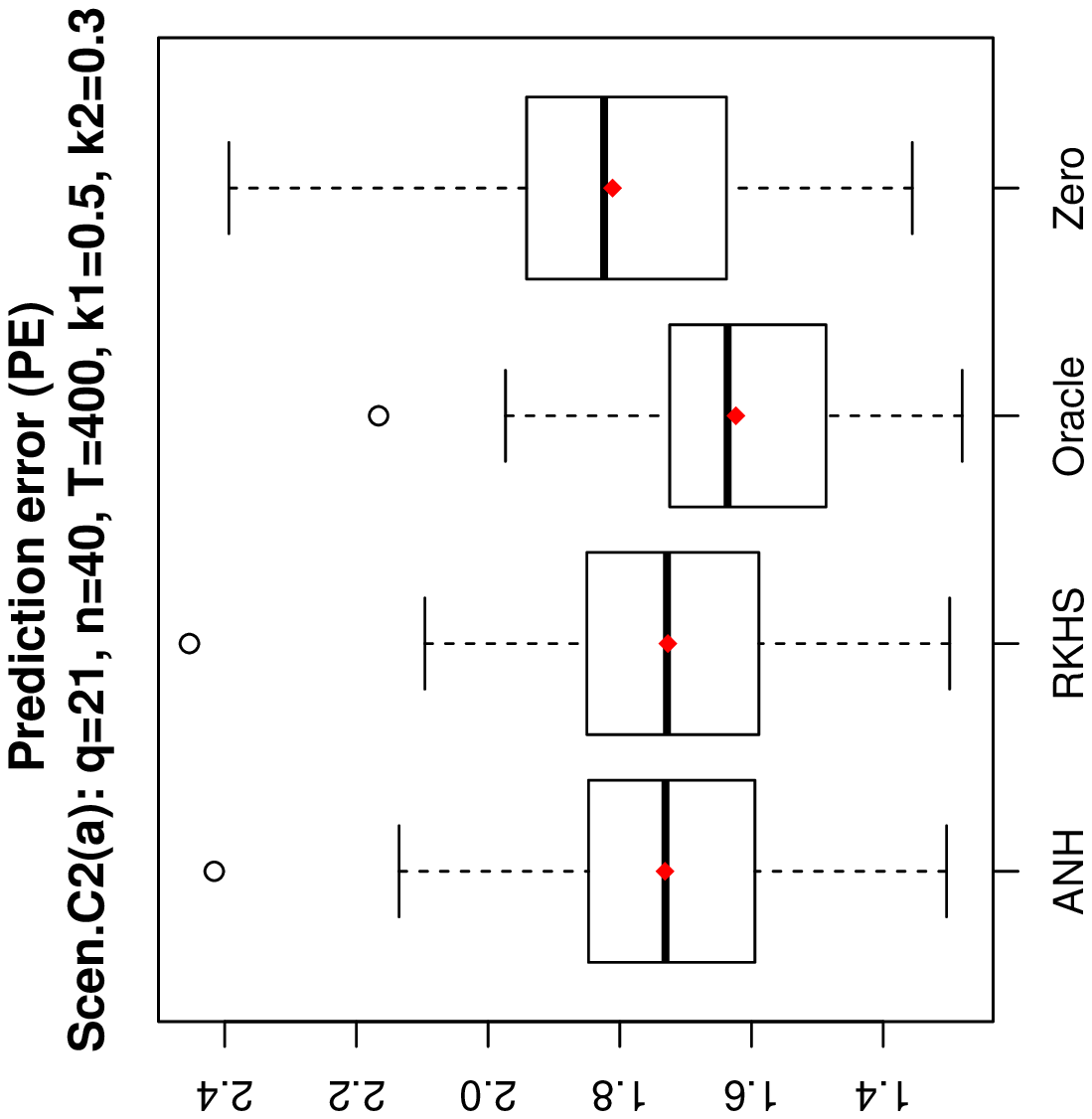}
		\vspace{-1cm}\caption{}
	\end{subfigure}
	~
	\begin{subfigure}{0.32\textwidth}
		\includegraphics[angle=270, width=1.25\textwidth]{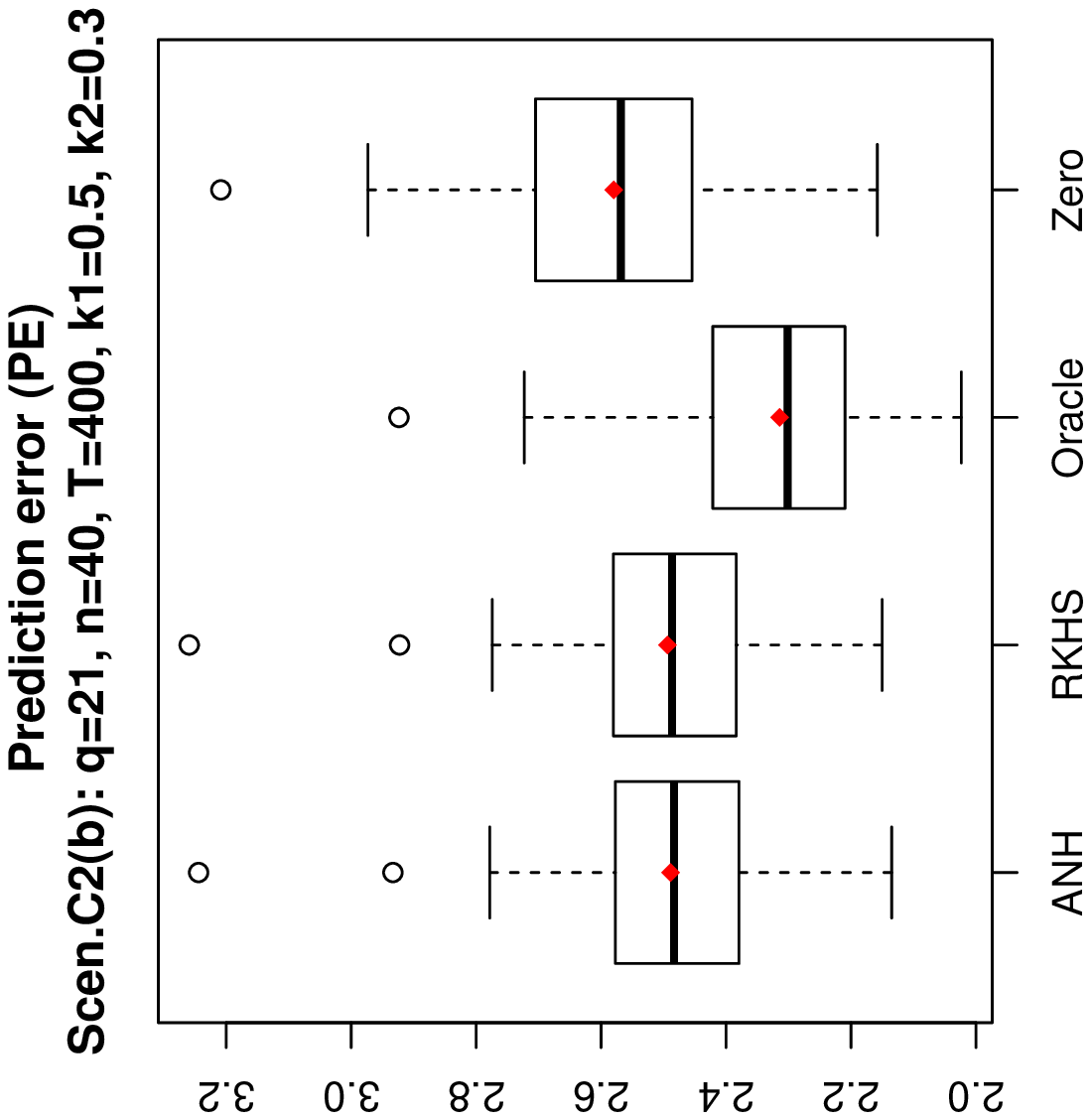}
		\vspace{-1cm}\caption{}
	\end{subfigure}
	\caption{Boxplot of prediction error~(PE) for FAR with autoregressive order selection across 100 experiments with signal strength $\kappa_1=0.5, \kappa_2=0.3.$ Red points denote average PE across 100 experiments for each method.}
	\label{fig:FAR_orderselection_PE_k1_5_k2_3}
\end{figure}

\section{Real Data Application}\label{sec:realdata}
In this section, we give an illustrative example of RKHS in predicting functional time series. Specifically, we consider the Utility demand data in the book \cite{Hyndman2008}, which is publicly available in R package \texttt{expsmooth}. The data contains 125 daily curves of hourly utility demand from a company in Midwest U.S., starting from January 2003. Thus we have $T=125$ and $n=24.$

The original functional time series is given in Figure \ref{fig:utility}. As can be seen, the utility demand seems to be non-stationary with a downward trend. Thus, we take the first-order difference and study the differenced time series, in other words, we study the derivative of the utility demand curve. We partition the time series into training data, which contains the first 100 daily curves, and test data, which contains the last 25 daily curves.

Based on the training data, we estimate three FAR(1) models using Bosq, ANH and RKHS respectively, where the implementation of the three methods is the same as in Section \ref{subsec:simu_far1}. The estimated model is then used to generate one-day ahead prediction for the 25 daily curves in the test data. For reference, we also implement a naive prediction, where the lagged $X_{t-1}$ is used to predict $X_t.$ For each day $t=101,\cdots, 125$ in the test data, we calculate the prediction error for each hour $e_{ti}=X_t(s_i)-\widehat{X}_t(s_i), i=1,\cdots,24$ where $\widehat{X}_t(s_i)$ denotes the predicted value, and we define RMSE$_t=\sqrt{\sum_{i=1}^{24}e_{ti}^2/24}$ and MAE$_t=\sum_{i=1}^{24}|e_{ti}|/24$. Table \ref{tab:utility} summarizes the prediction performance of the four methods, where for each method, we report the average RMSE and average MAE across the 25 days on the test data. RKHS gives the best performance, followed by ANH. In addition, we give the percentage of days when RKHS gives the best performance among the four methods in terms of RMSE$_t$ and MAE$_t$, and it is seen that RKHS wins around 60\% of the time.

Figure \ref{fig:utility_Aest} gives the 3D plot of the transition operator $\widehat{A}(s,r)$ estimated by Bosq, ANH and RKHS. As can be seen, the transition operators by ANH and RKHS exhibit wider range and more complex nature than Bosq. This is also confirmed by the singular values of the estimated transition operator, as is shown in Figure \ref{fig:utility}(right). Bosq selects $p=3$ FPC and the fFPE criterion of ANH selects $p=5$ FPC, which reflects on the rank of the estimated $\widehat A(s,r)$.

For illustration, Figure \ref{fig:prediction}(left) further plots the mean observation $\overline{X}(s_i)=\sum_{t=101}^{125}X_t(s_i)/25$ for each hour $i=1,\cdots, 24$ across the 25 days on the test data, together with the mean prediction $\overline{\widehat{X}}(s)=\sum_{t=101}^{125}\widehat{X}_t(s)/25$ by Bosq, ANH and RKHS. As can be seen, the mean prediction given by RKHS performs the best, while Bosq and ANH seem to miss some variation in the data, possibly due to information loss in dimension reduction. Figure \ref{fig:prediction}(center, right) plot sectional views of the estimated transition operator $\widehat A(s,\cdot)$ at $s=5$ and $s=20$ hour respectively. Same as in Figure \ref{fig:utility_Aest}, the transition operators estimated by RKHS and ANH exhibit more structures than Bosq. Additionally, it seems that the end of day observations have high predictive power as $\widehat A(s,r)$ takes larger absolute values around $r=20$, though the direction may be different for different hours $s$.

\begin{table}[H]
	\centering
	\begin{tabular}{rrrrrr}
		\hline\hline
		& Bosq & ANH & RKHS & Naive & RKHS wins \\ 
		\hline
		average RMSE & 268.30 & 239.14 & \bf 201.64 & 301.80 & 60\% \\ 
		average MAE & 191.05 & 173.23 & \bf 147.84 & 203.79 & 60\% \\ 
		\hline\hline
	\end{tabular}
	\caption{Prediction performance of Bosq, ANH, RKHS and Naive on the test data.}
\label{tab:utility}
\end{table}

\begin{figure}[H]
	\includegraphics[angle=270, width=1\textwidth]{./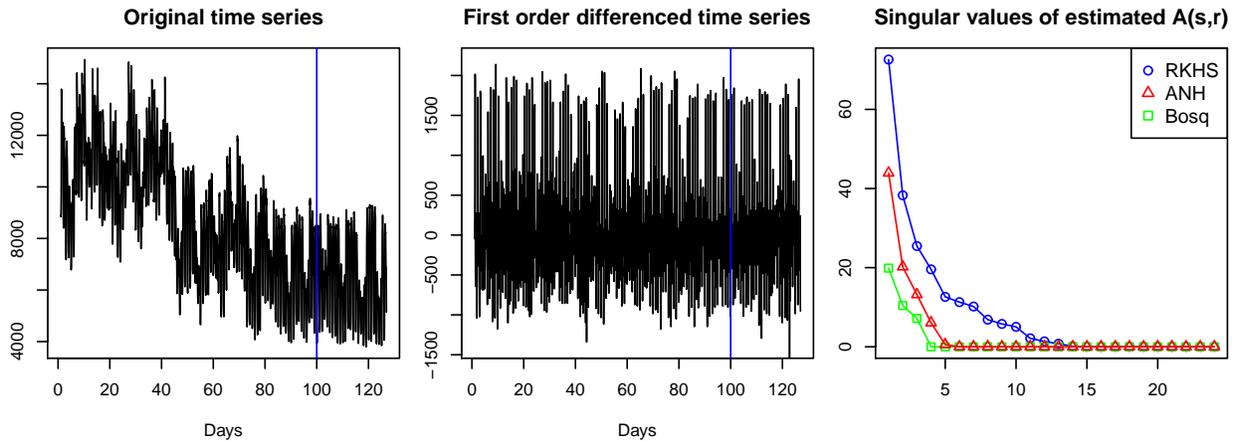}
	\caption{Left plot: Original utility demand time series. Center plot: First order differenced utility demand time series. (The blue vertical line marks the beginning of test data.) Right plot: Singular values of estimated $A(s,r)$ by Bosq, ANH and RKHS.}
	\label{fig:utility}
\end{figure}

\begin{figure}[H]
	\hspace{-1cm}
	\begin{subfigure}{0.3\textwidth}
		\includegraphics[angle=270, width=1.5\textwidth]{./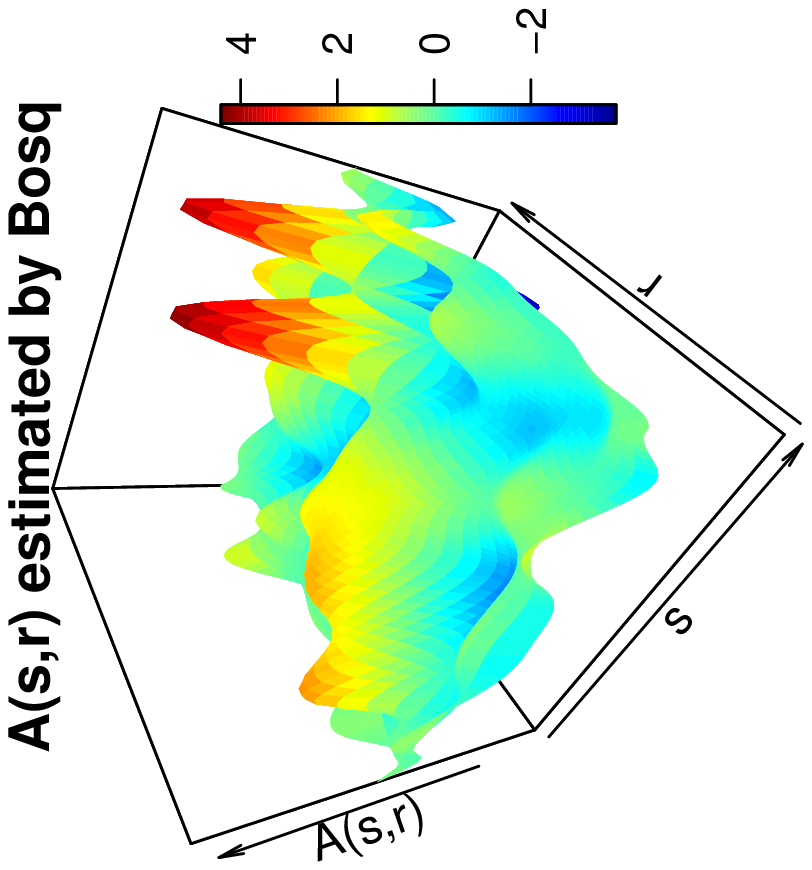}
		\vspace{-1cm}
	\end{subfigure}
	~
	\begin{subfigure}{0.3\textwidth}
		\includegraphics[angle=270, width=1.5\textwidth]{./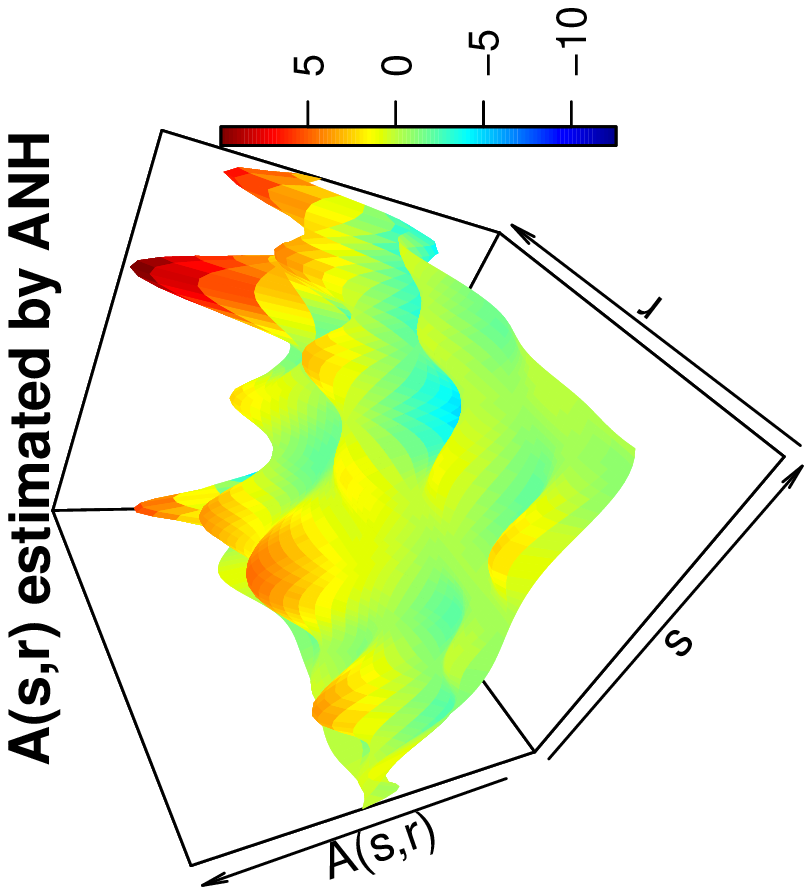}
		\vspace{-1cm}
	\end{subfigure}
	~
	\begin{subfigure}{0.3\textwidth}
		\includegraphics[angle=270, width=1.5\textwidth]{./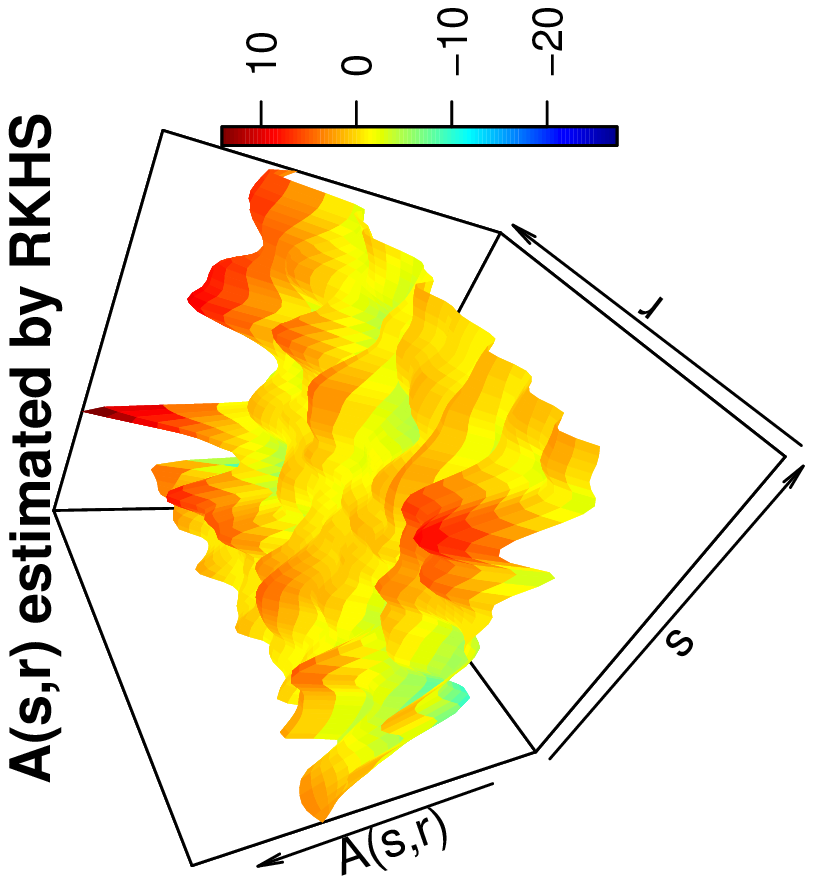}
		\vspace{-1cm}
	\end{subfigure}
	\caption{3D plot of the transition operator $A(s,r)$ estimated by Bosq, ANH and RKHS.}
    \label{fig:utility_Aest}
\end{figure}

\begin{figure}[H]
	\includegraphics[angle=270, width=1\textwidth]{./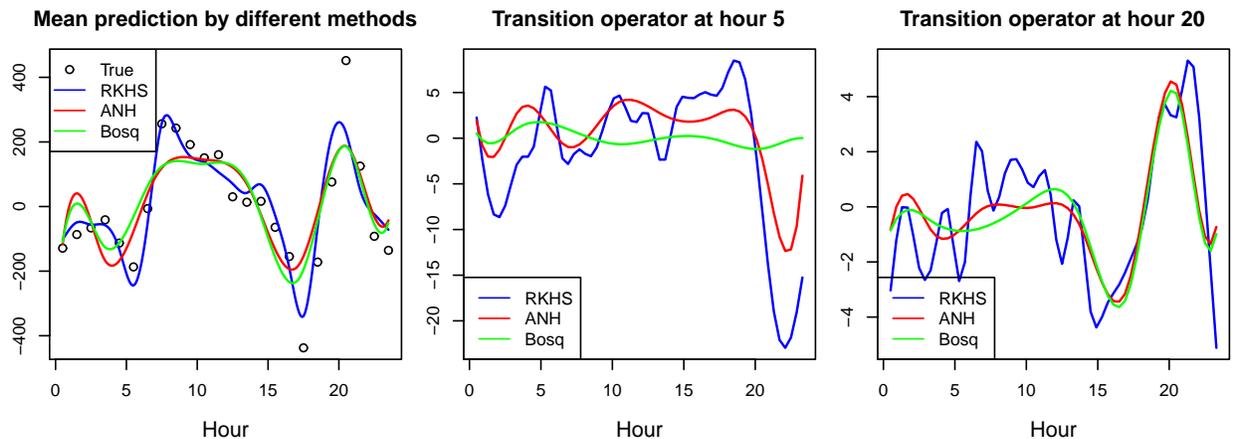}
	\caption{Left plot: Mean prediction for each hour on the test data. Center plot: Estimated transition operator $\widehat A(s,\cdot)$ at hour $s=5$ (sectional view). Right plot: Estimated transition operator $\widehat A(s,\cdot)$ at hour $s=20$ (sectional view).}
	\label{fig:prediction}
\end{figure}

\section{Conclusion}\label{sec:conclusion}
In this paper, we study the inference~(estimation and prediction) of the FAR process through the lens of RKHS. Unlike existing literature, the proposed inference framework does not require dimension reduction thanks to the derived Representer theorem. The proposed method works directly with discrete measurements of the functional time series and we show that the nuclear norm penalization estimator is consistent with sharp convergence rate. Simulation studies and a real data application further demonstrate the promising performance of the proposed method and the advantage of dimension reduction free inference. A natural extension of the current framework is the scenario of noisy measurements, where the functional time series is observed with additional (i.i.d.) measurement errors at each sampling point. We expect the method and theory developed in the current paper continue to work under the noisy measurement scenario and leave the thorough investigation for future research.


\setstretch{1}
\bibliographystyle{apalike}
\bibliography{citations} 
\appendix

Appendix A-E contains technical proofs of theorems in the main text. Appendix F gives the implementation details of the accelerated gradient method~(AGM). Appendix G contains additional simulation results.

\section{Results related to linear compact operators $:\h \to \h$}
In this section, we give some properties of the linear compact operator $A :\h \to \h$ that will be used later in the proof. \Cref{lemma:integral operator well defined} shows that a linear compact operator is also bounded operator $A : \lt\to\lt.$
\begin{lemma}\label{lemma:integral operator well defined}
Suppose $A:\mathcal H \to  \mathcal H   $ is a compact operator. Then $$ \| A\|_{\lt,\op} :=  \sup_{\|v\|_{\lt } \le 1, \|u\|_{\lt } \le 1} \iint  A(r,s)u(r) v(s)  drds  \le \mu_1  \|A\|_{\h, \op}.$$
\end{lemma}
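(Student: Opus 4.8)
The plan is to diagonalize the problem in the eigenbasis $\{\phi_k\}$ of the kernel $\mk$ and thereby reduce the claim to the matrix-level operator norm already appearing in the definition \eqref{eq:bounded bivariate operator}. First I would expand arbitrary $u,v\in\lt$ with $\|u\|_\lt\le 1$ and $\|v\|_\lt\le 1$ as $u=\sum_k \tilde u_k\phi_k$ and $v=\sum_k\tilde v_k\phi_k$, so that $\sum_k\tilde u_k^2=\|u\|_\lt^2\le 1$ and similarly for $v$. Using the decomposition \eqref{eq:expression of A}, namely $A(r,s)=\sum_{i,j}a_{ij}\Phi_i(r)\Phi_j(s)$ together with $\Phi_k=\sqrt{\mu_k}\,\phi_k$, I would write $A(r,s)=\sum_{i,j}a_{ij}\sqrt{\mu_i\mu_j}\,\phi_i(r)\phi_j(s)$ and integrate term by term against $u(r)v(s)$. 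Orthonormality of $\{\phi_k\}$ in $\lt$ then collapses the double integral to
\begin{align*}
\iint A(r,s)u(r)v(s)\,dr\,ds=\sum_{i,j}a_{ij}\sqrt{\mu_i\mu_j}\,\tilde u_i\tilde v_j=\sum_{i,j}a_{ij}\,\bigl(\sqrt{\mu_i}\,\tilde u_i\bigr)\bigl(\sqrt{\mu_j}\,\tilde v_j\bigr).
\end{align*}

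The second step is a rescaling. Setting $p_i=\sqrt{\mu_i}\,\tilde u_i$ and $q_j=\sqrt{\mu_j}\,\tilde v_j$, and recalling the convention that $\mu_1=\sup_k\mu_k$ is the largest eigenvalue of $\mk$, I would bound $\sum_i p_i^2=\sum_i\mu_i\tilde u_i^2\le\mu_1\sum_i\tilde u_i^2\le\mu_1$ and likewise $\sum_j q_j^2\le\mu_1$. The definition of $\|A\|_{\h,\op}$ in \eqref{eq:bounded bivariate operator} as the supremum of $\sum_{i,j}a_{ij}u_iv_j$ over the $\ell^2$-unit balls is precisely the statement that the bilinear form $(p,q)\mapsto\sum_{i,j}a_{ij}p_iq_j$ is bounded by $\|A\|_{\h,\op}\,\|p\|_2\,\|q\|_2$ (the negative direction following by $p\mapsto -p$). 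Applying this to the vectors $p,q$ above yields
\begin{align*}
\left|\iint A(r,s)u(r)v(s)\,dr\,ds\right|\le\|A\|_{\h,\op}\,\sqrt{\textstyle\sum_i p_i^2}\,\sqrt{\textstyle\sum_j q_j^2}\le\mu_1\,\|A\|_{\h,\op},
\end{align*}
and taking the supremum over $u,v$ in the $\lt$ unit balls gives the claim.

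The main obstacle I anticipate is a matter of rigor rather than idea: justifying the term-by-term integration of the (a priori infinite) double series for $A(r,s)$ against $u(r)v(s)$, i.e.\ the interchange of summation and integration. Since $(a_{ij})$ has finite operator norm $\|A\|_{\h,\op}$ and the $\mu_k$ are summable (under \Cref{assume:rkhs}{\bf b} one has $\sum_k\mu_k=\int\mk(r,r)\,dr\le C_\mk$, so $\mk$ is trace-class), I would control the tails via Cauchy--Schwarz, truncating $A$ to finite rank and passing to the limit by dominated convergence. A secondary point worth making explicit is exactly the ordering convention that places $\mu_1$ as the maximal eigenvalue, which is what validates the key rescaling inequality $\sum_i\mu_i\tilde u_i^2\le\mu_1\|u\|_\lt^2$.
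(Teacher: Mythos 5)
Your proposal is correct and follows essentially the same route as the paper's own proof: expand $u,v$ in the eigenbasis $\{\phi_k\}$, use \eqref{eq:expression of A} with $\Phi_k=\sqrt{\mu_k}\phi_k$ to collapse the double integral to $\sum_{i,j}a_{ij}\sqrt{\mu_i}\tilde u_i\sqrt{\mu_j}\tilde v_j$, then absorb the factors $\sqrt{\mu_i},\sqrt{\mu_j}$ into rescaled $\ell^2$-vectors of norm at most $\sqrt{\mu_1}$ and invoke the definition \eqref{eq:bounded bivariate operator}. Your added remark on justifying the term-by-term integration is a point of rigor the paper silently skips, but it does not change the argument.
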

\begin{proof} Let $ u =\sum_{i=1}^\infty \alpha_i\phi_i$ and $ v =\sum_{i=1}^\infty \beta_i\phi_i$ and  that
	$ \|u\|_{\lt} \le 1$, $ \|v\|_{\lt}  \le 1$.
	Observe that by \eqref{eq:expression of A}, 
	\begin{align*}
	\iint  A(r,s) u(r) v(s)  dr ds   =
	&    \sum_{i,j=1}^\infty a_{ij}   \sqrt { \mu_i \mu_j }   \sum_{k=1}^\infty \alpha_k\lb \phi_i, \phi_k   \rb_{\lt}   \cdot  
	\sum_{l=1}^\infty \beta_l  \lb   \phi_j,\phi_l \rb_{\lt}  
	\\
	=& \sum_{i,j=1}^\infty a_{ij}  \sqrt {\mu_{i}}  \alpha_i   \sqrt {\mu_{j}} \beta_j 
	\\
	\le & \sup_{ \sum_{ i=1}^ \infty c_i^2 \le \mu_1, \sum_{ j=1}^ \infty d_j^2 \le \mu_1  } \sum_{i,j=1}^\infty a_{ij}  c_i d_j
	\\   
	\le & \mu_1\| A\|_{\h  , \op } ,
	\end{align*}
	where the last inequality follows from the definition in \eqref{eq:bounded bivariate operator}  that 
	$$ \| A\|_{\h ,  \op}  = \sup_{\sum_{ i=1}^ \infty c_i^2 \le 1, \sum_{ j=1}^ \infty d_j^2 \le  1 }  \sum_{i,j=1}^\infty a_{ij}  c_i d_j .$$
\end{proof}    
\begin{remark}\label{remark:mu1}
We note that from equation \eqref{eq:eigen expansion of kernel}, it holds that 
$$ C_{\mathbb K}  \ge \int \mathbb K(r,r) dr  = \sum_ {k=1}^\infty  \mu_k \|\phi_k \|_\lt ^2 =  \sum_ {k=1}^\infty  \mu_k.  $$
Therefore $\mu_1\le C_{\mathbb K} = 1$. We will use these inequalities repeatedly in our analysis.    
\end{remark}


\begin{theorem}\label{lemma:low rank expansion}
	Suppose $A:\mathcal H \to  \mathcal H   $ is a compact operator. Then there exist two collections of sub-basis $ \{ \psi_k\}_{k=1}^\infty $ and  $ \{ \omega_k\}_{k=1}^\infty $  in $\mathcal H$,
	such that 
	$$  A (r,s) = \sum_{k=1}^\infty  a_k \psi_k(s) \omega_k(r).   $$ 
	Suppose in addition,  $\text{rank} (A) \le K$. Then $$  A (r,s) = \sum_{k=1}^K a_k \psi_k(s) \omega_k(r). $$ 
	
\end{theorem}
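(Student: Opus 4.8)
The plan is to realize the claimed expansion as the singular value decomposition (SVD) of the compact operator $A$, and then to convert this operator-level decomposition into the bivariate-function form through the reproducing identity \eqref{eq:expression of A}. First I would pass to the operator $A^\top A:\h\to\h$. Since $A$ is compact and the adjoint of a compact operator on a Hilbert space is again compact, while the composition of a bounded operator with a compact operator is compact, $A^\top A$ is compact; it is moreover self-adjoint and positive semidefinite because $\langle A^\top A[v],v\rangle_\h=\|A[v]\|_\h^2\ge 0$ for all $v\in\h$. The spectral theorem for compact self-adjoint operators (see e.g.\ \cite{brezis2011functional}) then produces an orthonormal system $\{\psi_k\}_{k=1}^\infty$ in $\h$ consisting of eigenvectors of $A^\top A$, with eigenvalues $\lambda_1\ge \lambda_2\ge\cdots\ge 0$ and $\lambda_k\to 0$; these $\lambda_k$ are precisely the quantities entering the definition of $\|A\|_{\h,*}$. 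I would set the singular values $a_k=\sqrt{\lambda_k}$.

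Next I would construct the left singular system and establish the operator SVD. For each index $k$ with $a_k\ne 0$, define $\omega_k:=a_k^{-1}A[\psi_k]\in\h$. A short computation gives $\langle \omega_k,\omega_l\rangle_\h=(a_k a_l)^{-1}\langle A^\top A[\psi_k],\psi_l\rangle_\h=(a_k a_l)^{-1}\lambda_k\langle \psi_k,\psi_l\rangle_\h$, so that $\{\omega_k\}$ is also an orthonormal system in $\h$. Decomposing an arbitrary $v\in\h$ along the closed range of $A^\top A$ and its orthogonal complement, on which $A$ vanishes (since $A^\top A[v]=0$ forces $\|A[v]\|_\h^2=0$), I obtain the operator SVD $A[v]=\sum_k a_k\langle \psi_k,v\rangle_\h\,\omega_k$, with the series converging in $\h$-norm.

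Finally I would translate this into the stated kernel representation. By \eqref{eq:expression of A}, for any $(r,s)\in[0,1]\times[0,1]$ we have $A(r,s)=\langle A[\mk(\cdot,s)],\mk(\cdot,r)\rangle_\h$. Substituting the operator SVD and invoking the reproducing property $\langle \psi_k,\mk(\cdot,s)\rangle_\h=\psi_k(s)$ and $\langle \omega_k,\mk(\cdot,r)\rangle_\h=\omega_k(r)$, valid because $\psi_k,\omega_k\in\h$, yields $A(r,s)=\sum_k a_k\psi_k(s)\omega_k(r)$, which is the first assertion. For the second assertion, observe that $\text{rank}(A)=\sum_i \mathbbm 1_{\{\lambda_i\ne 0\}}\le K$ means that at most $K$ singular values are nonzero, so the sum truncates to $k=1,\dots,K$.

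The main obstacle will be the careful bookkeeping needed to transfer the operator-level SVD to a legitimate pointwise expansion of the kernel $A(r,s)$. Concretely, one must verify that adjunction and composition preserve compactness on $\h$, that the null space of $A$ contains that of $A^\top A$ so that no mass is lost when decomposing $v$, and that the series defining $A[\mk(\cdot,s)]$ converges in $\h$-norm, so that evaluation against $\mk(\cdot,r)$, which is a bounded linear functional on $\h$ by the reproducing property together with \Cref{assume:rkhs}{\bf b}, may be interchanged with the infinite sum to give the claimed identity.
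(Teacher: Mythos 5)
Your proposal is correct and takes essentially the same route as the paper: the paper's entire proof is a one-line citation to the spectral theory of compact operators on Hilbert spaces (Chapter 5 of \cite{brezis2011functional}), which is precisely the singular value decomposition you construct via $A^\top A$. Your write-up additionally supplies the RKHS-specific step that the paper leaves implicit—passing from the operator-level SVD to the pointwise expansion of the bivariate function $A(r,s)$ through \eqref{eq:expression of A} and the reproducing property—so it is, if anything, more complete than the paper's own argument.
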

\begin{proof} This is the well known spectral theory for compact operators on Hilbert space. See Chapter 5 of \cite{brezis2011functional} for a detailed proof.
\end{proof}

\begin{lemma} \label{lemma:bound of A 1}
	Let $A : \mathcal H \to \h $ be any compact linear operator. Then 
	$$ \max \left \{  \sup_{r\in [0,1]} \| A(r, \cdot  )   \|_\h , \sup_{s\in [0,1]} \| A(\cdot ,s )   \|_\h , \sup_{r,s \in[0,1] } |A(r,s) |   \right\} \le \|  A \|_{\h, *} .$$
\end{lemma}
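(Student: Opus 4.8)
The plan is to realize each of the three quantities as the action of the operator $A$ (or its adjoint $A^\top$) on a single kernel section $\mk_s(\cdot)=\mk(\cdot,s)$, and then reduce everything to the two elementary facts $\|A\|_{\h,\op}\le\|A\|_{\h,*}$ and $\|\mk_s\|_\h\le 1$, the latter being guaranteed by \Cref{assume:rkhs}{\bf b}.

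First I would fix $s\in[0,1]$ and use the defining relation \eqref{eq:bivariate_kernel} together with the reproducing property to identify the section $A(\cdot,s)$ with the image of a kernel function. Taking $v=\mk_s$ in \eqref{eq:bivariate_kernel} gives $A[\mk_s](r)=\lb A(r,\cdot),\mk_s(\cdot)\rb_\h=A(r,s)$, since $A(r,\cdot)\in\h$ and the reproducing property applies to its argument. Hence $A(\cdot,s)=A[\mk_s]\in\h$, and therefore
\[
\|A(\cdot,s)\|_\h=\|A[\mk_s]\|_\h\le\|A\|_{\h,\op}\,\|\mk_s\|_\h\le\|A\|_{\h,\op},
\]
using $\|\mk_s\|_\h=\sqrt{\mk(s,s)}\le\sqrt{C_\mk}=1$. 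For the transposed section I would run the same computation through the adjoint, writing
\[
A(r,s)=A[\mk_s](r)=\lb A[\mk_s],\mk_r\rb_\h=\lb \mk_s,A^\top[\mk_r]\rb_\h=A^\top[\mk_r](s),
\]
so that $A(r,\cdot)=A^\top[\mk_r]$ and hence $\|A(r,\cdot)\|_\h\le\|A^\top\|_{\h,\op}$. Since $A$ and $A^\top$ share the same singular values by the spectral decomposition of \Cref{lemma:low rank expansion}, $\|A^\top\|_{\h,\op}=\|A\|_{\h,\op}$.

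For the pointwise bound I would apply Cauchy--Schwarz in $\h$ to the representation already obtained, namely $|A(r,s)|=|\lb A(\cdot,s),\mk_r\rb_\h|\le\|A(\cdot,s)\|_\h\,\|\mk_r\|_\h\le\|A\|_{\h,\op}$. Finally I would close the argument by recalling that the operator norm equals the largest singular value while the nuclear norm sums all of them, so that $\|A\|_{\h,\op}=\sqrt{\lambda_1}\le\sum_{i}\sqrt{\lambda_i}=\|A\|_{\h,*}$; combining the three displays then yields the claim. Every step is a one-line application of the reproducing property or Cauchy--Schwarz, so nothing here is genuinely difficult; the only point requiring care is the adjoint identity $A(r,\cdot)=A^\top[\mk_r]$ and the accompanying equality $\|A^\top\|_{\h,\op}=\|A\|_{\h,\op}$, where one must apply \Cref{lemma:low rank expansion} consistently to $A$ and $A^\top$ so that both sections are controlled by the same nuclear norm.
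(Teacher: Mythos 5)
Your proof is correct, but it takes a genuinely different route from the paper's. The paper expands $A$ via its singular value decomposition (its Theorem on low-rank expansion), writing $A(r,s)=\sum_k a_k\psi_k(r)\omega_k(s)$ with $\|A\|_{\h,*}=\sum_k|a_k|$, and then bounds each section norm by the triangle inequality together with $\|\psi_k\|_\infty\le\|\psi_k\|_\h\le 1$; the nuclear norm appears directly as the sum $\sum_k|a_k|$. You instead identify the kernel sections as images of kernel functions, $A(\cdot,s)=A[\mk_s]$ and $A(r,\cdot)=A^\top[\mk_r]$ (which is legitimate in the paper's framework --- it is exactly the content of the reproducing-property display \eqref{eq:expression of A}), bound everything by $\|A\|_{\h,\op}$ via $\|\mk_s\|_\h\le 1$, and only at the end invoke $\|A\|_{\h,\op}=\sqrt{\lambda_1}\le\sum_i\sqrt{\lambda_i}=\|A\|_{\h,*}$. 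Your argument actually proves the strictly stronger statement that all three quantities are bounded by the operator norm $\|A\|_{\h,\op}$, and it avoids manipulating the infinite series; the paper's argument, on the other hand, needs no adjoint identity and makes the role of the nuclear norm completely transparent. One small stylistic remark: the equality $\|A^\top\|_{\h,\op}=\|A\|_{\h,\op}$ follows in one line from the symmetry of the definition \eqref{eq:bounded bivariate operator}, namely $\lb A[u],v\rb_\h=\lb u,A^\top[v]\rb_\h$, so you do not need to route this through the spectral decomposition at all.
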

\begin{proof}
	By  \Cref{lemma:low rank expansion}, 
	$$  A (r,s) = \sum_{k=1}^\infty  a_k \psi_k(r) \omega_k(s). $$
	Therefore 
	$  \| A\|_{\h, *} = \sum_{k=1}^\infty   | a_k|    $
	and that 
	\begin{align*}  
	\sup_{r\in [0,1] }\| A(r,\cdot )   \|_\h  \le  \sum_{k=1}^\infty   \sup_{r\in [0,1] }  |  a_k \psi_k(r) |  \| \omega_k \|_\h  \le    \sum_{k=1} ^\infty    |  a_k  | \| \psi_k \|_\h \le  \sum_{k=1} ^ \infty   | a_k|  \le  \| A\|_{\h, *} ,
	\end{align*}
	where $\| \psi_k\|_{\infty} \le \| \psi_k\|_\h \le 1  $ is used in deriving the inequality.  Similar argument shows that 
	\begin{align*} \sup_{s\in [0,1] } \| A(\cdot ,s )   \|_\h  \le  \| A\|_{\h, *} .
	\end{align*}
	For the last part of the inequality, observe that for any fixed $r\in[0,1]$,
	$$ A(r, s) =  \lb A(r,\cdot), \mathbb K_s(\cdot)  \rb_\h .$$
	Therefore
	$$ \sup_{s\in [0,1] }|A(r, s )| \le   \sup_{s\in [0,1] } |  \lb A(r,\cdot), \mathbb K_s(\cdot)  \rb_\h  |
	\le \| A(r,\cdot)\|_\h  \sup_{s\in [0,1] }  \| \mathbb K_s(\cdot)  \|_\h \le \| A(r,\cdot)\|_\h , $$
	where  $ \|  \mathbb K_s(\cdot) \|_\h^2 =\mathbb K(s,s ) \le 1$ is used in the last inequality.
	Therefore
	$$\sup_{r, s\in [0,1] }|A(r, s )|   \le \sup_{r\in [0,1] }\|A(r, \cdot )\|_\h  \le   \|A\|_{\h,*}.$$
\end{proof} 

\section{Proof of \Cref{example:farD}}
\begin{proof}[Proof of  \Cref{example:farD}]
	\Cref{example:farD} directly follows from  \Cref{lemma:fard bounded X} and  \Cref{eq:fard restricted eigenvalue}. Specifically, \Cref{lemma:fard bounded X} proves the stationarity and boundedness of $\{X_t\}$ and \Cref{eq:fard restricted eigenvalue} proves the restricted eigenvalue condition \eqref{eq:FAR restricted eigenvalue condition} of $\{X_t\}$.
\end{proof} 


We start with some general definitions and results for functional time series from \cite{bosq2000linear}.

\begin{definition}
Let $\mathbb B : \h \to\h $ be any linear operator. Define 
$$ \|\mathbb B \|_{\h \to \h } :  = \sup_{\|f\|_\h \le 1 ,  \|g\|_\h  \le 1} \lb  \mathbb B(f) ,g \rb_{\h} .$$ 
$\mathbb B$ is said to be bounded if $\|\mathbb B \|_{\h \to \h } < \infty $.
\end{definition}
\begin{theorem} \label{theorem:bosq}
Let $\{ \mathbb A_d\}_{d=1}^D $ be a collection of bounded linear  operators from $\h \to \h$. Suppose that 
$\{ X_t\}_{t=-\infty}^\infty$ and that $\{\epsilon_t\}_{t=-\infty} ^ \infty $ are two collections of functions in $\h$ such that
 $ X_t = \sum_{d=1}^D \mathbb A_d  (X_{t-d}) +\epsilon_t.  $ 
Suppose in addition that  \begin{align} \label{spetrum bounded by 1}
\sup _{|z| \le 1 ,z \in \mathbb C   } \left \|  \sum_{d=1}^D z^d  \mathbb A_d   \right   \|_{\h \to \h  }  =\gamma  < 1,
\end{align} 
then there exists a {unique} collection of $\{ \mathbb B_{i}\}_{i=1}^\infty$ being operators from   $\h \to \h$ such that 
$ X_t = \sum_{i=0}^\infty \mathbb  B_i  (\epsilon_{t-i })   $ and that 
$$ \sum_{i=0}^\infty \| \mathbb B_i \|_{\h \to \h  }  \le \frac{1}{1-\gamma}. $$
\end{theorem}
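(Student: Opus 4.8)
The plan is to invert the operator-valued polynomial $\mathbb A(z):=\sum_{d=1}^D z^d\mathbb A_d$ inside the Banach algebra of $\mathcal L(\h)$-valued functions that are continuous on the closed unit disc and analytic in its interior, normed by $\|F\|_\infty:=\sup_{|z|\le1}\|F(z)\|_{\h\to\h}$. Condition \eqref{spetrum bounded by 1} is precisely $\|\mathbb A\|_\infty=\gamma<1$, so $I-\mathbb A(z)$ is invertible in this algebra and its inverse is the Neumann series $\mathbb B(z):=(I-\mathbb A(z))^{-1}=\sum_{k=0}^\infty \mathbb A(z)^k$, which converges with $\|\mathbb B\|_\infty\le \tfrac{1}{1-\gamma}$. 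Since $\mathbb A(z)$ is a polynomial with zero constant term, $\mathbb B$ is analytic on the disc; I define $\{\mathbb B_i\}_{i\ge0}$ as its Taylor coefficients. Matching coefficients in the identity $(I-\mathbb A(z))\mathbb B(z)=I$ shows they are equivalently characterised by $\mathbb B_0=I$ and $\mathbb B_i=\sum_{d=1}^{\min(i,D)}\mathbb A_d\mathbb B_{i-d}$ for $i\ge1$. This recursion has a unique solution, which already gives uniqueness of the collection.

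Next I would establish absolute summability. Because $z\mapsto\|\mathbb A(z)\|_{\h\to\h}$ is continuous and bounded by $\gamma<1$ on the compact disc $\{|z|\le1\}$, there is a radius $R>1$ with $\sup_{|z|\le R}\|\mathbb A(z)\|_{\h\to\h}<1$; hence $\mathbb B$ extends analytically to $\{|z|<R\}$ and remains bounded there, so the Cauchy estimates yield $\|\mathbb B_i\|_{\h\to\h}\le M R^{-i}$ for a finite $M$, and summing the geometric series gives $\sum_{i=0}^\infty\|\mathbb B_i\|_{\h\to\h}<\infty$. With summability in hand, I would verify the moving-average representation by setting $Y_t:=\sum_{i=0}^\infty\mathbb B_i(\epsilon_{t-i})$, which converges in $L^2(\h)$ (and almost surely under the boundedness of $\epsilon_t$ invoked in \Cref{example:farD}). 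Substituting the series into $\sum_{d=1}^D\mathbb A_d(Y_{t-d})+\epsilon_t$ and reindexing by $j=d+i$, the recursion together with $\mathbb B_0=I$ collapses the right-hand side back to $Y_t$, so $\{Y_t\}$ solves \eqref{spetrum bounded by 1}'s underlying equation with the same innovations. Finally $U_t:=X_t-Y_t$ satisfies the homogeneous relation $U_t=\sum_d\mathbb A_d(U_{t-d})$; stacking into the companion operator $\mathbf\Phi$ on $\h^D$ gives $\mathbf U_t=\mathbf\Phi^{N}\mathbf U_{t-N}$ for every $N$, and since \eqref{spetrum bounded by 1} forces the spectral radius of $\mathbf\Phi$ below $1$ (via the block-elimination identity that makes $\mathbf I-z\mathbf\Phi$ invertible exactly when $I-\mathbb A(z)$ is), we get $\|\mathbf\Phi^{N}\|\to0$ and therefore $U_t=0$, i.e. $X_t=Y_t$.

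The step I expect to be the main obstacle is pinning down the sharp constant $\tfrac{1}{1-\gamma}$ in the bound $\sum_{i}\|\mathbb B_i\|_{\h\to\h}\le\tfrac{1}{1-\gamma}$. The disc-algebra Neumann series controls the sup norm $\sup_{|z|\le1}\|\mathbb B(z)\|$, and hence (by the Cauchy integral at radius one) each individual $\|\mathbb B_i\|_{\h\to\h}$ by $\tfrac{1}{1-\gamma}$, but it does not by itself dominate the $\ell^1$-sum of the coefficient norms. For $D=1$ the constant is transparent, since then $\mathbb B_i=\mathbb A_1^{\,i}$ and $\|\mathbb B_i\|_{\h\to\h}\le\gamma^{i}$, so the geometric estimate $\sum_i\|\mathbb B_i\|_{\h\to\h}\le\sum_i\gamma^{i}=\tfrac{1}{1-\gamma}$ gives the claim directly. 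For general $D$ I would pass to the companion form, where the same geometric contraction governs the stacked ARH($1$) recursion; in any case the consequence actually needed downstream, namely absolute summability with a finite constant depending only on $\gamma$, is delivered by the Cauchy-estimate argument above and is exactly what \Cref{example:farD} uses to bound $\|X_t\|_\h$.
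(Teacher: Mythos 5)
You take a genuinely different route from the paper: the paper's entire proof of Theorem \ref{theorem:bosq} is a two-line citation to Theorems 5.1 and 5.2 of \cite{bosq2000linear}, which go through the Markov (companion) representation of the order-$D$ process as an order-one process on $\h^D$, whereas you invert $I-\mathbb A(z)$ directly in the algebra of operator-valued functions analytic on the unit disc. Within your argument, the construction and uniqueness of the coefficients $\{\mathbb B_i\}$ by Taylor-coefficient matching, the absolute summability $\sum_{i\ge 0}\|\mathbb B_i\|_{\h\to\h}<\infty$ obtained from analyticity of $\mathbb B$ on a disc of radius $R>1$ plus Cauchy estimates, and the identification of every stationary solution with the moving average via the spectral radius of the companion operator are all sound, and are more informative than what the paper records.

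The genuine gap is exactly the one you flag, and it is not repairable by your suggested fallback: the theorem asserts the specific constant $\sum_{i\ge0}\|\mathbb B_i\|_{\h\to\h}\le 1/(1-\gamma)$, while your Cauchy-estimate argument yields only a bound of the form $MR/(R-1)$, with $M$ and $R$ governed by how far beyond the closed disc $I-\mathbb A(z)$ remains invertible. Passing to the companion operator $\mathbf\Phi$ on $\h^D$ cannot restore ``the same geometric contraction'': for $D\ge2$ the sub-diagonal identity blocks force $\|\mathbf\Phi\|_{\h^D\to\h^D}\ge1$, so condition \eqref{spetrum bounded by 1} controls only the spectral radius of $\mathbf\Phi$, and Gelfand's formula again gives summability with constants unrelated to $1/(1-\gamma)$. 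In fact no argument can deliver this constant for general $D$: already for $\h=\mathbb R$, take $\mathbb A(z)=\gamma z\,p(z)$ with $p$ a Rudin--Shapiro polynomial of degree $D-1$ normalized so that $\|p\|_{\infty}\le1$; then $\sup_{|z|\le1}|\mathbb A(z)|\le\gamma$, but the coefficients of $(1-\mathbb A(z))^{-1}$ of degree $1,\dots,D$ have absolute sum at least $\gamma\sqrt{D/2}-D\gamma^2/(1-\gamma)$, and since $\mathbb B_0=I$ contributes $1$, the total exceeds $1/(1-\gamma)$ once $\gamma\asymp D^{-1/2}$ and $D$ is large. So the stated constant is correct for $D=1$ (your geometric series argument) but fails for general $D$. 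In short, your proposal proves Theorem \ref{theorem:bosq} with $1/(1-\gamma)$ replaced by a finite constant depending on $\gamma$ and $D$ --- which is all that \Cref{example:farD} and \Cref{lemma:fard bounded X} actually need, at the price of their constants also depending on $D$ --- but it does not, and cannot, establish the bound as stated; the paper never confronts this issue because its proof is by citation.
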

\begin{proof}
The proof of the theorem follows immediately from Theorem 5.1 and Theorem 5.2 of  \cite{bosq2000linear} and thus is omitted.
\end{proof}

\Cref{lemma:norms in stationary process} is used in the proof of \Cref{lemma:fard bounded X}.
\begin{lemma} \label{lemma:norms in stationary process}Let $B(r,s)$ be any bivariate functions on $[0,1]\times [0,1]$ such that 
\begin{align}\label{eq:expression of B}
B(r,s) =     \sum_{i,j=1}^\infty b_{ij}  \Phi_i(r) \Phi_j(s),
\end{align}
where $ \{\Phi_i\}_{i=1}^\infty$  are the eigen-basis of $\mathbb K$ as in \eqref{eq:expression of A}. 
For any $f\in \h$, let $\mathbb B$ denote the operator from $\h \to \h$ such that 
 $$\mathbb B (f)(\cdot)   : = \int B(\cdot ,s ) f (s) ds.  $$
Then it holds that 
$$ \| \mathbb B \|_{\h\to \h  }\le \mu_1 \| B\|_{\h, \op}.  $$
\end{lemma}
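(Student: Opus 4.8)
The plan is to expand $f$ and $g$ in the $\h$-orthonormal basis $\{\Phi_k\}$, reduce $\langle\mathbb B(f),g\rangle_\h$ to an explicit bilinear form in the coefficients $\{b_{ij}\}$, and then control that form via the definition of $\|B\|_{\h,\op}$ in \eqref{eq:bounded bivariate operator}. The key preliminary observation is that $\{\Phi_k\}$ is orthonormal in $\h$: since $\langle\phi_i,\phi_j\rangle_\h=\delta_{ij}/\mu_i$, we get $\langle\Phi_i,\Phi_j\rangle_\h=\sqrt{\mu_i\mu_j}\,\delta_{ij}/\mu_i=\delta_{ij}$. Writing $f=\sum_k\alpha_k\Phi_k$ and $g=\sum_k\beta_k\Phi_k$, this gives $\|f\|_\h^2=\sum_k\alpha_k^2$ and $\|g\|_\h^2=\sum_k\beta_k^2$.

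The central computation is that the $\lt$-integration defining $\mathbb B$ injects a factor $\mu_j$. Because $\Phi_j=\sqrt{\mu_j}\phi_j$ and $\{\phi_k\}$ is $\lt$-orthonormal, we have $\langle\Phi_j,\Phi_k\rangle_\lt=\mu_j\delta_{jk}$. Substituting the expansion \eqref{eq:expression of B} of $B$ together with that of $f$ into $\mathbb B(f)(r)=\int B(r,s)f(s)\,ds$ yields $\mathbb B(f)=\sum_i\bigl(\sum_j b_{ij}\mu_j\alpha_j\bigr)\Phi_i$, and hence, by orthonormality of $\{\Phi_i\}$ in $\h$, $\langle\mathbb B(f),g\rangle_\h=\sum_{i,j}b_{ij}\mu_j\alpha_j\beta_i$. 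I would then read \eqref{eq:bounded bivariate operator} as a bilinear bound: for arbitrary square-summable sequences $u,v$, $\bigl|\sum_{i,j}b_{ij}u_iv_j\bigr|\le\|B\|_{\h,\op}\bigl(\sum_i u_i^2\bigr)^{1/2}\bigl(\sum_j v_j^2\bigr)^{1/2}$. Applying this with $u_i=\beta_i$ and $v_j=\mu_j\alpha_j$ gives $\bigl|\langle\mathbb B(f),g\rangle_\h\bigr|\le\|B\|_{\h,\op}\bigl(\sum_i\beta_i^2\bigr)^{1/2}\bigl(\sum_j\mu_j^2\alpha_j^2\bigr)^{1/2}$. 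The first factor is $\|g\|_\h\le1$; for the second, the ordering $\mu_j\le\mu_1$ gives $\sum_j\mu_j^2\alpha_j^2\le\mu_1^2\sum_j\alpha_j^2=\mu_1^2\|f\|_\h^2\le\mu_1^2$. Taking the supremum over $\|f\|_\h\le1$, $\|g\|_\h\le1$ then delivers $\|\mathbb B\|_{\h\to\h}\le\mu_1\|B\|_{\h,\op}$.

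The argument is essentially routine once the basis $\{\Phi_k\}$ is chosen; the one point requiring genuine care is distinguishing the two operator norms. Here $\mathbb B$ is an $\lt$-integral operator, whereas $\|B\|_{\h,\op}$ is defined through the $\h$ inner product, and the whole content of the lemma lies in tracking how the $\lt$-integration contributes the factor $\mu_j$ (via $\langle\Phi_j,\Phi_k\rangle_\lt=\mu_j\delta_{jk}$), which through $\mu_j\le\mu_1$ is exactly what produces the stated constant $\mu_1$. The minor technical points — convergence of the expansions and interchanging the double sum with the integral — are standard given the convergent representation \eqref{eq:expression of B} and $f,g\in\h$, so I would dispatch them briefly rather than dwell on them.
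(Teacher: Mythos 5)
Your proof is correct and follows essentially the same route as the paper's: both expand $f$ and $g$ in the $\h$-orthonormal basis $\{\Phi_k\}$, reduce $\lb \mathbb B(f),g\rb_\h$ to the bilinear form $\sum_{i,j} b_{ij}\,\beta_i\,\mu_j\alpha_j$ via the key identity $\lb \Phi_j,\Phi_k\rb_\lt = \mu_j\delta_{jk}$, and then use $\mu_j\le\mu_1$ together with the supremum defining $\|B\|_{\h,\op}$ in \eqref{eq:bounded bivariate operator} to extract the constant $\mu_1$. The only cosmetic difference is that the paper rescales the coefficients inside the supremum (setting $c_j'=\mu_j c_j$ with $\sum_j (c_j')^2\le\mu_1^2$) whereas you invoke the equivalent bilinear Cauchy--Schwarz form of the operator-norm definition; the substance is identical.
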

\begin{proof}
Since   $\{ \Phi_i\}_{i=1}^\infty $  are orthonormal basis of $\h$, 
for any $f\in \h$ such that $\|f\|_\h =1$, it holds that 
$ f= \sum_{i=1}^\infty c_i \Phi_i $ with $\sum_{i=1}^\infty c_i^2 =1$. 
\begin{align*}
\| \mathbb B \|_{\h \to \h } =     & \sup_{\|f\|_\h \le 1 , \|g\|_\h  \le 1 } \lb   \mathbb B(f) ,g \rb_{\h} 
\\
= 
&
\sup_{\|f\|_\h \le 1 , \|g\|_\h  \le 1 } \lb   \int B(\cdot ,s ) f (s) ds   ,g (\cdot ) \rb_{\h}  
\\
=
&
 \sup_{\|f\|_\h \le 1 , \|g\|_\h  \le 1 }    \sum_{i,j=1}^\infty b_{ij}  \lb  \Phi_i, g\rb _\h  \lb \Phi_j, f\rb_\lt 
 \\
 = & \sup_{\sum_{k=1}^\infty  c_k ^2 \le 1 ,  \sum_{l=1}^\infty  d_l^2   \le 1 }   
  \sum_{i,j=1}^\infty b_{ij}    \lb  \Phi_i, \sum_{l=1}^\infty d_l \Phi_ l  \rb _\h  \lb \Phi_j, \sum_{k=1}^\infty c_k \Phi_ k\rb_\lt 
  \\
  = & 
  \sup_{\sum_{k=1}^\infty  c_k ^2 \le 1 ,   \sum_{l=1}^\infty  d_l^2   \le 1 }   
  \sum_{i,j=1}^\infty b_{ij}    d_i \mu_j c_j 
  \\
  \le &  \sup_{\sum_{k=1}^\infty  (c_k') ^2 \le   \mu_1 ^2    , \sum_{l=1}^\infty  d_l^2   \le   1 }   
  \sum_{i,j=1}^\infty b_{ij}    d _i   c'_j.
\end{align*}
Since  \eqref{eq:bounded bivariate operator} gives
 $$ \| B\|_{\h, \op} =  \sup_{\sum_{k=1}^\infty   c_k ^2   \le   1    , \sum_{l=1}^\infty  d_l^2    \le   1 }   
  \sum_{i,j=1}^\infty b_{ij}    d _i   c _j , $$
  the desired result immediately follows. 
\end{proof}

\begin{lemma}\label{lemma:fard bounded X}
	Under the conditions in \Cref{example:farD},  there is a unique stationary solution $\{X_t\}_{t=-\infty}^\infty$ to \eqref{eq:AR model} and 
	$$ \| X_t\|_\h \le    \frac{C_\epsilon}{ 1- \gamma_A} .$$  
\end{lemma}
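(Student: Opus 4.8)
The plan is to apply \Cref{theorem:bosq} with the operators $\mathbb A_d$ taken to be the $\lt$-integral operators induced by the transition kernels, namely $\mathbb A_d(f)(\cdot):=\int A_d^*(\cdot,s)f(s)\,ds$ for $f\in\h$, so that \eqref{eq:AR model} reads exactly as $X_t=\sum_{d=1}^D\mathbb A_d(X_{t-d})+\epsilon_t$, the form required by the theorem. These operators are bounded $\h\to\h$ because \Cref{lemma:bound of A 1} together with \Cref{assume:regularity of FARD}\textbf{a} gives $\|A_d^*\|_{\h,\op}\le\|A_d^*\|_{\h,*}<\infty$. The only substantive hypothesis left to verify is the spectral condition \eqref{spetrum bounded by 1}, and the main work is to translate the assumed bound \eqref{eq:example 2 condition} on the $\|\cdot\|_{\h,\op}$ norm of the kernels into a bound on the $\|\cdot\|_{\h\to\h}$ norm of the corresponding integral operators.

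First I would note that for each fixed $z$ with $|z|\le 1$, the operator $\sum_{d=1}^D z^d\mathbb A_d$ is itself the integral operator induced by the bivariate kernel $\sum_{d=1}^D z^d A_d^*(r,s)$. Applying \Cref{lemma:norms in stationary process} to this kernel gives
$$\left\|\sum_{d=1}^D z^d\mathbb A_d\right\|_{\h\to\h}\le \mu_1\left\|\sum_{d=1}^D z^d A_d^*\right\|_{\h,\op},$$
and since $\mu_1\le 1$ by \Cref{remark:mu1}, taking the supremum over $|z|\le 1$ and invoking \eqref{eq:example 2 condition} yields $\gamma:=\sup_{|z|\le 1}\|\sum_d z^d\mathbb A_d\|_{\h\to\h}\le\gamma_A<1$, which verifies \eqref{spetrum bounded by 1}. \Cref{theorem:bosq} then supplies a unique stationary solution of the form $X_t=\sum_{i=0}^\infty\mathbb B_i(\epsilon_{t-i})$ with $\sum_{i=0}^\infty\|\mathbb B_i\|_{\h\to\h}\le 1/(1-\gamma)$.

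To obtain the stated norm bound I would estimate, using the triangle inequality and submultiplicativity of the operator norm,
$$\|X_t\|_\h\le\sum_{i=0}^\infty\|\mathbb B_i(\epsilon_{t-i})\|_\h\le\sum_{i=0}^\infty\|\mathbb B_i\|_{\h\to\h}\,\|\epsilon_{t-i}\|_\h\le C_\epsilon\sum_{i=0}^\infty\|\mathbb B_i\|_{\h\to\h}\le\frac{C_\epsilon}{1-\gamma}\le\frac{C_\epsilon}{1-\gamma_A},$$
where the third inequality uses $\|\epsilon_t\|_\h\le C_\epsilon$ almost surely from \Cref{assume:regularity of FARD}\textbf{b}, and the final inequality uses $\gamma\le\gamma_A$.

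The genuinely delicate step is the norm translation via \Cref{lemma:norms in stationary process}, rather than the bookkeeping that surrounds it. Two points require care: one must confirm that the $\h$-operator-norm bound in terms of the kernel's $\|\cdot\|_{\h,\op}$ norm continues to hold uniformly over the \emph{complex} parameter $z$ appearing in \eqref{spetrum bounded by 1} and \eqref{eq:example 2 condition}, which amounts to rereading the proof of \Cref{lemma:norms in stationary process} over the complex unit balls; and one must ensure the inequality $\mu_1\le 1$ is available, which is exactly \Cref{remark:mu1}. Once these are settled, everything else is a direct application of the already-established machinery.
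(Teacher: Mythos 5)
Your proposal is correct and follows essentially the same route as the paper's own proof: both verify the hypothesis \eqref{spetrum bounded by 1} of \Cref{theorem:bosq} by applying \Cref{lemma:norms in stationary process} to the kernel $\sum_{d=1}^D z^d A_d^*(r,s)$ together with $\mu_1\le 1$ from \Cref{remark:mu1}, and then conclude with the same triangle-inequality computation $\|X_t\|_\h\le\sum_{i\ge 0}\|\mathbb B_i\|_{\h\to\h}\|\epsilon_{t-i}\|_\h\le C_\epsilon/(1-\gamma_A)$. The point you flag about rechecking \Cref{lemma:norms in stationary process} over complex $z$ is a fair observation, but the paper handles it exactly as you do, by applying the lemma to the combined kernel for each fixed $z$ on the unit disk.
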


\begin{proof}  For any $z\in \mathbb C$, let 
$B ( r,s)  : =  \sum_{d=1}^D z^d A_d^*(r,s) $ and let $\mathbb B $ be the operator   such that 
$$ \mathbb B  (f) (\cdot)  :  = \int  B  (\cdot,  s) f(s) ds . $$ 
Then from \Cref{lemma:norms in stationary process}, it holds that 
\begin{align} 
\label{eq:complex norm}\|\mathbb B  \| _{\h \to \h } \le \mu_1 \|B \|_{\h ,\op} \leq \|B \|_{\h ,\op} , 
\end{align}
where the last inequality follows from \Cref{remark:mu1}. 
Denote 
$$ \mathbb A_d  (f) (\cdot) : =  \int A_d^*(\cdot, s) f(s)ds.  $$ Then
$$\mathbb B  = \sum_{d=1}^D z^d  \mathbb A_d . $$
Therefore  the above  equality and  \eqref{eq:complex norm} imply that
$$\left \|  \sum_{d=1}^D z^d  \mathbb A_d   \right   \|_{\h \to \h  }  \le\left \|  \sum_{d=1}^D z^d  A^*_d   \right   \|_{\h,\op  }   . $$
By assumption, $ \sup _{|z| \le 1 ,z\in \mathbb C}\left \|  \sum_{d=1}^D z^d  A^*_d   \right   \|_{\h,\op  }  \le  \gamma_A<1.   $
Therefore 	by \Cref{theorem:bosq},   there exists a {unique} collection of operators $\{ \mathbb B_i\}_{i=1}^\infty$ such that
$$ X_t = \sum_{i=0}^\infty \mathbb B_{i} (\epsilon_{t-i})$$ 
and that 
$$ \sum_{i=0}^\infty \| \mathbb B _i \|_{\h \to \h   }  \le  \frac{1}{ 1- \gamma_A}  .$$
Therefore with probability $1$,
$$ \| X_t\|_\h \le \sum_{i=0}^\infty\|B _i \|_{\h  \to \h  }  \| \epsilon_{t-i}\|_\h \le   \frac{C_\epsilon}{ 1- \gamma_A} .$$  
\end{proof}

The following definition is used throughout the proof in the Appendix.
\begin{definition}
For any bounded bivariate function $B(r,s) : [0,1]\times [0,1] \to \mathbb R, $
define 
\begin{align*} &\col (B) : = \{ u \in \lt :      u(\cdot) = \int B (\cdot, s) w(s) ds \text{ for some } w \in \lt  \} ,
\\
&
 \row(B) : = \{ u \in \lt :      u(\cdot) = \int B (s, \cdot ) w(s) ds \text{ for some } w \in \lt  \}.
\end{align*} 
\end{definition}

\begin{lemma} \label{eq:fard restricted eigenvalue}
	Under the conditions in \Cref{example:farD}, it holds that 
	$$ E \left ( \int  \sum_{d=1}^D    v_d(s) X_{t -d}  (s) ds\right)^2  \ge \frac{\kappa_\epsilon}{ \left( 1 - \gamma_A\right) ^2  }  \sum_{d=1}^D\|v_d\|_{  \lt }^2  \ \text{ for all }   \{v_d \}_{d=1}^D\subset \h . $$
\end{lemma}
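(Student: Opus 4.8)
The plan is to exploit the moving-average (Wold) representation of the process that is already in hand from \Cref{theorem:bosq} and \Cref{lemma:fard bounded X}. Under the stability condition \eqref{eq:example 2 condition} these supply a unique family of bounded operators $\{\mathbb B_i\}_{i=0}^\infty$ on $\h$ with $\mathbb B_0=\mathrm{Id}$, the representation $X_t=\sum_{i=0}^\infty\mathbb B_i(\epsilon_{t-i})$ almost surely, and the summability bound $\sum_{i=0}^\infty\|\mathbb B_i\|_{\h\to\h}\le(1-\gamma_A)^{-1}$. First I would substitute this expansion into $Y:=\sum_{d=1}^D\int v_d(s)X_{t-d}(s)\,ds=\sum_{d=1}^D\lb v_d,X_{t-d}\rb_\lt$ and regroup the double sum by the time stamp $\tau$ of the driving innovation: moving each operator onto its test function through the adjoint identity $\lb v_d,\mathbb B_i(\epsilon)\rb_\lt=\lb \mathbb B_i^{\top} v_d,\epsilon\rb_\lt$, one obtains $Y=\sum_{\tau}\lb w_\tau,\epsilon_\tau\rb_\lt$ with coefficient $w_\tau=\sum_{d}\mathbb B_{t-\tau-d}^{\top}v_d$, a finite sum lying in $\h$. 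Here the finite-rank hypothesis on $\{A_d^*\}$, together with the $\col(\cdot)$/$\row(\cdot)$ structure introduced just before the lemma, guarantees $w_\tau\in\h$ so that the noise bound is applicable.

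Because $\{\epsilon_t\}$ is i.i.d.\ and zero-mean, every cross term across distinct innovation indices vanishes in expectation, so $E Y^2=\sum_\tau E\lb w_\tau,\epsilon_\tau\rb_\lt^2$. Applying the non-degeneracy bound \eqref{eq:variance of the functional noise2} of \Cref{assume:regularity of FARD}{\bf b} term by term then yields $E Y^2\ge\kappa_\epsilon\sum_\tau\|w_\tau\|_\lt^2$, which reduces the claim to the purely deterministic inequality $\sum_\tau\|w_\tau\|_\lt^2\ge(1-\gamma_A)^{-2}\sum_{d=1}^D\|v_d\|_\lt^2$. The map $\mathbf v=(v_1,\dots,v_D)\mapsto(w_\tau)_\tau$ is the block-triangular moving-average filter associated with the transfer operator $\mathrm{Id}-\sum_{d=1}^D z^d\mathbb A_d$ and its inverse $\sum_i\mathbb B_i z^i$; the identity diagonal ($\mathbb B_0=\mathrm{Id}$) makes it boundedly invertible. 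The intended mechanism is to lower-bound $\sum_\tau\|w_\tau\|_\lt^2$ by inverting this filter and inserting the operator-norm control $\sum_i\|\mathbb B_i\|_{\h\to\h}\le(1-\gamma_A)^{-1}$ from \Cref{theorem:bosq}, using \Cref{lemma:norms in stationary process} and \Cref{remark:mu1} to pass between the $\h\to\h$ operator norm and the $\lt$ action; the factor $(1-\gamma_A)^{-1}$ is meant to enter once through the filter and once through its adjoint, producing the squared constant.

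The main obstacle is exactly this last deterministic step, namely securing the sharp constant $(1-\gamma_A)^{-2}$. Retaining only the leading contributions ($\tau=t-1,\dots,t-D$), where $\mathbb B_0=\mathrm{Id}$ appears, recovers merely $\sum_d\|v_d\|_\lt^2$ and hence the weaker constant $\kappa_\epsilon$; the improvement to $\kappa_\epsilon(1-\gamma_A)^{-2}$ must come from summing the full infinite tail $\sum_\tau\|w_\tau\|_\lt^2$ and correctly controlling how the $D$ lags couple through the inverse moving-average filter. Carefully accounting for this coupling --- tracking which innovation indices feed which lags and invoking \Cref{theorem:bosq} to bound the governing operator norm --- is where the quantitative heart of the argument lies, and I expect it to be by far the most delicate part, since the crude leading-term estimate is genuinely weaker and the gap must be closed uniformly over all $\{v_d\}_{d=1}^D\subset\h$. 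The remaining ingredients, the Wold reduction and the term-by-term use of the noise non-degeneracy condition, are routine by comparison.
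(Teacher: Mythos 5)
Your architecture is genuinely different from the paper's, and the probabilistic half of it is fine: the paper never touches the moving-average expansion directly, but instead uses the finite-rank hypothesis to project the process onto a finite-dimensional subspace $S \supset \s\{\col(A_d^*)\cup\row(A_d^*)\}_{d=1}^D\cup\{v_d\}_{d=1}^D$, observes that the projected coordinates form an \emph{exact} VAR($D$) in $\mathbb R^N$ with the same stability margin (its Steps 1--2), and then quotes a finite-dimensional restricted-eigenvalue bound for VAR processes (\Cref{lemma:subgaussian restricted eigenvalue}, via the companion-form embedding of \cite{basu2015regularized}). Your Wold reduction, the cross-term cancellation $EY^2=\sum_\tau E\lb w_\tau,\epsilon_\tau\rb_\lt^2$, and the term-by-term use of \eqref{eq:variance of the functional noise2} are all legitimate (each $w_\tau$ is a \emph{finite} sum, and finiteness of the nuclear norms already places it in $\h$, so the finite-rank hypothesis is not even needed on your route — in principle a gain over the paper's proof).

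The genuine gap is exactly the deterministic step you defer, and your sketched mechanism for it fails on two counts. First, the logic is inverted: $\sum_i\|\mathbb B_i\|_{\h\to\h}\le(1-\gamma_A)^{-1}$ is an \emph{upper} bound on the forward filter $(v_d)\mapsto(w_\tau)$ and can therefore only upper-bound $\sum_\tau\|w_\tau\|_\lt^2$; a lower bound must come from an upper bound on the \emph{inverse} filter, i.e.\ on the transfer operator $\mathcal A(z)=\mathrm{Id}-\sum_{d=1}^D z^d\mathbb A_d$, which (after passing from the $\h$ to the $\lt$ operator norm via \Cref{lemma:integral operator well defined}) yields, by Parseval for $\lt$-valued sequences, $\sum_\tau\|w_\tau\|_\lt^2\ge(1+\gamma_A)^{-2}\sum_d\|v_d\|_\lt^2$ — the factor $(1-\gamma_A)^{-1}$ cannot "enter once through the filter and once through its adjoint." Second, the constant $(1-\gamma_A)^{-2}$ you are chasing is unattainable: take $A_1^*=\gamma_A\,\lb\phi,\cdot\rb_\lt\,\phi$ with $\|\phi\|_\lt=1$, $v_1=\phi$, and noise with $E\lb\phi,\epsilon_t\rb_\lt^2=\kappa_\epsilon$; then the left side equals $\kappa_\epsilon/(1-\gamma_A^2)$, which is strictly smaller than $\kappa_\epsilon/(1-\gamma_A)^2$. (This indicates the constant printed in the lemma, inherited from \Cref{lemma:subgaussian restricted eigenvalue}, is itself a slip for something like $(1+\gamma_A)^{-2}$; that is harmless for \Cref{example:farD}, which only asserts $\kappa_X$ depends on $\kappa_\epsilon$ and $\gamma_A$, and your route with the corrected constant would serve equally well.) A further small error: your claim that retaining the leading indices $\tau=t-1,\dots,t-D$ already "recovers $\sum_d\|v_d\|_\lt^2$" is false as stated, since $w_{t-k}=v_k+\sum_{d<k}\mathbb B_{k-d}^\top v_d$ mixes the lags (choose $v_2=-\mathbb B_1^\top v_1$ to kill $w_{t-2}$); a triangular back-substitution salvages only a $D$-dependent constant, so the Parseval argument above is really the missing ingredient.
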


\begin{proof}  Let  
	$S$  be a subspace  of $\lt$ such that 
	$$  S \supset \s \{ \col(A^*_d)\cup   \row(A^*_d)  \}_{d=1}^D .  $$
	Observe that  \Cref{lemma:low rank expansion}  together with $\text{rank}(A_d^*) < \infty $ directly implies that the dimensions of $\col(A^*_d)$  and $\row(A^*_d)$ are finite. 
	Let $\{ w_i\}_{i=1}^N$ be the orthonormal sub-basis in $\lt$  of $S$.
	\\
	\\
	{\bf Step 1.} 
	Let $\{ a_d\}_{d=1}^D$ be a collection of matrices in $\mathbb R^{N \times N}$ such that 
	$$ a_{d} (i,j) = \iint A_d^* (r,s) w_i(r) w_j(s) dr ds   .$$
	Since $S$ contains $ \col(A^*_d)\cup   \row(A^*_d)$, it holds that  
	$$ A_d^* (r,s)  = \sum_{i,j =1}^N a_{d} (i,j) w_i(r) w_j(s)  $$ 
	In addition for $i=1,\cdots,N$, let 
	$$y_{t} (i)  = \int X_t(s) w_i (s) ds  \text{ and } \varepsilon_{t}(i)  =   \int \epsilon_t(s) w_i (s) ds. $$
	 Since 
	  \begin{align*}   
	X_{t} (\cdot )  = \sum_{d=1}^D   \int A_d ^*(\cdot ,s) X_{t-d} (s) ds     +\epsilon_{t} (  \cdot ) ,
	\end{align*} 
	it holds that for all $1\le i\le N$, 
 \begin{align}  \label{eq:FARD assumption}
	 \lb X_{t} (\cdot )  , w _i(\cdot)\rb _\lt   = \lb  \sum_{d=1}^D   \int A_d ^*(\cdot ,s) X_{t-d} (s) ds,   w _i(\cdot)\rb _\lt     
	  +\lb	 \epsilon_{t} (  \cdot ) ,w _i(\cdot)\rb _\lt    
	\end{align}  
	Therefore 	$$    y_{t } (i)  =    \left(  \sum_{j=1}^N a_d(i,j) y_{t-d} (j)  \right)    +  \varepsilon_{t}(i)   . $$
	Then it holds that  
	$$ y_{t}  = \sum_{d=1}^D a_d y_{t-d}  + \varepsilon_{t}$$ 
	and thus $\{y_t\}_{t=1}^T $ is a VAR(D) process in $\mathbb R^N$.   
	\\
	\\
	{\bf Step 2.}
	By \Cref{lemma:integral operator well defined}, 
	$$ \left \|  \sum_{d=1}^D z^d  A_d^*   \right   \|_{\lt , \op} \le  \left \|  \sum_{d=1}^D z^d  A_d^*   \right   \|_{\h, \op}     . $$
	Therefore 
	$$ \sup_{|z| \le 1 }\left \|  \sum_{d=1}^D z^d  A_d^*   \right   \|_{\lt , \op} 
	\le \sup_{|z| \le 1 }\left \|  \sum_{d=1}^D z^d  A_d^*   \right   \|_{\h , \op}  \le \gamma_A      .$$
	Let  $ \alpha  \in \mathbb R^ N$  be such that $\|  \alpha \|_2 =1$.  
	Denote $ \alpha  = [ \alpha_{ 1},\ldots, \alpha _{ N}]$ and suppose that 
	$u  =\sum_{i=1}^N \alpha _{ i}w_i$, where $\{ w_i\}_{i=1}^N$ is sub-basis in $\lt$  of $S$. So 
	$\|u\|_\lt =1$. 
	Then by the definition of $\{ w_i\}_{i=1}^N$, it holds that 
	\begin{align*}
	\left\| \left (\sum_{d=1}^D z^d  a_d   \right )  \alpha   \right\|_2 ^2 
	=  \left\| \left (\ \sum_{d=1}^D z^d  A_d^*   \right )  u  \right\|_\lt  ^2  
	\le \left \|  \sum_{d=1}^D z^d  A_d^*   \right   \|_{\lt , \op} ^2 \|u\|_\lt ^ 2  = \left \|  \sum_{d=1}^D z^d  A_d^*   \right   \|_{\lt , \op} ^2.  
	\end{align*}
	The above display implies that 
	$$ \left \|  \sum_{d=1}^D z^d  a_d    \right   \|_{  \op}  \le \left \|  \sum_{d=1}^D z^d  A_d^*   \right   \|_{\lt , \op}   $$
	and so 
	\begin{align}
	\label{eq:invertible operators} \sup_{|z| \leq 1 }\left \|  \sum_{d=1}^D z^d  a_d   \right   \|_{ \op}  <\gamma_A < 1.
	\end{align}
	\
	\\
	{\bf Step 3.} By \Cref{lemma:subgaussian restricted eigenvalue}, it holds that for any $\{ \beta_d\}_{d=1}^D  \subset \mathbb R^N$,
	\begin{align}\label{eq:restricted eigenvalue in projected space}
	E  \left ( \sum_{d=1}^D y_{t -d} ^\top   \beta _d  \right) ^2 \ge  \frac{\kappa_\varepsilon}{(1-\gamma)^2}    \sum_{d=1}^D \|\beta_d\|_2^2 
	\end{align}
	where  
	$$ 
	\gamma := \sup_{|z| \le 1 }\left \|  \sum_{d=1}^D z^d  a_d    \right   \|_{\op} \quad \text{and} \quad 
	\kappa_\varepsilon : = \inf_{\beta \in \mathbb R^N, \| \beta\|_2=1 } E(\varepsilon_t^\top \beta)^2
	.$$  
	Then 
	\eqref{eq:restricted eigenvalue in projected space}
	implies that  for any $\{u_d\}_{d=1}^D \subset S $,
	\begin{align}\label{eq:restricted eigenvalue in subspace}
	E \left ( \int  \sum_{d=1}^D    u_d(s) X_{t -d}  (s) ds\right)^2  \ge \frac{\kappa_\varepsilon}{(1-\gamma)^2}  \sum_{d=1}^D\|u_d\|_{  \lt }^2  .  \end{align}
	Note that 
	$$ \kappa_\varepsilon  \ge  \inf_{v \in \lt, \| v \|_\lt =1 } E(\lb \epsilon_t, v\rb_\lt )^2 \ge \kappa_\epsilon   $$
	and that  \eqref{eq:invertible operators} gives  that $\gamma\le \gamma_A$. 
	So  
	\begin{align} 
	E \left ( \int  \sum_{d=1}^D    u_d(s) X_{t -d}  (s) ds\right)^2  \ge \frac{\kappa_\epsilon}{ \left( 1 - \gamma_A\right) ^2  }    \sum_{d=1}^D\|u_d\|_{  \lt }^2  .  
	\end{align}
	\
	\
	\
	\\
	{\bf Step 4.} 
	Let
	$$S =  \s  \{ \col(A^*_d)\cup   \row(A^*_d)  \}_{d=1}^D  \cup \{v_d\}_{d=1}^D. $$
	Then the desired results follows  immediate from \eqref{eq:restricted eigenvalue in subspace}.
\end{proof}

\begin{lemma} \label{lemma:subgaussian restricted eigenvalue}
	Let $\{ y_t\}_{t=-\infty }^\infty \subset \mathbb R^N$ be a     VAR(D) process such that such that 
	$$ y_{t}  = \sum_{d=1}^D a_d y_{t-d}  + \varepsilon_{t}  $$
	where $\{a_d \}_{d=1}^D \subset \mathbb R^{N\times N}$ and $\{\varepsilon_t\}_{t=-\infty}^\infty $ are i.i.d. sub-Gaussian random variables. Suppose in addition that there exist two constants
	$ 0<\gamma<1$ and $\kappa_{\varepsilon}>0$ where
	$$ 
	\gamma := \sup_{|z| \leq 1 }\left \|  \sum_{d=1}^D z^d  a_d    \right   \|_{\op}  <1 \quad \text{and}   \quad 
	\kappa_\varepsilon : = \inf_{\beta \in \mathbb R^N, \| \beta\|_2=1 } E(\varepsilon_t^\top \beta)^2
	.$$  
	Then $ \{ y_t\}_{t=-\infty}^\infty$ is stationary and invertible and it holds that for any $\{\beta_d\}_{d=1}^D \subset \mathbb R^N$
	\begin{align} \label{eq:restricted eigenvalue in subspace 2}
	E  \left ( \sum_{d=1}^D y_{t-d} ^\top   \beta _d  \right) ^2 \ge  \frac{\kappa_\varepsilon}{(1-\gamma)^2}    \sum_{d=1}^D \|\beta_d\|_2^2 .
	\end{align}
\end{lemma}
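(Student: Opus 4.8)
The plan is to pass to the frequency domain and reduce the claimed restricted-eigenvalue inequality to a pointwise lower bound on the spectral density of $\{y_t\}$, from which the constant $(1-\gamma)^{-2}$ emerges by inverting the autoregressive matrix polynomial on the unit circle. Throughout I write $\mathcal A(z) := I - \sum_{d=1}^D a_d z^d$ for the $N\times N$ symbol of the recursion and $\Sigma_\varepsilon := E(\varepsilon_t\varepsilon_t^\top)$, so that by the definition of $\kappa_\varepsilon$ we have $\Sigma_\varepsilon \succeq \kappa_\varepsilon I$, i.e. $\kappa_\varepsilon = \lambda_{\min}(\Sigma_\varepsilon)$.

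First I would settle stationarity and invertibility. Since $\|\sum_{d=1}^D a_d z^d\|_{\op}\le\gamma<1$ for every $z$ with $|z|\le 1$, the matrix $\mathcal A(z) = I - \sum_d a_d z^d$ is invertible on the closed unit disk, with $\|\mathcal A(z)^{-1}\|_{\op}\le (1-\gamma)^{-1}$ by a Neumann series argument. This is precisely the stability condition for a VAR($D$), so $\{y_t\}$ admits a unique stationary causal solution $y_t=\sum_{j\ge0}\Psi_j\varepsilon_{t-j}$ whose transfer function $\Psi(z):=\sum_j\Psi_j z^j$ equals $\mathcal A(z)^{-1}$ and satisfies $\sum_j\|\Psi_j\|_{\op}\le(1-\gamma)^{-1}$. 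This establishes the stationarity and invertibility claim and supplies the moving-average representation used below.

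Next I would rewrite the target quadratic form spectrally. Using the spectral representation $y_t=\int_{-\pi}^\pi e^{it\theta}\,dZ(\theta)$ with $E\,dZ(\theta)\,dZ(\theta)^*=f_y(\theta)\,d\theta$, the linear combination $\sum_{d=1}^D\beta_d^\top y_{t-d}$ has frequency symbol $v(\theta):=\sum_{d=1}^D\beta_d e^{id\theta}$, which yields
\begin{align*}
E\Big(\sum_{d=1}^D\beta_d^\top y_{t-d}\Big)^2=\int_{-\pi}^\pi v(\theta)^* f_y(\theta)\,v(\theta)\,d\theta,\qquad f_y(\theta)=\tfrac{1}{2\pi}\Psi(e^{-i\theta})\,\Sigma_\varepsilon\,\Psi(e^{-i\theta})^*.
\end{align*}
By orthonormality of $\{e^{id\theta}\}$ on $[-\pi,\pi]$ one simultaneously has $\sum_{d=1}^D\|\beta_d\|_2^2=\tfrac{1}{2\pi}\int_{-\pi}^\pi\|v(\theta)\|_2^2\,d\theta$, so it suffices to prove the pointwise spectral lower bound $\lambda_{\min}(f_y(\theta))\ge \frac{\kappa_\varepsilon}{2\pi(1-\gamma)^2}$ for a.e. $\theta$ and then integrate the two identities against one another.

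The crux, and what I expect to be the main obstacle, is exactly this pointwise bound, because it is where the constant $(1-\gamma)^{-2}$ must be extracted rather than the cruder value the triangle inequality first suggests. Using $\Sigma_\varepsilon\succeq\kappa_\varepsilon I$ gives $v^*\Psi\,\Sigma_\varepsilon\,\Psi^*v\ge\kappa_\varepsilon\|\Psi(e^{-i\theta})^*v\|_2^2$, so the remaining task is the lower estimate $\|\Psi(e^{-i\theta})^*v\|_2\ge(1-\gamma)^{-1}\|v\|_2$, i.e. a lower bound on the smallest singular value of the inverse transfer function $\Psi(e^{-i\theta})=\mathcal A(e^{-i\theta})^{-1}$. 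I would obtain this by working with the inversion of $\mathcal A$ on the unit circle supplied in the first step rather than with $\|\mathcal A(e^{-i\theta})\|_{\op}$ directly; the reverse triangle inequality controls $\mathcal A(e^{-i\theta})=I-\sum_d a_d e^{-id\theta}$ through $\|\sum_d a_d e^{-id\theta}\|_{\op}\le\gamma$, and the delicate point is to convert this into the required singular-value control of $\mathcal A^{-1}$ that fixes the factor $(1-\gamma)^{-2}$. Combining this pointwise spectral bound with the two displayed identities then gives $E\big(\sum_d\beta_d^\top y_{t-d}\big)^2\ge\frac{\kappa_\varepsilon}{(1-\gamma)^2}\sum_d\|\beta_d\|_2^2$, as claimed.
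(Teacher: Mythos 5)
Your frequency-domain skeleton---Parseval plus a pointwise lower bound on $\lambda_{\min}(f_y(\theta))$ with $f_y(\theta)=\frac{1}{2\pi}\mathcal A(e^{-i\theta})^{-1}\Sigma_\varepsilon \bigl(\mathcal A(e^{-i\theta})^{-1}\bigr)^*$---is exactly the content of the result the paper invokes: the paper's own proof is two lines, stacking $y_t$ into a companion VAR(1) and citing \cite{basu2015regularized} for the spectral lower bound (in fact, since the stacked innovation covariance $\mathrm{diag}(\Sigma_\varepsilon,0,\dots,0)$ is singular, that citation must be applied to the VAR($D$) symbol $\mathcal A(z)=I-\sum_{d=1}^D a_d z^d$ directly, which is precisely the computation you set up). But the step you flag as ``the delicate point'' and leave unproved is a genuine gap, and it cannot be filled: a lower bound on $\sigma_{\min}\bigl(\mathcal A(e^{-i\theta})^{-1}\bigr)$ is equivalent to an upper bound on $\sigma_{\max}\bigl(\mathcal A(e^{-i\theta})\bigr)$, and from $\bigl\|\sum_d a_d e^{-id\theta}\bigr\|_{\op}\le\gamma$ the triangle inequality gives $\sigma_{\max}(\mathcal A(e^{-i\theta}))\le 1+\gamma$, hence $\sigma_{\min}(\mathcal A(e^{-i\theta})^{-1})\ge (1+\gamma)^{-1}$ and the constant $\kappa_\varepsilon/(1+\gamma)^2$---not $\kappa_\varepsilon/(1-\gamma)^2$. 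The Neumann-series estimate $\|\mathcal A(z)^{-1}\|_{\op}\le(1-\gamma)^{-1}$ from your first step bounds $\sigma_{\max}(\mathcal A^{-1})$ from \emph{above}; it yields only the upper bound $\lambda_{\max}(f_y)\le\lambda_{\max}(\Sigma_\varepsilon)/\bigl(2\pi(1-\gamma)^2\bigr)$, i.e.\ the wrong direction entirely.

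The reason you could not ``fix the factor $(1-\gamma)^{-2}$'' is that the lemma as printed is false with that constant: take $N=D=1$, $y_t=\tfrac12 y_{t-1}+\varepsilon_t$ with $\varepsilon_t\sim N(0,1)$, so $\gamma=\tfrac12$ and $\kappa_\varepsilon=1$; then $E(\beta y_{t-1})^2=\beta^2/(1-\tfrac14)=\tfrac43\beta^2$, whereas \eqref{eq:restricted eigenvalue in subspace 2} demands at least $4\beta^2$. The correct form is Proposition 2.3 of \cite{basu2015regularized}: $\lambda_{\min}(f_y)\ge\lambda_{\min}(\Sigma_\varepsilon)/\bigl(2\pi\,\mu_{\max}(\mathcal A)\bigr)$ with $\mu_{\max}(\mathcal A)=\max_{|z|=1}\lambda_{\max}\bigl(\mathcal A^*(z)\mathcal A(z)\bigr)\le(1+\gamma)^2$; the factor $(1-\gamma)^{-2}$ belongs to the companion \emph{upper} bound $\lambda_{\max}(f_y)\le\lambda_{\max}(\Sigma_\varepsilon)/\bigl(2\pi\,\mu_{\min}(\mathcal A)\bigr)$, and the paper's one-line proof by citation swaps the two. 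So complete your argument verbatim with $(1+\gamma)^2$ in place of $(1-\gamma)^2$: this also repairs the paper, since the downstream use (establishing \eqref{eq:FAR restricted eigenvalue condition} with some $\kappa_X>0$ depending only on $\kappa_\epsilon$ and $\gamma_A$) is insensitive to the exact constant. One further small repair: your side claim $\sum_j\|\Psi_j\|_{\op}\le(1-\gamma)^{-1}$ does not follow from the sup bound on the closed disk (Cauchy estimates control each coefficient, not their sum); it is harmless because your proof never uses it, but either drop it or settle for geometric decay of $\|\Psi_j\|_{\op}$ obtained from analyticity of $\mathcal A(z)^{-1}$ on a disk of radius $\rho>1$.
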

\begin{proof}
	Let $\{Y_{t}\}_{t=-\infty }^\infty \subset \mathbb R^{ND} $ be defined as 
	$$ Y_{t} =  [y_{t}^\top , y_{t-1}^\top , \ldots, y_{t-D+1}^\top ]^\top . $$ 
	Consider 
	\begin{align}
	B =\begin{bmatrix}
	a_{1} & a_2 &\ldots & a_{D-1} & a_D \\
	I_N   &    0 & \ldots & 0 & 0\\
	0 & I_N & \ldots & 0 &0 \\
	\vdots & \vdots & \ddots &   \vdots & \vdots \\
	0 & 0 & \ldots & I_N &  0
	\end{bmatrix}   \in \mathbb R^{ND \times ND}.
	\end{align}
	 It is well known (see for example \cite{basu2015regularized} and reference therein) that under the conditions on $ \gamma$ and $\kappa_\varepsilon$, $\{ Y_t\}_{t=-\infty }^\infty   $ is a stationary  and invertible VAR(1) process such that for any $w\in \mathbb R^{ND}$, it holds that 
	$$E(Y_t ^\top w)  \ge  \frac{\kappa_\varepsilon}{(1-\gamma)^2}\|w\|_2^2.$$
	This directly implies the desired result in \eqref{eq:restricted eigenvalue in subspace 2}.
	
\end{proof}

\section{Results Related to Sobolev Spaces}
\subsection{Bounds for $\gamma_n$}

Let $\{s_i \}_{i=1}^n $ be a collection of uniform random variables sampled from $[0,1]$ and $\{ \sigma_i\}_{i=1}^n$ is a collection of Rademacher random variables. The following theorem is Theorem 2.1  in \cite{bartlett2005local}, which is used for bounding $\gamma_n.$
\begin{theorem}\label{lemma:talagrand}
	Suppose $\mathcal F$ is a class of function that map $[0,1]$ into $[-b,b] $. Then for any $\delta  >0$, it holds that with probability $1-\exp(-c_\mathcal R \delta)$, 
	$$ \sup_{f\in \mathcal F _\alpha   }  \left (\int  f(s) ds  - \frac{1}{n} \sum_{i=1}^ n  f(s_i)  \right )   \le C_\mathcal R  \left ( E( \mathcal R_n \mathcal F_\alpha  )   +  \alpha  \sqrt { \frac{\delta  }{ n }} + b  \frac{\delta  }{ n }   \right)  $$
	where  $c_\mathcal R$, $C_\mathcal R$ are absolute constants,   $\mathcal F_\alpha  =  \{f\in \mathcal F : \| f\|_\lt  \le \alpha \} $, and 
	$$ \mathcal R_n \mathcal F_\alpha    : = \sup_{f \in \mathcal F_\alpha   }   \mathcal R_n f  \ \text{ and }  \  \mathcal R_n f  =  \frac{1}{n} \sum_{i=1}^ n \sigma_i f(s_i)  .$$
\end{theorem}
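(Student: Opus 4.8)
The plan is to recognize the left-hand side as the supremum of a centered empirical process over the localized class $\mathcal F_\alpha$ and to run the standard two-step empirical-process argument: first concentrate the supremum around its mean, then bound the mean by the Rademacher complexity via symmetrization. Writing $Pf=\int_0^1 f(s)\,ds$ and $P_nf=\frac1n\sum_{i=1}^n f(s_i)$, and using that the $\{s_i\}$ are i.i.d.\ uniform on $[0,1]$ so that $Pf=Ef(s_i)$, the quantity of interest is $Z:=\sup_{f\in\mathcal F_\alpha}(P-P_n)f$.

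For the concentration step I would invoke Bousquet's version of Talagrand's inequality for the supremum of an empirical process. Two ingredients feed into it: the uniform bound $\|f\|_\infty\le b$, and the variance control $\mathrm{Var}(f(s_i))\le Ef(s_i)^2=\|f\|_\lt^2\le\alpha^2$, valid for every $f\in\mathcal F_\alpha$. Bousquet's inequality then yields, with probability at least $1-e^{-\delta}$,
\[
Z \le EZ + \sqrt{\frac{2\delta\,(\alpha^2 + 2b\,EZ)}{n}} + \frac{b\delta}{3n}.
\]

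The symmetrization step controls $EZ$: the classical symmetrization inequality gives $EZ\le 2\,E(\mathcal R_n\mathcal F_\alpha)$. Substituting this and splitting the cross term via Young's inequality, $\sqrt{4b\delta\,EZ/n}\le EZ + b\delta/n$, the $EZ$ contribution collapses into a single multiple of $E(\mathcal R_n\mathcal F_\alpha)$, while $\sqrt{2\delta\alpha^2/n}=\alpha\sqrt{2\delta/n}$ together with the $b\delta/n$ remainder supply the last two terms of the claimed bound. Collecting absolute constants and rescaling $\delta$ by a fixed factor to produce the exponent $c_{\mathcal R}\delta$ then gives the stated inequality with absolute constants $c_{\mathcal R},C_{\mathcal R}$.

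The main obstacle is the concentration step: a crude bounded-difference (McDiarmid) argument would only furnish a sub-Gaussian deviation of order $b\sqrt{\delta/n}$, which is too coarse because it ignores that functions in $\mathcal F_\alpha$ have small $\lt$ norm. The point of using Talagrand/Bousquet is precisely its variance-sensitive form, which replaces $b$ by $\alpha$ in the dominant $\sqrt{\delta/n}$ term and is what makes the localized bound sharp enough to later yield the correct rate for $\gamma_n$. A secondary technical point is measurability of the supremum, handled by the usual assumption that $\mathcal F$ is separable (or by restricting to a countable dense subclass), and the one-sidedness of the statement, which is automatic since Bousquet's bound is itself one-sided.
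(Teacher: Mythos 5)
Your proof is correct; note that the paper itself does not prove this statement but quotes it verbatim as Theorem 2.1 of \cite{bartlett2005local}, and your route---Bousquet's variance-sensitive form of Talagrand's inequality for the centered empirical process, followed by symmetrization and Young's inequality to absorb the cross term---is exactly the standard proof of that cited result. Your only inaccuracies are at the level of absolute constants (for instance, the centered functions $f-\int f$ have sup-norm bound $2b$, so the Bousquet cross term involves $4b$ times the expected supremum rather than $2b$, and the variance proxy should be stated for the centered class), and these are harmless since $c_{\mathcal R}$ and $C_{\mathcal R}$ are unspecified absolute constants.
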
 
\
\\
To use \Cref{lemma:talagrand} for bounding $\gamma_n$, we define 
$$\zeta_n :  = \inf  \left \{\zeta  \ge \sqrt { \frac{  \log(n)}{ n }}  :  E \sup_{ \|f  \|_{ \h  } \le b, \|  f \|_\lt \le \delta  } \mathcal R_n f    
\le  \zeta  \delta +b  \zeta^2  \text{ for all  } \delta \in (0,1]\right \} . $$ 

\Cref{lemma:restricted eigenvalue 1} provides the probability bound for $\gamma_n'$ using $\zeta_n.$
\begin{lemma} \label{lemma:restricted eigenvalue 1}
	Suppose $ b $ is any bounded constant. 
	Then
	it holds that 
	$$ P\left(  \left |  \int  f(s) ds  - \frac{1}{n} \sum_{i=1}^ n  f(s_i)   \right |    \le C_\zeta  (  \zeta_n  \|f\|_\lt     + (b+1)  \zeta_n^2   ) 
	\text{ for all } f \text{ such that }  \| f\|_{\h }\le b  
	\right)    \ge 1 -  1/n^4.  $$
\end{lemma}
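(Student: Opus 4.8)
The plan is to combine the local Rademacher bound of \Cref{lemma:talagrand} with a peeling (slicing) argument over the $\lt$-norm of $f$, so that the fixed-radius guarantee upgrades to a bound holding simultaneously for every $f$ with $\|f\|_\h \le b$ and scaling proportionally to $\|f\|_\lt$. Write $\mathcal F = \{f : \|f\|_\h \le b\}$ and $\mathcal F_\alpha = \{f \in \mathcal F : \|f\|_\lt \le \alpha\}$. First I would record two preliminaries. By \Cref{assume:rkhs}{\bf b} (with $C_\mathbb K = 1$) every $f \in \mathcal F$ satisfies $\|f\|_\infty \le \|f\|_\h \le b$, so $\mathcal F$ maps $[0,1]$ into $[-b,b]$ and \Cref{lemma:talagrand} applies; moreover $\|f\|_\lt \le \|f\|_\infty \le b$, so it suffices to peel the $\lt$-radius over $(0,b]$. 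Since $\mathcal F = -\mathcal F$, bounding $\sup_{f}\big(\int f(s)\,ds - \tfrac1n\sum_i f(s_i)\big)$ also bounds its negative, so I may drop the modulus.

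Next I would set up the shells. Fix $\delta_\star = (6/c_\mathcal R)\log n$ and, for $k \ge 0$, let $\alpha_k = 2^k\zeta_n$ and $\mathcal S_k = \{f \in \mathcal F : \alpha_{k-1} < \|f\|_\lt \le \alpha_k\}$, with the convention $\alpha_{-1} := 0$ so that $\mathcal S_0$ is the innermost ball $\|f\|_\lt \le \zeta_n$. Because $\zeta_n \ge \sqrt{\log n/n}$ and $b$ is a constant, only $K = O(\log n)$ shells are needed to exhaust $\|f\|_\lt \le b$. On each $\mathcal S_k$ I would invoke \Cref{lemma:talagrand} with radius $\alpha_k$ and tail parameter $\delta_\star$, which holds with probability at least $1 - \exp(-c_\mathcal R\delta_\star) = 1 - n^{-6}$, and bound the expected local complexity via the definition of $\zeta_n$, namely $E\mathcal R_n\mathcal F_{\alpha_k} \le \zeta_n\alpha_k + b\zeta_n^2$. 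The remaining deterministic terms are absorbed using $\zeta_n \ge \sqrt{\log n/n}$: since $\delta_\star \asymp \log n$, one has $\alpha_k\sqrt{\delta_\star/n} \lesssim \alpha_k \zeta_n$ and $b\,\delta_\star/n \lesssim b\zeta_n^2$, so on $\mathcal S_k$,
\begin{align*}
\sup_{f\in\mathcal F_{\alpha_k}}\left(\int f(s)\,ds - \frac1n\sum_{i=1}^n f(s_i)\right) \le C'\big(\zeta_n\alpha_k + (b+1)\zeta_n^2\big)
\end{align*}
for an absolute constant $C'$ (using $b\zeta_n^2 \le (b+1)\zeta_n^2$).

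Finally I would convert the per-shell bound into the stated $f$-dependent bound. For $f \in \mathcal S_k$ with $k \ge 1$ one has $\alpha_k = 2\alpha_{k-1} < 2\|f\|_\lt$, whence $\zeta_n\alpha_k < 2\zeta_n\|f\|_\lt$; for the innermost shell the term $\zeta_n\alpha_0 = \zeta_n^2$ is absorbed into $(b+1)\zeta_n^2$, which is exactly why the constant $(b+1)$ rather than $b$ appears. Taking a union bound over the $K = O(\log n)$ shells, the total failure probability is at most $K\,n^{-6} \le n^{-4}$ for $n$ large, and on the complementary event every $f$ with $\|f\|_\h \le b$ obeys $\big|\int f(s)\,ds - \tfrac1n\sum_i f(s_i)\big| \le C_\zeta(\zeta_n\|f\|_\lt + (b+1)\zeta_n^2)$, as claimed.

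The main obstacle is the peeling step: one must choose $\delta_\star$ large enough ($\asymp \log n$) that the union bound over the $O(\log n)$ scales still yields failure probability $\le 1/n^4$, yet control the deterministic Talagrand terms $\alpha_k\sqrt{\delta_\star/n}$ and $b\,\delta_\star/n$ so that they never exceed the target order $\zeta_n\|f\|_\lt + (b+1)\zeta_n^2$ — this is precisely what the lower bound $\zeta_n \ge \sqrt{\log n/n}$ built into the definition of $\zeta_n$ is designed to guarantee. A secondary point is applying the $\zeta_n$-based complexity bound only at admissible radii; since $\|f\|_\lt \le b$ for $f \in \mathcal F$ and $b$ is a fixed constant, only finitely many shells exceed radius $1$, and by monotonicity of $\mathcal R_n\mathcal F_\alpha$ these contribute at most an extra constant factor.
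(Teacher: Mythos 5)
Your proposal is correct and follows essentially the same route as the paper's own proof: dyadic peeling over the $\lt$-radius, \Cref{lemma:talagrand} applied on each shell with tail parameter of order $\log n$, the definition of $\zeta_n$ (together with $\zeta_n \ge \sqrt{\log n / n}$) to absorb both the local Rademacher complexity and the deterministic terms, an innermost ball of radius $\zeta_n$ producing the $(b+1)\zeta_n^2$ term, and a union bound over the $O(\log n)$ shells to reach failure probability $1/n^4$. The only cosmetic difference is that the paper first reduces to $\|f\|_\lt \le 1$ by rescaling and peels downward from radius $1$, whereas you peel upward from $\zeta_n$ and dispose of the shells of radius exceeding $1$ by a constant-factor argument; both handle that point adequately.
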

\begin{proof} 
	It suffices to show that 
	$$ P\left(  \left |  \int  f(s) ds  - \frac{1}{n} \sum_{i=1}^ n  f(s_i)   \right |    \le C_\zeta   (  \zeta_n  \|f\|_\lt    +( b +1)  \zeta_n^2   ) 
	\text{ for all } f \text{   that }  \| f\|_{\h }\le b  ,  \| f\|_{\lt }   \le 1 
	\right)    \ge 1 -  1/n^4.  $$
	Let $\delta  = C_\delta \log(n)$ such that  $ \exp(-c_\mathcal R \delta) \le  n^{-5}$, where 
	$c_{\mathcal R}$ is defined as in \Cref{lemma:talagrand}. Let 
	$J \in \mathbb Z^+$ be such that 
	$$2^{-J } \le  \zeta_n  \le 2^{-J+1}. $$ 
	So $J \le \log(n)$.  For any 
	$ 1\le j \le J$,  it holds that with probability at least $1-\exp(-c_\mathcal R\delta ) \ge 1-n^{-5}$,  for all $f$ such that$ \| f\|_{\h} \le b ,  
	\  2^{-j} \le \| f\|_\lt \le 2^{-j+1} $
	\begin{align*}
	\left |  \int  f(s) ds  - \frac{1}{n} \sum_{i=1}^ n  f(s_i)   \right |   
	\le  & C _\mathcal R \left ( E   \sup_{ \|f  \|_{ \h  } \le b, \|  f \|_\lt  \le 2^{-j+1 }   } \mathcal R_n f     +  2^{-j+1 }  \sqrt { \frac{  \delta    }{ n }} + b  \frac{\delta   }{ n }   \right) 
	\\
	\le  & C _\mathcal R\left (  \zeta_n 2^{-j+1}  + b \zeta_n^2  +  2^{-j+1 }  \sqrt { \frac{  \delta     }{ n }} + b  \frac{\delta   }{ n }   \right) 
	\\
	\le &  C_\mathcal R  \left ( 2  \zeta_n \|f\|_\lt    +  b   \zeta_n ^2  +2 C_\delta \zeta_n \| f\|_\lt  + b C_\delta^2 \zeta_n^2 \right) ,
	\end{align*} 
	where the first inequality follows from  \Cref{lemma:talagrand}, the second inequality follows from definition of $\zeta_n$ and the last inequality follows from $\| f\|_\lt \ge 2^{-j}  $ and the fact that $C_\delta^2 \zeta_n^2 \ge \delta/n $. Therefore 
	with probability at least $1- J n^{-5}  \ge 1-  \log(n)n^{-5} $
	\begin{align*}
	\sup_{  f \in \mathcal H , \|f\|_\h \le b  ,  
		\   \| f\|_\lt \ge 2^{-J  }   } \left |  \int  f(s) ds  - \frac{1}{n} \sum_{i=1}^ n  f(s_i)   \right |   
	\le  &  C_1' \left (\zeta_n   \| f\|_\lt   + b   \zeta_n ^2    \right) . 
	\end{align*} 
	In addition,  by \Cref{lemma:talagrand} with probability at least $1- n^{-5 }$, 
	\begin{align*}
	\sup_{f \in \mathcal H , \|f\|_\h \le b  ,   
		\   \| f\|_\lt \le 2^{-J  }   } \left |  \int  f(s) ds  - \frac{1}{n} \sum_{i=1}^ n  f(s_i)   \right |  
	\le  & C_ \mathcal R  \left ( \zeta_n    2^{-J}   + b \zeta_n^2      \right)   \le  
	C_\mathcal R  (1+b) \zeta_n^2  ,
	\end{align*}  
	where the  last inequality follows from the choice that $2^{-J} \le \zeta_n$.  Therefore 
	it suffices to choose 
	$$C_\zeta =\max\{ 4C_\mathcal R (1+ C_\delta) , 4C_\mathcal R C_\delta^2 \}. $$
\end{proof}

\Cref{lemma:restricted eigenvalue 2} provides the probability bound for $\gamma_n''$ using $\zeta_n.$
\begin{lemma} \label{lemma:restricted eigenvalue 2}Suppose $b$ is any bounded constant. 
	Then    
	it holds that 
	\begin{align*}
	&  P\left ( \|f \|_\lt ^2  \le  2 \|f \|_n^2 +C_\zeta'     b^2  \zeta_n^2   \text{ for all } f \text{ such that } \| f\|_\h \le  b   \right)  \ge 1-1/n^4;
	\\
	&  P\left ( \|f \|_n^2  \le  2 \|f \|_\lt ^2 +C_\zeta'    b^2  \zeta_n^2   \text{ for all } f \text{ such that } \| f\|_\h \le  b   \right)  \ge 1-1/n^4,
	\end{align*}
	where $ \|f \|_n^2  = \frac{1}{n}  \sum_{i=1}^n f^2(s_i)$
\end{lemma}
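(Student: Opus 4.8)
The plan is to control the two one-sided deviations by applying the localized complexity bound \Cref{lemma:talagrand} to the \emph{squared} function class $\mathcal{G}_b := \{ f^2 : \|f\|_\h \le b \}$, mirroring the peeling argument already used in the proof of \Cref{lemma:restricted eigenvalue 1}. The relevant empirical process is precisely $\frac{1}{n}\sum_{i=1}^n f^2(s_i) - \int f^2(s)\,ds$, so the claim reduces to a uniform bound on $\sup_{\|f\|_\h \le b}|\int f^2(s)\,ds - \frac{1}{n}\sum_{i=1}^n f^2(s_i)|$. Since \Cref{assume:rkhs}{\bf b} (with $C_\mathbb K = 1$) gives $\|f\|_\infty \le \|f\|_\h \le b$, every $g = f^2 \in \mathcal{G}_b$ takes values in $[0, b^2]$; moreover $\mathrm{Var}(f^2) \le \|f\|_\infty^2\|f\|_\lt^2 \le b^2\|f\|_\lt^2$, so an $\lt$-shell $\|f\|_\lt \le \delta$ for the base class corresponds to an $\lt$-radius $b\delta$ for $\mathcal{G}_b$.

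The key technical step is to bound the localized Rademacher complexity of $\mathcal{G}_b$. Because $x \mapsto x^2$ is Lipschitz on $[-b,b]$ with constant $2b$ and vanishes at the origin, the Ledoux--Talagrand contraction inequality gives $E\,\mathcal{R}_n\{f^2 : \|f\|_\h \le b, \|f\|_\lt \le \delta\} \le 4b\, E\,\mathcal{R}_n\{f : \|f\|_\h \le b, \|f\|_\lt \le \delta\} \le 4b(\zeta_n\delta + b\zeta_n^2)$, the last inequality being the defining property of $\zeta_n$. Applying \Cref{lemma:talagrand} to $\mathcal{G}_b$ on the shell $2^{-j} \le \|f\|_\lt \le 2^{-j+1}$ with $\lt$-radius $\alpha = b\,2^{-j+1}$, uniform bound $b^2$, and $\delta = C_\delta \log n$, and using $\zeta_n^2 \ge \log n / n$ to absorb the $\alpha\sqrt{\delta/n}$ and $b^2\delta/n$ terms, yields on that shell
\begin{align*}
\int f^2(s)\,ds - \frac{1}{n}\sum_{i=1}^n f^2(s_i) \le C\big(b\,\zeta_n\|f\|_\lt + b^2\zeta_n^2\big),
\end{align*}
after using $2^{-j+1} \le 2\|f\|_\lt$. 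A union bound over the $O(\log n)$ shells, together with a separate treatment of the innermost region $\|f\|_\lt \le 2^{-J} \le \zeta_n$ exactly as in \Cref{lemma:restricted eigenvalue 1}, upgrades this to a bound valid for all $\|f\|_\h \le b$ with probability at least $1 - 1/n^4$.

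Finally I would convert this bound, which is linear in $\|f\|_\lt$, into the stated self-bounding form by Young's inequality: $C b\zeta_n\|f\|_\lt \le \frac14\|f\|_\lt^2 + C^2 b^2\zeta_n^2$. Substituting gives $\int f^2(s)\,ds - \frac1n\sum_{i=1}^n f^2(s_i) \le \frac14\|f\|_\lt^2 + C' b^2\zeta_n^2$, i.e.\ $\frac34\|f\|_\lt^2 \le \|f\|_n^2 + C'b^2\zeta_n^2$, which rearranges to $\|f\|_\lt^2 \le 2\|f\|_n^2 + C_\zeta' b^2\zeta_n^2$; running the identical argument for the reverse deviation $\frac1n\sum_{i=1}^n f^2(s_i) - \int f^2(s)\,ds$ gives $\|f\|_n^2 \le 2\|f\|_\lt^2 + C_\zeta' b^2\zeta_n^2$. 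The main obstacle is the localized complexity estimate for the squared class: one must track the correct variance scaling ($b^2\|f\|_\lt^2$ rather than $\|f\|_\lt^2$), which is exactly what produces the $b^2$ factor in the conclusion, and apply the contraction inequality shell-by-shell so that the resulting deviation is linear in $\|f\|_\lt$ and hence absorbable via Young's inequality.
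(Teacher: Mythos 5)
Your proposal is correct and follows essentially the same route as the paper's proof: the paper likewise reduces the claim to a uniform deviation bound for the squared class via the contraction inequality (using that $x\mapsto x^2/b$ is a contraction on $[-b,b]$), reruns the peeling/Talagrand calculation from \Cref{lemma:restricted eigenvalue 1}, and then absorbs the term linear in $\|f\|_\lt$ by Young's inequality ($b\zeta_n\|f\|_\lt + b^2\zeta_n^2 \le \tfrac12\|f\|_\lt^2 + 3b^2\zeta_n^2$) to obtain the factor-$2$ self-bounding form. The only cosmetic difference is that the paper first rescales to the case $\|f\|_\lt \le 1$ before peeling, whereas you track the variance scaling $b\|f\|_\lt$ directly; both are fine.
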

\begin{proof}
{\bf Step 1.} It suffices to  show that 
	$$ P\left ( \|f \|_\lt ^2  \le  2 \|f \|_n^2 +C' _\zeta     b^2  \zeta_n^2   
	\text{ for all } f \text{ such that } \| f\|_\h \le  b  , \| f\|_\lt \le 1   \right).$$  
	This is because for any $g\in \h$ such that $\|g\|_\lt> 1 $ and that $\| g \|_\h \le b  $,  one can apply the above probability bounds to the function $ h: =\frac{ g}{\|g\|_\lt } $ and observe that 
	$\|h\|_\lt \le 1$, $\|h\|_\h\le \frac{ b}{\| g\|_\lt }$. 
	\\
	\\
	{\bf Step 2.}
	Observe that the function $\phi(x) = x^2/b $ is a contraction on the interval $[-b,b]$. Therefore 
	$$E  ( \mathcal R_n ( \phi \circ  \mathcal F  ))  \le E( \mathcal R_n   \mathcal F  ) .   $$
	The  same calculations  in  the proof of  \Cref{lemma:restricted eigenvalue 1} shows that 
	$$ P\left( \left |  \int  f^2(s) ds  - \frac{1}{n} \sum_{i=1}^ n  f^2(s_i)   \right |   
	\le C \left ( b \zeta_n \|f\|_\lt      +   b^2 \zeta_n^2 \right)  \text{ for all } f \text{   that } \| f\|_\lt \le 1 , \|f\|_\h \le b\right)    \ge 1 -  1/n^4, $$ 
	where $C$ only depends on $C_\mathcal R$ and $c_\mathcal R$. 
	Since
	$$ b \zeta_n \|f\|_\lt      +   b^2 \zeta_n^2 \le \frac{1}{2}\|f \|^2_\lt +3  b^2 \zeta_n^2 , $$
	The desired result follows immediately. 
\end{proof}

Based on \Cref{lemma:restricted eigenvalue 1} and \Cref{lemma:restricted eigenvalue 2}, we provide a probability bound for $\gamma_n$ in \Cref{corollary:explicit rate}.
\begin{corollary}\label{corollary:explicit rate}
	Suppose that $\h = W^{\alpha,2 } $ and  that $\{s_i \}_{i=1}^n $  is a collection of uniform random variables sampled from $[0,1]$. Let $\gamma_n'$ and $\gamma_n''$ be defined as in \eqref{eq:gamma 1} and \eqref{eq:gamma 2} respectively.    Then with probability at least $1-1/n^4 $, 
	it holds that 
	$$  \max \{ \gamma_n',\gamma_n''\} \le C_\alpha n^{-\alpha/(2\alpha+1) }, $$
	where $C_\alpha$ is some constant independent of $n$. 
\end{corollary}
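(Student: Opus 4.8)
The plan is to combine the two reductions already established in \Cref{lemma:restricted eigenvalue 1} and \Cref{lemma:restricted eigenvalue 2} with a direct spectral computation of the local complexity quantity $\zeta_n$ for the Sobolev ball. First I would specialize both lemmas to $b=1$. \Cref{lemma:restricted eigenvalue 1} then shows that, on an event of probability at least $1-1/n^4$, the quadrature error satisfies $\left|\int f - \tfrac1n\sum_i f(s_i)\right| \le C_\zeta(\zeta_n\|f\|_\lt + 2\zeta_n^2)$ uniformly over $\|f\|_\h\le 1$, so by the definition \eqref{eq:gamma 1} (and $\zeta_n\le 1$) we obtain $\gamma_n' \le C\zeta_n$. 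Likewise \Cref{lemma:restricted eigenvalue 2} gives, on an event of probability $1-1/n^4$, the two-sided norm comparison $\|f\|_\lt^2\le 2\|f\|_n^2 + C_\zeta'\zeta_n^2$ and $\|f\|_n^2\le 2\|f\|_\lt^2 + C_\zeta'\zeta_n^2$ for $\|f\|_\h\le 1$, whence $\gamma_n''\le C'\zeta_n$ by \eqref{eq:gamma 2}. Taking a union bound (the $n^{-5}$ slack in the proofs of the two lemmas absorbs the doubling) leaves the stated probability $1-1/n^4$, so it only remains to prove the scalar bound $\zeta_n\le C_\alpha\, n^{-\alpha/(2\alpha+1)}$.

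Second I would control the expected local Rademacher complexity in the definition of $\zeta_n$ through the eigenvalues $\{\mu_k\}$ of $\mk$ in \eqref{eq:eigen expansion of kernel}. The standard kernel estimate (e.g.\ \cite{mendelson2002geometric}, \cite{bartlett2005local}) yields
$$ E\sup_{\|f\|_\h\le 1,\,\|f\|_\lt\le\delta}\mathcal R_n f \;\le\; \frac{C}{\sqrt n}\Big(\sum_{k=1}^\infty \min\{\delta^2,\mu_k\}\Big)^{1/2}. $$
For $\h=W^{\alpha,2}$ the eigenvalues obey the classical decay $\mu_k\asymp k^{-2\alpha}$. Splitting the sum at the critical index $k^\star\asymp\delta^{-1/\alpha}$, where $\mu_{k^\star}\approx\delta^2$, gives $\sum_k\min\{\delta^2,\mu_k\}\asymp \delta^2\cdot\delta^{-1/\alpha}+\delta^{2-1/\alpha}\asymp \delta^{2-1/\alpha}$, so that $E\sup_{\|f\|_\h\le 1,\,\|f\|_\lt\le\delta}\mathcal R_n f\le C n^{-1/2}\delta^{1-1/(2\alpha)}$.

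Third I would solve the fixed-point inequality defining $\zeta_n$: find the smallest $\zeta\ge\sqrt{\log n/n}$ with $C n^{-1/2}\delta^{\beta}\le \zeta\delta+\zeta^2$ for all $\delta\in(0,1]$, where $\beta:=1-\tfrac{1}{2\alpha}\in(0,1)$. In the regime $\delta\ge\zeta$ one uses the $\zeta\delta$ term and in the regime $\delta\le\zeta$ the $\zeta^2$ term; since $\beta-1<0$ and $\beta>0$ respectively, both constraints bind at $\delta\asymp\zeta$ and reduce to $C n^{-1/2}\zeta^{\beta}\le 2\zeta^2$, i.e.\ $\zeta^{2-\beta}\ge C' n^{-1/2}$. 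Because $2-\beta=(2\alpha+1)/(2\alpha)$, this forces $\zeta_n\asymp n^{-\alpha/(2\alpha+1)}$; and since $\alpha/(2\alpha+1)<1/2$, this rate dominates the floor $\sqrt{\log n/n}$ for large $n$, so the floor is inactive. Collecting the constants from the two reductions then yields $\max\{\gamma_n',\gamma_n''\}\le C_\alpha\, n^{-\alpha/(2\alpha+1)}$.

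The main obstacle is the local Rademacher complexity bound in terms of the kernel eigenvalues used in the second step: the reduction via the two lemmas and the fixed-point algebra are routine, but justifying the eigenvalue-sum control of $E\sup\mathcal R_n f$ (through symmetrization together with a Dudley/chaining argument, or by directly invoking the known kernel-complexity estimate) is the technical heart, and it is there that I would rely on the cited results.
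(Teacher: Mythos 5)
Your proposal is correct and follows essentially the same route as the paper: the paper's proof also reduces $\gamma_n'$ and $\gamma_n''$ to $\zeta_n$ via \Cref{lemma:restricted eigenvalue 1} and \Cref{lemma:restricted eigenvalue 2}, and then simply cites \cite{mendelson2002geometric} for the bound $\zeta_n \le C_\alpha n^{-\alpha/(2\alpha+1)}$. The only difference is that you unpack that citation into the standard kernel-eigenvalue computation ($\mu_k \asymp k^{-2\alpha}$, truncation at $k^\star \asymp \delta^{-1/\alpha}$, fixed-point algebra), which is a correct derivation of the same estimate.
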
  
\begin{proof}
	Suppose   $\mathcal H  =  W^{ \alpha,2 }    $. Then from \cite{mendelson2002geometric},   it  holds that $\zeta_n \le C_\alpha  n^{-\alpha /(2\alpha+1) }$.   The desired results follow directly from 
	\Cref{lemma:restricted eigenvalue 1} and \Cref{lemma:restricted eigenvalue 2}.
\end{proof}

\subsection{Bounds for $\delta_n$}

\Cref{lemma:rec ar 2 no peal} and \Cref{lemma:covering for small radius} are used for proving \Cref{lemma:rec ar 3}, which provides the probability bound for $\delta_T'.$
\begin{lemma}    \label{lemma:rec ar 2 no peal}
Suppose $\h = W^{\alpha,2}$ and $0< \beta\le1 $ is any  constant. Under the conditions in \Cref{example:farD}, it holds that 
\begin{align*}
 & P  \Bigg(        \left| 
\frac{1}{T  }\sum_{t=1}^T  
\left(   \sum_{d=1}^D    \int v_d(  r) X_{t -d} (r) dr   
 \right)  ^2    -
 E \left(   \sum_{d=1}^D    \int v_d(  r) X_{t -d} (r) dr   
 \right)  ^2  
\right|   
 \le     C_ w \beta    T ^{  \frac{-\alpha}{2\alpha +1 } }  
 \\
 &   \text{ for all }   \sup_{1\leq d \leq D} \|v_d\|_\h \leq 1  \text{ such that } 
 \sum_{d=1}^D \| v_d\|_\lt^2  \le \beta ^2
 \Bigg) 
\ge 1-  2T\exp \left (-c_w  T ^{\frac{1}{2\alpha + 1}}     \right  ) ,
\end{align*}
 where $c_w$ and $C_w$ are two absolute constants independent of $T $.
 \end{lemma}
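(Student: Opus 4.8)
The plan is to read the left-hand side as a centered empirical process indexed by the localized ball $\mathcal V_\beta := \{\{v_d\}_{d=1}^D : \sup_{d}\|v_d\|_\h \le 1,\ \sum_{d}\|v_d\|_\lt^2\le\beta^2\}$, with the time index $t=1,\dots,T$ in the role of sample size. Writing $g_t := \sum_{d=1}^D \lb v_d, X_{t-d}\rb_\lt$, stationarity gives $E[\tfrac1T\sum_t g_t^2]=E[g_1^2]$, so the goal is the uniform deviation bound $\sup_{\{v_d\}\in\mathcal V_\beta}|\tfrac1T\sum_t g_t^2 - E g_t^2|\le C_w\beta T^{-\alpha/(2\alpha+1)}$. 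First I would record two elementary structural facts. By Cauchy--Schwarz and the almost-sure bound $\|X_t\|_\h\le C_X$ from \Cref{lemma:fard bounded X}, $|g_t|\le(\sum_d\|v_d\|_\lt^2)^{1/2}(\sum_d\|X_{t-d}\|_\lt^2)^{1/2}\le\sqrt D\,C_X\,\beta$, so every summand $g_t^2$ lies in $[0,DC_X^2\beta^2]$; and the map $\{v_d\}\mapsto g_t^2$ is Lipschitz in the $\lt$ metric with constant of order $\beta$.

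The infinite-dimensional supremum I would control through the metric entropy of the Sobolev ball. For $\h=W^{\alpha,2}$ the unit $\h$-ball has $\lt$-covering numbers with $\log N(\epsilon)\asymp\epsilon^{-1/\alpha}$ \citep{mendelson2002geometric}; taking resolution $\epsilon\asymp T^{-\alpha/(2\alpha+1)}$ produces a net of log-cardinality $\asymp T^{1/(2\alpha+1)}$, which is exactly the source of the exponent in the stated tail. The $\beta$-Lipschitz property transfers the bound from the net to all of $\mathcal V_\beta$ at cost $O(\beta\,\epsilon)=O(\beta T^{-\alpha/(2\alpha+1)})$, matching the target radius. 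Equivalently, the localized Rademacher complexity of $\{g_t\}$ over $\mathcal V_\beta$ is of order $\beta\zeta_T$ with $\zeta_T\asymp T^{-\alpha/(2\alpha+1)}$ (as in the $\gamma_n$ analysis), and the contraction principle for the map $x\mapsto x^2$ — Lipschitz with constant $O(\beta)\le O(1)$ on the range of $g_t$ — keeps the complexity of $\{g_t^2\}$ of order $\beta^2\zeta_T\le\beta\zeta_T$.

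The essential work is concentrating the dependent average. Here I would use the causal moving-average representation $X_t=\sum_{i\ge0}\mathbb B_i(\epsilon_{t-i})$ with $\sum_i\|\mathbb B_i\|_{\h\to\h}\le(1-\gamma_A)^{-1}$ and geometric decay, supplied by \Cref{theorem:bosq} and the proof of \Cref{lemma:fard bounded X}. Viewing $\tfrac1T\sum_t g_t^2$ as a function of the i.i.d.\ bounded innovations $\{\epsilon_s\}$, perturbing a single $\epsilon_s$ alters $g_t$ by at most $\beta(\sum_d\|\mathbb B_{t-d-s}\|^2)^{1/2}C_\epsilon$ and hence alters the whole average by $O(\beta^2/T)$ after summing the geometric series; a bounded-difference inequality then gives, per fixed $\{v_d\}$, a tail of order $\exp(-c\,Tu^2/\beta^4)$, which at the scale $u\asymp\beta^2 T^{-\alpha/(2\alpha+1)}$ becomes $\exp(-c\,T^{1/(2\alpha+1)})$. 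Truncating the MA at order $m\asymp\log T$ to render the variables finitely dependent (the geometric remainder being negligible) and unioning over the net — whose cardinality is absorbed into the exponent constant — yields the uniform statement, with the polynomial prefactor $2T$ emerging from this truncation/dependence step. The \textbf{main obstacle} is precisely this step: securing an exponential concentration inequality at the critical-radius rate for the \emph{dependent} empirical process indexed by a Sobolev ball. Boundedness, the Sobolev entropy, and the $\beta$-scaling are routine; the delicate issue is taming the temporal dependence \emph{uniformly} over the function class, which is exactly where stationarity and the geometric decay of $\{\mathbb B_i\}$ (ensured by $\gamma_A<1$) are indispensable.
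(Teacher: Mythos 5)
Your proposal is correct in outline, and it shares the paper's overall skeleton: a pointwise deviation bound at a fixed $\{v_d\}_{d=1}^D$, a Sobolev-entropy net at resolution $\delta \asymp T^{-\alpha/(2\alpha+1)}$ (log-cardinality $\asymp T^{1/(2\alpha+1)}$, exactly as in \Cref{lemma:covering for small radius}), a Lipschitz transfer from the net to the whole class using $\|X_t\|_\infty \le C_X$ at cost $O(\beta\delta)$, and a final balancing of entropy against the exponent. Where you genuinely diverge is the core concentration step. The paper exploits the finite-rank hypothesis in \Cref{example:farD}: it projects onto the finite-dimensional subspace $S \supset \s\{\col(A_d^*)\cup\row(A_d^*)\}_{d=1}^D \cup \{v_d\}_{d=1}^D$, observes that the score process $\{y_t\}$ is a VAR($D$) in $\mathbb{R}^N$ satisfying the stability condition \eqref{eq:invertible operators}, and then imports the deviation bound of Proposition 8 in \cite{wong2017lasso} to get the tail $2T\exp(-c_{\mathcal E}T\gamma^2)$ for each fixed direction. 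You instead work directly in $\h$ via the moving-average representation $X_t=\sum_i \mathbb B_i(\epsilon_{t-i})$ and apply a bounded-difference (McDiarmid) inequality to the i.i.d.\ innovations, which is legitimate here precisely because Assumption \ref{assume:regularity of FARD}{\bf b} makes $\|\epsilon_t\|_\h$ almost surely bounded; your per-coordinate difference $O(\beta^2/T)$ and the resulting tail $\exp(-cT\gamma^2)$ at $u=\gamma\beta^2$ check out. Your route is more self-contained (no external VAR concentration result) and, notably, never uses $\max_d \rank(A_d^*)<\infty$, so it would cover infinite-rank transition operators; the paper's route, by contrast, extends to merely sub-Gaussian (unbounded) innovations, the relaxation mentioned after Assumption \ref{assume:regularity of FARD}.

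Two caveats you should patch. First, your truncation at lag $m\asymp\log T$ (and, implicitly, the square-summability of the bounded differences over the infinite past) needs geometric decay of $\|\mathbb B_i\|_{\h\to\h}$, whereas \Cref{theorem:bosq} as stated only gives summability $\sum_i\|\mathbb B_i\|_{\h\to\h}\le (1-\gamma_A)^{-1}$; geometric decay does follow from \eqref{eq:example 2 condition} (the operator-valued series $(I-\sum_d z^d\mathbb A_d)^{-1}$ is analytic and uniformly bounded on a disk of radius strictly larger than $1$, so Cauchy's estimates give $\|\mathbb B_i\|\le C\rho^i$ with $\rho<1$), but this is an extra step you must write out. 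Second, your "equivalently, localized Rademacher complexity plus contraction" aside is not literally available: symmetrization requires independence across $t$, which fails here; this is harmless only because your actual argument (net plus McDiarmid plus Lipschitz transfer) never invokes it, so you should drop that framing or flag it as heuristic. Also, the prefactor $2T$ in the stated probability bound is an artifact of the paper's appeal to \cite{wong2017lasso}, not something your argument needs to reproduce; your bound would simply be stronger there.
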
  
 
 \begin{proof}
Let  
$S$  be a subspace  of $\lt$ such that 
$$  S \supset \s \{ \col(A^*_d)\cup   \row(A^*_d)  \}_{d=1}^D  $$
and let $\{ w_i\}_{i=1}^N$ be the orthonormal sub-basis in $\lt$  of $S$.  
\\
\\
 {\bf Step 1.} Let $\{ a_d\}_{d=1}^D \subset  \mathbb R^{N \times N}$, $\{ y_t\} \subset \mathbb R^N $  and 
 $\{ \varepsilon_t\} \subset \mathbb R^N $   be defined as in \Cref{eq:fard restricted eigenvalue}.
Then it holds that 
$$ y_{t}  = \sum_{d=1}^D a_d y_{t-d}  + \varepsilon_{t}$$ 
and thus $\{y_t\}_{t=1}^T $ is a VAR(D) process in $\mathbb R^N$.  Observe that \eqref{eq:invertible operators} implies that 
$$ \text{det} \left ( I_{N} - \sum_{d=1}^D z^d  a_d \right)\not = 0 $$
for all $z\in \mathbb C$ such that $|z|\leq 1$. Thus from {\bf Step I}  of  the proof of Proposition 8 in \cite{wong2017lasso} , it holds that for any fixed $\{ w_d\}_{d=1}^D \subset \mathbb R^N$,
\begin{align} \label{eq:projection deviation bound for variance}  P \left ( \left| 
\frac{1}{T  }\sum_{t=1}^T  
\left(   \sum_{d=1}^D y_{t -d } ^\top    w_d 
 \right)  ^2    -
 E \left(   \sum_{d=1}^D     y_{t -d }  ^\top  w_d   
 \right)  ^2  
\right|   \ge \eta \sum_{d=1}^D\| w_d \|_2^2 \right)  \le 2T \exp \left ( -c_\mathcal E  T \eta ^2\right ) ,\end{align}
where $c_\mathcal E $ depends on $C_\epsilon$ and $\gamma_A$ only. 
 \
 \\
 \\
 {\bf Step 2.} Let $  \{  v_d \}_{d=1}^D \subset  \h$ be any deterministic functions such that 
  $\sup_{1\leq d\leq D}\| v_d\|_\h \le 1$. Suppose  $  S \supset \s \{ \col(A^*_d)\cup   \row(A^*_d)  \}_{d=1}^D\cup   \{ v_d \}_{d=1}^D.  $
 Then \eqref{eq:projection deviation bound for variance} implies that if $\sum_{d=1}^D \| v_d\|_\lt^2 \le \beta^2 $
\begin{align*} P & \left (         \left| 
\frac{1}{T  }\sum_{t=1}^T  
\left(   \sum_{d=1}^D    \int v_d(  r) X_{t -d} (r) dr   
 \right)  ^2    -
 E \left(   \sum_{d=1}^D    \int v_d(  r) X_{t -d} (r) dr   
 \right)  ^2  
\right|   \ge   \gamma  \beta^2  \right) 
\\
 \le
 &  2T \exp\left ( -c _ \mathcal E  T \gamma^2\right )  
 \end{align*}
\
\\
{\bf Step 3.} Let $\{ u_j \}_{j=1}^\mathcal N$ be chosen as in  \Cref{lemma:covering for small radius} with   $\mathcal  N \le \exp \left( \frac{C_\mathcal N}{\delta ^{\frac{1}{\alpha}}  }\right) $. 
Observe that  
  \begin{align*} & P  \left (      \sup_{ \{ w_d \} _{d=1} ^D \subset  \{ u_{j}\} _{j=1}^{\mathcal N} }    \left| 
\frac{1}{T  }\sum_{t=1}^T  
\left(   \sum_{d=1}^D    \int w_d(  r) X_{t -d} (r) dr   
 \right)  ^2    -
 E \left(   \sum_{d=1}^D    \int w_d(  r) X_{t -d} (r) dr   
 \right)  ^2  
\right|   \ge  4 \gamma D \beta^2  \right) 
\\
 \le
 & \mathcal N^D  \sup_{ \{ w_d \} _{d=1} ^D \subset  \{ u_{j}\} _{j=1}^{\mathcal N} }  P   \left (        \left| 
\frac{1}{T  }\sum_{t=1}^T  
\left(   \sum_{d=1}^D    \int w_d(  r) X_{t -d} (r) dr   
 \right)  ^2    -
 E \left(   \sum_{d=1}^D    \int w_d(  r) X_{t -d} (r) dr   
 \right)  ^2  
\right|   \ge  4  \gamma D \beta^2  \right) 
\\
\le & 2 T   \exp \left( \frac{C_\mathcal N D}{\delta ^{\frac{1}{\alpha}}  } -c_  \mathcal E  T \gamma^2\right )   ,
 \end{align*}
 where the last inequality follows because $\sum_{d=1}^D \|w_d\|_\lt ^2\le 4D\beta^2  $ and {\bf Step 2}.
For any fixed $  \{ v_d\}_{d=1}^D $ such that  $\sup_{1\leq d\leq D}\| v_d\|_\h \le 1$ and $\sum_{d=1}^D \| v_d\|_\lt^2  \le \beta ^2$, by the choice of $\{ u_j \}_{j=1}^\mathcal N$,   one can assume without loss of generality that for any $1 \le d \le D$, 
$$   \| v_d-u_d  \|_\lt \le \delta \wedge \beta  .$$    
Therefore 
\begin{align*}
 &  \left| E \left(   \sum_{d=1}^D    \int u_d(  r) X_{t -d} (r) dr   
 \right)  ^2    
 -     E \left(   \sum_{d=1}^D    \int v_d(  r) X_{t -d} (r) dr   
 \right)  ^2      \right| 
 \\
  \le
& \sum_{1\le d,e\le D }  \left|   \iint  ( v_d(r)-u_d (r) )   E   \left(  X_{t -d} (r )X_{t -e} (s ) \right) 
  v_e  (s) drds   \right| 
 \\  + 
  & \sum_{1\le d,e\le D }  \left|   \iint    u _d(r)      E   \left(  X_{t -d} (r )X_{t -e} (s ) \right)   ( v_e (s)-u_e(s)) drds   \right|  
  \\
  \le & \sum_{1\le d , e\le D} \| v_d - u_d\|_\lt   E \left(  \|X_{t -d}  \|_{\infty } \|X_{t -e}  \|_{\infty }\right)  \| v_e\|_\lt 
  \\
  +&  \sum_{1\le d , e\le D} \| u_d\|_\lt  E \left(  \|X_{t -d}  \|_{\infty } \|X_{t -e}  \|_{\infty }\right)  \| v_e-u_e\|_\lt  
  \\
  \le & 2C_X^2 D^2  ( \delta \wedge  \beta  )\beta  \le 2C_X^2 D^2      \delta \beta.
 \end{align*}
Similarly, 
\begin{align*}
  \left|  \frac{1}{T  }\sum_{t=1}^T  
\left(   \sum_{d=1}^D    \int v_d(  r) X_{t -d} (r) dr   
 \right)  ^2 
 -    \frac{1}{T  }\sum_{t=1}^T  
\left(   \sum_{d=1}^D    \int w_d(  r) X_{t  -d} (r) dr   
 \right)  ^2    \right| 
  \le  
   2C_X^2 D^2   \delta  \beta . 
 \end{align*}
  So by standard covering argument
  
  \begin{align*}
   & P  \Bigg(        \left| 
\frac{1}{T  }\sum_{t=1}^T  
\left(   \sum_{d=1}^D    \int v_d(  r) X_{t -d} (r) dr   
 \right)  ^2    -
 E \left(   \sum_{d=1}^D    \int v_d(  r) X_{t -d} (r) dr   
 \right)  ^2  
\right|   
 \ge    4  \gamma D \beta^2  + 4 C_X^2 D^2   \delta  \beta  
 \\
 &   \text{ for all }   \sup_{1\leq d \leq D} \|v_d\|_\h \leq 1  \text{ such that } 
 \sum_{d=1}^D \| v_d\|_\lt^2  \le \beta ^2
 \Bigg) 
\le  2T\exp \left (-c_\mathcal E \gamma^2   T+  \frac{C_\mathcal N D}{\delta ^{\frac{1}{\alpha}}  } \right  ).
  \end{align*}
  The desired result follows by picking 
  $$\gamma =\delta =C     T  ^{\frac{-\alpha}{2\alpha+ 1}}  $$
  for sufficiently large $C$ depending on $D ,C_X ,c_\mathcal  E$ and $C_\mathcal N$.
 \end{proof}
 
\begin{lemma}\label{lemma:covering for small radius}
For any fixed $\beta>0$ and $\delta>0$, there exists a collection of functions $ \{ u_j\}_{j=1}^\mathcal N  \subset B_{W^{\alpha,2}} (0,1)\cap B_{\lt}(0, 2\beta)$ 
with $\mathcal N \le \exp \left(\frac{C _\mathcal N   }{\delta^{\frac{1}{\alpha}}} \right)$
such that for any
$v \in  B_{W^{\alpha,2}} (0,1)\cap B_{\lt}(0, \beta)$, it holds that 
$$\min_{1\le j \le \mathcal N}   \| v- u_j\|_\lt \le \delta \wedge \beta .$$

\end{lemma}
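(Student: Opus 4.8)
The plan is to reduce the statement to the classical $\lt$-metric entropy estimate for Sobolev balls and then extract from it a cover whose centers already lie in the prescribed set. The key analytic input is the sharp bound
\[
\log N\bigl(\epsilon,\, B_{W^{\alpha,2}}(0,1),\, \|\cdot\|_\lt\bigr) \le C_{\mathcal N}\,\epsilon^{-1/\alpha}\qquad\text{for all }\epsilon\in(0,1],
\]
with $C_{\mathcal N}$ depending only on $\alpha$. This is a standard fact (see e.g.\ \cite{mendelson2002geometric}) and is precisely the entropy rate underlying the bound on $\zeta_n$ invoked in \Cref{corollary:explicit rate}. It comes from the eigen-expansion: writing $f=\sum_k a_k\phi_k$, membership in $B_{W^{\alpha,2}}(0,1)$ is equivalent, up to norm equivalence, to the coefficient sequence $(a_k)$ lying in an ellipsoid with semi-axes of order $k^{-\alpha}$, while $\|f\|_\lt^2=\sum_k a_k^2$. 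Truncating the expansion at level $m\asymp\epsilon^{-1/\alpha}$ controls the discarded tail in $\lt$ by $\epsilon/2$, and the remaining $m$-dimensional ellipsoid is covered at scale $\epsilon/2$ by $\exp(O(m))$ balls.

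First I would split into two cases according to the sign of $\beta-\delta$. If $\delta\ge\beta$, the required covering radius is $\delta\wedge\beta=\beta$; since every $v$ in the target set $B_{W^{\alpha,2}}(0,1)\cap B_\lt(0,\beta)$ satisfies $\|v\|_\lt\le\beta$, the single center $u_1=0$ already suffices, and $0\in B_{W^{\alpha,2}}(0,1)\cap B_\lt(0,2\beta)$ with $\mathcal N=1\le\exp(C_{\mathcal N}/\delta^{1/\alpha})$.

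If instead $\delta<\beta$, the required radius is $\delta$. Here I would take $\{u_j\}_{j=1}^{\mathcal N}$ to be a maximal $\delta$-separated subset, in the $\lt$ metric, of $S:=B_{W^{\alpha,2}}(0,1)\cap B_\lt(0,\beta)$. By maximality every $v\in S$ lies within $\lt$-distance $\delta=\delta\wedge\beta$ of some $u_j$, so this is a valid cover, and each $u_j\in S\subset B_{W^{\alpha,2}}(0,1)\cap B_\lt(0,2\beta)$ as demanded. I then bound $\mathcal N$ by packing–covering duality: the cardinality of a $\delta$-separated subset of $S$ is at most the $\delta$-packing number of $S$, hence (by monotonicity of packing numbers) at most the $\delta$-packing number of $B_{W^{\alpha,2}}(0,1)$, which is at most the $(\delta/2)$-covering number of $B_{W^{\alpha,2}}(0,1)$. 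The displayed entropy estimate gives this last quantity as $\exp\bigl(C_{\mathcal N}(\delta/2)^{-1/\alpha}\bigr)$, and absorbing the factor $2^{1/\alpha}$ into the constant yields $\mathcal N\le\exp(C_{\mathcal N}'/\delta^{1/\alpha})$.

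The hard part will be the log-free entropy bound $\log N(\epsilon)\lesssim\epsilon^{-1/\alpha}$: a naive truncation-plus-Euclidean-ball covering argument only produces $\epsilon^{-1/\alpha}\log(1/\epsilon)$, so obtaining the stated $\exp(C_{\mathcal N}/\delta^{1/\alpha})$ form requires the precise ellipsoidal entropy estimate for Sobolev balls, which I would cite from the literature rather than reprove. Everything else—the case split and the standard packing/covering inequalities $M(\delta,S)\le M(\delta,B_{W^{\alpha,2}}(0,1))\le N(\delta/2,B_{W^{\alpha,2}}(0,1))$—is routine.
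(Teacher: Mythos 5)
Your proposal is correct and follows essentially the same route as the paper: both rest on the classical $\lt$-entropy bound $\log N\bigl(\epsilon, B_{W^{\alpha,2}}(0,1), \|\cdot\|_\lt\bigr) \lesssim \epsilon^{-1/\alpha}$ cited from the literature, and both split into the cases $\delta \ge \beta$ (where the center $0$ alone suffices) and $\delta < \beta$. The only difference is in the bookkeeping for the case $\delta<\beta$: the paper prunes an ambient cover by intersecting it with $B_{\lt}(0,2\beta)$, observing that any center within $\delta$ of the target set automatically lies in that ball, whereas you place the centers inside the target set itself via a maximal $\delta$-separated subset and then invoke packing--covering duality, which costs only a harmless factor $2^{1/\alpha}$ in the constant.
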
 
 \begin{proof}
   Let $\{ u_j\}_{j=1}^\mathcal N   \subset  B_{W^{\alpha,2}} (0,1)        $ be a $\lt$ cover  of $B_{W^{\alpha,2}} (0,1) $. This means that   for any $v\in B_{W^{\alpha,2}} (0, 1) $, it holds that 
$$\min_{1\le j\le \mathcal N } \| v-u_j \|_ \lt  \le \delta $$
Then by the classical  result (see, e.g., \cite{nickl2007bracketing} and reference therein), there exists a constant $C_\mathcal N$ independent of $\delta$ such that $\{ u_j\}_{j=1}^\mathcal N $ can be picked so that 
  $\mathcal  N \le \exp \left( \frac{C_ \mathcal N }{\delta ^{\frac{1}{\alpha}}  }\right) $.   Without loss of generality assume that $ 0\in \{ u_j\}_{j=1}^\mathcal N$.
  \\
  \\
  {\bf Case 1.} Suppose $ \delta < \beta$. Then any $u_j$ with 
  $\| v-u_j \|_\lt  \le \delta$ must satisfy  $\|  u_j \|_\lt       \le  2\beta    $.  So it suffices to take the covering set to be $\{ u_j\}_{j=1}^\mathcal N \cap B_\lt (0,2\beta) $.
 \\
 \\
  {\bf Case 2.} Suppose $ \delta \ge \beta$.  Since $  0\in \{ u_j\}_{j=1}^\mathcal N$,
  $$\min_{1\le j\le \mathcal N } \| v-u_j \|_\lt  \le   \| v-0 \|_\lt  \le \beta.$$
 \end{proof}

\Cref{lemma:rec ar 3} provides the probability bound for $\delta_T'.$  
Note that   the definition  of $\delta_T'$  in   \eqref{eq:delta 2}  does not have  the additional condition  
 $\sum_{d=1}^D \|v_d \|_\lt^2 \le 1 $ which used in   \Cref{lemma:rec ar 3}. 
However,  since 
$  \|v_d \|_\h \le 1   $ implies that 
$   \|v_d \|_\lt \le 1 $. Therefore 
$ \sup_{1\le d\le D}\|v_d \|_\h \le 1$ implies that 
$  \sum_{d=1}^D \|v_d \|_\lt^2 \le D $. This inequality together with 
a simple rescaling argument  (such as that in the proof of \Cref{lemma:restricted eigenvalue 2})  can 
 straightforwardly show that     \Cref{lemma:rec ar 3} implies that \begin{align*}
 & P  \Bigg(        \left| 
\frac{1}{T  }\sum_{t=1}^T  
\left(   \sum_{d=1}^D    \int v_d(  r) X_{t -d} (r) dr   
 \right)  ^2    -
 E \left(   \sum_{d=1}^D    \int v_d(  r) X_{t -d} (r) dr   
 \right)  ^2  
\right|   
 \le     C _w' \sqrt {D}    T ^{  \frac{-\alpha}{2\alpha +1 } }    \sqrt {\sum_{d=1}^D \| v_d\|_\lt ^2}  
 \\
 &   \text{ for all }   \{ v_d\}_{d=1}^D   \text{ such that } 
  \sup_{1\le d\le D}\|v_d \|_\h \le 1
 \Bigg) 
\ge 1-  2T^2\exp \left (-c_w'   T ^{\frac{1}{2\alpha + 1}}     \right  ) . 
\end{align*}
 
\begin{lemma}    \label{lemma:rec ar 3}
Suppose $\h = W^{\alpha,2}$. Under the conditions in \Cref{example:farD}, it holds that 
\begin{align*}
 & P  \Bigg(        \left| 
\frac{1}{T  }\sum_{t=1}^T  
\left(   \sum_{d=1}^D    \int v_d(  r) X_{t -d} (r) dr   
 \right)  ^2    -
 E \left(   \sum_{d=1}^D    \int v_d(  r) X_{t -d} (r) dr   
 \right)  ^2  
\right|   
 \le     C _w'   T ^{  \frac{-\alpha}{2\alpha +1 } }    \sqrt {\sum_{d=1}^D \| v_d\|_\lt ^2}  
 \\
 &   \text{ for all }   \{ v_d\}_{d=1}^D   \text{ such that } 
  \sup_{1\le d\le D}\|v_d \|_\h \le 1 \text{ and that } 
 \sum_{d=1}^D \|v_d \|_\lt^2 \le 1 
 \Bigg) 
\ge 1-  2T^2\exp \left (-c_w'   T ^{\frac{1}{2\alpha + 1}}     \right  ) . 
\end{align*}
 
 \end{lemma}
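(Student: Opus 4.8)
The plan is to upgrade the fixed-radius deviation bound of \Cref{lemma:rec ar 2 no peal} into a bound that scales with the actual $\lt$-norm $\rho := \sqrt{\sum_{d=1}^D \|v_d\|_\lt^2}$ of the perturbation, by a peeling (dyadic stratification) argument in $\rho$. Write $r_T := T^{-\alpha/(2\alpha+1)}$, and denote by $Z(\{v_d\})$ the centered quadratic functional $\frac{1}{T}\sum_{t=1}^T (\sum_d \int v_d X_{t-d})^2 - E(\sum_d \int v_d X_{t-d})^2$. I would fix an integer $J = \lceil \frac{\alpha}{2\alpha+1}\log_2 T\rceil + O(1)$ so that $2^{-J}\le c\,r_T$ for a constant $c$ chosen below. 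For each $j=0,1,\dots,J$, I apply \Cref{lemma:rec ar 2 no peal} with $\beta = 2^{-j}\le 1$: on an event of probability at least $1-2T\exp(-c_w T^{1/(2\alpha+1)})$, one has $|Z(\{v_d\})|\le C_w 2^{-j} r_T$ simultaneously for all $\{v_d\}$ with $\sup_d\|v_d\|_\h\le 1$ and $\rho\le 2^{-j}$.

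First I would intersect these $J+1$ events. By a union bound their intersection has probability at least $1-2(J+1)T\exp(-c_w T^{1/(2\alpha+1)})$, and since $J+1 = O(\log T)\le T$ for $T$ large, this is at least $1-2T^2\exp(-c_w' T^{1/(2\alpha+1)})$ (absorbing the polynomial prefactor into the exponent, taking $c_w'\le c_w$). On this event, take any $\{v_d\}$ with $\sup_d\|v_d\|_\h\le 1$ and $2^{-J}<\rho\le 1$, and let $j$ be the unique index with $2^{-j-1}<\rho\le 2^{-j}$, so $0\le j\le J-1$. The $j$-th bound gives $|Z(\{v_d\})|\le C_w 2^{-j} r_T$, and since $\rho>2^{-j-1}$ this yields $|Z(\{v_d\})|\le 2C_w r_T\,\rho$, the desired proportional estimate on the region $\{\rho>2^{-J}\}$.

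The innermost region $\{\rho\le 2^{-J}\}$ I would dispatch \emph{deterministically}, using the almost-sure bound $\|X_t\|_\h\le C_X$ from \Cref{example:farD} (equivalently \Cref{lemma:fard bounded X}) together with the embedding $\|f\|_\lt\le\|f\|_\h$. Writing $L_t = \sum_d\langle v_d, X_{t-d}\rangle_\lt$, Cauchy--Schwarz gives $|L_t|\le C_X\sum_d\|v_d\|_\lt\le C_X\sqrt{D}\,\rho$ a.s., so both $\frac1T\sum_t L_t^2$ and $E L_t^2$ are at most $C_X^2 D\rho^2$, whence $|Z(\{v_d\})|\le 2C_X^2 D\rho^2\le 2C_X^2 D\,2^{-J}\rho$. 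Choosing $c$ small enough that $2C_X^2 D\,c\le C_w'$ makes this at most $C_w' r_T\,\rho$ as well, with $C_w':=\max\{2C_w,\,2C_X^2 Dc\}$, and combining the two regions gives the claim. The main subtlety is the bookkeeping of the peeling: the number of shells must be kept logarithmic so the union bound only inflates the failure probability by the allotted factor $T$, and the tiny-norm shell must be handled by the crude quadratic-in-$\rho$ deterministic bound rather than by concentration, since there the target linear-in-$\rho$ estimate is no longer delivered by \Cref{lemma:rec ar 2 no peal}.
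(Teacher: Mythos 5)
Your proposal is correct and follows essentially the same route as the paper's proof: a dyadic peeling over the $\lt$-norm of $\{v_d\}$, applying \Cref{lemma:rec ar 2 no peal} on each of the $O(\log T)$ scales, a union bound that produces the $2T^2$ prefactor, and a deterministic Cauchy--Schwarz bound (via $\|X_t\|_\h\le C_X$) on the innermost region where the norm is below the resolution $T^{-\alpha/(2\alpha+1)}$. The only cosmetic difference is that you stratify by $\rho=\sqrt{\sum_d\|v_d\|_\lt^2}$ while the paper stratifies by $\rho^2$, which changes nothing substantive.
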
  
\begin{proof}
 Let 
$J \in \mathbb Z^+$ be such that 
$2^{-J } \le  T  ^{\frac{-2\alpha}{2\alpha+ 1}}   \le 2^{-J+1}$. So $J \le \log(T) $.  For any 
$ 1\le j \le J$,  it holds that with probability at least $1-2T\exp \left (-c_w   T ^{\frac{1}{2\alpha + 1}}   \right)   $,  for all $\{v_d\}_{d=1}^D$ such that  $\sup_{1\le d\le D}\|v_d \|_\h \le 1$ and
 $  2^{-j} \le \sum_{d=1}^D \|v_d\|_ \lt ^2 \le 2^{-j+1} $, 
\begin{align*} 
& \left| 
 \frac{1}{T  }\sum_{t=1}^T  
\left(   \sum_{d=1}^D    \int v_d(  r) X_{t -d} (r) dr   
 \right)  ^2    -
 E \left(   \sum_{d=1}^D    \int v_d(  r) X_{t -d} (r) dr   
 \right)  ^2  
\right|   
 \\
 \le  &   C _w   T ^{  \frac{-\alpha}{2\alpha +1 } }    \sqrt {2^{-j+1} }\le   C_w    T ^{  \frac{-\alpha}{2\alpha +1 } }    \sqrt { 2 \sum_{d=1}^D \| v_d\|_\lt ^2}  
 \end{align*}
 where the first inequality follows from \Cref{lemma:rec ar 2 no peal}.
So by union bound,  with probability at least $1-2J T \exp(-c_w T  ^{\frac{1 }{2\alpha+ 1}} )$, for all
 $ \sum_{d=1} ^D \|v_d\|_\lt  ^2\ge 2^{-J} $,
\begin{align}\label{eq:convergent rate in T 1}
 &   \left| 
 \frac{1}{T  }\sum_{t=1}^T  
\left(   \sum_{d=1}^D    \int v_d(  r) X_{t -d} (r) dr   
 \right)  ^2    -
 E \left(   \sum_{d=1}^D    \int v_d(  r) X_{t -d} (r) dr   
 \right)  ^2   \right| 
    \le
  \sqrt {2} C_w     T  ^{\frac{-\alpha}{2\alpha+ 1}}  \sqrt {\sum_{d=1}^D \| v_d\|_\lt ^2}     . 
 \end{align} 
 In addition, observe that if 
 $ \sum_{d=1}^D \| v_d\|_\lt^2 \le 2^{-J} \le T  ^{\frac{-2\alpha}{2\alpha+ 1}}     $, 
\begin{align} \nonumber  
 \frac{1}{T  }\sum_{t=1}^T  
\left(   \sum_{d=1}^D    \int v_d(  r) X_{t -d} (r) dr   
 \right)  ^2    
 = & 
 \frac{1}{T  }\sum_{t=1}^T   \sum_{d,e=1}^D \int 
 v_d(r)X_{t -d}(r) dr \int 
 v_e(r)X_{t -e}(r) dr 
 \\ \nonumber 
 \le 
 &  \frac{1}{T  }\sum_{t=1}^T     \sum_{d,e=1}^D  \|v_d\|_\lt  \|v_e\|_\lt   C_X^2 
 \\\nonumber  
= &C_X^2 \left( \sum_{d=1}^D \| v_d\|_\lt \right)^2
\\\nonumber  
\le& DC_X^2  \sum_{d=1}^D \| v_d\|_\lt ^2. 
 \\
\le  &  DC_X^2 T^{\frac{- \alpha}{2\alpha+ 1} } \sqrt {  \sum_{d=1}^D \| v_d\|_\lt ^2.  } \nonumber  
 \end{align} 
 Similarly, if 
 $ \sum_{d=1}^D \| v_d\|_\lt^2 \le 2^{-J} \le T  ^{\frac{-2\alpha}{2\alpha+ 1}}     $, 
 $$ \left| 
 E \left(   \sum_{d=1}^D    \int v_d(  r) X_{t -d} (r) dr    \right|
 \right)  ^2  \le  DC_X^2 T^{\frac{- \alpha}{2\alpha+ 1} } \sqrt {  \sum_{d=1}^D \| v_d\|_\lt ^2.  }   $$
 Therefore if 
 $ \sum_{d=1}^D \| v_d\|_\lt^2 \le 2^{-J} \le T  ^{\frac{-2\alpha}{2\alpha+ 1}}     $, 
 \begin{align}  \label{eq:convergent rate in T 2} \left| 
 \frac{1}{T  }\sum_{t=1}^T  
\left(   \sum_{d=1}^D    \int v_d(  r) X_{t -d} (r) dr   
 \right)  ^2    -
 E \left(   \sum_{d=1}^D    \int v_d(  r) X_{t -d} (r) dr   
 \right)  ^2  
\right|   
 \le  &  2DC_X^2 T^{\frac{- \alpha}{2\alpha+ 1} } \sqrt {  \sum_{d=1}^D \| v_d\|_\lt ^2.  }  
 \end{align}
 The desired result follows from \eqref{eq:convergent rate in T 1} and \eqref{eq:convergent rate in T 2}. 
\end{proof}

\Cref{lemma:azuma hoeffding} provides the probability bound for $\delta_T''.$  
\begin{lemma}\label{lemma:azuma hoeffding}
 Suppose  $\h = W^{\alpha,2}$ and \Cref{assume:X} holds. Let  $\{s_i \}_{i=1}^n $ being  a collection of uniform random variables sampled from $[0,1]$ independent of $\{\epsilon_t\}_{t=1}^T$. Under the conditions in \Cref{example:farD}, it holds that 
 $$ P \left( \sup_{1\le d\le D,  r \in [0,1], s\in [0,1] } 
  \left|  \frac{1}{T}  \sum_{t=1}^T  X_{t -d} (r) \epsilon_{t } (s ) 
 \right|  \ge  3   C_X  C_\epsilon   \sqrt { \frac{\log(T )}{ T}} \right) \le 1/T^{ 3}  .  $$
 \end{lemma}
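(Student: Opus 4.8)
The plan is to fix the lag index $d$ and the evaluation points $(r,s)$, establish a pointwise concentration bound, and then upgrade it to a uniform bound over $d\in\{1,\dots,D\}$ and $(r,s)\in[0,1]^2$ by a covering argument. The starting observation is that, for fixed $(d,r,s)$, the sequence $\{X_{t-d}(r)\epsilon_t(s)\}_t$ is a martingale difference sequence. Indeed, by the causal moving-average representation established in \Cref{lemma:fard bounded X}, $X_{t-d}$ is measurable with respect to $\mathcal F_{t-1}:=\sigma(\epsilon_u:u\le t-1)$, while $\epsilon_t$ is independent of $\mathcal F_{t-1}$ with $E\epsilon_t(s)=0$, so $E\left[X_{t-d}(r)\epsilon_t(s)\mid\mathcal F_{t-1}\right]=X_{t-d}(r)\,E\epsilon_t(s)=0$. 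Moreover, the RKHS bound $|f(\cdot)|\le\sqrt{C_{\mathbb K}}\,\|f\|_\h=\|f\|_\h$ (recall $C_{\mathbb K}=1$) combined with $\|X_{t-d}\|_\h\le C_X$ and $\|\epsilon_t\|_\h\le C_\epsilon$ (\Cref{assume:X} and \Cref{assume:regularity of FARD}{\bf b}) yields $|X_{t-d}(r)\epsilon_t(s)|\le C_XC_\epsilon$ almost surely, uniformly in $r,s$.

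With these two facts in hand, I would apply the Azuma--Hoeffding inequality for bounded martingale differences to obtain, for each fixed $(d,r,s)$,
\begin{align*}
P\left(\left|\frac{1}{T}\sum_{t=1}^T X_{t-d}(r)\epsilon_t(s)\right|\ge\eta\right)\le 2\exp\!\left(-\frac{T\eta^2}{2C_X^2C_\epsilon^2}\right).
\end{align*}
This is the pointwise estimate; the remaining work is to pass to the supremum over the continuum. I would discretize $[0,1]^2$ by a grid of mesh $\delta$, apply a union bound over the $O(D\delta^{-2})$ grid points together with the $D$ values of $d$, and control the oscillation of $(r,s)\mapsto\frac1T\sum_t X_{t-d}(r)\epsilon_t(s)$ between neighboring grid points. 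The oscillation is handled by the reproducing-kernel modulus of continuity: $|f(r)-f(r')|=|\lb f,\mathbb K_r-\mathbb K_{r'}\rb_\h|\le\|f\|_\h\,\|\mathbb K_r-\mathbb K_{r'}\|_\h$, and for $\h=W^{\alpha,2}$ the quantity $\|\mathbb K_r-\mathbb K_{r'}\|_\h^2=\mathbb K(r,r)-2\mathbb K(r,r')+\mathbb K(r',r')$ is of order $|r-r'|^{\rho}$ for some $\rho>0$. Applying this to $X_{t-d}$ in the $r$-argument and to $\epsilon_t$ in the $s$-argument shows the process is H\"older continuous with a constant of order $C_XC_\epsilon$, so the discretization error is $O(C_XC_\epsilon\,\delta^{\rho})$.

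It then remains to choose $\delta$ and $\eta$. Taking $\delta$ polynomially small in $T$ makes the discretization error $o\bigl(\sqrt{\log T/T}\bigr)$, while setting $\eta$ a suitable multiple of $\sqrt{\log T/T}$ forces the union-bound tail over the $O(D\delta^{-2})$ points down to $T^{-3}$; the sum of the grid threshold and the discretization error is then bounded by $3C_XC_\epsilon\sqrt{\log T/T}$ for $T$ large. For instance, with a Lipschitz-type kernel ($\rho=1$, which covers the $W^{2,2}$ kernel used in the paper) the choice $\delta\asymp T^{-1/2}$ produces $O(DT)$ grid points, costing $4\log T$ in the exponent and hence a grid threshold of $C_XC_\epsilon\sqrt{8\log T/T}$, while the discretization error is $O(C_XC_\epsilon T^{-1/2})=o\bigl(\sqrt{\log T/T}\bigr)$.

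The main obstacle I anticipate is precisely this final balancing act needed to reach the \emph{explicit} constant $3$ in front of $C_XC_\epsilon\sqrt{\log T/T}$: the grid must be fine enough that the H\"older discretization error is negligible, yet coarse enough that the $\log(\delta^{-2})$ inflation from the union bound, added to the $3\log T$ budget required for the $T^{-3}$ tail, keeps the coefficient of $\sqrt{\log T/T}$ at or below $3$. When $\rho$ is small (i.e.\ $\alpha$ close to $1/2$) this direct covering bound degrades, and a chaining/entropy-integral refinement over the $W^{\alpha,2}$ ball (whose metric entropy scales like $\epsilon^{-1/\alpha}$) would be the natural route to recover a clean constant. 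By contrast, the martingale structure and the uniform boundedness exploited in the first two steps are comparatively routine.
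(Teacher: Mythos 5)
Your proposal follows essentially the same route as the paper's own proof: a pointwise Azuma--Hoeffding bound for the bounded martingale differences $X_{t-d}(r)\epsilon_t(s)$ (with exactly the same filtration and the same almost-sure bound $C_XC_\epsilon$), followed by a grid on $[0,1]^2$, a union bound over the grid and over $d$, and a smoothness control of the oscillation between grid points, which the paper implements with the Sobolev bound $|f(r)-f(r')|\le \|f\|_\h|r-r'|$ and the choices $M=\sqrt{T}$, $\delta=C_XC_\epsilon\sqrt{\log(T)/T}$. If anything, your execution is more careful on two points the paper glosses over: your Azuma exponent $T\eta^2/(2C_X^2C_\epsilon^2)$ is the correct one (the paper writes $T^2\delta^2/(2C_X^2C_\epsilon^2)$, which overstates the decay and is what lets its small choice of $\delta$ appear to yield the $T^{-3}$ tail), and your concern about balancing the union-bound inflation and the H\"older discretization error against the explicit constant $3$ is a genuine issue, since the modulus of continuity available for $W^{\alpha,2}$ with small $\alpha$ is only H\"older rather than Lipschitz.
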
 
\begin{proof} 
{\bf Step 1.} Let $r,s \in [0,1]$ be given. 
Let $ y_\tau  = \sum_{t=1}^\tau  X_{t-d } (r)  \epsilon_{t} (s ) $. One has 
$$ | y_{\tau } -y_{\tau -1 }  |  \le | X_{\tau-d } (r)  \epsilon_{ \tau} (s)  | \le C_X C_\epsilon.  $$
Let $ \mathcal F_\tau$ be  the sigma algebra generated by $\{ \epsilon_t\}_{t=1}^\tau$.   Then
$$  E(y_{\tau+1}  | \mathcal F_\tau  ) = y_\tau. $$ 
Therefore, by Azuma Hoeffding inequality, it holds that 
\begin{align} \label{eq:azuma hoeffding far deviation bound}P \left(  
  \left|  \frac{1}{T}  \sum_{t=1}^T  X_{t -d} (r ) \epsilon_{t } (s ) 
 \right|  \ge     \delta  \right) \le  2 \exp \left( - \frac{T^2 \delta ^2}{2 C_X^2 C_\epsilon ^2 } \right) .  
 \end{align}
\
\\
{\bf Step 2.} Let $\mathcal G : =\{   r_m \}_{m=1}^M  $ be a equally spaced grid on $[0,1]$.  Observe that  for any $s\in [0,1]$, there exist 
$m $ such that $|r_{m } - s|\le 1/M$. Therefore for any $f\in W^{\alpha,2}$ with $\alpha \ge 1$,  
$$ | f(r_m)  -f (s)|\le \|f'\|_\lt  |r_m -s | \le \|f \|_\h  |r_m -s |  \le \|f\|_\h /M.    $$
Therefore given any $ r,s \in [0,1] $, there exists 
$r_m, r_{m'}$ such that 
$$ | X_{t-d} (r) -X_{t-d} (r_m)| \le C_X /M \text{ and } | \epsilon_{t} (s) -\epsilon_{t} (r_m')| \le C_\epsilon /M $$
 and therefore it holds that 
\begin{align} \nonumber 
& |X_{t-d} (r) \epsilon_{t}(s)   -X_{t-d} (r_{m}) \epsilon_{t}(r_{m'})  | 
\\\nonumber 
 \le  &
 |X_{t-d} (r) \epsilon_{t}(s)   -X_{t-d} (r_{m}) \epsilon_{t}(s )  | 
 + |X_{t-d} (r_m ) \epsilon_{t}(s)   -X_{t-d} (r_{m}) \epsilon_{t}(r_{m'})  | 
\\ \nonumber 
 \le
 & \|X_{t-d}\|_\infty C_\epsilon /M  + \|\epsilon_{t}\|_\infty C_X/M 
 \\ \label{eq:grid approx}
 \le & 2 C_XC_\epsilon/M. 
\end{align} 
So 
\begin{align*}   & P \left(   \sup_{r,s \in [0,1]}
  \left|  \frac{1}{T}  \sum_{t=1}^T  X_{t -d} (r ) \epsilon_{t } (s ) 
 \right|  \ge     \delta  + 2C_XC_\epsilon/ M  \right)  
 \\
\le & P \left(   \sup_{r_m, r_m' \in \mathcal G }
  \left|  \frac{1}{T}  \sum_{t=1}^T  X_{t - d} (r_m ) \epsilon_{t } (r_{m'}   ) 
 \right|  \ge     \delta   \right)     
 \\
 \le & M^2 2 \exp \left( - \frac{T^2 \delta ^2 }{2 C_X^2 C_\epsilon ^2 } \right) .
 \end{align*} 
 where the first inequality follows from \eqref{eq:grid approx} and the second inequality follows from \eqref{eq:azuma hoeffding far deviation bound} and union bound. So by union bound again,
 $$ P \left( \sup_{1\le d\le D,  r \in [0,1], s\in [0,1] } 
  \left|  \frac{1}{T}  \sum_{t=1}^T  X_{t -d} (r) \epsilon_{t } (s ) 
 \right|  \ge    \delta  + 2C_XC_\epsilon/ M  \right)   \le   D M^2 2 \exp \left( - \frac{T^2 \delta ^2 }{2 C_X^2 C_\epsilon ^2 } \right) .  $$
 The desired result follows by taking  $\delta = C_XC_\epsilon  \sqrt  { \frac{  \log(T) }{ T }}  $ and  $M =  \sqrt {T}. $
 
 \end{proof}

The probability bound for $\delta_T'$ in \Cref{lemma:rec ar 3} and for $\delta_T''$ in \Cref{lemma:azuma hoeffding} imply a probability bound on $\delta_T$, summarized in the following corollary.
\begin{corollary}\label{corollary:delta explicit rate}
Suppose  $\h = W^{\alpha,2 } $ and  \Cref{assume:X} holds.  Let $\delta_n'$ and $\delta _n''$ be defined as in \eqref{eq:delta 2} and \eqref{eq:delta 1} respectively.  Under the conditions in \Cref{example:farD},   there exists constants $c'_w, C'_w$ 
 such that  
\begin{align*}  
   P \left(  \delta_T '  \ge    C _w'   T ^{  \frac{-\alpha}{2\alpha +1 } }   \right ) \le 2T^2\exp \left (-c_w'   T ^{\frac{1}{2\alpha + 1}}     \right  )   \quad 
 \text{and}    \quad    
   P \left(  \delta_T'' \ge 3 C_XC_\epsilon   \sqrt { \frac{\log(T )}{ T}}    \right ) \le   T^{-3} . 
\end{align*}
  
\end{corollary}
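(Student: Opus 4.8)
The plan is to assemble the two probability bounds that have already been established in \Cref{lemma:rec ar 3} and \Cref{lemma:azuma hoeffding}; the corollary is essentially a bookkeeping restatement of these, so no fresh analytic work is required. I would treat the two tail estimates separately and then take $c_w', C_w'$ to be the constants furnished by those lemmas.

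First I would handle the bound on $\delta_T'$. Recall that in the discussion immediately preceding \Cref{lemma:rec ar 3} it is shown that, with probability at least $1-2T^2\exp(-c_w' T^{1/(2\alpha+1)})$, the centered deviation of $(\sum_{d=1}^D\int v_d(r)X_{t-d}(r)\,dr)^2$ averaged over $t$ is bounded by $C_w'\sqrt{D}\,T^{-\alpha/(2\alpha+1)}\sqrt{\sum_{d=1}^D\|v_d\|_\lt^2}$ uniformly over all $\{v_d\}_{d=1}^D$ with $\sup_{1\le d\le D}\|v_d\|_\h\le 1$. The key observation is that this is exactly the event appearing in the definition \eqref{eq:delta 2} of $\delta_T'$, with the role of the free constant $\delta$ played by $C_w'\sqrt{D}\,T^{-\alpha/(2\alpha+1)}$: the quantifier ``for all $\{v_d\}$ with $\sup_d\|v_d\|_\h\le 1$'' matches, and the right-hand side already has the required product form $\delta\sqrt{\sum_d\|v_d\|_\lt^2}$. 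Hence, on this event, the infimum defining $\delta_T'$ obeys $\delta_T'\le C_w'\sqrt{D}\,T^{-\alpha/(2\alpha+1)}$, and since $D$ is fixed the factor $\sqrt{D}$ is absorbed into $C_w'$, yielding the first claimed bound.

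Second I would handle $\delta_T''$. By its definition \eqref{eq:delta 1}, $\delta_T''=\sup_{1\le d\le D,\,r,s\in[0,1]}|\frac{1}{T}\sum_{t=1}^T X_{t-d}(r)\epsilon_t(s)|$, which is precisely the supremum quantity controlled in \Cref{lemma:azuma hoeffding}. That lemma asserts that this supremum exceeds $3C_XC_\epsilon\sqrt{\log(T)/T}$ with probability at most $T^{-3}$, which is verbatim the second claimed bound, so this half is immediate.

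There is essentially no analytic obstacle here, as both tail estimates are inherited directly from the two lemmas. The only point requiring genuine care is the first step: one must check that the uniform deviation bound is stated over exactly the same family of test functions $\{v_d\}$ (those with $\sup_{1\le d\le D}\|v_d\|_\h\le 1$, \emph{without} the auxiliary normalization $\sum_d\|v_d\|_\lt^2\le 1$ that is internal to \Cref{lemma:rec ar 3}) as the family appearing in \eqref{eq:delta 2}, and that the dimension-dependent prefactor $\sqrt{D}$ is harmless for fixed $D$. Once the quantifiers are matched and the constant relabeled after absorbing $\sqrt{D}$, the corollary follows.
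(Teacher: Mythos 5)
Your proposal is correct and follows essentially the same route as the paper: the paper also obtains the $\delta_T'$ bound from the $\sqrt{D}$-rescaled extension of \Cref{lemma:rec ar 3} stated just before that lemma (which removes the internal constraint $\sum_{d=1}^D\|v_d\|_{\lt}^2\le 1$), and the $\delta_T''$ bound verbatim from \Cref{lemma:azuma hoeffding}. Your attention to matching the quantifiers in \eqref{eq:delta 2} and absorbing the fixed factor $\sqrt{D}$ into $C_w'$ is exactly the bookkeeping the paper relies on.
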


\section{Proof of \Cref{lemma:representer}}

\begin{proof}[Proof of \Cref{lemma:representer} ]

Let $S_1, S_2 \subset \mathcal H$. Denote 
$$ A\vert _{S_1 \times S_2 } [f,g] =  A [\mathcal P  _{S_1}f, \mathcal P  _{S_2}g].$$
Let $S =\text{span}\{ \mathbb K(s_i,\cdot)  \}_{i=1}^n $ and that $S^\perp $ is the orthogonal complement of  $S$ in $\h$. Then
$$  A =   A\vert _{S  \times S  }  + A\vert _{S  \times S^\perp  } +  A\vert _{S^\perp \times S  } +  A\vert _{S^\perp  \times S ^\perp  } .$$
From \Cref{lemma:constrained operator}, $ A\vert  _{S  \times S  } $ can be written as 
$$   A\vert  _{S  \times S  }  [f, g] = \sum_{1\le i,j \le n}    a    _{ij } \lb \mk (  s_i ,   ), f \rb_{\mathcal H } \lb \mk (s_j ,  ),g \rb_{\mathcal H }  . $$
{\bf Step 1.}  In this step, it is shown that  $ \{ A(s_i, s_j)\}_{i,j=1} ^n  $ only depend  on $ A\vert _{S  \times S  }$.
Observe that 
$$   A\vert_{S\times S^\perp  } (x_i,x_j)=     A\vert_{S\times S^\perp  }[ \mathbb K(s_i, \cdot) , \mathbb K(s_j, \cdot) ]  
=     A [  \mathcal P_S \mathbb K(s_i, \cdot) ,  \mathcal P_{S^\perp }  \mathbb K(s_j, \cdot) ]  =0.  $$
Similarly
$   A\vert_{S^\perp\times S  }[ \mathbb K(s_i, \cdot) , \mathbb K(s_j, \cdot) ]  = 0 $
and 
$   A\vert_{S^\perp\times S^\perp   }[ \mathbb K(s_i, \cdot) , \mathbb K(s_j, \cdot) ]  = 0 $ for all $1\le i,j\le n$. 
\\
\\
{\bf Step 2.}    Let 
  $\{\widehat B_d \}_{d=1}^D  $   be  a    solution to \eqref{eq:approx A 2} and let  $ \widehat {A } _d= \widehat B_d| _{S\times S}    $. Then by {\bf Step 1}   it holds that 
$\widehat A _d (s_i,s_j) =\widehat B_d (s_i,s_j)    $  
  for all $1\le i,j\le n$ and all $1 \le d \le D $. Therefore 
 \begin{align*} &  \sum_{t=1}^T    \sum_{i=1}^ n   \left  (X_{t  } (s_i) -  \sum_{d=1}^D  \frac{1}{n }\sum_{j=1} ^n \widehat A_d  (s_i, s_j)  X_{t -d } (s_j) \right  )^2    
\\ 
=   & 
 \sum_{t=1}^T    \sum_{i=1}^ n   \left  (X_{t  } (s_i) -  \sum_{d=1}^D  \frac{1}{n }\sum_{j=1} ^n \widehat B_d  (s_i, s_j)  X_{t -d } (s_j) \right  )^2     .   
 \end{align*}
 From  \Cref{lemma:inequality of norm} it holds that  $\|    \widehat B_d \|_{\h ,*}  \ge \| \widehat {A } _d\|_{\h,*} $. 
 As a result,   $\{\widehat A_d \}_{d=1}^D  $   is also a  solution to \eqref{eq:approx A 2}. 
 So by \Cref{lemma:constrained operator}, 
$$ \lb \widehat {    A }  _d  [f], g\rb_\h  = \sum_{1\le i,j \le n}    \widehat a    _{d,ij } \lb \mk (  s_i ,   ), f \rb_{\mathcal H } \lb \mk (s_j ,  ),g \rb_{\mathcal H }   $$
as desired. 
\end{proof}

\begin{lemma}\label{lemma:constrained operator}
Let $   A : \mathcal H \to \mathcal H   $ be any linear compact operator  and let 
$S$ be any subspace of $\h$  spanned by $\{ v_1,\ldots, v_m\}$. 
Then there exists $\{a_{ij}\}_{i,j=1}^m$ not necessarily unique such that 
$$ \lb  A\vert_{S\times S} [f] ,  g \rb _\h  = \sum_{i,j=1}^m a_{ij }\lb v_i, f\rb_{\mathcal H }\lb v_j, g\rb_{\mathcal H }   .$$ 

\end{lemma}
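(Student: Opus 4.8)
The plan is to reduce the bilinear form to an expression involving only the orthogonal projection onto $S$ and then expand in the spanning set. First I would unwind the notation. By the definition $A\vert_{S\times S}[g,f]=A[\mathcal P_S g,\mathcal P_S f]=\langle A[\mathcal P_S f],\mathcal P_S g\rangle_\h$ together with the convention that a single-argument operator $B$ satisfies $\langle B[f],g\rangle_\h=B[g,f]$, the quantity to be computed equals
\begin{align*}
\langle A\vert_{S\times S}[f],g\rangle_\h=\langle A[\mathcal P_S f],\mathcal P_S g\rangle_\h.
\end{align*}
Since $A:\h\to\h$ is a bounded (compact) linear operator and $v_1,\dots,v_m\in\h$, each $A[v_i]\in\h$, so the scalars $M_{ik}:=\langle A[v_i],v_k\rangle_\h$ are well defined and finite.

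Next I would express the projections in the spanning set. Because $\mathcal P_S f\in S=\text{span}\{v_1,\dots,v_m\}$, we may write $\mathcal P_S f=\sum_i c_i(f)\,v_i$. The key observation is that the coefficient map depends only on the inner products $\{\langle v_j,f\rangle_\h\}_{j=1}^m$: if $\langle v_j,f\rangle_\h=\langle v_j,f'\rangle_\h$ for every $j$, then $f-f'\perp S$ and hence $\mathcal P_S f=\mathcal P_S f'$. As $f\mapsto\mathcal P_S f$ is linear, this forces a linear relation $c_i(f)=\sum_j b_{ij}\langle v_j,f\rangle_\h$ for constants $b_{ij}$. Concretely, with the Gram matrix $G_{ij}=\langle v_i,v_j\rangle_\h$ one may take $b=G^{+}$, the Moore--Penrose pseudoinverse, so that $\mathcal P_S f=\sum_{i,j}G^{+}_{ij}\langle v_j,f\rangle_\h\,v_i$; using the pseudoinverse rather than an inverse is exactly what accommodates a linearly dependent spanning set.

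Finally I would substitute and collect. Writing $\mathcal P_S f=\sum_i\alpha_i v_i$ and $\mathcal P_S g=\sum_k\beta_k v_k$ with $\alpha_i=\sum_j b_{ij}\langle v_j,f\rangle_\h$ and $\beta_k=\sum_l b_{kl}\langle v_l,g\rangle_\h$, linearity of $A$ and bilinearity of $\langle A[\cdot],\cdot\rangle_\h$ give
\begin{align*}
\langle A[\mathcal P_S f],\mathcal P_S g\rangle_\h
=\sum_{i,k}\alpha_i\beta_k M_{ik}
=\sum_{j,l}\Big(\sum_{i,k}b_{ij}M_{ik}b_{kl}\Big)\langle v_j,f\rangle_\h\langle v_l,g\rangle_\h,
\end{align*}
so setting $a_{jl}:=\sum_{i,k}b_{ij}M_{ik}b_{kl}$ yields the claimed representation. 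I do not expect a genuine obstacle here; the only point requiring care is the possible linear dependence of $\{v_i\}$, which is precisely why the lemma asserts existence but not uniqueness of $\{a_{ij}\}$: different admissible choices of the coefficient functionals $c_i(f)$, equivalently different generalized inverses $b$, produce different but equally valid $a_{ij}$.
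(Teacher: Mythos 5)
Your proposal is correct, but it proceeds by a different route than the paper. The paper's proof first replaces the spanning set by an orthonormal basis $\{u_i\}_{i=1}^m$ of $S$ (reducing to that case since each $u_i$ is a linear combination of the $v_j$), extends it to a complete orthonormal system of $\h$, expands the bilinear form as an infinite sum $\lb A\vert_{S\times S}[f],g\rb_\h=\sum_{i,j=1}^\infty a_{ij}\lb u_i,\mathcal P_S f\rb_\h\lb u_j,\mathcal P_S g\rb_\h$ with $a_{ij}=\lb A[u_i],u_j\rb_\h$, and then uses self-adjointness of $\mathcal P_S$ together with $\mathcal P_S u_i=u_i$ for $i\le m$ and $\mathcal P_S u_i=0$ for $i>m$ to truncate the sum to the finite block. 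You instead stay with the (possibly linearly dependent) spanning set, write the projection explicitly through the Gram matrix pseudoinverse, $\mathcal P_S f=\sum_{i,j}G^{+}_{ij}\lb v_j,f\rb_\h\,v_i$, and compute the bilinear form by a purely finite-dimensional substitution, ending with $a=b^\top M b$ in matrix form. Your route buys two things: it needs only that $A[v_i]\in\h$ (boundedness, not compactness, and no appeal to an infinite operator expansion or its convergence), and it makes the "not necessarily unique" clause transparent, since different generalized inverses of $G$ yield different valid coefficient arrays. The paper's route is shorter in context because the basis expansion of the operator is machinery already set up in \eqref{eq:expression of A} and reused throughout; it also sidesteps any discussion of the Gram matrix by orthonormalizing first. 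One small remark: your heuristic claim that linearity of $f\mapsto\mathcal P_S f$ "forces" the coefficients to be linear in the inner products is not by itself a proof, but this does not matter since the explicit identity $\mathcal P_S=VG^{+}V^{*}$ that you then invoke is correct and carries the argument on its own.
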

\begin{proof} 
 Let $\{ u_i\}_{i=1}^m$ be the orthonormal basis of $S \subset \h  $. Since each $u_i$ can be written as linear combination of 
 $\{ v_1,\ldots, v_m\} $, it suffices to show that 
 $$  \lb A\vert_{S\times S} [f] ,g \rb_\h  =  \sum_{i,j=1}^m a_{ij }\lb u_i, f\rb_{\mathcal H }\lb u_j, g\rb_{\mathcal H }  .$$
 Since $S$ is a linear subspace of $\mathcal H$, there exists $\{u_{i}\}_{i=m+1}^\infty$ such that $\{ u_i\}_{i=1}^\infty $ is the basis of $\mathcal H$ and that 
 $$  \lb A\vert_{S\times S} [f] ,g \rb_\h  =  \sum_{i,j=1}^\infty  a_{ij }\lb u_i, f\rb_{\mathcal H }\lb u_j, g\rb_{\mathcal H }   .$$
 where $a_{ij}=  \lb A[u_i], u_j\rb _\h $. Therefore
\begin{align*} 
  \lb A\vert_{S\times S} [f] ,g \rb_\h  
= &  \sum_{i,j=1}^\infty  a_{ij }\lb u_i, \mathcal P_S f\rb_{\mathcal H }\lb u_j, \mathcal P_S g\rb_{\mathcal H }    
\\
=  &  \sum_{i,j=1}^\infty  a_{ij }\lb \mathcal P_S u_i,  f\rb_{\mathcal H }\lb \mathcal P_S u_j,  g\rb_{\mathcal H }    
\\
= 
& \sum_{i,j=1}^m a_{ij }\lb  u_i,  f\rb_{\mathcal H }\lb  u_j,  g\rb_{\mathcal H }    .
\end{align*}
 \end{proof} 
\begin{lemma} \label{lemma:inequality of norm}
Let $A : \h  \to  \h     $ be any  linear compact operator and $S$ be any finite-dimensional subspace of $  \h $. Then $\| A \vert_{S\times S}\|_{\h , *} \le \| A\|_{\h ,*}  $.

\end{lemma}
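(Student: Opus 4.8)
The plan is to recognize the compressed form $A\vert_{S\times S}$ as the operator $\mathcal P_S A \mathcal P_S$ and then exploit the singular value decomposition of $A$ supplied by \Cref{lemma:low rank expansion} together with the sub-additivity of the nuclear norm. First I would unwind the definition: as an operator on $\h$, $A\vert_{S\times S}$ sends $g$ to $\mathcal P_S A[\mathcal P_S g]$, since $A\vert_{S\times S}[f,g] = A[\mathcal P_S f, \mathcal P_S g] = \lb A[\mathcal P_S g], \mathcal P_S f\rb_\h = \lb \mathcal P_S A[\mathcal P_S g], f\rb_\h$ (using that $\mathcal P_S$ is self-adjoint). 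Because $S$ is finite-dimensional, $\mathcal P_S A \mathcal P_S$ has rank at most $\dim S<\infty$; in particular it is compact, so its nuclear norm is a finite sum of singular values and is well defined.

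Next I would invoke \Cref{lemma:low rank expansion} to write $A[g] = \sum_k a_k \lb \omega_k, g\rb_\h \psi_k$, where $\{\psi_k\}$ and $\{\omega_k\}$ are orthonormal systems in $\h$, the $a_k \ge 0$ are the singular values, and $\|A\|_{\h,*} = \sum_k a_k$. Applying the projection on both sides yields the rank-one expansion $\mathcal P_S A \mathcal P_S[g] = \sum_k a_k \lb \mathcal P_S \omega_k, g\rb_\h\, \mathcal P_S \psi_k$, each summand being a rank-one operator $u \mapsto \lb \mathcal P_S\omega_k, u\rb_\h \mathcal P_S\psi_k$ whose nuclear norm equals $\|\mathcal P_S\omega_k\|_\h\|\mathcal P_S\psi_k\|_\h$. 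Combining the triangle inequality for the nuclear norm with the contractivity of the orthogonal projection, $\|\mathcal P_S u\|_\h \le \|u\|_\h$, gives
\[
\|A\vert_{S\times S}\|_{\h,*} \;\le\; \sum_k a_k\,\|\mathcal P_S\omega_k\|_\h\,\|\mathcal P_S\psi_k\|_\h \;\le\; \sum_k a_k \;=\; \|A\|_{\h,*}.
\]

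The only delicate point — and the main (though mild) obstacle — is passing the triangle inequality through the possibly infinite sum. I would handle this by noting that $\sum_k a_k < \infty$, so the partial sums of $\sum_k a_k \lb \mathcal P_S\omega_k,\cdot\rb_\h \mathcal P_S\psi_k$ are Cauchy in nuclear norm and converge to $\mathcal P_S A \mathcal P_S$; continuity of the nuclear norm then justifies the first inequality above. Should one prefer to sidestep the infinite sum altogether, an equivalent route is trace duality, $\|B\|_{\h,*} = \sup_{\|C\|_{\h,\op}\le 1} \lb B, C\rb_{HS}$: then $\lb \mathcal P_S A \mathcal P_S, C\rb_{HS} = \lb A, \mathcal P_S C \mathcal P_S\rb_{HS} \le \|A\|_{\h,*}\,\|\mathcal P_S C\mathcal P_S\|_{\h,\op}\le \|A\|_{\h,*}$, which is just the general principle that compressing a trace-class operator by contractions cannot increase its nuclear norm. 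Either way, the desired bound follows.
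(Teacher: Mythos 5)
Your proof is correct, but it takes a genuinely different route from the paper. You identify $A\vert_{S\times S}$ with the compression $\mathcal P_S A \mathcal P_S$ (which is the right reading of the definition, since $A\vert_{S\times S}[f,g]=\lb \mathcal P_S A[\mathcal P_S g],f\rb_\h$) and then bound its nuclear norm either by expanding the SVD of $A$ into rank-one pieces and using the triangle inequality together with contractivity of $\mathcal P_S$, or by trace duality against operators of operator norm at most one. The paper instead proves singular-value-wise domination: using the min-max theorem twice (its \Cref{coro:min-max 1} and \Cref{coro:min-max 2}), it shows that the $i$-th eigenvalue of $(\mathcal P_S A\mathcal P_S)^\top(\mathcal P_S A \mathcal P_S)$ is dominated by that of $\mathcal P_S A^\top A \mathcal P_S$, which in turn is dominated by that of $A^\top A$, and then sums square roots. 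The paper's argument yields the stronger term-by-term comparison of singular values (more than is needed for the norm inequality), while your argument — especially the trace-duality variant — is shorter, avoids the two auxiliary min-max corollaries, and isolates the general principle that compressing a trace-class operator by contractions cannot increase its nuclear norm. Two small points worth noting: your convergence argument for passing the triangle inequality through the infinite sum is the right fix and does work (partial sums are Cauchy in nuclear norm, and composition with the bounded operator $\mathcal P_S$ preserves the limit); and in the case where $A$ is compact but not trace class, $\|A\|_{\h,*}=\infty$ and the claimed inequality is vacuous, so your implicit assumption $\sum_k a_k<\infty$ loses no generality.
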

\begin{proof}
Let $S$ be of dimension $K$. Observe that $A^\top A $ is a self-adjoint and compact operator from  $\h \to \h $. Let  $\lambda_1 \ge \lambda_2 \ge \ldots \ge 0 $ be the eigenvalues of $A^\top A $, 
let $\nu_1 \ge \nu_2   \ge \ldots \ge 0 $ be the eigenvalues of $ \mathcal P_SA^\top A \mathcal P_S $ and let
$ \tau_1 \ge \tau_2 \ge \ldots  \ge 0 $ be the eigenvalues of $ \mathcal P_SA^ \top \mathcal  P_S \mathcal  P_S A\mathcal  P_S $.
\\
\\
 Observe that 
$$   \| A  \|_{\h, *}  = \sum_{i=1}^\infty   \sqrt \lambda_i \quad \text{and}
\quad 
\| A\vert_{S\times S} \|_{\h,*}  = \sum_{i=1}^K \sqrt  \tau_i  .$$
Therefore  it suffices to show that for all $i$ 
$$\lambda_i \ge \tau_i . $$
Since by \Cref{coro:min-max 1},
$\lambda_i \ge \nu_i$ and by \Cref{coro:min-max 2},
$\nu_i \ge \tau_i$, the  desired result follows. 
\end{proof}

\begin{theorem}[Min-Max Theorem] Let $B$ be a compact, self-adjoint operator on  $\h $ with 
$\lambda_ 1 \ge \lambda_2 \ge \ldots $ being the   eigenvalues of $B$. Then
\begin{align*}
\max_{T_k \subset \h } \ \min_{v\in T_k , \|v\|_{{\h } } =1 } \lb B [v] , v\rb_{\h  } &  = \lambda_k \text{ and } 
\\
 \min_{T_{k-1} \subset \h     } \ \max_{v\in T_{k-1}^\perp , \|v\|_{{\h} } =1 } \lb B [v] , v\rb_{\h} & =\lambda_k,
\end{align*}
where $  {T} _k$ denote any  subspace of $ \h $ of dimension $k$. 

\end{theorem}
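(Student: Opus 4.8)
The plan is to reduce everything to the spectral decomposition of $B$. By the spectral theorem for compact self-adjoint operators on a Hilbert space (see Chapter 6 of \cite{brezis2011functional}), there is an orthonormal basis $\{e_i\}_{i=1}^\infty$ of $\h$ consisting of eigenvectors of $B$, with $B[e_i]=\lambda_i e_i$ and the eigenvalues listed in decreasing order $\lambda_1\ge \lambda_2\ge\cdots$ counted with multiplicity (augmenting by an orthonormal basis of the kernel if $0$ is an eigenvalue). Writing any $v=\sum_i c_i e_i$ with $\|v\|_\h^2=\sum_i c_i^2=1$, the Rayleigh quotient becomes $\lb B[v],v\rb_\h=\sum_i \lambda_i c_i^2$, a convex combination of the $\lambda_i$. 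Both identities then follow from two one-sided bounds apiece: an \emph{achievability} bound exhibiting one specific subspace, and a \emph{universal} bound valid for every competing subspace, the latter obtained from a dimension-counting intersection argument. Note that these bounds use only the ordering of the $\lambda_i$, not their sign.

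For the first identity, I would first take $T_k=\s\{e_1,\dots,e_k\}$: for any unit $v\in T_k$ only the coordinates $c_1,\dots,c_k$ are nonzero, so $\lb B[v],v\rb_\h=\sum_{i=1}^k \lambda_i c_i^2\ge \lambda_k$, giving $\max_{T_k}\min_v \lb B[v],v\rb_\h\ge \lambda_k$. For the reverse inequality, let $T_k$ be an arbitrary $k$-dimensional subspace and set $W=\s\{e_i:i\ge k\}$. Since the orthogonal projection of $T_k$ onto the $(k-1)$-dimensional space $\s\{e_1,\dots,e_{k-1}\}$ cannot be injective, there is a unit vector $v\in T_k$ with $c_1=\cdots=c_{k-1}=0$, i.e. $v\in T_k\cap W$; for this $v$ we have $\lb B[v],v\rb_\h=\sum_{i\ge k}\lambda_i c_i^2\le \lambda_k$. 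Hence $\min_v \lb B[v],v\rb_\h\le \lambda_k$ for every $T_k$, which yields the matching upper bound and establishes the max-min identity.

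The second identity is proved symmetrically. Taking $T_{k-1}=\s\{e_1,\dots,e_{k-1}\}$ makes $T_{k-1}^\perp=\s\{e_i:i\ge k\}$, on which $\lb B[v],v\rb_\h\le \lambda_k$ for every unit $v$, so $\min_{T_{k-1}}\max_v\le \lambda_k$. For the reverse bound, fix any $(k-1)$-dimensional $T_{k-1}$ and intersect $T_{k-1}^\perp$ with the $k$-dimensional space $\s\{e_1,\dots,e_k\}$: the same dimension count (a $k$-dimensional subspace cannot inject under orthogonal projection into the $(k-1)$-dimensional space $T_{k-1}$) yields a unit $v\in T_{k-1}^\perp\cap \s\{e_1,\dots,e_k\}$ with $\lb B[v],v\rb_\h=\sum_{i=1}^k \lambda_i c_i^2\ge \lambda_k$, so $\max_v \ge \lambda_k$ for every $T_{k-1}$, completing the min-max identity.

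The only genuine obstacle is the existence of the eigenbasis itself, which is exactly the content of the spectral theorem and which I will cite rather than reprove; granting that, the argument is elementary. A minor point requiring care is that $W$ and $T_{k-1}^\perp$ are infinite-dimensional, so the intersection claims must be justified through the finite-rank orthogonal projection onto $\s\{e_1,\dots,e_{k-1}\}$ (a linear map from a $k$-dimensional space into a $(k-1)$-dimensional one must have nontrivial kernel) rather than by naively subtracting dimensions of infinite-dimensional spaces.
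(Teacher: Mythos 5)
The paper offers no proof of this statement at all --- it is recorded with the single remark that it is a well-known result for operators --- so your proposal is not an alternative to the paper's argument but the only actual argument on the table, and it is correct. What you give is the classical Courant--Fischer proof: expand $v$ in the orthonormal eigenbasis supplied by the spectral theorem, get achievability from the coordinate subspaces $\s\{e_1,\dots,e_k\}$ and $\s\{e_1,\dots,e_{k-1}\}$, and get the universal bounds from rank--nullity applied to the finite-rank orthogonal projections; your closing remark correctly routes the intersection claims through those projections instead of na\"ively subtracting dimensions of the infinite-dimensional spaces $\s\{e_i: i\ge k\}$ and $T_{k-1}^\perp$. The one caveat sits at the level of the theorem's hypotheses rather than your reasoning: a general compact self-adjoint operator need not admit a decreasing enumeration of \emph{all} its eigenvalues (negative eigenvalues accumulate at $0$ from below), so the statement tacitly assumes such an enumeration exists, e.g.\ $B$ positive semi-definite with $0$ possibly adjoined with infinite multiplicity. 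Your proof uses only the ordering $\lambda_i\le\lambda_k$ for $i\ge k$, so it is valid precisely under that implicit hypothesis, which is exactly the setting (operators of the form $A^\top A$ and their compressions) in which the paper invokes the theorem in the two corollaries that follow it.
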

 This is a well-known results for operators. 
\begin{corollary} \label{coro:min-max 1}
Let $B$ be compact self-adjoint positive definite operator on $\h \to \h$ and $S$ be any $M$ dimensional  subspace of $\h$.  Let 
$ \lambda_1 \ge \lambda_2 \ge \ldots\ge 0$ be the eigenvalues of $B$ and $\nu_1\ge \nu_2 \ge \ldots$ be the eigenvalues of 
$\mathcal P_SB \mathcal P_S$.  Then for all $i$, it holds that
$$ \lambda_i \ge \nu_i . $$ 
\end{corollary}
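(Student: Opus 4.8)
The plan is to deduce the inequality directly from the Min-Max (Courant--Fischer) theorem stated just above, applied in its max-min form to both $B$ and its compression $\mathcal{P}_S B \mathcal{P}_S$. First I would record that $\mathcal{P}_S B \mathcal{P}_S$ is again compact, self-adjoint, and positive semidefinite: self-adjointness and compactness are inherited from $B$ together with the fact that $\mathcal{P}_S$ is a self-adjoint bounded projection, while positivity follows from $\lb \mathcal{P}_S B \mathcal{P}_S v, v\rb_\h = \lb B \mathcal{P}_S v, \mathcal{P}_S v\rb_\h \ge 0$ since $B$ is positive definite. Hence the Min-Max theorem applies to $\mathcal{P}_S B \mathcal{P}_S$ and yields $\nu_i = \max_{\dim T = i}\min_{v\in T,\ \|v\|_\h = 1}\lb \mathcal{P}_S B \mathcal{P}_S v, v\rb_\h$, with the analogous expression for $\lambda_i$ written in terms of $B$.

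The key structural observation I would then isolate is twofold. On the one hand, every eigenvector $u$ of $\mathcal{P}_S B \mathcal{P}_S$ with nonzero eigenvalue $\nu$ lies in $S$, since $u = \nu^{-1}\mathcal{P}_S(B\mathcal{P}_S u) \in \Range(\mathcal{P}_S) = S$. On the other hand, for any $v \in S$ we have $\mathcal{P}_S v = v$, so the two quadratic forms coincide on $S$: $\lb \mathcal{P}_S B \mathcal{P}_S v, v\rb_\h = \lb B v, v\rb_\h$. These two facts together let me use the top eigenvectors of the compression as an admissible test subspace in the max-min characterization of $\lambda_i$.

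Concretely, let $r = \rank(\mathcal{P}_S B \mathcal{P}_S) \le M$. For $i > r$ we have $\nu_i = 0 \le \lambda_i$, the latter because $B$ is positive definite, so the claim is immediate. For $i \le r$ let $u_1,\dots,u_i$ be orthonormal eigenvectors attaining $\nu_1 \ge \dots \ge \nu_i > 0$ and set $T = \s\{u_1,\dots,u_i\}$, an $i$-dimensional subspace contained in $S$. A direct diagonal computation gives $\min_{v\in T,\ \|v\|_\h=1}\lb \mathcal{P}_S B \mathcal{P}_S v, v\rb_\h = \nu_i$, and since $T \subset S$ the form identity upgrades this to $\min_{v\in T,\ \|v\|_\h=1}\lb B v, v\rb_\h = \nu_i$. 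As $T$ is one particular competitor in the maximum defining $\lambda_i$, we conclude $\lambda_i \ge \min_{v\in T,\ \|v\|_\h=1}\lb Bv,v\rb_\h = \nu_i$, as desired.

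The main subtlety---indeed essentially the only place where care is needed---is guaranteeing that the test subspace $T$ lies inside $S$, so that the quadratic-form identity $\lb \mathcal{P}_S B \mathcal{P}_S v, v\rb_\h = \lb Bv, v\rb_\h$ may be invoked. This is precisely why the eigenvector-in-$S$ remark matters, and why the indices with $\nu_i = 0$ (whose eigenvectors may stray into $S^\perp$) must be peeled off and handled separately through the positivity of $B$. Everything else is a routine application of the Min-Max theorem.
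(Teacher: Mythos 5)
Your proof is correct and follows essentially the same route as the paper's: apply the Min-Max theorem with the test subspace spanned by the top eigenvectors of the compression $\mathcal P_S B \mathcal P_S$, using that these eigenvectors lie in $S$ so that $\lb \mathcal P_S B \mathcal P_S v, v\rb_\h = \lb Bv, v\rb_\h$ on that subspace, and handling the remaining indices via nonnegativity. If anything, you are slightly more careful than the paper, which peels off indices beyond $\dim S = M$ and tacitly assumes its chosen eigenvectors lie in $S$, whereas you peel at the rank $r$ and explicitly verify membership in $S$ via $u = \nu^{-1}\mathcal P_S(B\mathcal P_S u)$.
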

\begin{proof} 
Since $ S $ is of dimension $M$, $\nu_{M+1} = \nu_{M+2}=\ldots=0$. Let $\{u_i\}_{i=1}^M$ denote the eigenvectors of $ \mathcal P_SB \mathcal P_S$ corresponding to $\{ \nu_i\}_{i=1}^M$ and
let $S_k =\text{span} \{ u_i\}_{i=1}^k $. 
Then for $k \le m$,  it holds that
$$   \lambda_ k  \ge     \min_{v\in S_k, \|v\|_{  \h  } =1 } \lb B [v] ,v\rb_{\h  }  = \min_{v\in S_k, \|v\|_{ \h  } =1 } \lb \mathcal P_S B\mathcal P_S [ v] ,v\rb_{\h  }= \nu_k ,  $$
where the first inequality follows from the Min-max theorem.
For $k>m$, 
$$  \lambda_k \ge 0 = \nu_i.$$
\end{proof}

\begin{corollary}\label{coro:min-max 2}
Let $A$ be any compact operator and $S$ be any  $M$ dimensional  subspace of $\h  $.  Let 
$ \lambda_1 \ge \lambda_2 \ge \ldots$ be the eigenvalues of $\mathcal  P_SA^\top A \mathcal  P_S$ and $\nu_1\ge \nu_2 \ge \ldots$ be the eigenvalues of 
$\mathcal P_SA^\top  \mathcal P_S \mathcal P_SA \mathcal P_S$.  Then for all $i$, it holds that
$$ \lambda_i \ge \nu_i . $$ 
\end{corollary}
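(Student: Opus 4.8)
The plan is to deduce $\lambda_i \ge \nu_i$ from a pointwise domination of the two associated quadratic forms, combined with the max--min characterization already recorded in the Min-Max theorem above. Write $B_1 = \mathcal P_S A^\top A \mathcal P_S$ and $B_2 = \mathcal P_S A^\top \mathcal P_S \mathcal P_S A \mathcal P_S$. Both are compact, self-adjoint, positive semidefinite operators on $\h$ (each is a product of the compact operator $A$ or $A^\top$ with bounded operators, and both quadratic forms are manifestly nonnegative), so their eigenvalue expansions are well defined and the Min-Max theorem applies to each. The inequality to be proven is then exactly the monotonicity of eigenvalues under the ordering $B_2 \preceq B_1$.

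First I would compute the two quadratic forms explicitly, using that $\mathcal P_S$ is self-adjoint and idempotent. Fix $v \in \h$ and set $z := A \mathcal P_S v$. For $B_1$,
\begin{align*}
\lb B_1 [v], v\rb_\h = \lb A^\top A \mathcal P_S v,\, \mathcal P_S v\rb_\h = \lb A \mathcal P_S v,\, A \mathcal P_S v\rb_\h = \|z\|_\h^2 .
\end{align*}
For $B_2$, moving the outer $\mathcal P_S$ to the other slot and collapsing $\mathcal P_S \mathcal P_S = \mathcal P_S$ gives
\begin{align*}
\lb B_2 [v], v\rb_\h = \lb A^\top \mathcal P_S A \mathcal P_S v,\, \mathcal P_S v\rb_\h = \lb \mathcal P_S A \mathcal P_S v,\, A \mathcal P_S v\rb_\h = \lb \mathcal P_S z,\, z\rb_\h = \|\mathcal P_S z\|_\h^2 ,
\end{align*}
where the final equality uses $\mathcal P_S^2 = \mathcal P_S$ together with self-adjointness.

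Next, since $\mathcal P_S$ is an orthogonal projection it is a contraction, so $\|\mathcal P_S z\|_\h \le \|z\|_\h$ for every $z$. Combining this with the two displays yields the pointwise bound $\lb B_2[v], v\rb_\h \le \lb B_1[v], v\rb_\h$ for all $v \in \h$. Finally I would invoke the max--min formula from the Min-Max theorem: for each $i$,
\begin{align*}
\nu_i = \max_{\dim T_i = i}\ \min_{v \in T_i,\, \|v\|_\h = 1} \lb B_2[v], v\rb_\h \;\le\; \max_{\dim T_i = i}\ \min_{v \in T_i,\, \|v\|_\h = 1} \lb B_1[v], v\rb_\h = \lambda_i ,
\end{align*}
which is the desired conclusion. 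There is no genuine obstacle here: the only points needing care are confirming that $B_1$ and $B_2$ are compact and self-adjoint so that the eigenvalue expansions and the Min-Max theorem are legitimate, and correctly bookkeeping the idempotence/self-adjointness of $\mathcal P_S$ in the two quadratic-form computations. The crux of the argument is simply that applying the contraction $\mathcal P_S$ to $z = A\mathcal P_S v$ can only shrink its norm.
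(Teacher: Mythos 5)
Your proof is correct and is essentially the paper's own argument: the paper sets $B = A\mathcal P_S$, identifies the two quadratic forms as $\|Bv\|_\h^2$ and $\|\mathcal P_S Bv\|_\h^2$, and concludes via the contraction property of $\mathcal P_S$ together with the max--min characterization, exactly as you do. Your write-up is in fact slightly more careful (you verify self-adjointness/compactness and track the idempotence of $\mathcal P_S$ explicitly), but the route is the same.
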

\begin{proof}
Let $B=AP_S $. Therefore $Bv=0$ for any $v\in S^\perp$. So
\begin{align*}
\lambda_k &= \max_{T_k  \subset  {\h}   } \min_{v\in T_k, \|v\|_{\h }  =1 } \lb B^\top B[v] , v  \rb_{\h }    = \max_{T_k  \subset  \mathcal H  } \min_{v\in T_k,\|v\|_{\h } =1 } \|Bv\|_{\h}  ^2 ,
\\
\nu _k &= \max_{T_k  \subset  {\h}  } \min_{v\in T_k, \|v\|_{\h}  =1} \lb B^\top \mathcal  P_S  \mathcal   P_S B [v] , v  \rb_{\h}    = \max_{T_k  \subset  {\h}  } \min_{v\in T_k, \|v\|_{\h}  =1} \| \mathcal  P_SBv\|_{\h}  ^2. 
\end{align*}
For any $v$, it holds that $\|Bv\|_{\h } \ge  \|   \mathcal  P_S Bv\|_{\h }  ^2$ as $\mathcal  P_S$ is the projection operator onto $S$. Therefore $\lambda_k \ge v_k$ as desired. 
\end{proof}

\section{Proof of \Cref{theorem:far rate}} 
Remark: Equations \eqref{eq:gamma 1}, \eqref{eq:gamma 2} and \eqref{eq:delta 2} are repeatedly used in the proof of \Cref{theorem:far rate}. Note that by rescaling, \eqref{eq:gamma 1}, \eqref{eq:gamma 2} and \eqref{eq:delta 2} can be applied to any function with bounded norm in $\h$. For example, consider $g\in \h$, $ \|g\|_\h > 1 $. Let $f = {g}/{\|g\|_\h} $ and apply \eqref{eq:gamma 1} to $f$, we have 
\begin{align*}
	\left |  \int   g(s)  ds  - \frac{1}{n} \sum_{i=1}^ n  g(s_i )  \right |    \le  \gamma  \|g\|_\lt    +  \gamma ^2  \|g\|_\h.
\end{align*}
This rescaling calculation is applied repeatedly in the proof of \Cref{theorem:far rate}.

\begin{proof}[Proof of \Cref{theorem:far rate}]

Note that we have $C_A>\max_{1\leq d\leq D} \tau_d$. Observe that $\{ A^*_d\}_{d=1}^D  \in  \mathcal C _{\bm \tau} $ since $\|A^*_d\|_{\h,*} \leq \tau_d$ for $d=1,\cdots, D$. Thus we have
\begin{align*}
& \sum_{t=1}^T \sum_{i=1}^n    \frac{1}{Tn } \left (X_{t } (s_i)  -  \sum_{d=1}^D \frac{1}{n } \sum_{j=1}^n  \widehat A_d(s_i, s_j) X_{t -d} (s_j) \right)^2   
\\
\le& 
\sum_{t=1}^T \sum_{i=1}^n    \frac{1}{Tn } \left (X_{t } (s_i)  -\sum_{d=1}^D  \frac{1}{n } \sum_{j=1}^n  A _d ^* (s_i, s_j) X_{t -d} (s_j) \right)^2  
 \end{align*}
Denote $ \Delta_d = A^*_d -\widehat A_d$.  Then standard calculation gives
\begin{align} \label{eq:nf step 0 term 1}
& \frac{1}{Tn } \sum_{t=1}^T \sum_{i=1}^n \left(   \sum_{d=1}^D   \frac{1}{n}\sum_{j=1}^n \Delta _d (s_i, s_j)   X_{t -d} (s_j ) \right)  ^2  
  \\ \label{eq:nf step 0 term 2}
   \le &  \frac{2}{Tn } \sum_{t=1}^T \sum_{i=1}^n \left( \sum_{d=1}^D   \frac{1}{n}\sum_{j=1}^n  \Delta_d   (s_i, s_j) X_{t -d}  (s_j ) \right)  
   \left(  X _{t } (s_i) - \sum_{d=1}^D  \frac{1}{n} \sum_{j=1}^n  A_d ^*   (s_i, s_j) X_{t -d}  (s_j )    \right)      
  \end{align} 
  \
  \\
{\bf Step 1.}   
Observe that
\begin{align}\label{eq:regular estimators} \sup_{1\le i\le n } \| \Delta_d(s_i, \cdot)\|_\h  \le \sup_{r\in [0,1]}  \left ( \| A^*_d (r,\cdot) \|_\h +   \| \widehat A_d  (r,\cdot) \|_\h  \right) \le  \| \widehat A_d\|_{\h , * }  + \|  A^*_d \|_{\h , *}  \le  2C_A,
\end{align} 
where the second to last inequality follows from \Cref{lemma:bound of A 1}. 
 Therefore 
for any fixed $ t$ and $i$,
  \begin{align} \nonumber 
   &
\left(   \sum_{d=1}^D    \frac{1}{n}\sum_{j=1}^n  \Delta_d  (s_i, s_j)   X_{t -d}  (s_j ) \right)  ^2   
\\ \nonumber 
\ge  &
   \frac{1}{2 } 
\left(   \sum_{d=1}^D    \int \Delta_d(s_i, r) X_{t -d} (r) dr   \right)  ^2    - 
 2   
\left(   \sum_{d=1}^D    \int \Delta_d(s_i, r) X_{t -d}  (r) dr  - \frac{1}{n}\sum_{j=1}^n  \Delta_d  (s_i, s_j)   X_{t-d}  (s_j )  \right)  ^2 
 \\ \nonumber 
   \ge &  \frac{1}{2 } 
\left(   \sum_{d=1}^D    \int \Delta_d(s_i, r) X_{t -d} (r) dr   \right)  ^2      -   \left( \gamma_n 
\sum_{d=1}^D 
\|\Delta_d(s_i,\cdot)\|_\lt    +  2C_AC_X  \gamma_n^2 \right) ^2. 
\\ 
\nonumber 
   \ge &  \frac{1}{2 } 
\left(   \sum_{d=1}^D    \int \Delta_d(s_i, r) X_{t -d} (r) dr   \right)  ^2  
   -
       2 \gamma^2_n \left(
\sum_{d=1}^D 
\|\Delta_d(s_i,\cdot)\|_\lt     \right) ^2   - 8C_A^2C_X ^2 \gamma_n^4   
\\ 
\label{eq:fard restricted 1} 
   \ge &  \frac{1}{2 } 
\left(   \sum_{d=1}^D    \int \Delta_d(s_i, r) X_{t -d} (r) dr   \right)  ^2  
   -
       2\gamma^2_n D\sum_{d=1}^D   
\|\Delta_d(s_i,\cdot)\|_\lt  ^2      -  8C_A^2C_X^2  \gamma_n^2
  \end{align} 
where the second  inequality follows from \eqref{eq:gamma 1} and the fact that 
$$ \sup_{1\le d\le D , 1\le i \le n}   
\| \Delta_d (s_i,\cdot ) X_{t -d} (\cdot)\|_\h 
\le \sup_{1\le i\le n } \| \Delta_d(s_i, \cdot)\|_\h   \| X_{t -d} \|_\h  \le 2C_A C_X ,$$
and $\gamma_n^4 \le \gamma_n^2$ is used in the last inequality.  
In addition,
\begin{align} \nonumber 
 &\frac{1}{T  }\sum_{t=1}^T  
\left(   \sum_{d=1}^D    \int \Delta_d(s_i, r) X_{t -d} (r) dr   \right)  ^2   
\\ \nonumber 
   \ge  & E   \left(   \sum_{d=1}^D    \int \Delta_d(s_i, r) X_{t -d} (r) dr   \right)  ^2         -\delta_T  C_A \sqrt {\sum_{d=1}^D \|\Delta_d (s_i, \cdot) \|_\lt  ^2} 
   \\ \nonumber  
   \ge & \kappa_X   \sum_{d=1}^D \|\Delta_d (s_i, \cdot) \|_\lt  ^2  - \frac{2 \delta_T ^2C_A^2 }{\kappa_X }  -\frac{\kappa_X }{2} \sum_{d=1}^D \|\Delta_d (s_i, \cdot) \|_\lt  ^2   
   \\
   = 
   & \frac{\kappa_X }{2} \sum_{d=1}^D \|\Delta_d (s_i, \cdot) \|_\lt  ^2    - \frac{2 \delta_T ^2C_A^2 }{\kappa_X } ,
   \label{eq:fard restricted 2}
\end{align}
where the first inequality follows from \eqref{eq:delta 2}, the second inequality follows from \Cref{assume:X}.  Therefore 
 \begin{align} \nonumber  \eqref{eq:nf step 0 term 1} =
&\frac{1}{Tn} \sum_{t=1}^T \sum_{i=1} ^n \left(   \sum_{d=1}^D    \frac{1}{n}\sum_{j=1}^n  \Delta_d  (s_i, s_j)   X_{t -d}  (s_j ) \right)  ^2     
\\ \nonumber 
\ge 
&\frac{1}{Tn} \sum_{t=1}^T \sum_{i=1} ^n   \frac{1}{2 } 
\left(   \sum_{d=1}^D    \int \Delta_d(s_i, r) X_{t -d} (r) dr   \right)  ^2      -    \frac{2\gamma^2_n D}{n}\sum_{i=1}^n\sum_{d=1}^D   
\|\Delta_d(s_i,\cdot)\|_\lt  ^2      -  8C_A ^2C_X^2  \gamma_n^2   
\\ \nonumber 
\ge 
&\frac{\kappa_X }{4n}  \sum_{i=1}^n \sum_{d=1}^D \|\Delta_d (s_i, \cdot) \|_\lt  ^2    - \frac{2 \delta_T^2 C_A^2 }{\kappa_X } - \frac{2\gamma_n^2 D }{n}   \sum_{i=1} ^n     \sum_{d=1}^D \|\Delta_d(s_i,\cdot)\|_\lt   ^2  
-8C_A ^2C_X^2 \gamma_n^2 
\\ \label{eq:fard restricted 2.5}
\ge 
&\frac{\kappa_X }{8n}  \sum_{i=1}^n \sum_{d=1}^D \|\Delta_d (s_i, \cdot) \|_\lt  ^2    - \frac{2 \delta_T^2 C_A ^2 }{\kappa_X } 
-  8C_A ^2C_X^2 \gamma_n^2 
\end{align}  
where the first inequality follows from \eqref{eq:fard restricted 1},   the second inequality follows from \eqref{eq:fard restricted 2}, and the last inequality follows from the assumption that 
$\kappa_X \ge 64\gamma_n^2 D  $
\\
\\
{ \bf Step 2.} Observe that  
\begin{align*}
& \frac{1}{2} \cdot  \eqref{eq:nf step 0 term 2} 
  \\
   = 
   & 
\frac{1}{Tn } \sum_{t=1}^T \sum_{i=1}^n  \sum_{d=1}^D  \left(    \frac{1}{n}\sum_{j=1}^n  \Delta _d      (s_i, s_j) X_{t -d}  (s_j ) \right)  
     \epsilon_{t } (s_i)    
     \\
     +
     & \frac{1}{Tn } \sum_{t=1}^T \sum_{i=1}^n  \sum_{d=1}^D  \left(    \frac{1}{n}\sum_{j=1}^n  \Delta _d      (s_i, s_j) X_{t -d}  (s_j ) \right) \sum_{d=1}^D  \left(   \int A^*_d(s_i, s) X_{t -d}(s) ds  - 
 \  \frac{1}{n}  \sum_{j=1}^n  A^* _d   (s_i, s_j) X_{t -d}  (s_j )   \right) 
 \\
 \le 
 & 
 \sum_{d=1}^D  \frac{1}{n^2} \sum_{i,j =1}^n |  \Delta_{d} (s_i, s_j) | \sup_{1\le d\le D, 1\le i,j \le n } 
  \left|  \frac{1}{T}  \sum_{t=1}^T  X_{t -d} (s_j) \epsilon_{t } (s_i) 
 \right|
 \\
 +
 & 
  C_X \sum_{d=1}^D  \frac{1}{n^2} \sum_{i,j =1}^n   |  \Delta_{d} (s_i, s_j) | D \sup_{1\le t\le T , 1\le i \le n , 1\le d \le D  }
      \left |   \int A^*_d(s_i, s) X_{t -d}(s) ds  - 
 \  \frac{1}{n}  \sum_{j=1}^n  A^* _d   (s_i, s_j) X_{t -d}  (s_j )   \right | 
\end{align*}
From \eqref{eq:delta 1}, it holds that  $$ \sup_{1\le d\le D, 1\le i,j \le n } 
  \left|  \frac{1}{T}  \sum_{t=1}^T  X_{t -d} (s_j) \epsilon_{t  } (s_i) 
 \right|  \le \delta_T' \le \delta_T   .  $$
 In addition, since for any $1\le i \le n $ 
 $$  \| A_d^* (s_i  ,\cdot) X_{t - d  } (\cdot)\|_\lt \le \| A_d^* (s_i  ,\cdot) X_{t -d  } (\cdot)\|_\h \le \sup_{r\in [0,1 ]} \|A_d^* (r ,\cdot)   \|_\h \| X_{t+1-d  } \|_\h \le C_A C_X,$$
  \eqref{eq:gamma 1} implies that 
 $$  \sup_{1\le t\le T , 1\le i \le n , 1\le d \le D  }
      \left |   \int A^*_d(s_i, s) X_{t -d}(s) ds  - 
 \  \frac{1}{n}  \sum_{j=1}^n  A^* _d   (s_i, s_j) X_{t -d}  (s_j )   \right |   \le  C_A C_X (\gamma_n  +\gamma_n^2) \le 2 C_A C_X  \gamma_n. $$
Therefore 
\begin{align} \nonumber 
   \frac{1}{2} \cdot   \eqref{eq:nf step 0 term 2}
  \le      &  \sum_{d=1}^D  \frac{1}{n^2} \sum_{i,j =1}^n  |  \Delta_{d} (s_i, s_j) |   
 \left( \delta _T+ 2 D  C_A C_X  \gamma_n    \right )
 \le    \sum_{d=1}^D \sqrt {  \frac{1}{n^2} \sum_{i,j =1}^n    \Delta^2 _{d} (s_i, s_j)   }   \left( \delta _T   +  2  DC_A C_X  \gamma_n   \right ) 
 \\\nonumber  
 \le   & \frac{\kappa_X }{64}   \sum_{d=1}^D  \frac{1}{n^2} \sum_{i,j =1}^n    \Delta^2 _{d} (s_i, s_j)      
+  \frac{64}{\kappa_X} \left( \delta _T ^2    +  D C_AC_X \gamma_n^2   \right)
\\ 
 \le  &    \frac{\kappa_X }{32}   \sum_{d=1}^D  \frac{1}{n } \sum_{i =1}^n   \|  \Delta  _{d} (s_i,  \cdot )\|_\lt  ^2    + 
 \frac{\kappa_XD C_A }{16}  \gamma_n^2 
+  \frac{64}{\kappa_X} \left( \delta _T ^2    + D C_AC_X \gamma_n^2   \right) , \label{eq:theorem 1 step 2}
\end{align}
 where the second inequality follows from 
 $  \frac{1}{n^2} \sum_{i,j =1}^n  |  \Delta_{d} (s_i, s_j) |     \le \sqrt {  \frac{1}{n^2} \sum_{i,j =1}^n    \Delta^2 _{d} (s_i, s_j)   }  ,$
the third inequality follows from H\ou lder's  inequality  and the last inequality follows from \eqref{eq:gamma 2} and the fact that 
 $$ \| \Delta_d (s_i, \cdot)   \|_\h \le   \| \Delta_d  \|_{\h, *  } \le 2C_A.$$
 \\
{ \bf Step 3.}  Combining  \eqref{eq:fard restricted 2.5} and  \eqref{eq:theorem 1 step 2}, one has 
\begin{align*}  
 & \frac{\kappa_X}{8 }  \sum_{d=1}^D  \frac{1}{n } \sum_{i =1}^n   \|  \Delta  _{d} (s_i,  \cdot )\|_\lt  ^2   - \frac{2\delta_T^2C_A^2}{\kappa_X } 
-  8C_A ^2C_X^2 \gamma_n^2  
\\
\le
&  
\frac{\kappa_X }{32}   \sum_{d=1}^D  \frac{1}{n } \sum_{i =1}^n   \|  \Delta  _{d} (s_i,  \cdot )\|_\lt  ^2    + 
 ( \frac{\kappa_XD C_A }{16}  +D C_AC_X) \gamma_n^2 
+  \frac{64}{\kappa_X}   \delta _T ^2       , 
\end{align*}
which implies that 
\begin{align} \label{eq:standard lasso fard 1}
& \frac{\kappa_X}{16 }  \sum_{d=1}^D  \frac{1}{n } \sum_{i =1}^n   \|  \Delta  _{d} (s_i,  \cdot )\|_\lt  ^2  
\le 
       C_1'  \left( \gamma_n^2  + \delta_T^2  \right) . 
\end{align}
for some $C_ 1 '$ only depending on $\kappa_X,D, C_A,C_X$.
 Since $$  \left  \| \int \Delta_d(\cdot, r)^2dr \right\|_\h  \le   \int  \sup_{r\in [0,1] } \|\Delta_d (\cdot, r)  \|_\h ^2  dr  
\le   \| \Delta_d\|_{\h , \op}^2  \le 4 C_A^2 , $$ 
  \eqref{eq:gamma 2} 
 implies that 
$$  \frac{1}{n } \sum_{i =1}^n   \|  \Delta  _{d} (s_i,  \cdot )\|_\lt  ^2   \ge \frac{1}{2} \| \Delta_{d}\|_\lt^2  - C_A^2  \gamma_n^2.  $$  
 Therefore 
\eqref{eq:standard lasso fard 1}  and the above display give
\begin{align*}  
\frac{\kappa_X }{32} \| \Delta_d\|_{\lt }^2 \le 
  \frac{\kappa_X}{16 }  \sum_{d=1}^D  \frac{1}{n } \sum_{i =1}^n   \|  \Delta  _{d} (s_i,  \cdot )\|_\lt ^2   + \gamma_n^2
\le &
       C  _1' \left(     \gamma_n^2  + \delta_T^2  \right)   . 
\end{align*}
The above equation directly implies the desired result. 
\end{proof}  

\begin{proof}[Proof of \Cref{corollary:farD}]
	Suppose $\h=W^{\alpha,2}$. 
	From \Cref{corollary:explicit rate} it holds that  with probability at least $1-1/n^4$,  
	$$  \max \{ \gamma_n',\gamma_n''\} \le C_\alpha n^{-\alpha/(2\alpha+1) }.$$
	where $C_\alpha$ is some constant independent of $n$.  From  \Cref{corollary:delta explicit rate},  it holds that 
	\begin{align*}  
	P \left(  \delta_T '  \ge    C _w'   T ^{  \frac{-\alpha}{2\alpha +1 } }   \right ) \le 2T^2\exp \left (-c_w'   T ^{\frac{1}{2\alpha + 1}}     \right  )   \quad 
	\text{and}    \quad    
	P \left(  \delta_T'' \ge 3 C_XC_\epsilon   \sqrt { \frac{\log(T )}{ T}}    \right ) \le   T^{-3} . 
	\end{align*} 
	The result immediately follows from  \Cref{theorem:far rate}.
\end{proof}

\subsection{Proof of \Cref{prop:prediction rate}}
\begin{proof}[Proof of \Cref{prop:prediction rate}]
Define $$ \widetilde X_{T+1} (r) : = \sum_{d=1}^D \int   \widehat  A_d  (r  ,s)X_{T+1-d} (s)  ds. $$
{\bf Step 1.}
Since
\begin{align*}
&\left \| \int \widehat A_d(\cdot ,s)X_{t+1-d} (s)  ds - \int   A_d ^* (\cdot ,s)X_{t+1-d} (s)  ds \right \|_\lt
\\
=&
 \left \| \int  (\widehat A_d(\cdot ,s) -  A_d ^* (\cdot ,s) ) X_{t+1-d} (s)  ds \right \|_\lt
 \\
 \le&  \int  \left \| \widehat  A_d(\cdot ,s) -  A_d ^* (\cdot ,s)  \right \|_\lt  X_{t+1-d} (s)  ds
 \\
 \le 
 &  \left \| \widehat  A_d  -  A_d ^*   \right \|_\lt \| X_{t+1-d} \|_{\lt } ,
\end{align*}
and therefore
\begin{align*}
\| E(X_{T+1 } \vert \{X_t\}_{t=1}^T  ) -  \widetilde X_{T+1}  \|_{\lt } 
=
&\left \|  \sum_{d=1}^D \int \widehat A_d(\cdot ,s)X_{t+1-d} (s)  ds - \sum_{d=1}^D \int   A_d ^* (\cdot ,s)X_{t+1-d} (s)  ds \right \|_\lt 
\\
\le & D  \sum_{d=1}^D 
 \left \| \int  (\widehat A_d(\cdot ,s) -  A_d ^* (\cdot ,s) ) X_{t+1-d} (s)  ds \right \|_\lt 
  \\
 \le 
 & D \sum_{d=1}^D    \left \| \widehat  A_d  -  A_d ^*   \right \|_\lt     \| X_{t+1-d} \|_{\lt }  
 \\
 \le
 &  C_1' D  C_X     \left (n^{\frac{- \alpha}{2\alpha+ 1}} + T^{\frac{- \alpha}{2\alpha+ 1}}   \right ),
\end{align*} 
where the last inequality follows from \Cref{corollary:farD}.

\
\\
{\bf Step 2.} Observe that for any $r\in [0,1]$, and any $t\in [1,\ldots,T]$, 
$$\| \widehat A_d (r, \cdot )  X_t(\cdot )\|_\h  \le  C_AC_X,$$
therefore by \eqref{eq:gamma 1}, it holds that for all $r\in [0,1]$,
\begin{align*} 
&\left |  \frac{1}{n}\sum_{j=1}^n\widehat{A}_d(r ,s_j)X_{T+1-d}(s_j)  -   \int \widehat{A}_d(r ,s)X_{T+1-d}(s ) ds  \right| 
\\
\le & \| \widehat A_d (r, \cdot )  X_t(\cdot )\|_\lt \gamma_n+\| \widehat A_d (r, \cdot )  X_t(\cdot )\|_\h\gamma_n^2 
\\
\le &
2 C_AC_X \gamma_n .
\end{align*}
Let $\widehat X_{T+1}(r)=\sum_{d=1}^D\frac{1}{n}\sum_{j=1}^n\widehat{A}_d(r,s_j)X_{T+1-d}(s_j) $.
Therefore
\begin{align*} 
&\| \widehat X_{T+1} -\widetilde X_{T+1}\|_\lt \le  \| \widehat X_{T+1} -\widetilde X_{T+1}\|_\infty 
\\
=
& \sup_{r\in [0,1] }\left |  \frac{1}{n}\sum_{j=1}^n\widehat{A}_d(r ,s_j)X_{T+1-d}(s_j)  -   \int \widehat{A}_d(r ,s)X_{T+1-d}(s ) ds  \right| 
\\
\le &2C_AC_X\gamma_n \le  2  C_AC_X C_\alpha n^{\frac{-\alpha}{2\alpha+ 1}},
\end{align*}
where the last inequality follows from \Cref{corollary:explicit rate}. The desired result follows from the  inequality that 
\begin{align*}&\| E(X_{T+1 } \vert \{X_t\}_{t=1}^T  ) -  \widehat X_{T+1}  \|_{\lt }  
\le   \| E(X_{T+1 } \vert \{X_t\}_{t=1}^T  ) -  \widetilde  X_{T+1}  \|_{\lt }  
+  \| \widehat   X_{T+1} -  \widetilde  X_{T+1}  \|_{\lt }  
\end{align*}

The proof for $\frac{1}{n}\sum_{j=1}^n \left(E(X_{T+1}(s_j) \vert \{X_t\}_{t=1}^T) - \widehat X_{T+1}(s_j)\right)^2\le C_1'' \left (n^{\frac{-2\alpha}{2\alpha+ 1}} + T^{\frac{-2\alpha}{2\alpha+ 1}}   \right )$ follows the same argument and thus is omitted.
\end{proof}
 
\section{Accelerated gradient method for nuclear norm penalization}\label{sec:AGM algorithm}
We use the accelerated gradient method~(AGM, Algorithm 2 in \cite{Ji2009}) to solve the trace norm minimization problem in \eqref{eq:trace_norm2}, i.e.
\begin{align*}
\argmin_{W} g(W) + \|W\|_* = \argmin_{W} \left\|X-\K W Z \right\|_F^2 + \|W\|_*.
\end{align*}

To implement AGM, we first calculate the gradient $\nabla g(W)$ for a given $W$. We have $g(W)=\left\|X-\K W Z \right\|_F^2=\left\|X-\sum_{d=1}^D\K_d W_d Z_d \right\|_F^2$. By matrix calculus, for the component-wise gradient, we have $\nabla g(W_d)=-2\K_d^\top (X-\sum_{d'=1}^D\K_{d'}W_{d'}Z_{d'})Z_d^\top$ for $d=1,\cdots, D.$ Due to the block diagonal structure of $W$, the gradient is $\nabla g(W)=\begin{bmatrix}
\nabla g(W_1) & &\\
& \ddots &\\ 
& & \nabla g(W_D)
\end{bmatrix}$.

Denote $W_{(k)}=\begin{bmatrix}
W_{(k)1} & &\\
& \ddots &\\ 
& & W_{(k)D}
\end{bmatrix}$ to be the value of $W$ at step $k$ of AGM. To update $W_{(k)}$ using AGM, two quantities to be calculated are equations (8) and (9) in \cite{Ji2009}.

Equation (8) in \cite{Ji2009} can be written as
\begin{align*}
&Q_{t_k}(W,W_{(k-1)}):= P_{t_k}(W,W_{(k-1)})+\|W\|_*\\
=&g(W_{(k-1)}) + \langle W-W_{(k-1)}, \nabla g(W_{(k-1)}) \rangle + \frac{t_k}{2}\|W-W_{(k-1)}\|_F^2+\|W\|_* \\
=&\left\|X-\sum_{d=1}^D\K_d W_{(k-1)d} Z_d \right\|_F^2+\sum_{d=1}^D \left( \langle W_d-W_{(k-1)d}, \nabla g(W_{(k-1)d})\rangle + \frac{t_k}{2}\|W_d-W_{(k-1)d}\|_F^2 + \|W_d\|_*\right),
\end{align*}
where $\langle A, B \rangle=tr(A^\top B)$ denotes the matrix inner product. Thus, equation (8) in \cite{Ji2009} can be calculated component-wisely for $W_1,\cdots, W_D.$

Equation (9) in \cite{Ji2009} can be written as
\begin{align*}
&\frac{t_k}{2}\left\|W-\left(W_{(k-1)}-\frac{1}{t_k}\nabla g(W_{(k-1)})\right)\right\|_F^2 + \|W\|_*\\
=&\sum_{d=1}^D \left( \frac{t_k}{2}\left\|W_d-\left(W_{(k-1)d}-\frac{1}{t_k}\nabla g(W_{(k-1)d})\right)\right\|_F^2 + \|W_d\|_*\right).
\end{align*}
Thus the minimization of equation (9) in \cite{Ji2009} can be performed on $W_1,\cdots, W_D$ separately using singular value decomposition as in Theorem 3.1 of \cite{Ji2009}.

\section{Additional simulation results}
Figures \ref{fig:FAR1_k2_PE} and \ref{fig:FAR1_k8_PE} give the boxplot of PE by Bosq, ANH and RKHS for FAR(1) under signal strength $\kappa=0.2$ and $0.8$ respectively. Figure \ref{fig:FAR_orderselection_PE_k1_0_k2_5} give the boxplot of PE by ANH and RKHS for FAR with autoregressive order selection under signal strength $(\kappa_1, \kappa_2)=(0,0.5)$.

\begin{figure}[h]
	\begin{subfigure}{0.32\textwidth}
		\includegraphics[angle=270, width=1.25\textwidth]{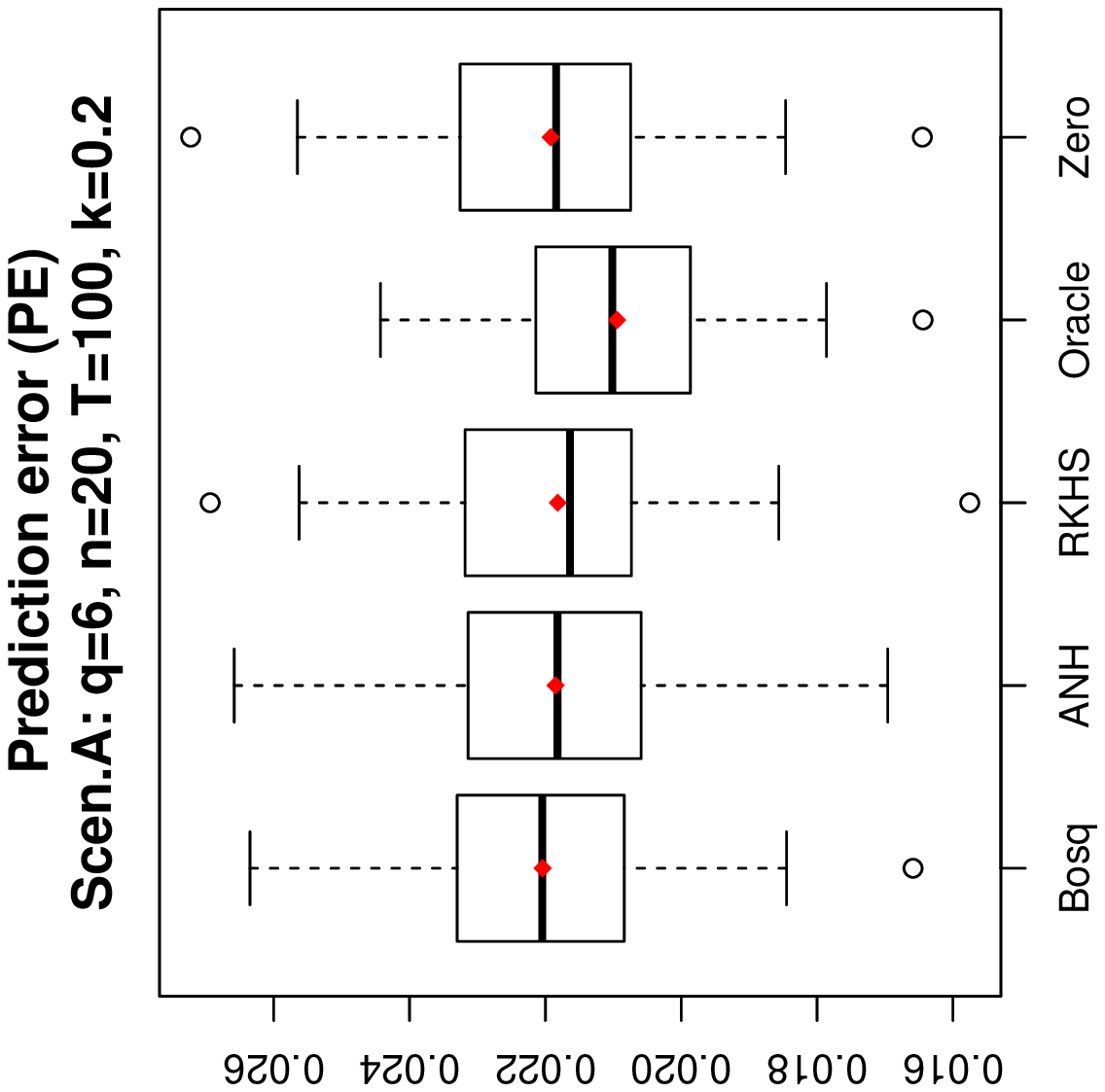}
		\vspace{-1cm}
	\end{subfigure}
	~
	\begin{subfigure}{0.32\textwidth}
		\includegraphics[angle=270, width=1.25\textwidth]{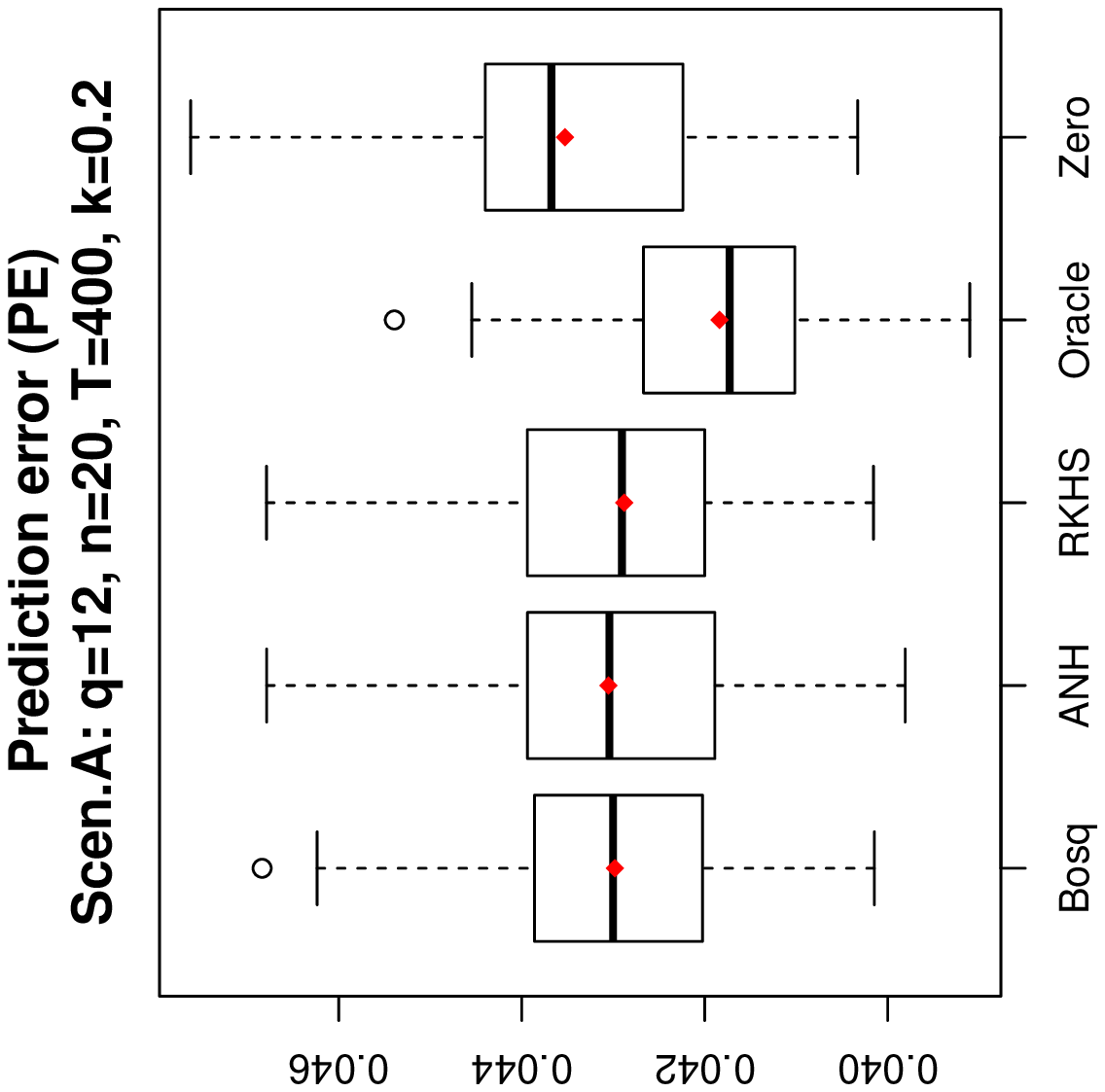}
		\vspace{-1cm}
	\end{subfigure}
	~
	\begin{subfigure}{0.32\textwidth}
		\includegraphics[angle=270, width=1.25\textwidth]{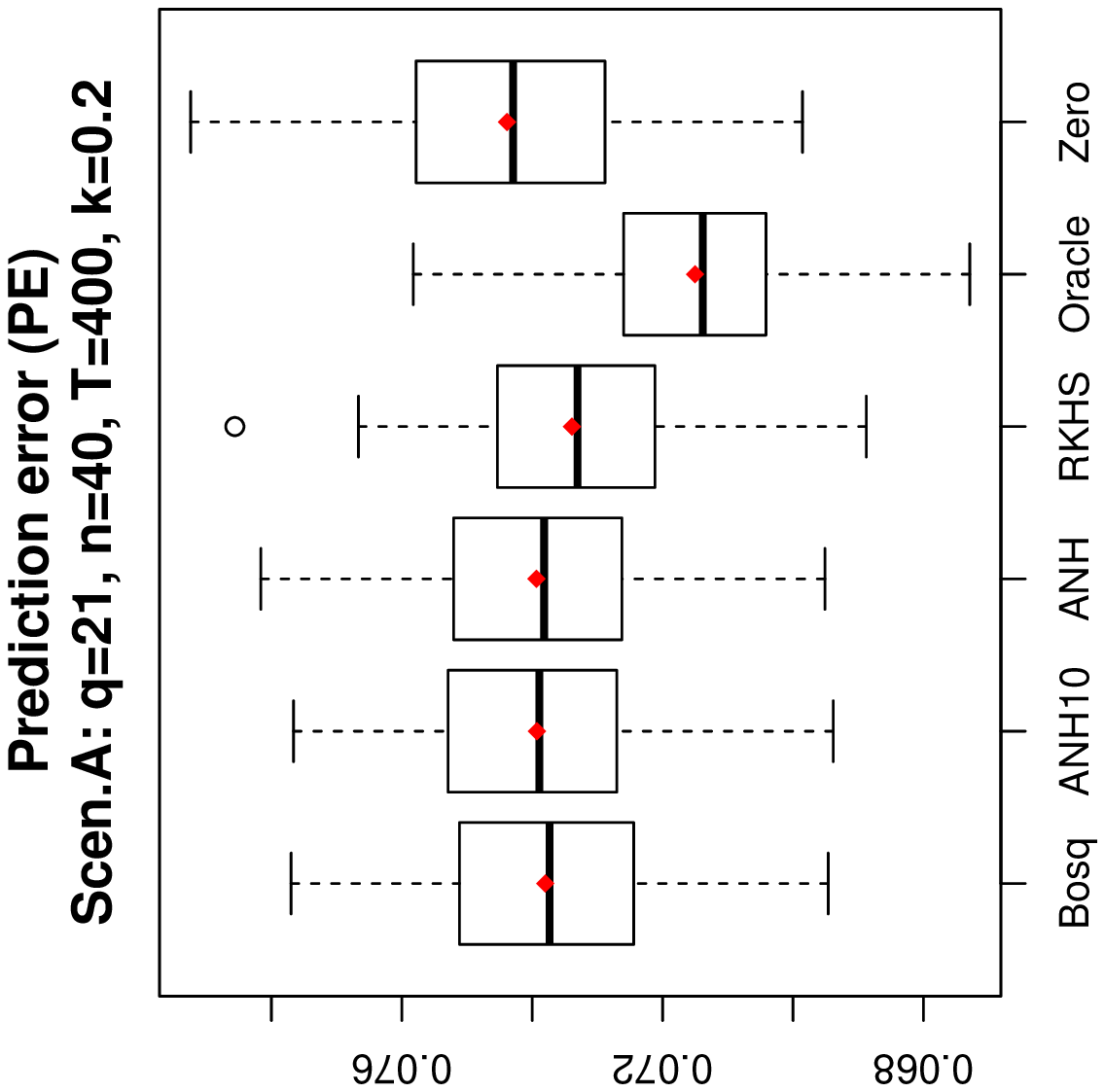}
		\vspace{-1cm}
	\end{subfigure}
	~
	\begin{subfigure}{0.32\textwidth}
		\includegraphics[angle=270, width=1.25\textwidth]{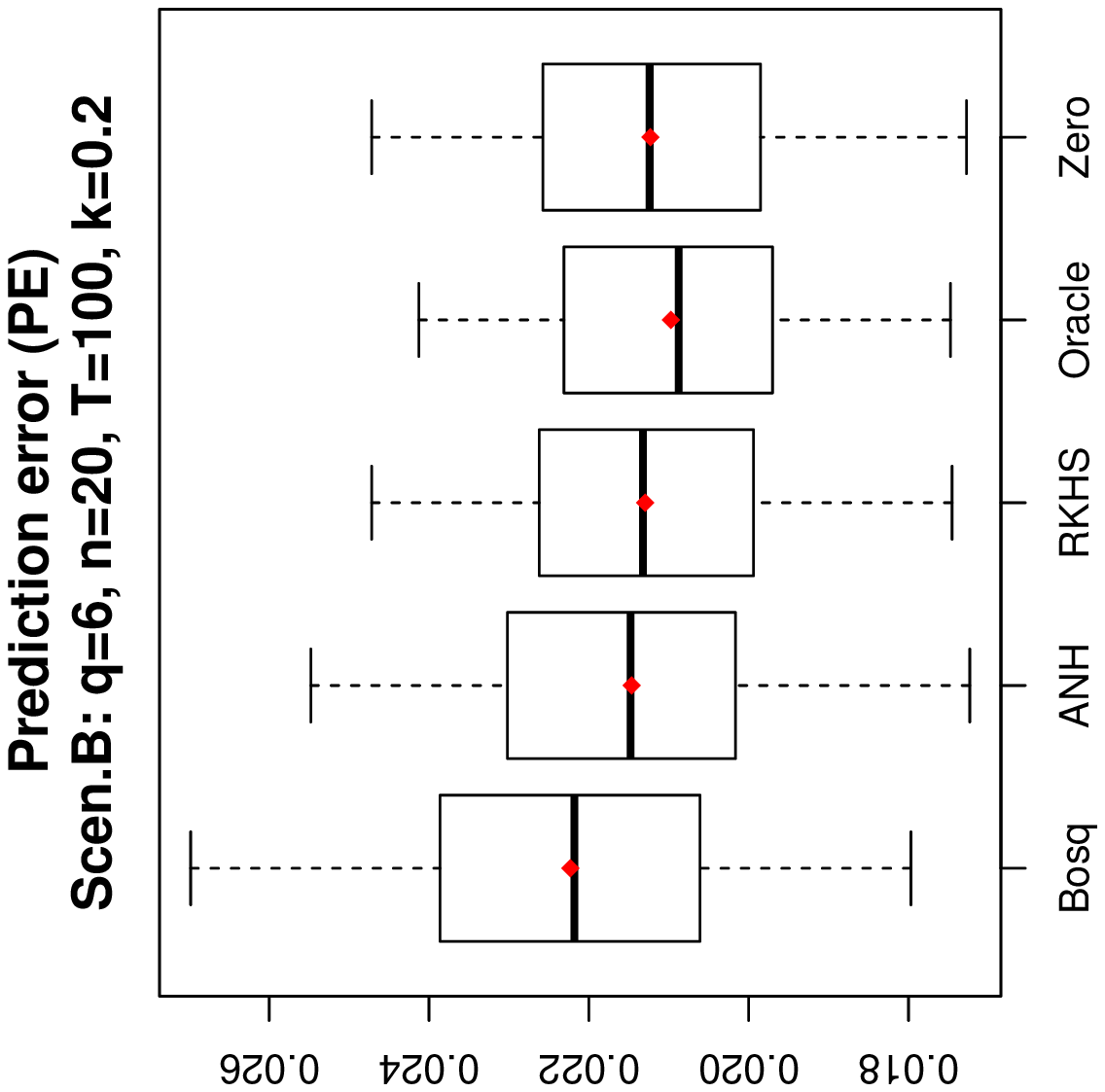}
		\vspace{-1cm}
	\end{subfigure}
	~
	\begin{subfigure}{0.32\textwidth}
		\includegraphics[angle=270, width=1.25\textwidth]{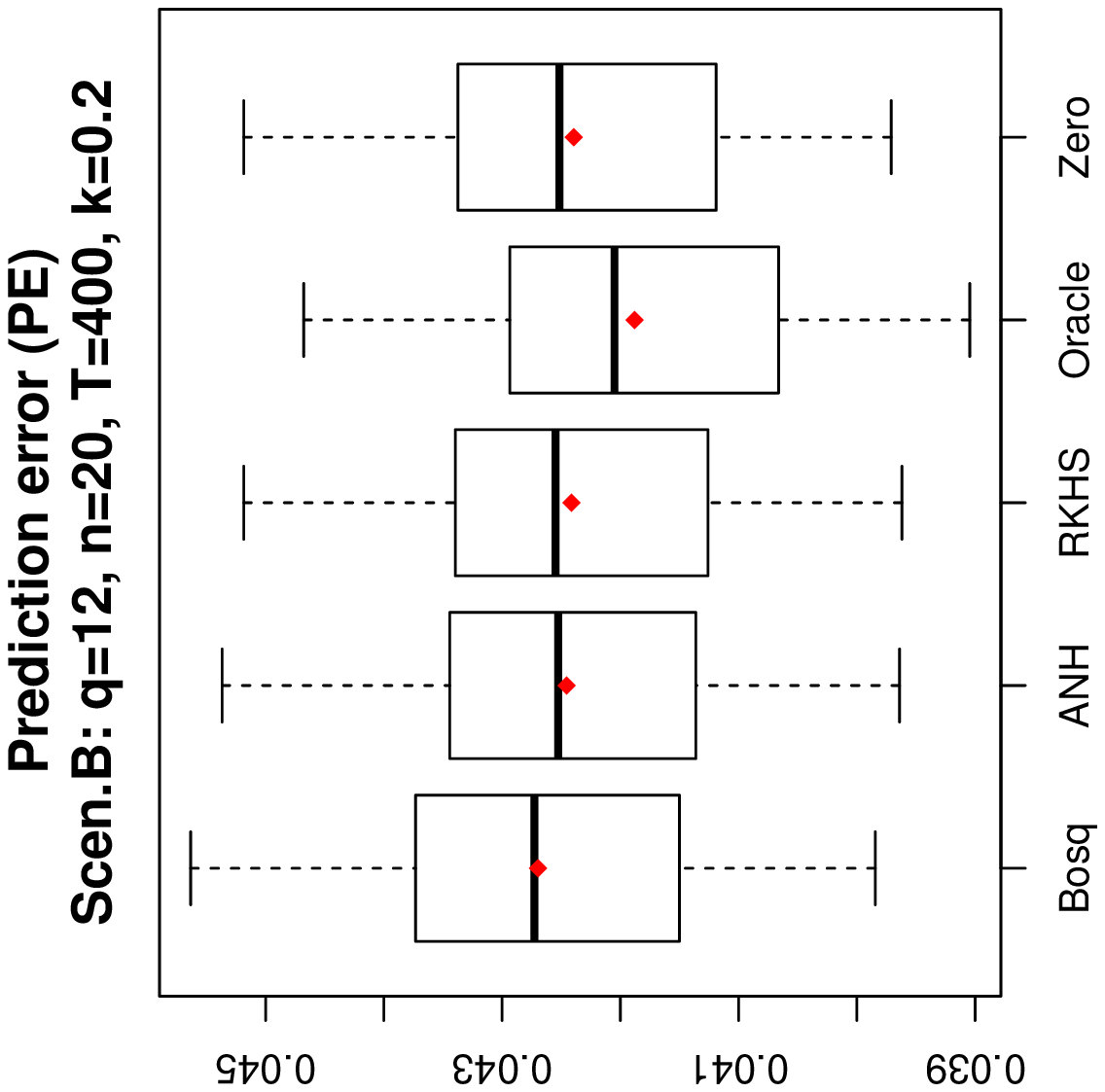}
		\vspace{-1cm}
	\end{subfigure}
	~
	\begin{subfigure}{0.32\textwidth}
		\includegraphics[angle=270, width=1.25\textwidth]{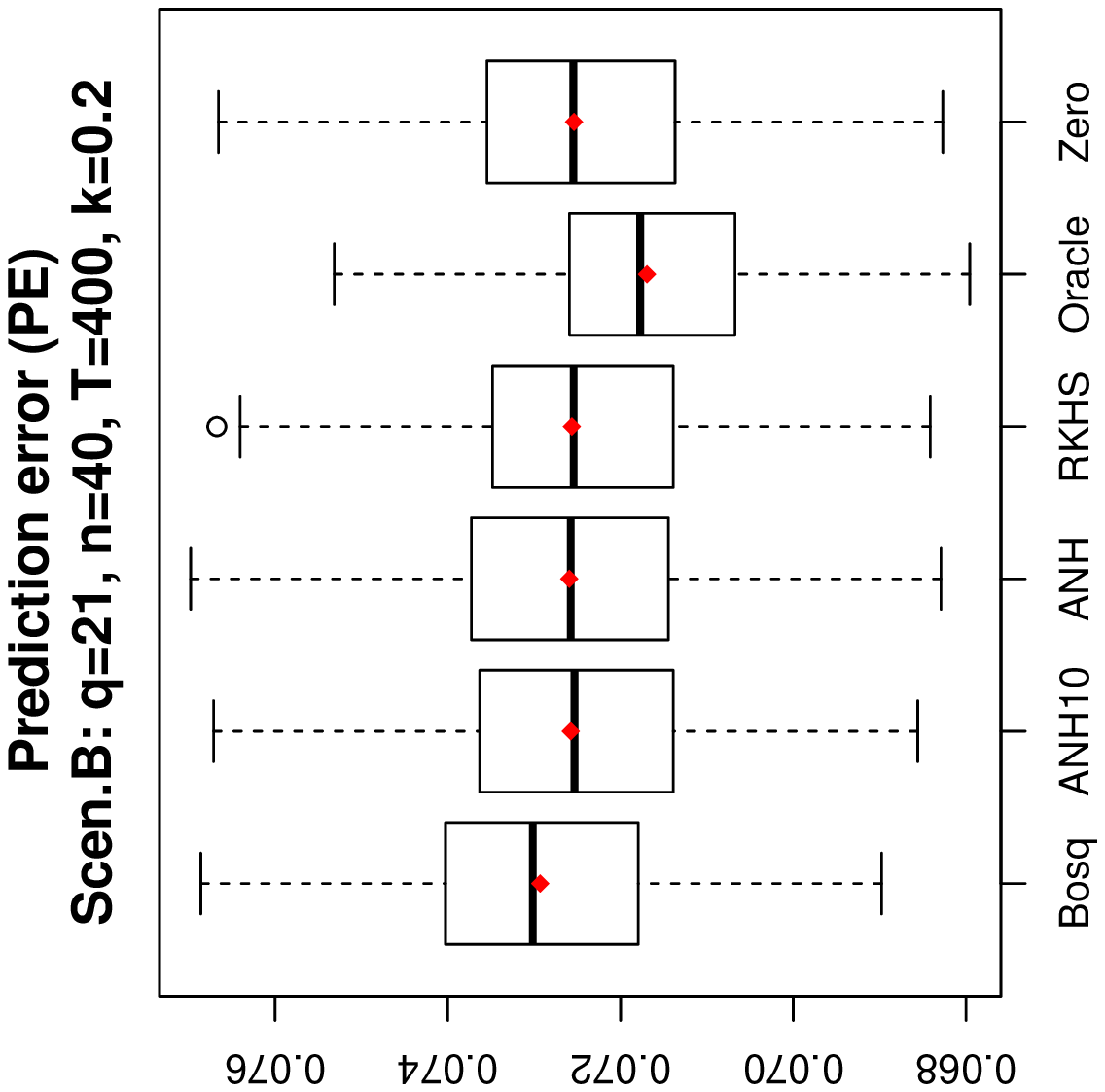}
		\vspace{-1cm}
	\end{subfigure}
	~
	\begin{subfigure}{0.32\textwidth}
		\includegraphics[angle=270, width=1.25\textwidth]{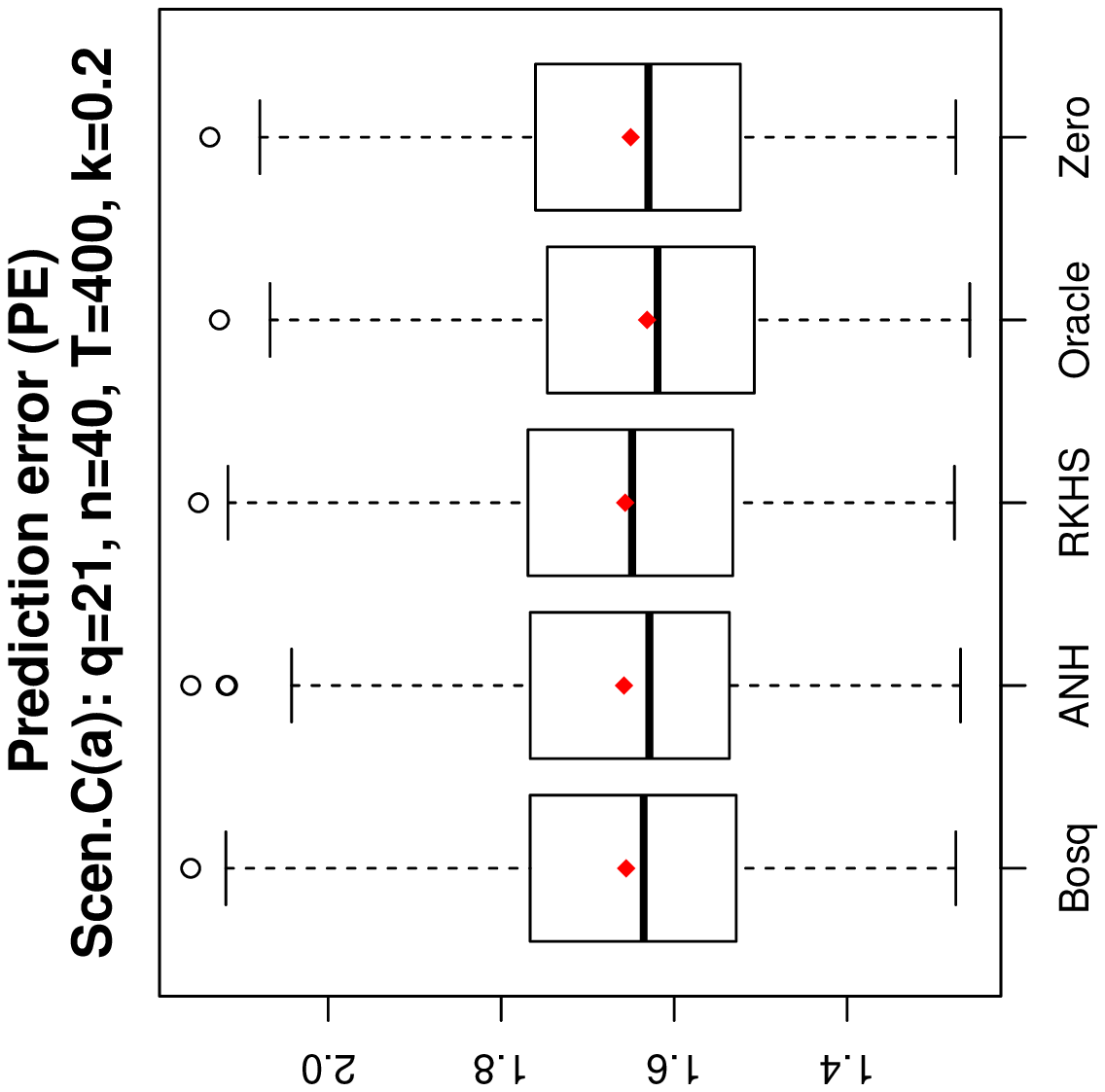}
		\vspace{-1cm}
	\end{subfigure}
	~
	\begin{subfigure}{0.32\textwidth}
		\includegraphics[angle=270, width=1.25\textwidth]{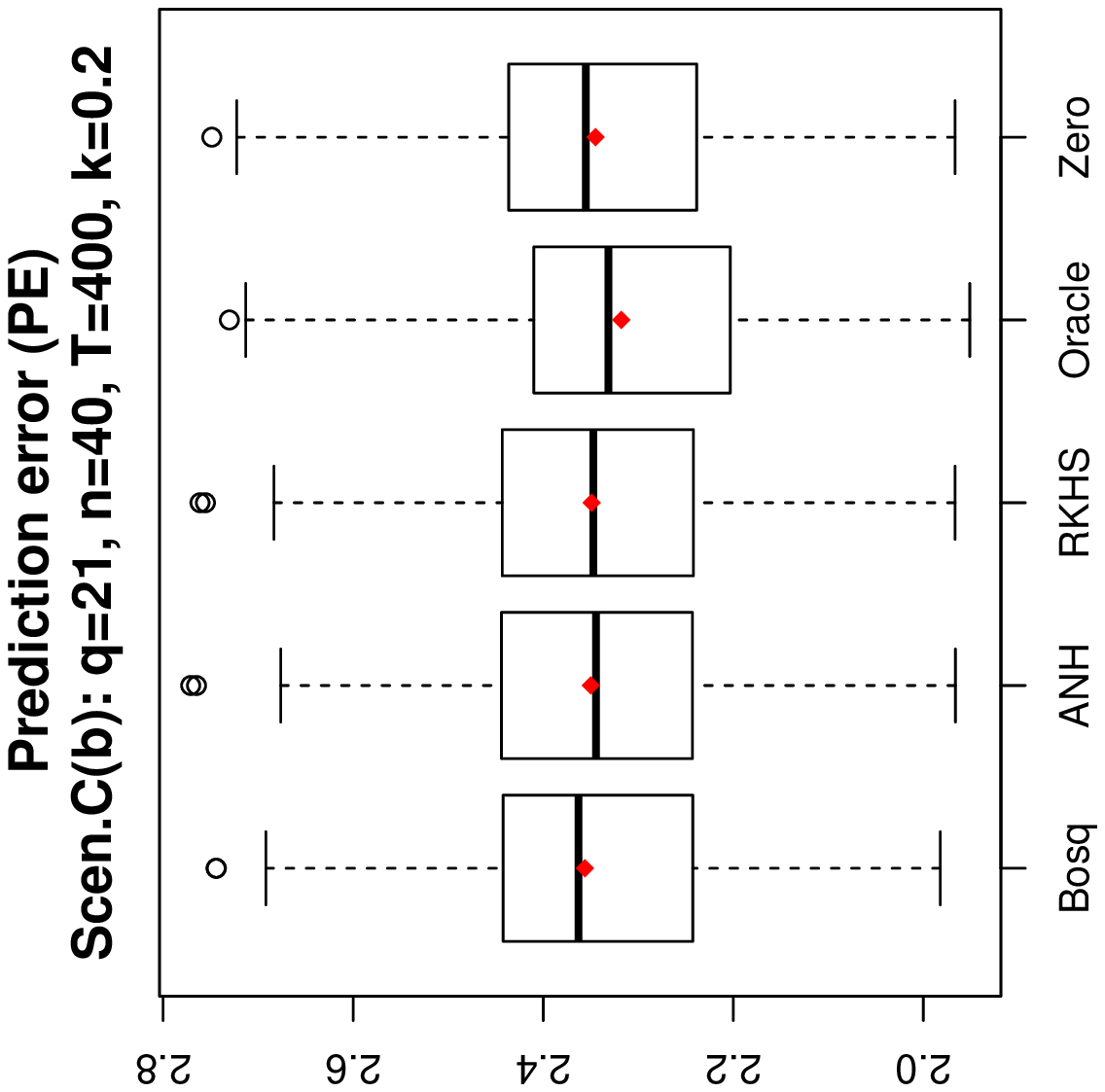}
		\vspace{-1cm}
	\end{subfigure}
	\caption{Boxplot of prediction error~(PE) for FAR(1) across 100 experiments with signal strength $\kappa=0.2.$ ANH10 stands for ANH based on 10 cubic B-splines under $q=21$.}
	\label{fig:FAR1_k2_PE}
\end{figure}

\begin{figure}[h]
	\begin{subfigure}{0.32\textwidth}
		\includegraphics[angle=270, width=1.25\textwidth]{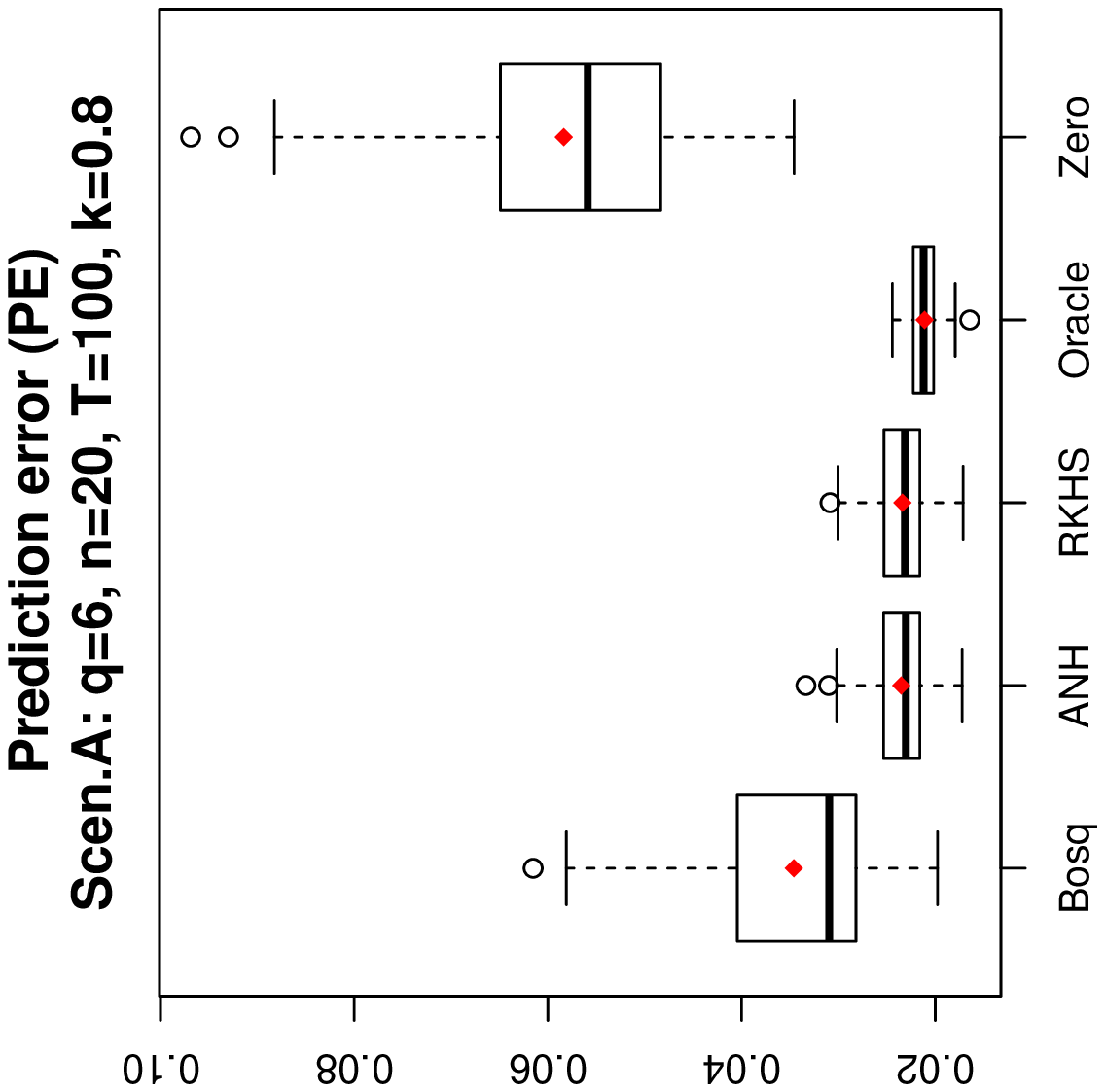}
		\vspace{-1cm}
	\end{subfigure}
	~
	\begin{subfigure}{0.32\textwidth}
		\includegraphics[angle=270, width=1.25\textwidth]{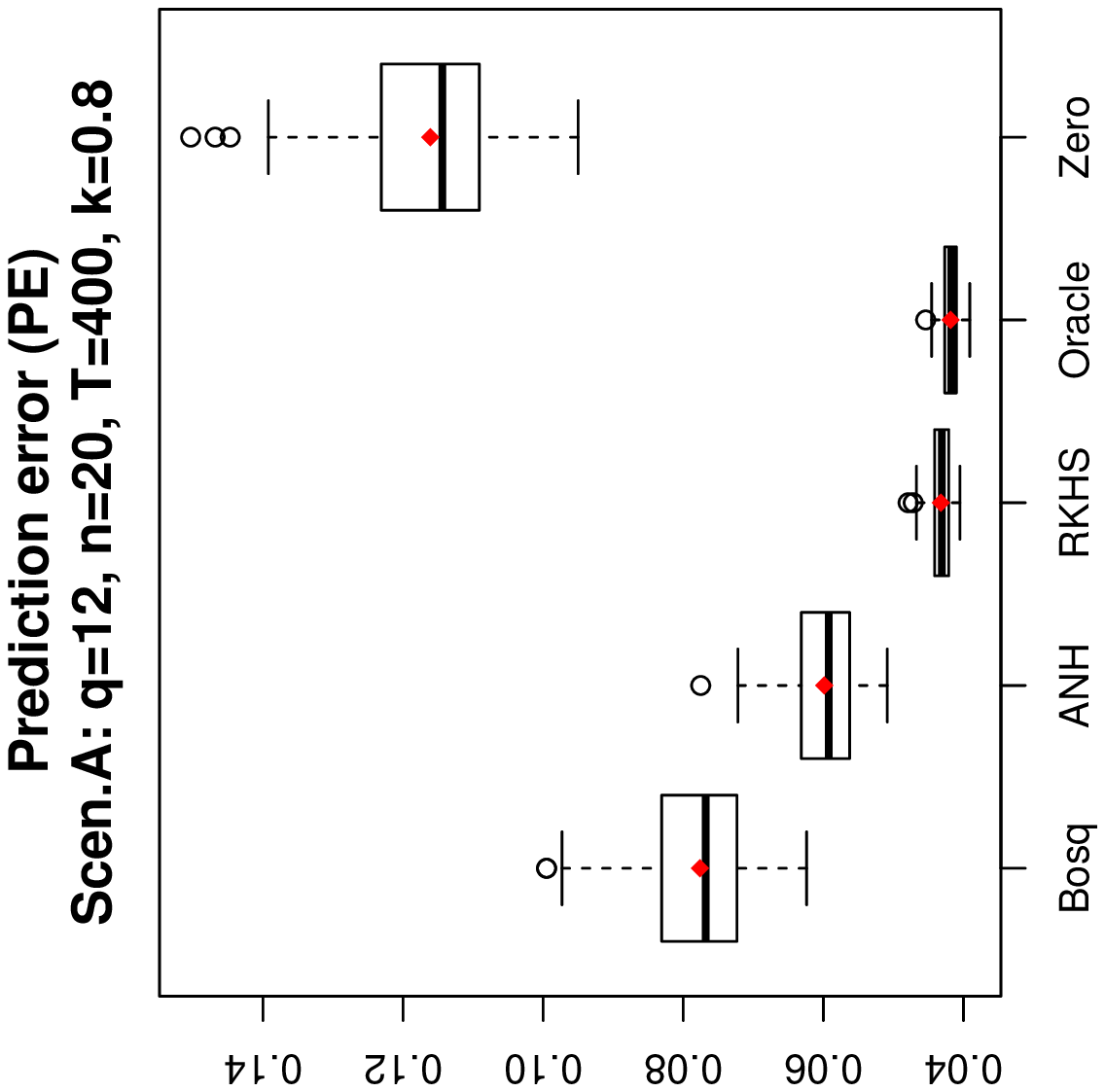}
		\vspace{-1cm}
	\end{subfigure}
	~
	\begin{subfigure}{0.32\textwidth}
		\includegraphics[angle=270, width=1.25\textwidth]{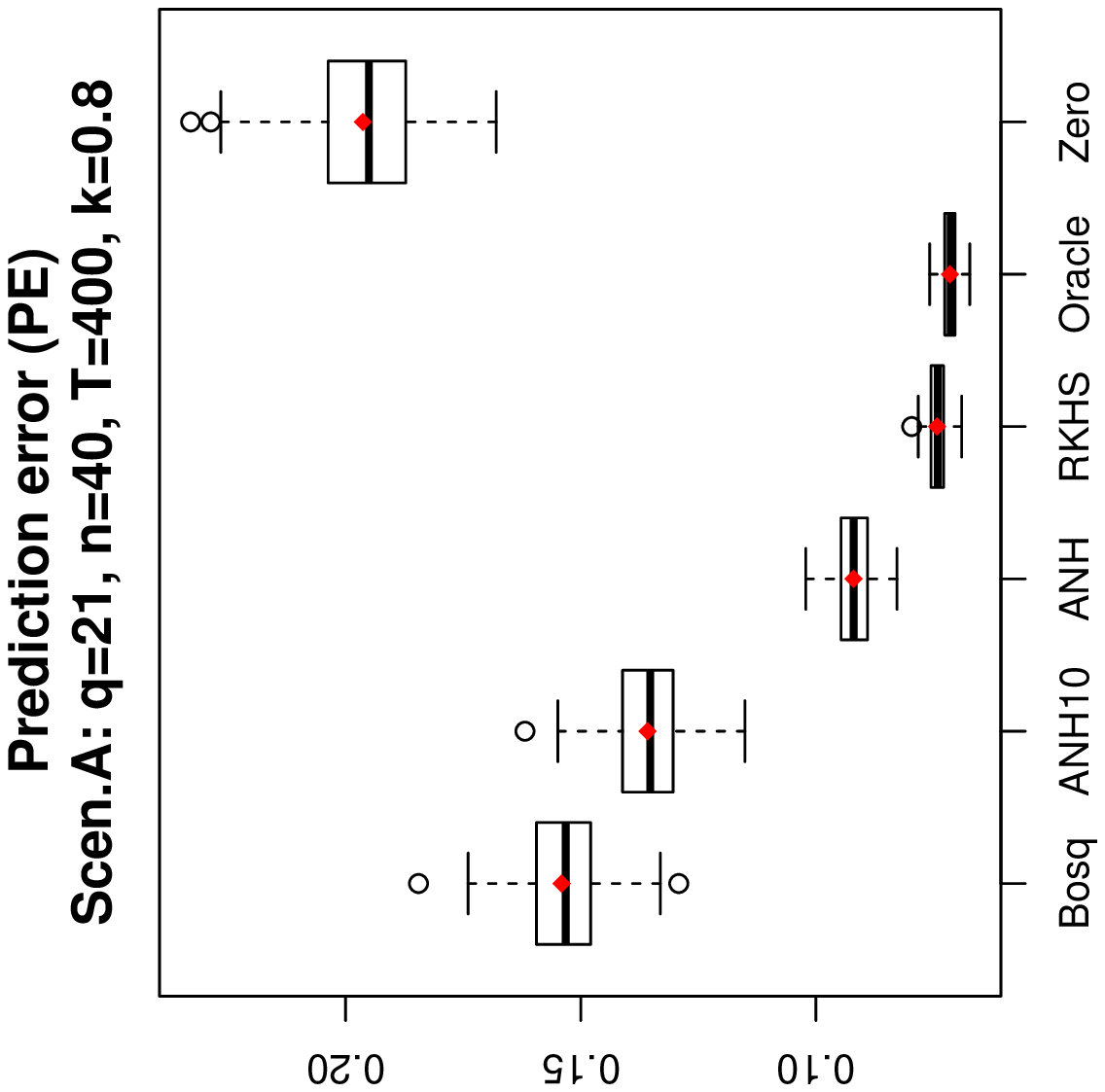}
		\vspace{-1cm}
	\end{subfigure}
	~
	\begin{subfigure}{0.32\textwidth}
		\includegraphics[angle=270, width=1.25\textwidth]{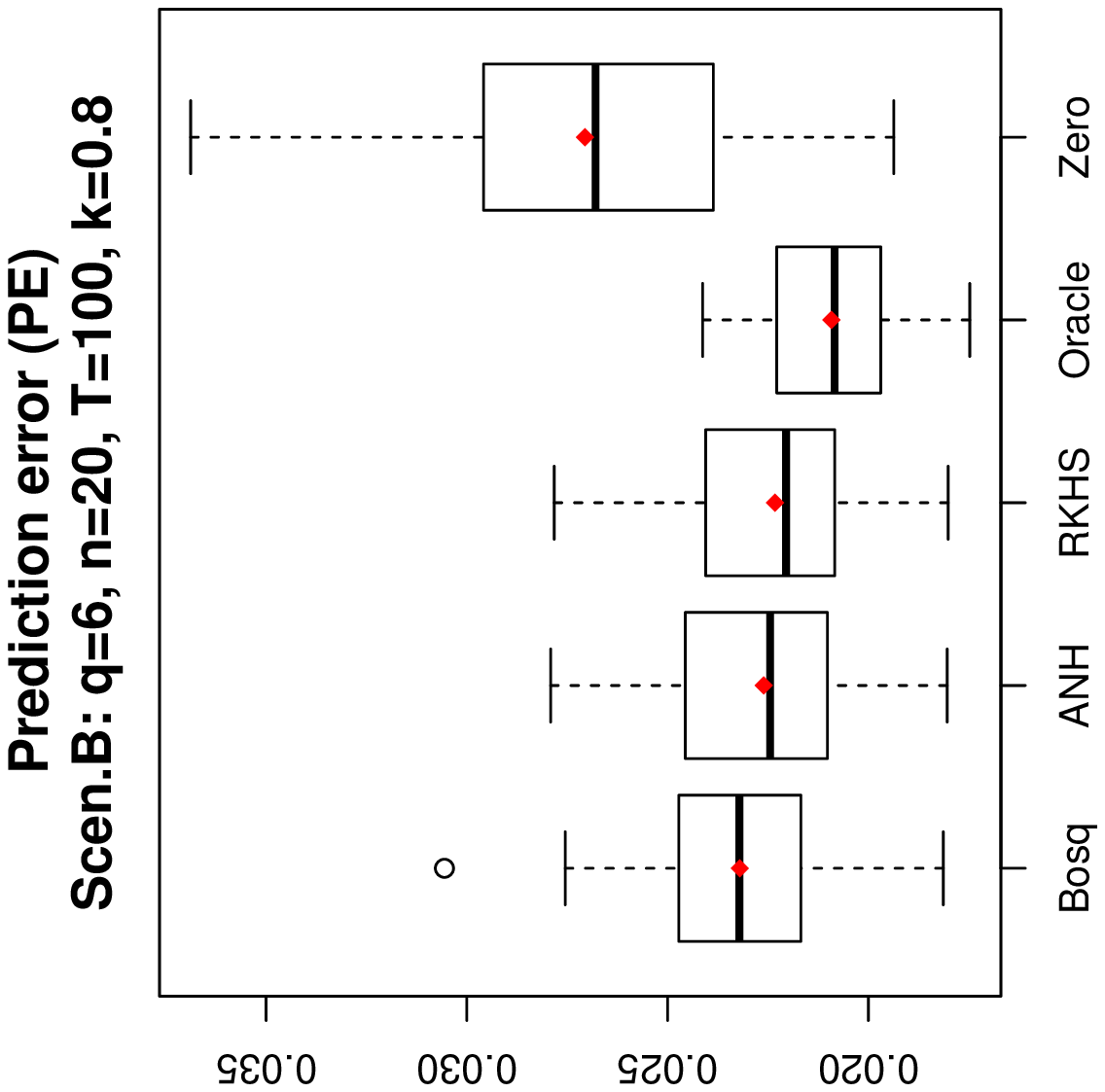}
		\vspace{-1cm}
	\end{subfigure}
	~
	\begin{subfigure}{0.32\textwidth}
		\includegraphics[angle=270, width=1.25\textwidth]{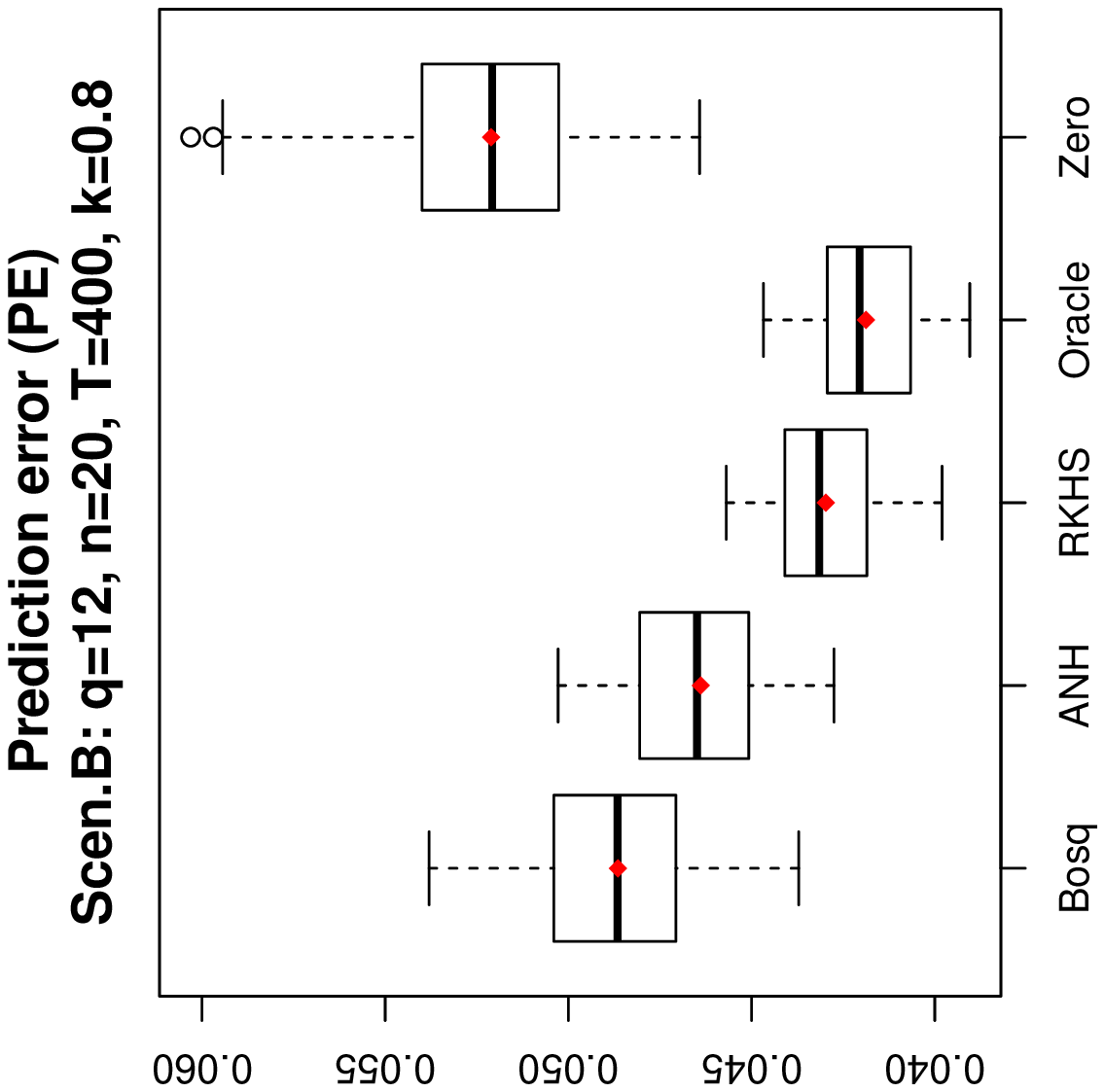}
		\vspace{-1cm}
	\end{subfigure}
	~
	\begin{subfigure}{0.32\textwidth}
		\includegraphics[angle=270, width=1.25\textwidth]{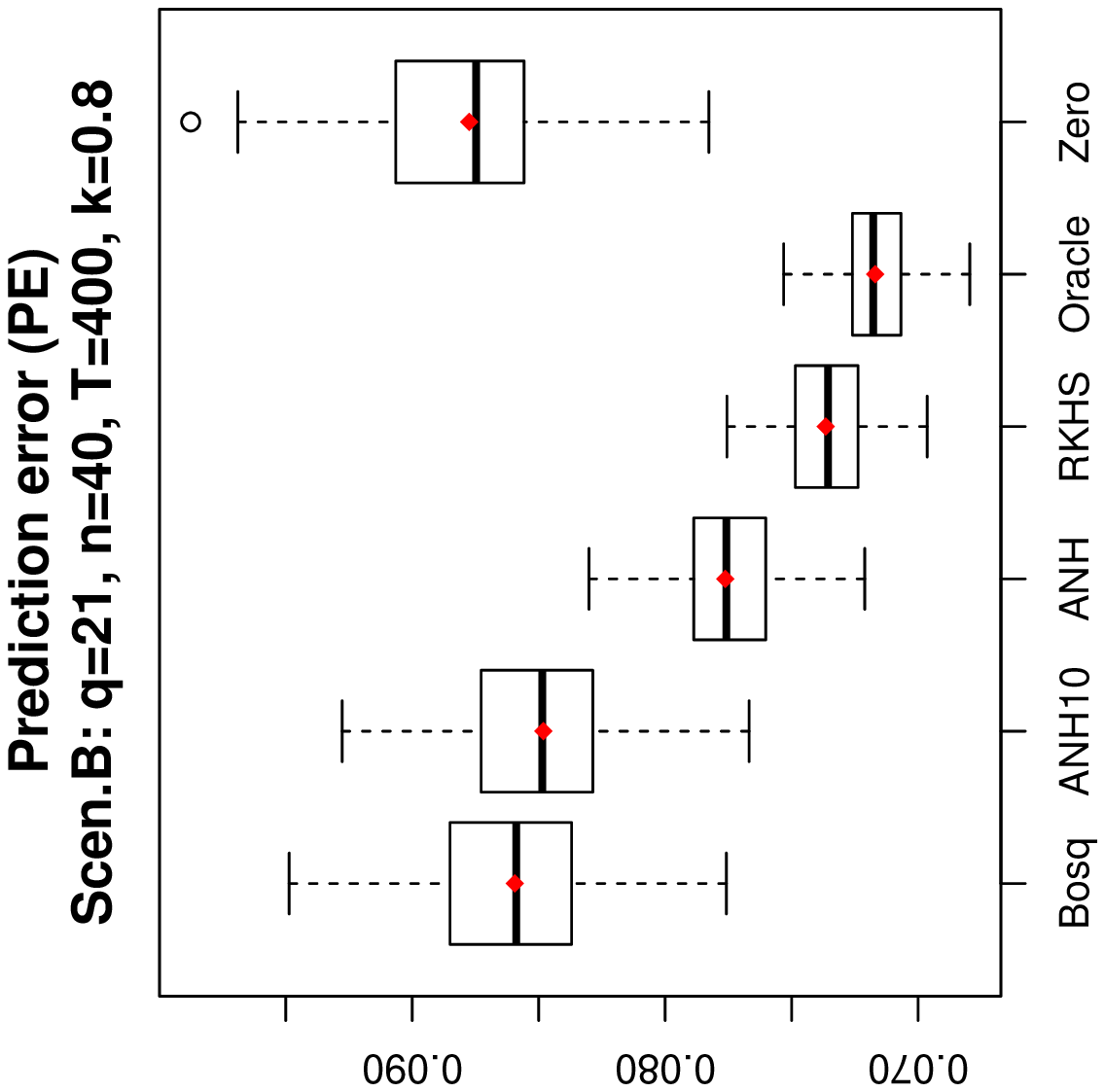}
		\vspace{-1cm}
	\end{subfigure}
	~
	\begin{subfigure}{0.32\textwidth}
		\includegraphics[angle=270, width=1.25\textwidth]{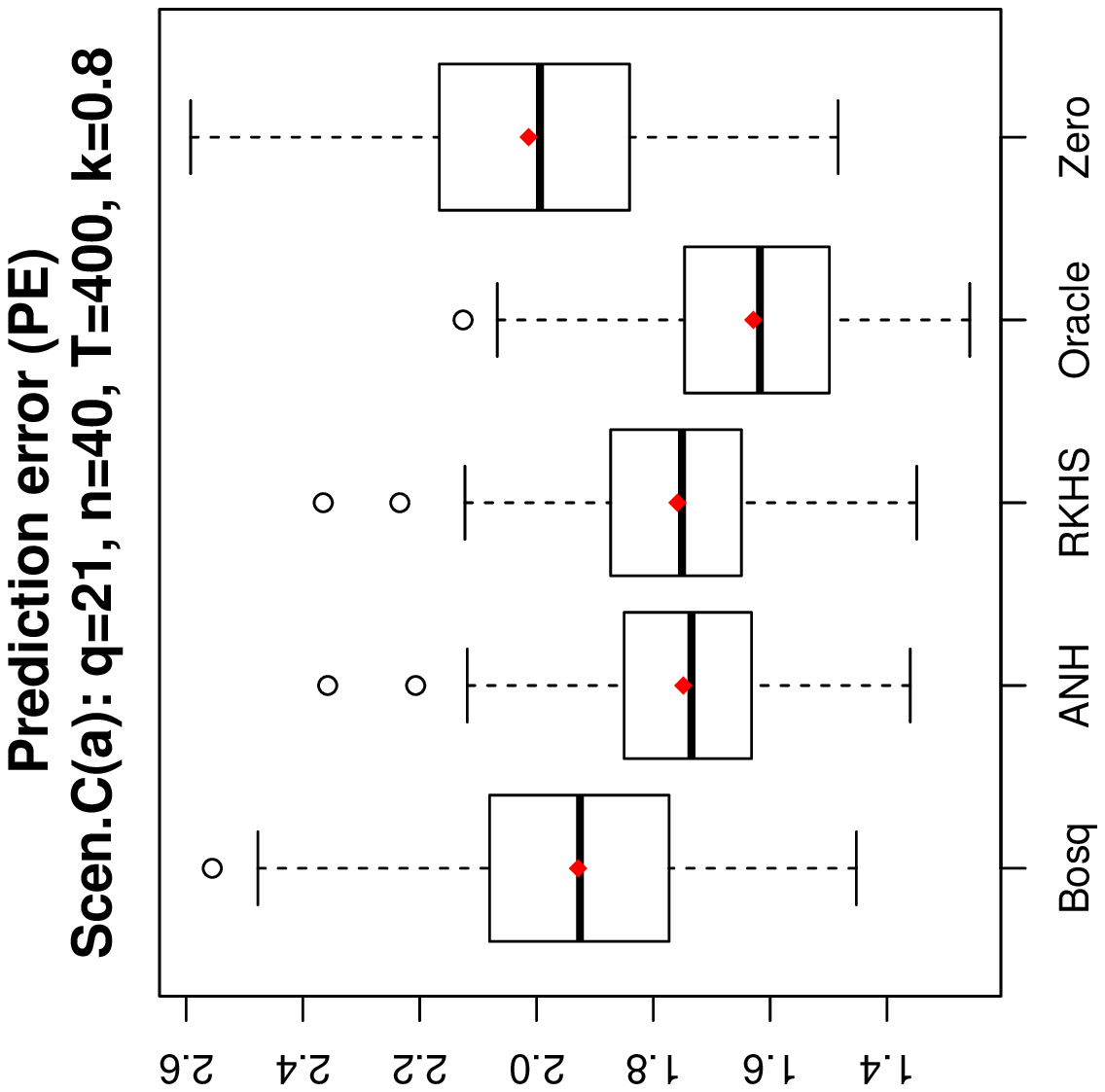}
		\vspace{-1cm}
	\end{subfigure}
	~
	\begin{subfigure}{0.32\textwidth}
		\includegraphics[angle=270, width=1.25\textwidth]{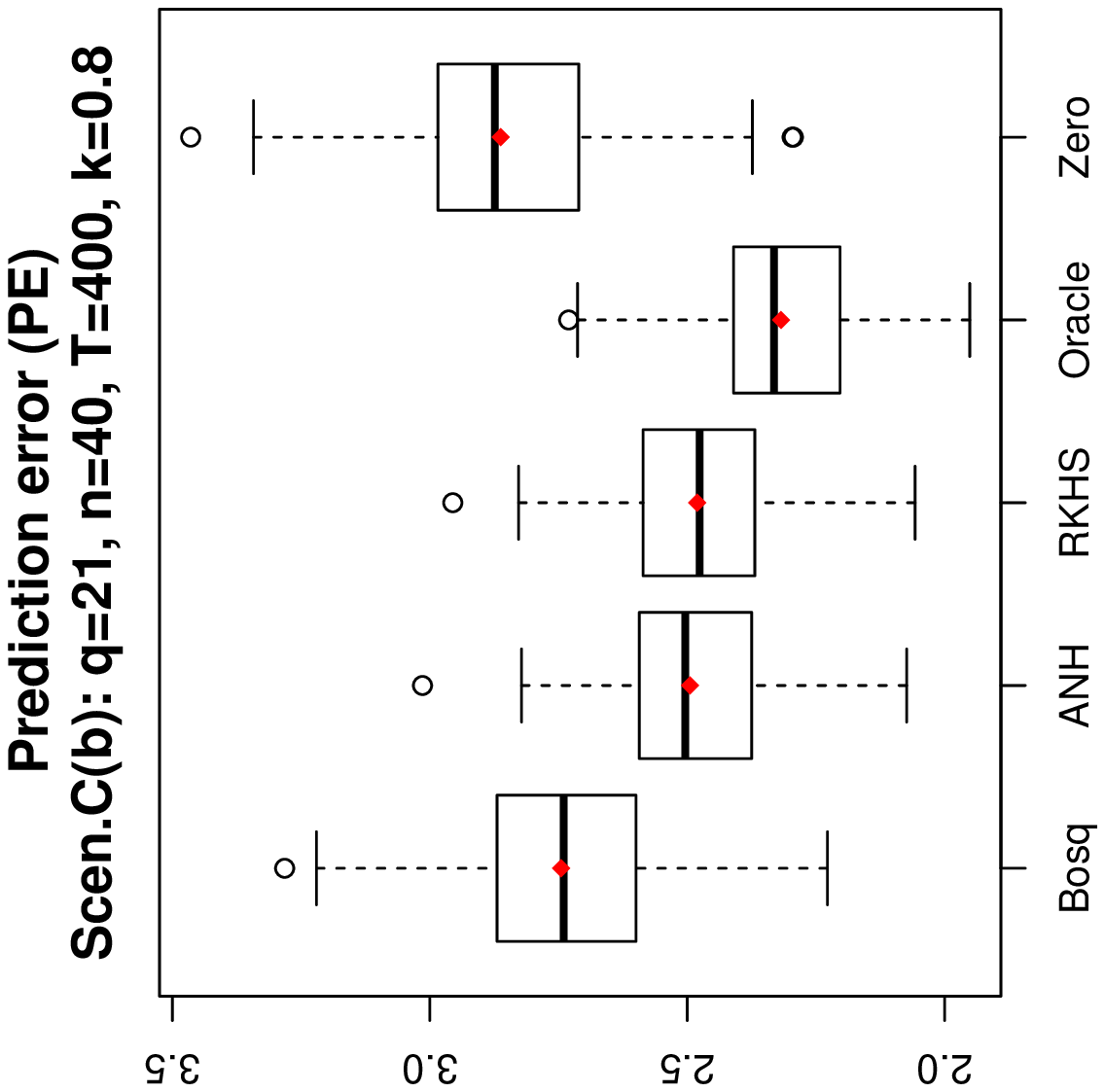}
		\vspace{-1cm}
	\end{subfigure}
	\caption{Boxplot of prediction error~(PE) for FAR(1) across 100 experiments with signal strength $\kappa=0.8.$ ANH10 stands for ANH based on 10 cubic B-splines under $q=21$.}
	\label{fig:FAR1_k8_PE}
\end{figure}

\begin{figure}[h]
	\begin{subfigure}{0.32\textwidth}
		\includegraphics[angle=270, width=1.25\textwidth]{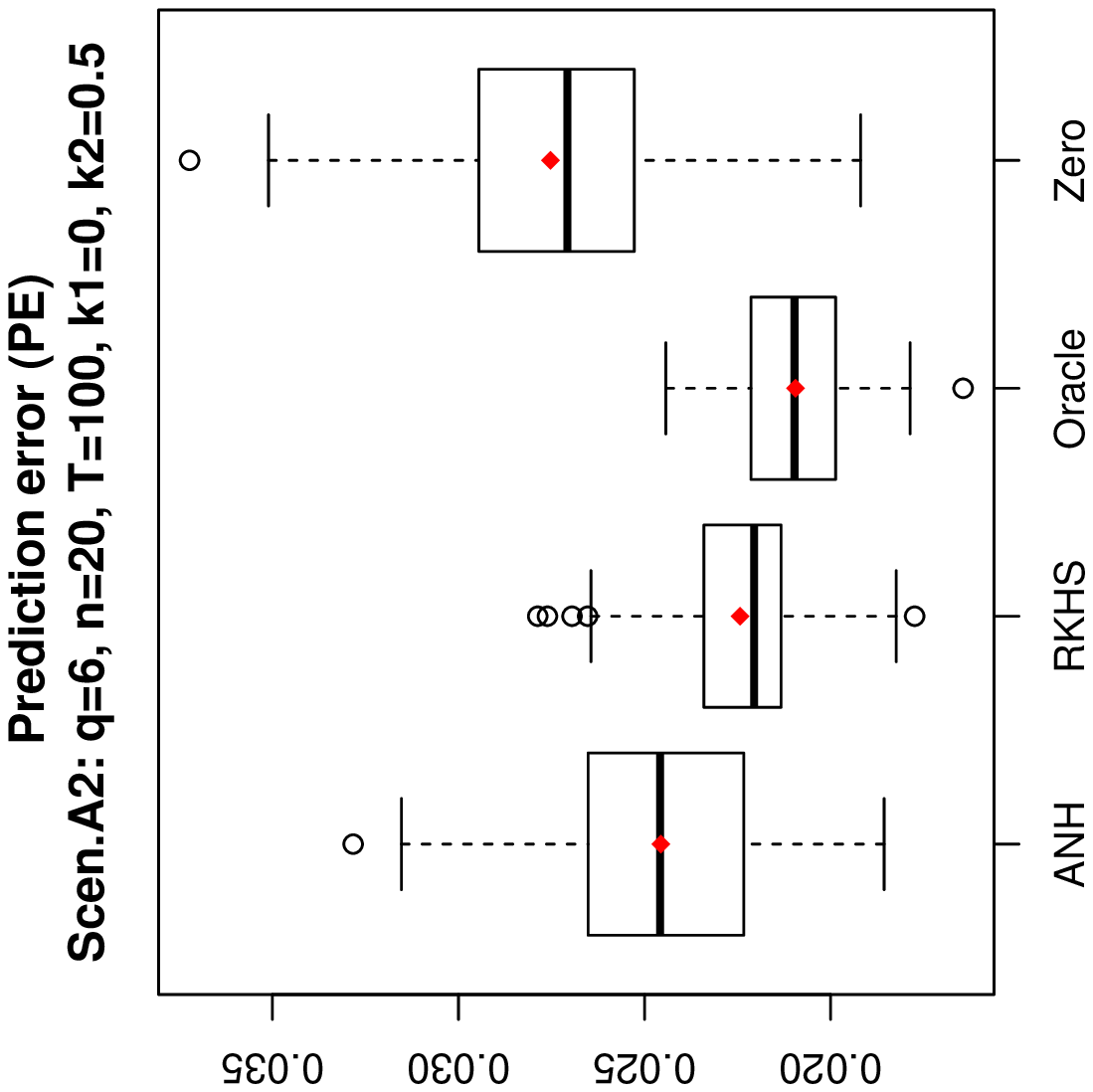}
		\vspace{-1cm}
	\end{subfigure}
	~
	\begin{subfigure}{0.32\textwidth}
		\includegraphics[angle=270, width=1.25\textwidth]{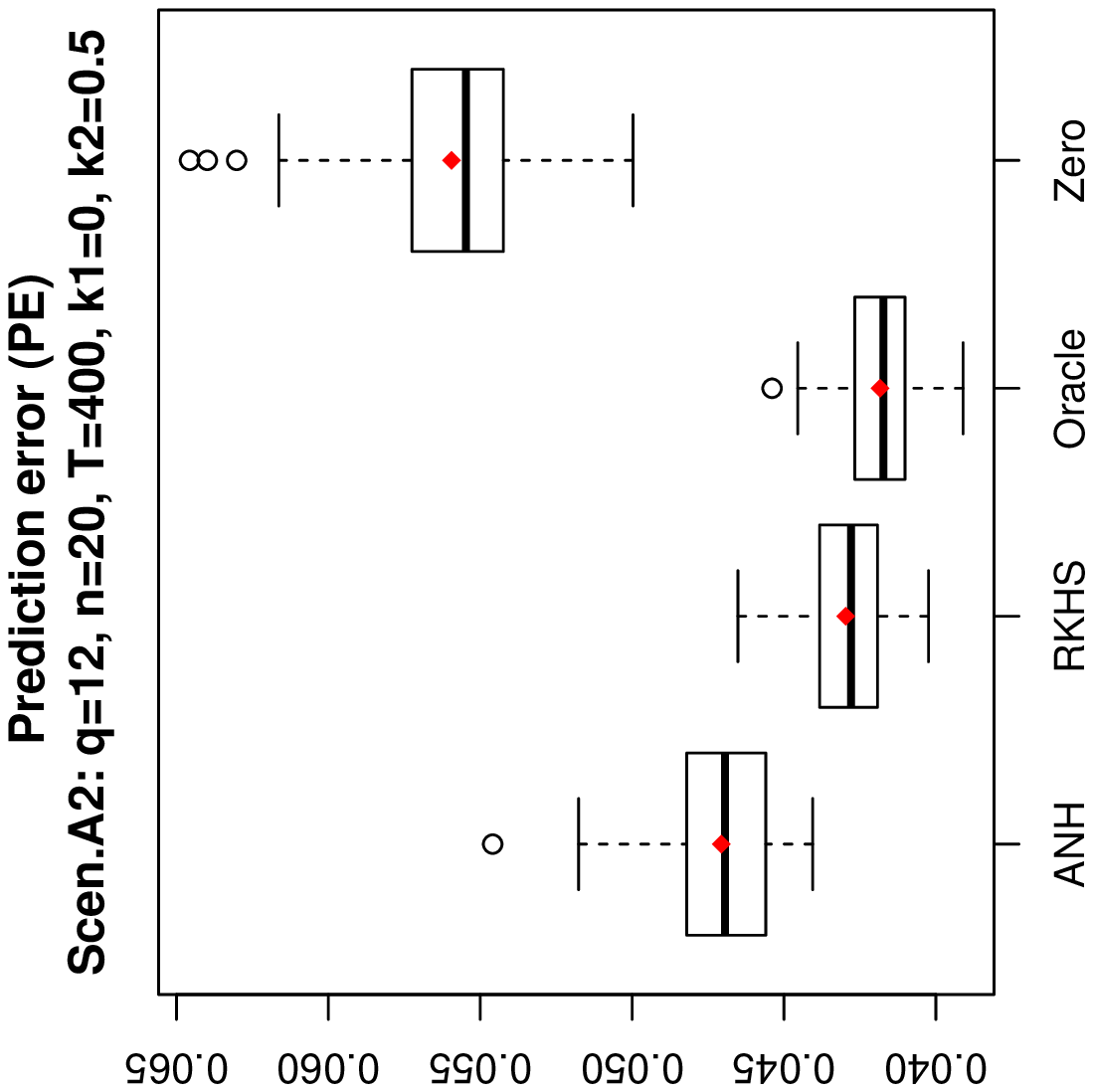}
		\vspace{-1cm}
	\end{subfigure}
	~
	\begin{subfigure}{0.32\textwidth}
		\includegraphics[angle=270, width=1.25\textwidth]{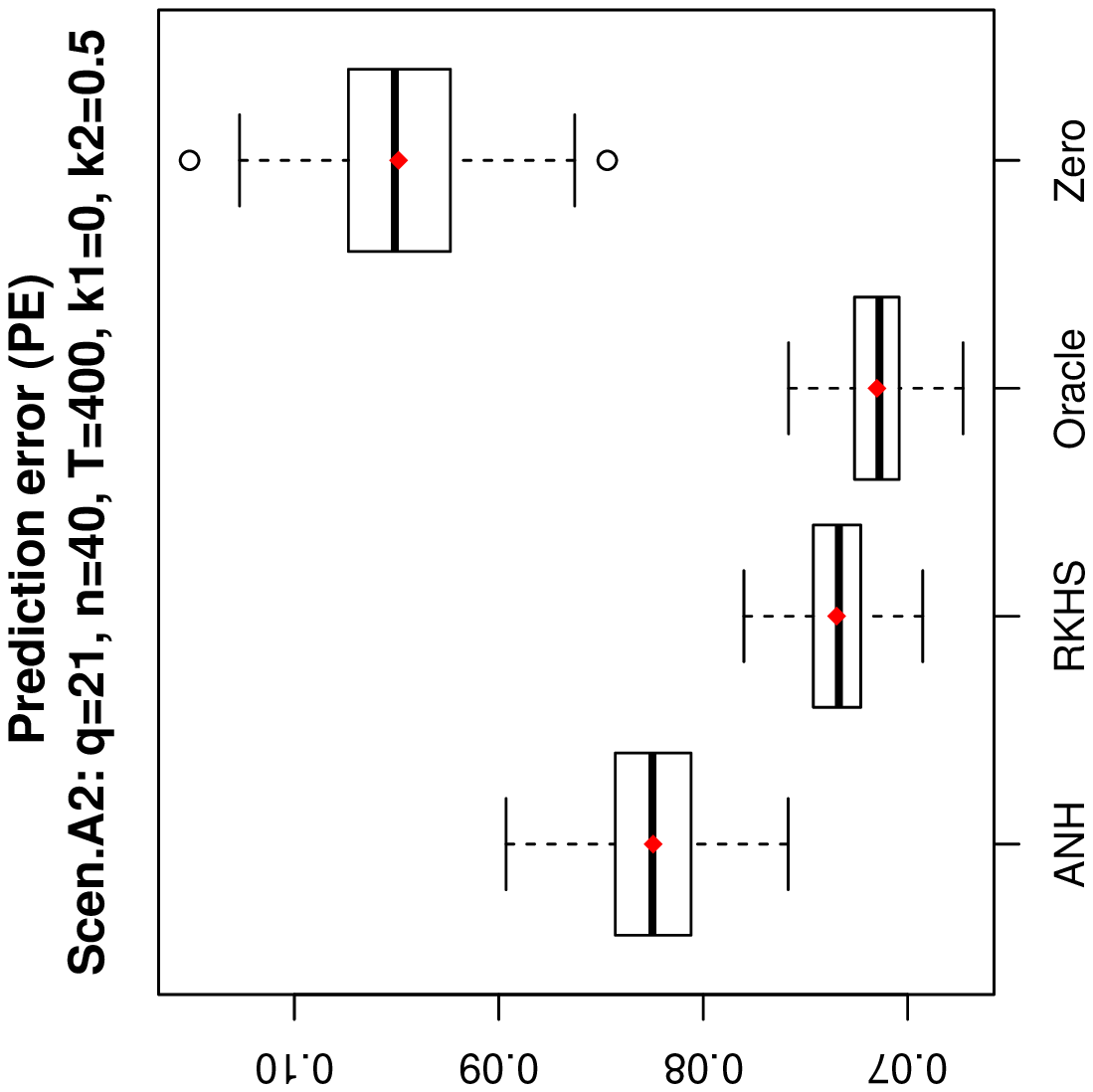}
		\vspace{-1cm}
	\end{subfigure}
	~
	\begin{subfigure}{0.32\textwidth}
		\includegraphics[angle=270, width=1.25\textwidth]{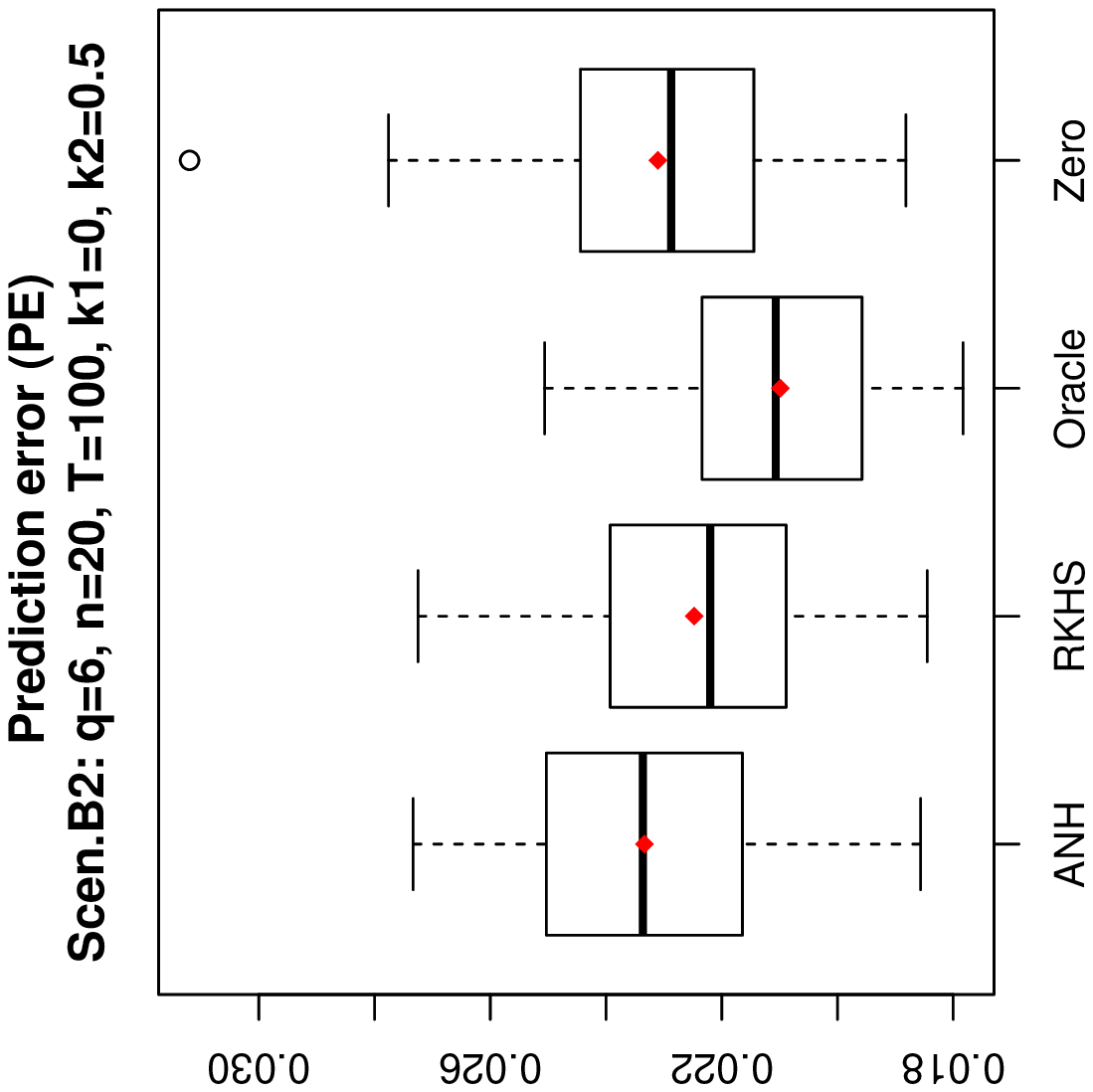}
		\vspace{-1cm}
	\end{subfigure}
	~
	\begin{subfigure}{0.32\textwidth}
		\includegraphics[angle=270, width=1.25\textwidth]{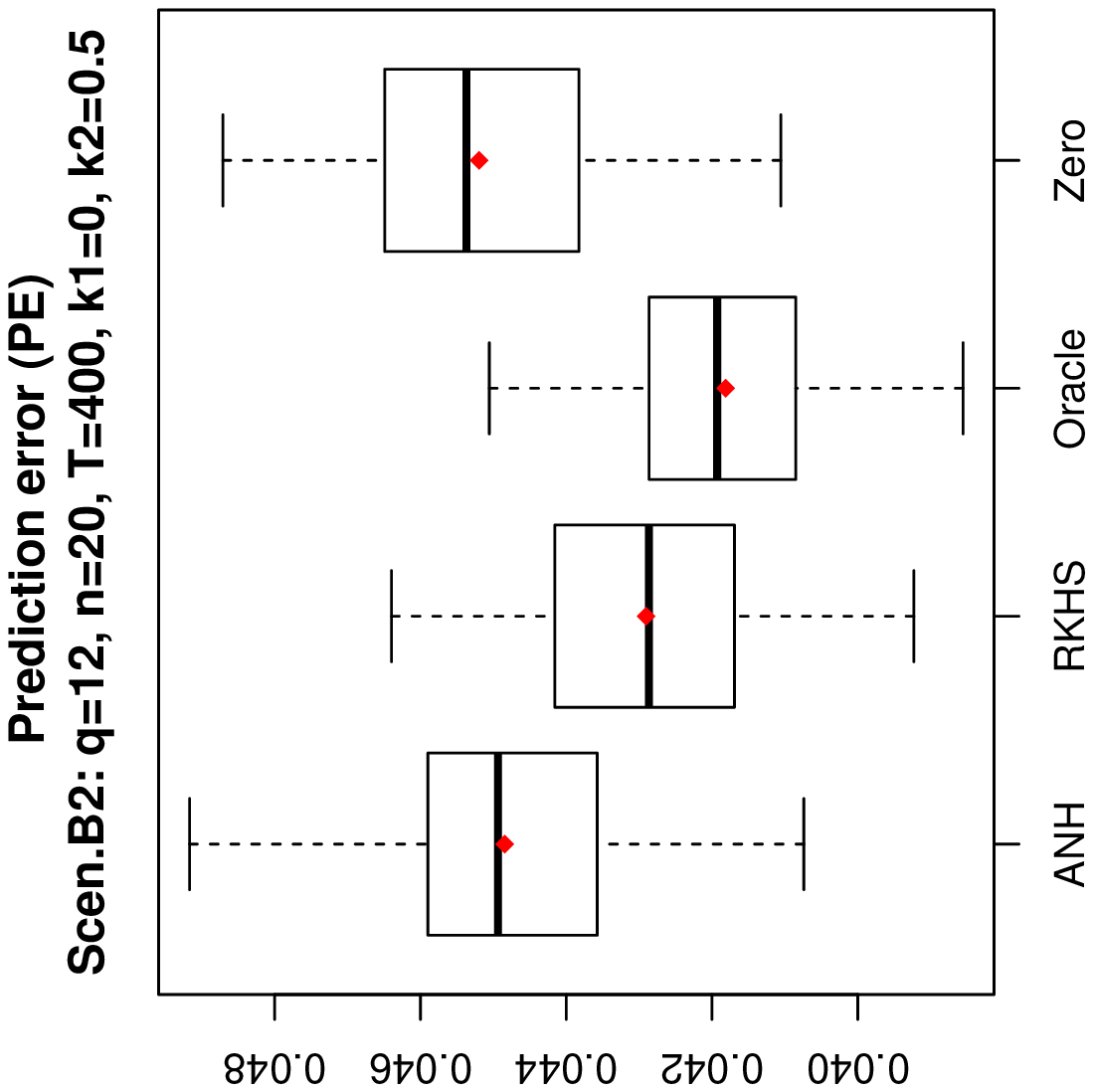}
		\vspace{-1cm}
	\end{subfigure}
	~
	\begin{subfigure}{0.32\textwidth}
		\includegraphics[angle=270, width=1.25\textwidth]{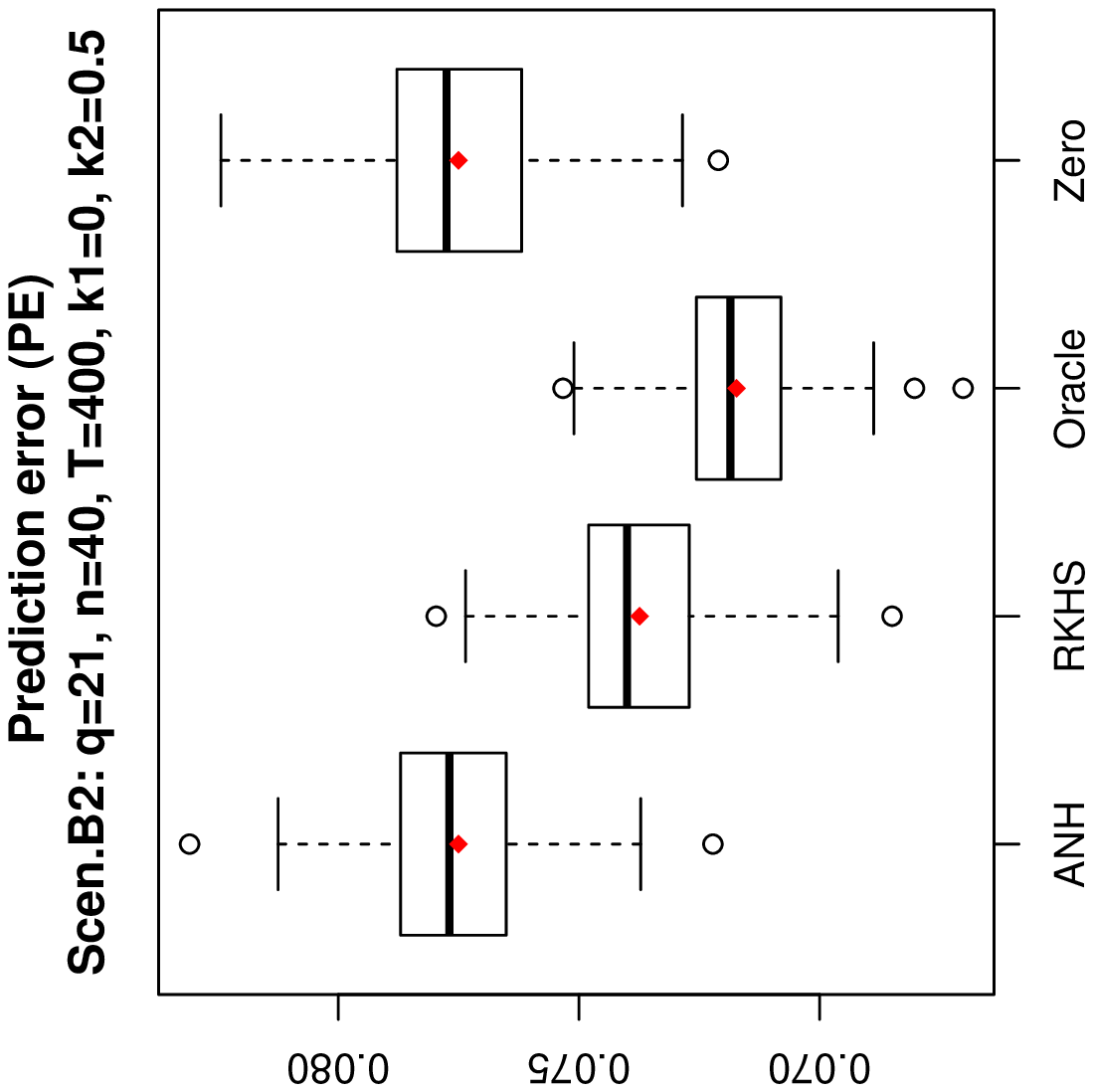}
		\vspace{-1cm}
	\end{subfigure}
	~
	\begin{subfigure}{0.32\textwidth}
		\includegraphics[angle=270, width=1.25\textwidth]{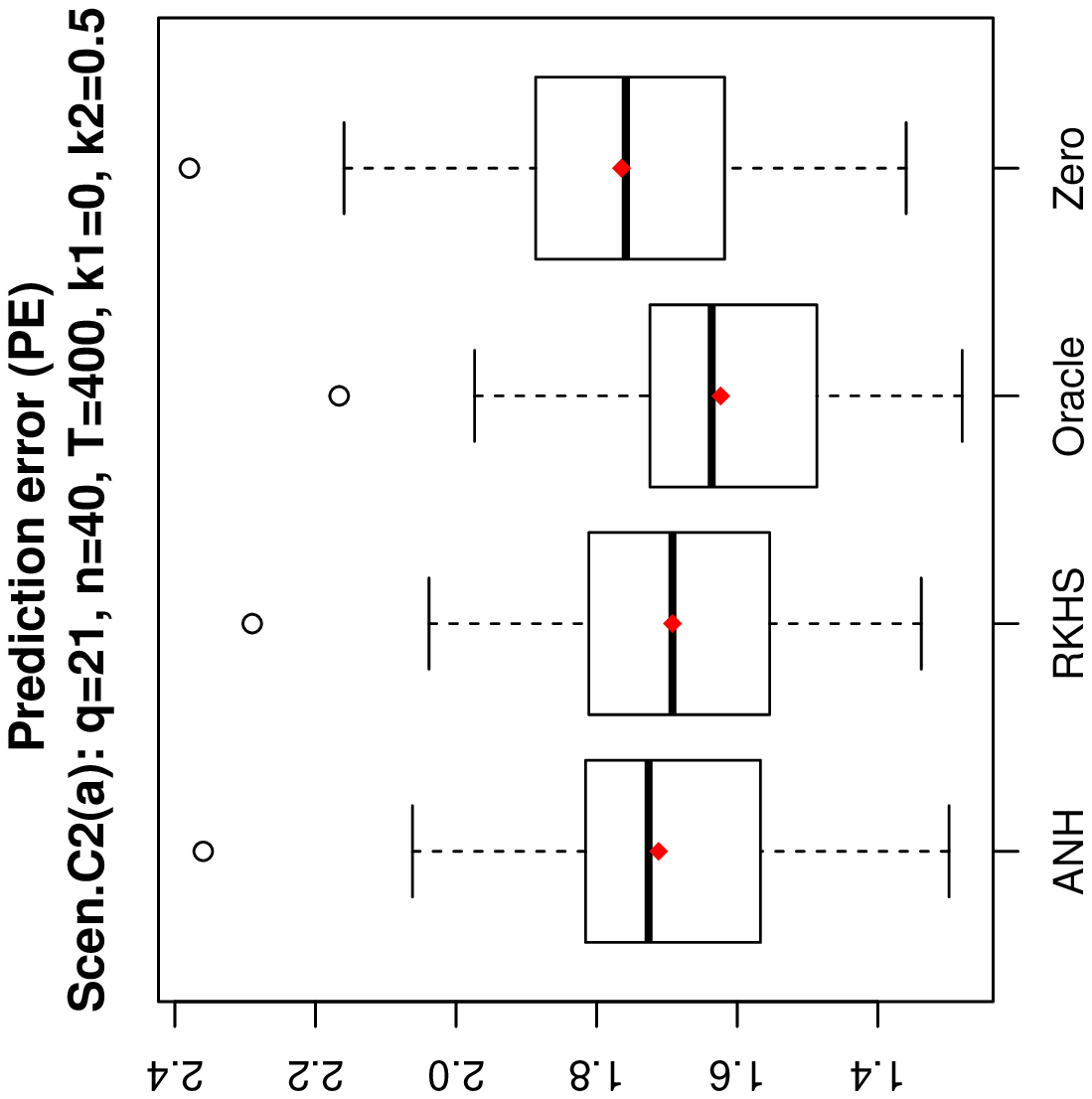}
		\vspace{-1cm}
	\end{subfigure}
	~
	\begin{subfigure}{0.32\textwidth}
		\includegraphics[angle=270, width=1.25\textwidth]{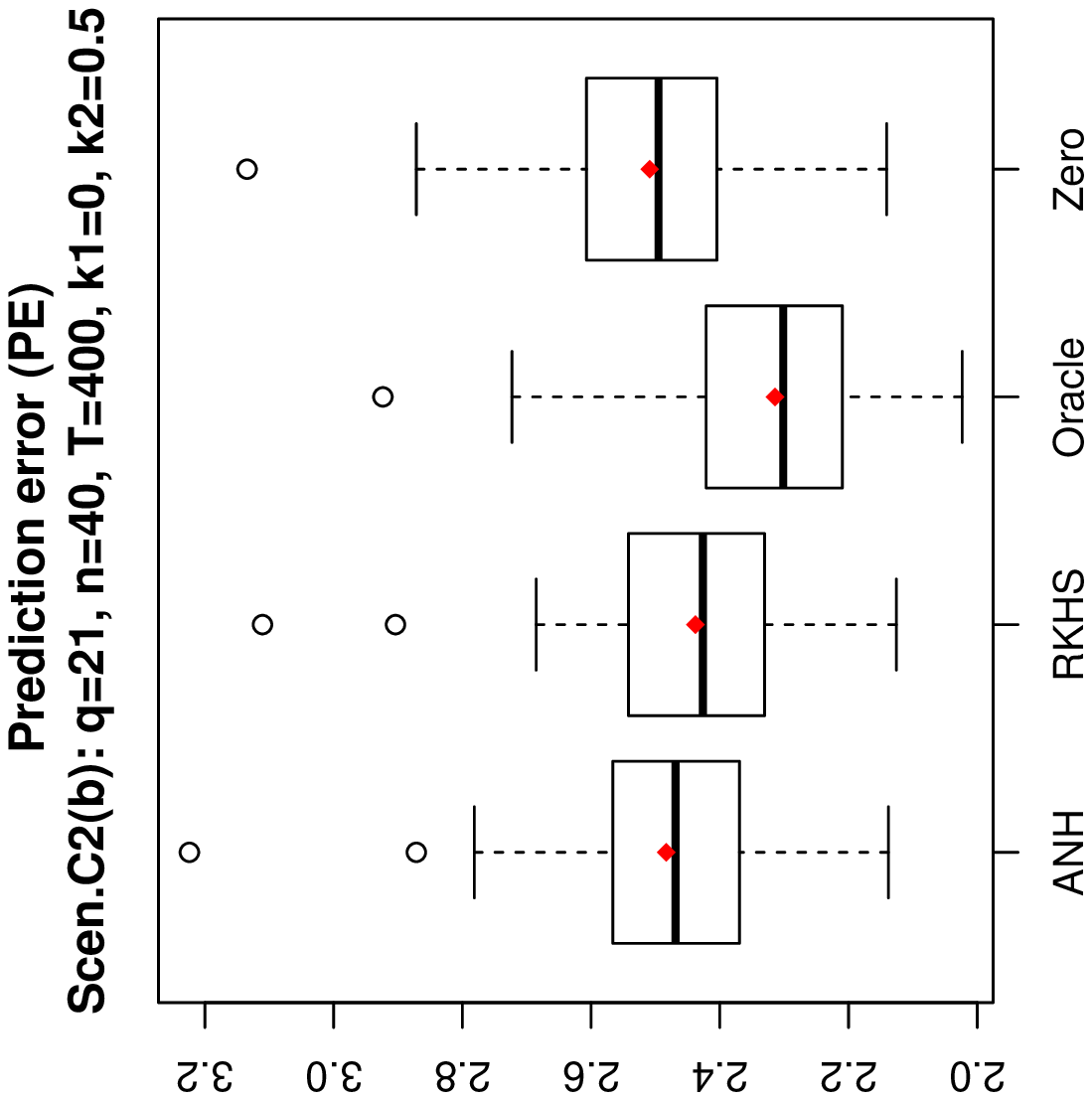}
		\vspace{-1cm}
	\end{subfigure}
	\caption{Boxplot of prediction error~(PE) for FAR with autoregressive order selection across 100 experiments with signal strength $\kappa_1=0, \kappa_2=0.5.$}
	\label{fig:FAR_orderselection_PE_k1_0_k2_5}
\end{figure}
  
\end{document}